\renewcommand{\ker}{\Ker}
\renewcommand{\Re}{\operatorname{Re}}
\renewcommand{\Im}{\operatorname{Im}}
\newcommand{\bb}[1]{\mathbb{#1}}
\newcommand{\mc}[1]{\mathcal{#1}}
\newcommand{\mf}[1]{\mathfrak{#1}}
\newcommand{\mr}[1]{\mathring{#1}}
\newcommand{\rmf}[1]{\mathring{\mathfrak{#1}}}
\newcommand{\tmf}[1]{\tilde{\mathfrak{#1}}}
\newcommand{\bmf}[1]{\bar{\mathfrak{#1}}}
\newcommand{\mb}[1]{\mathbb{#1}}
\newcommand{\beq}{\begin{equation}}
\newcommand{\eeq}{\end{equation}}
\newcommand{\e}{\varepsilon}
\newcommand{\wg}{\widetilde{\mathfrak g}}
\newcommand{\la}{\lambda}
\newcommand{\scaling}{{\bf d}}
\DeclareMathOperator{\End}{End}
\DeclareMathOperator{\Aut}{Aut}
\DeclareMathOperator{\ad}{ad}
\DeclareMathOperator{\Ker}{Ker}
\DeclareMathOperator{\rank}{rank}
\DeclareMathOperator{\op}{op}
\DeclareMathOperator{\Op}{Op}
\theoremstyle{plain}
\newtheorem{theorem}{Theorem}[section]
\newtheorem{lemma}[theorem]{Lemma}
\newtheorem{proposition}[theorem]{Proposition}
\newtheorem{corollary}[theorem]{Corollary}
\newtheorem{asu}{Assumption}
\theoremstyle{definition}
\newtheorem{definition}[theorem]{Definition}
\theoremstyle{remark}
\newtheorem{remark}[theorem]{Remark}
\numberwithin{equation}{section}
\definecolor{light}{gray}{.9}
\tikzset{node distance=2em, ch/.style={circle,draw,on chain,inner sep=2pt},chj/.style={ch,join},every path/.style={shorten >=4pt,shorten <=4pt},line width=1pt,baseline=-1ex}
\newcounter{dateone}
\newcounter{datetwo}
\newcommand{\difftoday}[3]{%
\setmydatenumber{dateone}{\the\year}{\the\month}{\the\day}%
\setmydatenumber{datetwo}{#1}{#2}{#3}%
\addtocounter{datetwo}{-\thedateone}%
\textcolor{red}{\the\numexpr-\thedatetwo day(s) late}
}
\title{Feigin-Frenkel-Hernandez Opers  and the $QQ-$system}
\author{D.Masoero, A.Raimondo}
\address{Grupo de F\'isica Matem\'atica da Universidade de Lisboa,
Av. Prof. Gama Pinto 2, 1649-003 Lisboa, Portugal.}
\email{dmasoero@gmail.com}
\address{Dipartimento di Ingegneria Gestionale, dell’Informazione e della Produzione, Universit\`a di Bergamo, Italy, and INFN, Sezione di Milano-Bicocca, Piazza della Scienza 3, 20126 Milano, Italy.}
\email{andrea.raimondo@unibg.it}
\begin{document}

\pagestyle{plain}

\begin{abstract}
 This paper represents the completion of
 our work on the ODE/IM correspondence for the generalised quantum Drinfeld-Sokolov models.
We present a unified and general mathematical theory, encompassing all
particular cases that we had already addressed, and we fill important analytic and algebraic gaps in the literature on the ODE/IM correspondence.
For every affine Lie algebra $\mathfrak{g}$ -- whose Langlands dual $\mathfrak{g}'$ is the untwisted affinisation of a simple Lie algebra -- we study a class of affine twisted parabolic Miura $\mathfrak{g}$-opers,
 introduced by Feigin, Frenkel and Hernandez.  The Feigin-Frenkel-Hernandez opers are defined by fixing the singularity structure at $0$ and $\infty$, and by allowing a finite number of additional singular terms with trivial monodromy.
 We define the central connection matrix and Stokes matrix for these opers, and prove that the coefficients of the former satisfy the the $QQ$ system of the quantum $\mathfrak{g}'$-Drinfeld-Sokolov (or quantum $\mathfrak{g}'$-KdV) model.
 If $\mf{g}$ is untwisted, it is known that the trivial monodromy conditions are equivalent to a complete system of algebraic equations for the additional singularities. We prove
 a suprising negative result in the case $\mf{g}$ is twisted: in this case, the trivial monodromy conditions have no non-trivial solutions.
\end{abstract}

\maketitle

\tableofcontents

\section{Introduction}
In this paper we study a class of affine opers, introduced by Feigin and Frenkel \cite{FF11}, and Frenkel and Hernandez \cite{FH16}, and we prove that their monodromy data provides solutions to the Bethe equations of the quantum $\mf{g}'$-Drinfeld-Sokolov (or quantum $\mf{g}'$-KdV) model, where $\mf{g}'$ is an untwisted affine Kac-Moody algebra.
 This paper thus belongs to a research field called \emph{ODE/IM correspondence}, which encompasses a large family of (conjectural) relations between linear differential operators and quantum integrable models  \cite{doreytateo98,bazhanov01,BLZ04,dorey07,FF11,bazhanov14,dorey13,lukyanov13,lukyanov10,marava15,
 marava17,mara18,bazhanov18,gaiotto22}.

The FFH (Feigin-Frenkel-Hernandez) opers belongs to the class of affine twisted parabolic Miura opers, a concept that will introduce later in the paper. Here for sake of definiteness, we introduce them as concrete partial differential operator that we name FFH connections \footnote{In Theorem \ref{pro:quasinormal} below, we will show that any FFH oper admits a unique
representative as a FFH connection.}.
The Lie algebraic setting is as  follows - see Table \ref{table:affine} for more details.
We let $\mf{g}=\,^L(\mf{g}')$ be the Langlands dual Lie algebra of $\mf{g}'$ so that $\mf{g}$ is an affine Kac-Moody algebra of type $\tmf{g}^{(r)}$, where $\tmf{g}$ is a simply-laced simple Lie algebra and $r\in\{1,2,3\}$ is the order of a Dynkin diagram automorphism $\sigma$ of $\tmf{g}$.
We denote by $\rmf{g}$ the simple Lie algebra whose Dynkin diagram is obtained from that of $\mf{g}$ by removing the $0-$th node. If $r=1$ then $\rmf{g}=\tmf{g}$ while if $r>1$ then $\rmf{g}$ is the fixed-point subalgebra of $\tmf{g}$ under $\sigma$. In addition, we denote by $\rmf{g}=\rmf{n}^-\oplus\rmf{h}\oplus\rmf{n}^+$ the triangular decomposition of $\rmf{g}$, with $\rmf{h}$ a Cartan subalgebra and $\rmf{n}^+$  (resp. $\rmf{n}^-$) a positive (resp. negative) maximal nilpotent subalgebra of $\rmf{g}$.

Fixed global coordinates $(z,\la)$ on $\bb{C}^2$, the FFH connections for the algebra
$\tmf{g}^{(r)}$ are, by definition, the following
differential operators
whose coefficients are meromorphic functions with values in $\tmf{g}$,
\begin{align}
 \mc{L}=  z\,\partial_z+ k\; \la\, \partial_{\la} + & \mr{f}+\ell +  (z+\la )v_\theta + \sum_{j \in J} \frac{ r z^{r}}{z^r-w_j^r}\left(-\theta^\vee +X(j)
  \right) 
 \label{ffhintro} .
 \end{align}
Formula \eqref{ffhintro} describes a family of partial differential operators,
written in terms of the fixed elements $\mr{f},v_\theta,\theta^\vee \in \tmf{g}$ and of the parameters $\ell, k$, $\lbrace w_j,X(j),y(j)\rbrace_{j \in J}$, where:
\begin{itemize}
 \item $\mr{f} \in \rmf{n}^-$ is a principal nilpotent element of $\rmf{g}$,
$\theta^\vee\in\rmf{h}$ is the coroot corresponding to the highest short root $\theta$ of $\rmf{g}$, and $v_\theta$ is a highest weight vector, of weight $\theta$, for the $\rmf{g}-$module $\tmf{g}$.
\item $\ell \in \rmf{h}$ and $k \in \bb{R}$, $0<k<1$, are free parameters.
\item $J$ is a (possibly empty) finite-set of indexes and for every $j \in J$, $w_j \in \bb{C}^\times $, $X(j) \in \rmf{n}^+$, and $y(j) \in \bb{C}$. The parameters $\lbrace w_j,X(j),y(j)\rbrace_{j \in J}$ are not free; they are constrained by the requirements that i)
$w_i^r = w_j^r$ if and only if $i =j$, and ii) that \eqref{ffhintro} has trivial monodromy at each pole in $\bb{C}^\times$ of the coefficients of $\mc{L}$, for every value of $\lambda \in \bb{C}$.
\end{itemize}

\subsection{Description of the main results and Structure of the paper}
This paper builds on our previous works \cite{marava15,marava17,mara18} (\cite{marava15,marava17} in collaboration with D. Valeri) -- where the case $J= \emptyset$ or  $r=1$ was considered. A fundamental role for the results of the present work is also played by  \cite{FF11}  by Fegin and Frenkel, where the relation between quantum $\mf{g}'$-Drinfeld Sokolov systems and afffine $\mf{g}$-opers was suggested for the first time,  as well as  \cite{FH16} by Frenkel and Hernandez, where it was conjectured that the appropriate $\mf{g}$-opers to consider should be twisted parabolic Miura $\mf{g}$-opers. In the present paper, we provide
unified and general analytic, algebraic and geometric theory of FFH opers, and, as a by-product, we complete
our previous works and fill important gaps in the mathematical literature. In detail:

\subsubsection*{Section \ref{sec:liealgebra1}. Lie algebraic preliminaries} This section contains preliminary material on the simple Lie algebras $\tmf{g}$ and $\rmf{g}$, as well as on the affine algebra $\mf{g}$. In addition, we describe those $\tmf{g}$-modules and those properties of \textit{cyclic element}
$\mr{f}+ v_{\theta} $, which are relevant in the rest of the paper.
\subsubsection*{Section \ref{opers and ba}. Analytic Theory and QQ system}
On their face values FFH connections are abstract partial differential operators whose coefficients are meromorphic functions with value in $\tmf{g}$. There are, in principle, vast options of differential equations one may consider. In this section we describe the differential equations relevant for the ODE/IM correspondence and we study them.

Given a finite dimensional $\tmf{g}$-module $V$, we denote by $V(\la)$ the space of entire function
with values in $V$ and we let the differential operator \eqref{ffhintro} act as a (meromorphic) connection of the trivial bundle $ \bb{C}^* \times V(\la)$. We are thus led to consider the
ODE
\begin{equation}\label{eq:gendiffinto}
  \mc{L}\psi=0, \quad \psi: \left(\widetilde{\bb{C}}^\times \setminus \bigcup_{j \in J} \bigcup_{l=0}^{r-1}\Pi^{-1} (e^{\frac{2\pi i\, l}{r}} w_j)\right) \to V(\la),
\end{equation}
where $\Pi:\widetilde{\bb{C}^\times} \to \bb{C}^\times$ is the universal covering map.
The above differential equation is a linear ODE with values in a infinite-dimensional Frechet space and we address the
study of local and global solution in Proposition \ref{prop:localsolution} and Theorem \ref{prop:AVfree}.
In Proposition \ref{prop:localsolution}, we prove that the the Cauchy problem for (\ref{eq:gendiffinto}) is well-posed if the equation is restricted to any \textit{simply-connected} open subset $D$ of the domain.
In Theorem \ref{prop:AVfree}, we prove a structure theorem for the space of global solutions:
We notice that the space of global solutions
is a module of the ring
\begin{equation}\label{eq:O'introd}
 \mc{O}'=\lbrace f: \widetilde{\bb{C}}^\times \times \bb{C} \to \bb{C},\; f(z;\la)= Q(z^{-k}\la), \mbox{ with } Q: \bb{C} \to \bb{C} \mbox{ entire } \rbrace.
\end{equation}
We prove that a connection of the form \eqref{ffhintro} has trivial monodromy about all singularities in $\bb{C}^\times$ if and only if the space of global solution is a free-module of rank $\dim V$ of the ring of functions $\mc{O}'$ \footnote{In other words, among all connection of the form \eqref{ffhintro}, FFH connections are singled-out by the following property:
the space of global solution is a free-module of rank $\dim V$ of the ring of functions $\mc{O}'$.}.

We then study $\mc{O}'$-bases of solutions, with distinguished asymptotic behaviour at $0$ and $\infty$, in
Proposition \ref{prop:frobeniusolutions} and Theorem \ref{prop:basisatinfinity}, respectively.
In Proposition \ref{prop:frobeniusolutions}, following \cite{mara18}, we show, under generic non-resonant conditions on $\ell$,
that there exists a basis of Frobenius solutions. These are solutions which admits the following expansion in a neighborhood of $z=0$
\begin{equation}\label{eq:frobeniusintro}
 \chi_{\gamma}(z;\la) = z^{-\gamma} \sum_{m\geq 0} g_m(\la) z^m, \quad g_0(0)=\chi_{\gamma} ,
\end{equation}
where $g_m:\bb{C}\to V$ is an entire function and $\chi_{\gamma}\in V$ is an eigenvector of $\mr{f}+\ell $ with eigenvalue $\gamma$.
In Theorem \ref{prop:basisatinfinity}, using the recent result of one of the authors with Cotti and Guzzetti \cite{cotti23}, we study the asymptotic behaviour at $\infty$ of solution to \eqref{eq:gendiffinto}.
\begin{enumerate}
 \item We prove that for each (generic) sector of amplitude $\pi h$, there exists a basis of solutions with the asymptotic behaviour
 \begin{equation}\label{eq:asymptoticinfty}
  \Psi_{\nu}(z;\la)= \left(z^{\frac{1}{h}\mr{\rho}^\vee}\, \psi_{\nu}\right)e^{- h \nu z^{\frac{1}{h}} + C(z;\la) } \left(1+ O(z^{-\delta})\right), \delta >0,
 \end{equation}
as $z \to \infty$ in the given sector. Here $\psi_{\nu}$ is an eigenvector
of $\mr{f}+ v_{\theta}$ with eigenvalue $\nu$, $\mr{\rho}^\vee$ is the dual Weyl vector, $h$ the Coxeter number of $\mf{g}$, and $C(z;\la)=O( z^{\frac{1}{h}-k})$ certain polynomial expression in $z^{-k}$.
\item We define the central connection matrix $Q$ and the Stokes matrix $\mc{T}$. These are the matrix of change of basis, respectively between the basis of Frobenius solution and a basis at $\infty$ (in a sector containing the ray $\arg z=0$), and the matrix of change of basis between two \textit{consecutive bases} at $\infty$.
The coefficients of such matrices take values in the ring $\mc{O}'$, thus are eintire functions.
 They are denoted by symbol $Q$ and $\mc{T}$, because, according to the ODE/IM correspondence, they correspond to the
 Baxter $Q$ operators and the Transfer Matrix of the quantum Drinfeld-Sokolov model.
 \item  We prove that if $\nu$ is a subdominant eigenvalue, namely $\nu$ is such that
$e^{- h \nu z^{\frac{1}{h}}}$ goes to $0$ as fast as possible along a ray of the sector, then
$\Psi_{\nu}(z;\la)$ admits the
asymptotic \eqref{eq:asymptoticinfty} in a larger sector, of amplitude at least
$2 \pi h$.
\end{enumerate}
We remark here that the above results fill major gaps in the literature on the ODE/IM correspondence. In fact, the existence of bases at $\infty$, outside the case $\mf{sl}_2^{(1)}$, was never proven due to the lack, prior
to \cite{cotti23}, of a theory of ODEs with not meromorphic coefficients. Therefore, the matrices $Q,\mc{T}$ could not be defined. Moreover, the subdominant solutions was proven
to exist in \cite{marava15,marava17,mara18} (case $J=\emptyset$ or $r=1$) but its asymptotic was shown to hold only in a small sector, of amplitude $\pi h$; the fact that the asymptotics holds in a large sector is crucial for some applications, such as the extended $QQ$-system \cite{ekhammar20}.

Having studied distinguished solutions at $0$ and $\infty$,
we apply the machinery developed in \cite{marava15,marava17} -- following partial results in \cite{dorey07,Sun12} -- to construct solution of the Bethe Equations.
First we select $\rank \rmf{g}$ fundamental $\tmf{g}$-modules in such a way that in each module there exists a subdominant solution along the real positive axis and that the collection of these $\rank \rmf{g}$  subdominant satisfy a system of non-linear relations known as $\Psi$-system \eqref{eq:Psisystem}.
The coefficients in the expansion of the $\rank\rmf{g}$ subdominant solutions with respect to the basis of Frobenius solutions
belong to the ring $\mc{O}'$ -- hence are entire functions -- and are called generalised spectral determinants.
Substituting these expansions into the $\Psi$-system, and fixing an element $w$ of the Weyl group of $\rmf{g}$
\footnote{We consider the coefficients with respect to those elements of the basis of Frobenius solutions \eqref{eq:frobeniusintro} such that
$\gamma=w(\tilde{\omega}_i)(\ell)$ where $\tilde{\omega}_i$ is the $i-th$ fundamental weight of $\tmf{g}$. See \cite{ekhammar20} for
a detailed discussion on this point.}, one obtains a closed system of functional equations for $2\rank\rmf{g}$ of these entire functions, say $Q_w^{(i)}(\lambda)$, $\widetilde{Q}_w^{(i)}(\lambda)$, $i=1,\dots,\rank\rmf{g}$. This system goes under the name of \emph{$QQ$-system}, and it is given by:
\begin{align}
\begin{split}\label{QQsystemintro}
\prod_{j=1}^{\rank\rmf{g}}\prod_{u=0}^{B_{sj}-1}Q_w^{(j)}(q^{\frac{B_{sj}-1-2u}{r}} \la)
&=e^{\pi i D_s\langle \ell,w(\tilde{\alpha}_s)\rangle}Q_w^{(s)}(q^{D_s} \la)\widetilde{Q}_w^{(s)}(q^{- D_s}\la)\\
&-e^{- \pi i D_s\langle \ell,w(\tilde{\alpha}_s)\rangle}Q_w^{(s)}(q^{-D_s}\la)\widetilde{Q}_w^{(s)}(q^{D_s}\la)
\,,
\end{split}
\end{align}
for $s=1,\dots,\rank\rmf{g}$. In the above formula,  $q=e^{\pi i k}$, $B_{ij}=2\delta_{ij}-C_{ij}$ is the incidence matrix of $\rmf{g}$, while $\tilde{\alpha}_i$ are simple roots of $\tmf{g}$, and $w$ is an arbitrary element of the Weyl group of $\tmf{g}$.
As it was proven in \cite{marava17}, the $QQ$-system implies, under some genericity assumptions, the $^L\rmf{g}$ Bethe Equations:
\beq\label{BAintro}
\prod_{j=1}^{\rank\rmf{g}}e^{i\pi\overline{C}_{sj}\theta^j_w}\frac{Q_w^{(j)}(q^{\overline{C}_{sj}}\la^\ast)}{Q_w^{(j)}(q^{-\overline{C}_{sj}}\la^\ast)}=-1,\qquad s=1,\dots,\rank\rmf{g},
\eeq
for every zero $\lambda^\ast$ of $Q^{(i)}(\la)$. Here $\overline{C}_{sj}=C_{js}D_s$ is the symmetrized Cartan matrix of $^L\rmf{g}$ and $\theta^j_w=\langle \ell,w(\tilde{\omega}_j)\rangle$, with $\tilde{\omega}_j$ the $j$-th fundamental weight of $\tmf{g}$.
We remark here that in other contexts, Bethe Equations different from \eqref{BAintro} are associated to the algebra $^L\rmf{g}$, see e.g. \cite{frhere22,varchenko19}.

\subsubsection*{Section \ref{sec:trivialmonodromy}, trivial monodromy equations.}
In this section we study the trivial monodromy conditions for connections of the form \eqref{ffhintro}.
Following \cite{mara18}, we show that, fixed a basis of $\rmf{n}_+$, the trivial monodromy conditions are equivalent to a system of algebraic equations for the coefficients $\lbrace X(j),y(j),w_j\rbrace_{j \in J}$.
More precisely, in Theorem \ref{thm:trivialmonodromy} we prove that
\begin{itemize}
 \item If $r=1$, the trivial monodromy conditions are equivalent to a complete system of
 $|J|(2h-2)$ equations in$|J|(2h-2)$ scalar unknowns \cite{mara18};
\item If $r>1$, the trivial monodromy conditions are satisfied if and only if $J=\emptyset$.
\end{itemize}
The latter result is rather surprising. We comment on it in the general discussion of the ODE/IM literature below.
We add here that, in order to obviate this negative result, we compute trivial monodromy conditions for a larger class of connections which would lead in principle
to solutions of the QQ system (see \eqref{eq:ffopersourgauge-1} and the discussion therein), still obtaining no solutions.

\subsubsection*{Sections \ref{atpmo} and \ref{sec:opers}, twisted affine parabolic opers}
In the last part of the paper we study the geometric structure underlying FFH connections \eqref{ffhintro},
which is the notion of twisted parabolic Miura $\mf{g}-$opers, as suggested in \cite{FH16}.

In Section \ref{atpmo} we study Dynkin diagram automorphisms for affine algebras.
In Section \ref{sec:opers}, we develop the theory of twisted parabolic Miura $\mf{g}-$opers, which were first defined in \cite{FH16}.
This theory builds on the theory of parabolic Miura $\mf{g}-$opers, introduced (both in the finite and in the affine case) in \cite{FF11} as a generalisation of the notion of Miura opers \cite{frebenzvi01}, as well as on the theory of twisted opers, introduced by Frenkel and Gross in \cite{frgr09} (in the finite dimensional case).
Afterwards, we present the axiomatic definition of FFH opers, as a special class of twisted parabolic Miura
$\mf{g}-$opers defined by local conditions at the singular points \cite{FH16}, and in Theorem \ref{pro:quasinormal} we prove a normal form/rigidity theorem for FFH opers: we show that any FFH admits an essentially unique  (i.e. up to the action of the Weyl group of $\rmf{g}$ on $\ell$) representative as a FFH connection \eqref{ffhintro}.

 \subsection{Brief discussion of the literature on the ODE/IM correspondence}
 We conclude our introduction by contextualising our results within  the literature on the ODE/IM correspondence.
 First, we notice that  FFH connections \eqref{ffhintro} reduce, in particular cases, to connections that were already studied in the previous literature on the ODE/IM correspondence:
 \begin{itemize}
  \item The case $\tmf{g}^{(r)}=\mf{sl}_2^{(1)}$ marked the beginning of the ODE/IM correspondence in the seminal papers
  \cite{doreytateo98,BLZ04}. As shown in \cite{FF11},
  when $J=\emptyset$ connections \eqref{ffhintro} are equivalent of the anharmonic oscillators studied by Dorey and Tateo \cite{doreytateo98}, while in the case $\tmf{g}^{(r)}=\mf{sl}_2^{(1)}$ and arbitrary $J$, they are equivalent to the monster potentials studied by Bazhanov-Lukyanov-Zamolodchikov \cite{BLZ04}.
  \item When $r=1$ and $J=\emptyset$, the connections \eqref{ffhintro} were studied in \cite{marava15}, when $r>1$ and $J=\emptyset$, they were studied in \cite{marava17}, and when $r=1$ and $J \neq \emptyset$, they were studied in \cite{mara18} -- see Remark \ref{rem:xtoz} for the precise identification.
 \end{itemize}
A thorough discussion of the ODE/IM conjecture for FFH connections \eqref{ffhintro} can be found in
\cite{BLZ04,dorey07,marava17,FF11,FH16,mara18,coma20,coma21}, in various particular cases, and in \cite{FH16}, in the general case. Here, without repeating the detailed analysis from the cited literature, we simply recall that the quantum $\mf{g}'$-Drinfeld-Sokolov model is a family of Integrable Quantum Field Theories parameterised by the highest weight $p \in(^L\rmf{h})^\ast$ and central charge $c$ of a representation of the $\mc{W}$-algebra $\mc{W}_{\mf{g}'}$. The (conjectured) duality between \eqref{ffhintro} and $\mf{g}'$-Drinfeld-Sokolov model is as follows:
\begin{itemize}
 \item $\ell \in \rmf{h}$ corresponds to the highest weight/vacuum parameter $p\in(^L\rmf{h})^\ast$, where $(^L\rmf{h})^\ast$ is the dual of a Cartan subalgebra of $^L\rmf{g}$ - note  that $(^L\rmf{h})^\ast\simeq \rmf{h}$;
 \item $k$ corresponds to the central charge $c$;
 \item  Fixed the cardinality $|J|$ of $J$, the solutions $\lbrace w_j,X(j)\rbrace_{j \in J}$ of the trivial monodromy conditions should be in finite number and correspond to Bethe states of degree $|J|$, which are, by definition, the eigenvectors of the Baxter $\mc{Q}$ operators, whose conformal dimension is the conformal dimension of the vacuum plus $|J|$. In particular, when $J =\emptyset$ the oper should correspond to the ground state of the theory. The correspondence between solutions of the trivial monodromy conditions for FFH opers and Bethe states should be such that the generalised spectral determinants of the FFH connections coincide with the eigenvalues of the Baxter $\mc{Q}$ operators.
\end{itemize}
We stress here that while the first two relations are settled once the $QQ$-system \eqref{QQsystemintro} is derived, the latter relation has never been addressed in generality -- it was verified so far only in the case $\tmf{g}=\mathfrak{sl}_2$ (duality monster potentials/Quantum KdV) in the large momentum limit, see \cite{coma20,coma21}.
In fact, even though Baxter $\mc{Q}$ operators
can be defined in principle for quantum Drinfeld-Sokolov models \cite{FH16,wang23}, they were constructed explicitly
only in the case $\mf{g}=\mf{g}'=A_n^{(1)}$ \cite{baluzaI,bazhanov02integrable}, as formal power series.
The only claim that can be verified is that the FFH opers are complete
in the sense that any solution of the Bethe Equation \eqref{BAintro}, in a class of entire functions  satisfying certain boundary conditions expected to define the Quantum Drinfeld-Sokolov model
(see \cite{BLZ04,dorey07}), can be expressed as the spectral determinant of a FFH connection \eqref{ffhintro}.
Therefore, given the state-of-the-art of the mathematical theory of quantum Drinfeld-Sokolov models, our discovery that
the trivial monodromy conditions of FFH opers have no non-trivial solutions in the case of a twisted affine Kac-Moody algebra $\mf{g}$ might provide an indication  that in these cases the FFH $\mf{g}$-connections are not complete, or that in these cases the quantum ($\mf{g}'=\,^L\mf{g}$) theory is not complete. In the former case, one needs to define a different modification of the ground state oper \footnote{The FFH opers with $J=\emptyset$ or $r=1$ were already defined before the paper \cite{FH16}. Due to our proof that FFH opers with $J \neq \emptyset$ and $r>0$ do not exist, the opers that we consider should be a-posteriori be better called FF (Feigin-Frenkel) opers. However, since some of our results do not depend on the trivial monodromy conditions, we prefer to stick to the more general name.}.

The research about the ODE/IM correspondence has been rapidly evolving.
Besides the correspondence between FFH connections and quantum Drinfeld-Sokolov models, many more instances of the ODE/IM correspondences have been later found, including massive deformations of quantum Drinfeld-Sokolov model, the conformal limit of inhomogeneous XXZ-type chain, the sausage model, Kondo lines defects in product of chiral WZW models, see e.g. \cite{bazhanov14,dorey13,lukyanov13,lukyanov10,marava15,bazhanov18,gaiotto21,ekhammar20,
ito19,gaiotto22,fioravanti23,aramini23,kotousov22,bazhanov21scaling,fioravanti23}.
We notice that beyond the theoretical relevance of the ODE/IM correspondence, on the practical level the most efficient, if not the only, way to study integrable quantum field theory is to study the corresponding ODEs, see e.g. \cite{gaiotto22,kotousov22,bazhanov21scaling} for this point.
All these conjectural correspondences form an impressive body of evidence for the existence of a, yet mysterious,  fundamental duality between Integrable Quantum Field Theory and linear differential operators.
Assuming its existence, to any Bethe state of an integrable quantum field theory there corresponds a differential operator in such a way that the solutions to Bethe Equations coincide with the generalised spectral determinants, and the conserved charges can be computed from the coefficients of the operator,  see e.g. \cite{FF11,FH16,mara18,coma21,litvinov13,BLZ04,prochazka23}.

We conclude the introduction with a remark on the $QQ$-system \eqref{QQsystemintro}, which, prior to \cite{dorey07}, was known by other names in the Integrable Model literature, for example as \lq quantum Wronskian\rq\, in \cite{baluzaI} ($\mf{sl}_2$ case) and \cite{bazhanov02integrable} ($\mf{sl}_3$ case), and as  \textit{reproduction procedure} in \cite{mukhin04,mukhin08,mukhin14} (general case).
The $QQ$-system has played a fundamental role in the ODE/IM correspondence since the seminal paper of Dorey and Tateo \cite{doreytateo98}, and, after our works \cite{marava15,marava17}, it has become the cornerstone of the ODE/IM correspondence. Inspired by our derivation of the $QQ$-system, Frenkel and Hernandez \cite{FH16} proved that this is not a coincidence: the $QQ$-system holds as a universal system of relations in the commutative Grothendieck ring $K_0(\mc{O})$ of the category $\mc{O}$ of representations of the Borel subalgebra of the quantum affine algebra $U_q(^L\mf{g})$. This discovery has in turn inspired many developments in the theory of $QQ$-system, which has become an object of interest in itself, see e.g \cite{ekhammar20,fioravanti23,fksz23,frhere22,gaiotto21,korzei21,pinet23,rouven21,wang23,zeitlin22,zullo15}.
It is tempting to claim that the solution of \emph{the mystery of the ODE/IM correspondence} could be unveiled by a deeper understanding of the QQ system and its possible `representations'.

\subsection*{Acknowledgements.}
We are very grateful to Edward Frenkel
for insight and support. We also thank Anton Zeitlin, Dmytro Volin, Evgeny Mukhin, Peter Koroteev, Rouven Frassek and Tom\'as Proch\'azka for many useful discussions.
Davide Masoero thanks Giordano Cotti and Davide Guzzetti for the collaboration which led to the proof of Theorem \ref{thm:CGM}.\\
The authors gratefully acknowledge support from the Department of Mathematics of Genova University (DIMA),
where some of the research for the paper was performed.
Davide Masoero gratefully acknowledges support from the Simons Center for Geometry and Physics, Stony Brook University at which some of the research for this paper was performed. Andrea Raimondo thanks the Group of Mathematical
Physics of Lisbon university for the kind hospitality, during his many visits.
Davide Masoero is supported by the FCT projects UIDB/00208/2020, 2021.00091.CEECIND, and 2022.03702.PTDC (GENIDE), and he is a member of the COST Action CA21109 CaLISTA.
Andrea Raimondo is partially supported by funds of INFN (Istituto Nazionale di Fisica Nucleare) by IS-CSN4 Mathematical Methods of Nonlinear Physics,  and by the INdAM–GNFM Project CUP-E53C22001930001.


\section{Lie algebra preliminaries.}\label{sec:liealgebra1}
Let $\mf{g}'$ be an untwisted affine Kac-Moody algebra, and denote by $\mf{g}=\,^L\!\mf{g}'$ the (Langlands) dual Lie algebra. Then $\mf{g}$ is isomorphic to an affine algebra $\tmf{g}^{(r)}$, where $\tmf{g}$ is a simple and simply-laced Lie algebra and $r$ is the order of a Dynkin automorphism $\sigma$ of $\tmf{g}$. In this section we first review this basic construction, with the purpose of fixing the notation, and we consider some results on the representation theory of $\tmf{g}$, which is needed for the ODE/IM correspondence.

We index the nodes of the Dynkin diagram of $\mf{g}$ by the set $I=\{0,1,\dots,n\}$ as in Table \ref{table:dynkin(r=1)} and Table \ref{table:dynkin(r>1)}, and we let $C=(C_{ij})_{i,j\in I}$ be the Cartan matrix of $\mf{g}$. There exists integers $(a^\vee_0,a^\vee_1,\dots,a^\vee_n)$ and  $(a_0,a_1,\dots,a_n)$ which satisfy
\beq\label{220729-2/0}
\sum_{i\in I}a^\vee_i C_{ij}=0,\qquad \sum_{i\in I} C_{ji}a_i=0,\qquad j\in I.
\eeq  
and are uniquely specified by setting $a_0=a^\vee_0=1$. Removing the $0-$th node from the Dynkin diagram of $\mf{g}$ one gets the Dynkin diagram of a simple Lie algebra $\rmf{g}$, see  Table \ref{table:dynkinfd(r=1)}, Table \ref{table:dynkinfd(r>1)}, and Table \ref{table:affine}. The numbers
$$h=\sum_{i\in I}a_i,\qquad h^\vee=\sum_{i\in I}a_i^\vee,$$
are, respectively, the Coxeter number and the dual Coxeter number of $\mf{g}$.

\subsection{The simple Lie algebras $\tmf{g}$ and $\rmf{g}$.}
Let $\tmf{g}$ one of the simply-laced simple Lie algebras as in Table \ref{table:affine}, with Dynkin diagram as in Table \ref{table:dynkinfd(r=1)} and Table \ref{table:dynkinfd(r>1)}.  Let $\tilde{n}$ be the rank of $\tmf{g}$ and set $\tilde{I}=\{1,\dots,\tilde{n}\}$. Let $\{\tilde{e}_i,\tilde{\alpha}_i^\vee, \tilde{f}_i|i\in\tilde{I}\}$ be Chevalley generators of $\tmf{g}$, satisfying the relations $(i,j\in\tilde{I})$:
$$[\tilde{\alpha}_i^\vee,\tilde{\alpha}_j^\vee]=0,\quad [\tilde{\alpha}_i^\vee,\tilde{e}_j]=\tilde{C}_{ij}\tilde{e}_j,\quad [\tilde{\alpha}_i^\vee,\tilde{f}_j]=-\tilde{C}_{ij}\tilde{f}_j,\quad [\tilde{e}_i,\tilde{f}_j]=\delta_{ij}\tilde{\alpha}_i^\vee,$$
where $\tilde{C}=(\tilde{C}_{ij})_{i,j\in\tilde{I}}$ is the Cartan matrix of $\tmf{g}$. Fix a Cartan subalgebra 
 \beq\label{220718-6/0}
 \tmf{h}=\langle\tilde{\alpha}^\vee_i\,|\,i\in\tilde{I}\rangle, 
\eeq
such that $\{\tilde{\alpha}^\vee_i | i\in\tilde{I}\}\subset \tmf{h}$ are simple coroots, and denote by $\{\tilde{\alpha}_i | i\in\tilde{I}\}\subset \tmf{h}^\ast$ the corresponding set of simple roots, such that $\langle\tilde{\alpha}_i^\vee,\tilde{\alpha}_j\rangle =\tilde{C}_{ij}$. The algebra $\tmf{g}$ admits the root space decomposition
\beq\label{220728-3/0}
\tmf{g}=\tmf{h}\oplus\bigoplus_{\alpha\in\tilde{\Delta}}\tmf{g}_\alpha,
\eeq
where $\tilde{\Delta}\subset\tmf{h}^\ast$ is the set of roots and  $\tmf{g}_\alpha\subset\tmf{g}$ are the root spaces.  Let  $\tmf{n}^+$ (resp. $\tmf{n}^-$) be the  nilpotent subalgebra of $\tmf{g}$ generated by $\{\tilde{e}_i\,|\,i\in\tilde{I}\}$ (resp. $\{\tilde{f}_i\,|\,i\in\tilde{I}\}$), so that
\beq\label{triangulartildeg}
\tmf{g}=\tmf{n}^-\oplus\tmf{h}\oplus\tmf{n}^+.
\eeq
Let $\sigma$ be a Dynkin diagram automorphism of $\tmf{g}$, namely a permutation of the set $\tilde{I}$ such that $\tilde{C}_{\sigma(i)\sigma(j)}=\tilde{C}_{ij}$. Extend $\sigma$ to a Lie algebra automorphism (still denoted by $\sigma$) $\sigma\in\Aut(\tmf{g})$ defined on Chevalley generators by ($i\in\tilde{I}$) 
\beq\label{220723-2/0}
\sigma(\tilde{e}_i)=\tilde{e}_{\sigma(i)},\quad \sigma(\tilde{\alpha}^\vee_i)=\tilde{\alpha}^\vee_{\sigma(i)},\quad \sigma(\tilde{f}_i)=\tilde{f}_{\sigma(i)}.
\eeq
The automorphism $\sigma$ induces the following gradation on $\tmf{g}$:
\beq\label{221102-2/0}
\tmf{g}=\bigoplus_{\ell=0}^{r-1}\tmf{g}_\ell,\qquad \tmf{g}_\ell=\left\{x\in \tmf{g}\,|\,\sigma(x)=\e^\ell x\right\}.
\eeq
where
\beq\label{epsilon/0}
\e=e^\frac{2\pi i}{r}
\eeq
and $r$ is the order of $\sigma$. We denote by $\rmf{g}=\tmf{g}_0$ the fixed-point subalgebra of $\tmf{g}$ under the action of $\sigma$. Then $\rmf{g}$ is a simple Lie algebra, whose Dynkin diagram is obtained from the Dynkin diagram of $\tmf{g}$ by the folding induced by $\sigma$. 

Due to the numbering of the Dynkin diagrams in Table \ref{table:dynkinfd(r>1)} we have that the nodes $1,\dots,n$ lie in different orbits, so we can represent the set of orbits by $\mr{I}=\{1\dots,n\}\subset \tilde{I}$. Note, incidentally, that the Dynkin diagram of $\rmf{g}$ can also be obtained from that of the affine algebra $\mf{g}$ by removing the $0$-th node, so that $\mr{I}=I\setminus\{0\}=\{1,\dots,n\}$. For $i\in\tilde{I}$ we denote by $\langle i\rangle\in\bb{Z}^+$ the cardinality of the $i-$th $\sigma$-orbit, and we set
\beq\label{Di/0}
D_i=\frac{\langle i\rangle}{r},\qquad i\in\tilde{I}.
\eeq
The Cartan matrix $(C_{ij})_{i,j\in \mr{I}}$ of $\rmf{g}$ can be obtained summing over the columns of $\tilde{C}$ along the orbits of $\sigma$:
$$C_{ij}=\sum_{\ell=1}^{\langle i\rangle}\tilde{C}_{\sigma^\ell(i)j},\qquad i,j\in \mr{I}.$$
 The elements
\beq\label{220717-4/0}
e_i=\sum_{\ell=1}^{\langle i\rangle}\tilde{e}_{\sigma^\ell(i)},\quad \alpha^\vee_i=\sum_{\ell=1}^{\langle i\rangle}\tilde{\alpha}^\vee_{\sigma^\ell(i)},\quad f_i=\sum_{\ell=1}^{\langle i\rangle}\tilde{f}_{\sigma^\ell(i)},\qquad i\in \mr{I}.
\eeq
satisfy the relations ($i,j\in \mr{I}$)
$$[\alpha^\vee_i,\alpha^\vee_j]=0,\quad[\alpha^\vee_i,e_j]=C_{ij}e_j,\quad [\alpha^\vee_i,f_j]=-C_{ij}f_j,\quad [e_i,f_j]=\delta_{ij}\alpha^\vee_i,$$
together with the Serre relations, and are therefore Chevalley generators of $\rmf{g}$. In addition, the elements
\beq\label{221026-1/0}
\alpha_i=\frac{1}{\langle i\rangle}\sum_{\ell=1}^{\langle i\rangle}\tilde{\alpha}_{\sigma^\ell(i)},\qquad i\in \mr{I}
\eeq
are simple roots of $\rmf{g}$, namely they satisfy $\langle \alpha_i^\vee,\alpha_j\rangle=C_{ij}$ ($i,j\in \mr{I}$). 
Denoting
\beq\label{220720-1/0}
\rmf{n}^-=\tmf{n}^-\cap\rmf{g},\qquad\rmf{h}=\tmf{h}\cap\rmf{g},\qquad\rmf{n}^+=\tmf{n}^+\cap\rmf{g},
\eeq
then we obtain the triangular decomposition
$$\rmf{g}=\rmf{n}^-\oplus\rmf{h}\oplus\rmf{n}^+.$$
We denote by $\rmf{b}^\pm=\rmf{h}\oplus\rmf{n}^\pm$ the corresponding Borel subalgebras. In particular, we have:
\beq\label{220720-8}
\rmf{h}=\langle\alpha^\vee_i\,|\,i\in \mr{I}\rangle=\langle\alpha^\vee_1,\dots,\alpha^\vee_n\rangle,
\eeq
and similarly
$$\rmf{h}^\ast=\langle\alpha_i\,|\,i\in \mr{I}\rangle=\langle\alpha_1,\dots,\alpha_n\rangle.$$
We introduce the following elements:
\begin{enumerate}[i)]
\item Denote by $\mr{f}\in\rmf{g}$ the element
\beq\label{221026-2/0}
\mr{f}=\sum_{j\in \mr{I}}f_j,
\eeq
which is a principal nilpotent element of $\rmf{g}$.
\item Denote by $\theta^\vee\subset\rmf{h}$ and $\theta\subset\rmf{h}^\ast$ the elements
\beq\label{221026-3/0}
\theta^\vee=\sum_{i\in\mr{I}}a^\vee_i\alpha^\vee_i,\qquad \theta=\sum_{i\in\mr{I}}a_i\alpha^\vee_i,
\eeq
where  $a_i$ and $a^\vee_i$ ($i\in \mr{I}$) are the integer coefficients satisfying \eqref{220729-2/0}. The element $\theta$ is the highest short root of $\rmf{g}$ (i.e. the highest root of $\rmf{g}$ if $r=1$); its height is $h-1$, where $h$ is the Coxeter number of $\rmf{g}$.
\item Let $\{\mr{\omega}^\vee_i | i\in\mr{I}\}$ be fundamental coweights of $\rmf{g}$, defined by the relations $\langle\mr{\omega}^\vee_j,\alpha_j\rangle=\delta_{ij}$ ($i,j\in\mr{I}$). The element
\beq\label{mrrho}
\mr{\rho}^\vee=\sum_{i\in\mr{I}}\mr{\omega}^\vee_i
\eeq
satisfies $\langle\mr{\rho}^\vee,\alpha_i\rangle=1$ ($i\in\mr{I}$), and therefore its adjoint action induces the principal gradation on $\rmf{g}$, defined as
\beq\label{221102-1/0}
\rmf{g}=\bigoplus_{j=1-h^\vee}^{h^\vee-1}\rmf{g}^j,\qquad \rmf{g}^j=\{x\in\rmf{g}\,|\,[\mr{\rho}^\vee,x]=jx\},
\eeq
where $h^\vee$ is the dual Coxeter number of $\mf{g}$. In particular,
\beq
\rmf{g}^{-1}=\bigoplus_{j\in\mr{I}}\bb{C}\mr{f}_j,
\eeq
so that $\mr{f}\in\rmf{g}^{-1}$. 
\end{enumerate}
The principal gradation \eqref{221102-1/0} of $\rmf{g}$ induces a gradation on $\mf{s}$ by setting $\mf{s}^{i}=\mf{s}\cap\rmf{g}^i$. Moreover, one has $\dim\mf{s}^i=\dim\rmf{g}^{i}-\dim\rmf{g}^{i+1}$, and if $\dim\mf{s}^i>0$ then $i$ is said to be an exponent of $\rmf{g}$. Counting multiplicities there are $n$ exponents, which we will denote by $d_1,\dots,d_n$.

The gradation \eqref{221102-2/0} induced by $\sigma$ decomposes $\tmf{g}$ into a direct sum of $\rmf{g}-$modules: the subspace $\tmf{g}_0\simeq\rmf{g}$ is a $\rmf{g}-$module via the adjoint representation, if $r>1$ $\tmf{g}_{1}$ is the (unique) quasi-minuscule $\rmf{g}-$module, and if $r>2$ then $\tmf{g}_{2}$ is isomorphic to $\tmf{g}_{1}$. The weights of $\tmf{g}_1$ (and of $\tmf{g}_2$ for $r>2$) are the zero weight and the short roots of $\rmf{g}$.
\begin{definition}\label{221108-2/0}
Recall that the element $\theta$ given in \eqref{221026-3/0} is the highest short root of $\rmf{g}$. We denote by $V_\theta$ the unique quasi-minuscule $\rmf{g}$-module, namely the irreducible $\rmf{g}-$module with highest weight $\theta$, and we let $v_\theta\in V_\theta$ be a highest weight vector. If $r=1$ then  $V_\theta\simeq \rmf{g}\simeq \tmf{g}$ (the adjoint representation) and we identify $v_\theta$ with $e_\theta$, the highest root vector of $\tmf{g}\simeq \rmf{g}$. If $r>1$ then $V_\theta\simeq \tmf{g}_1$, and in particular  $v_\theta\in\tmf{g}\setminus\rmf{g}$.
\end{definition}
The element $v_\theta$ is defined up to a nonzero scalar multiple, which will be fixed in the next section (see Theorem \ref{thm:genlamba}). Note, incidentally, that denoting by $\mr{\mc{N}}$ the group 
\beq\label{221108-1/0}
\mr{\mc{N}}=\{\exp(y)\,|\,y\in\rmf{n}^+\},
\eeq
then $v_\theta$ is fixed by $\mr{\mc{N}}$, that is
\beq\label{221115-2/0}
gv_\theta=v_\theta,\qquad g\in\mr{\mc{N}}.
\eeq
This follows immediately from the fact that $v_\theta$ is a highest weight vector for a $\rmf{g}-$module. 
 %


\begin{center}
\begin{table}[h]
\caption{Dynkin diagram of $\tmf{g}\simeq \rmf{g}$ (case $r=1$).}\label{table:dynkinfd(r=1)}
\begin{center}
\begin{tabular}{c c}

\begin{tikzpicture}
\coordinate (N1) at (0,0);
\coordinate (N2) at (1,0) ; 
\coordinate (Nn-1) at (3,0) ;
\coordinate (Nn) at (4,0);
\foreach \i in {1,2,n-1,n}{\filldraw (N\i) circle[radius=2pt] node[below]{\small{$\i$}};}


\draw[thick] (Nn) -- (Nn-1);
\draw[thick] (N2)--(N1);
\draw[thick] (N2) -- +(.7,0);
\draw[dotted,thick ] (N2)--(Nn-1);
\draw[thick] (Nn-1) -- +(-.7,0);


\node at ($(N1)!.5!(Nn)+(0,1)$) {$\tmf{g}=\rmf{g}=A_{n}$, $(n\geq 1)$};
\end{tikzpicture}

\\
\\
\\

\begin{tikzpicture}
\coordinate (N1) at (0,0);
\coordinate (N3) at ($(N1)+(1,0)$) ; 
\coordinate (Nn-3) at ($(N3)+(2,0)$);
\coordinate (Nn-2) at ($(Nn-3)+(1,0)$);

\coordinate (N0) at ($(N1)+(135:1)$);
\coordinate (Nn-1) at ($(Nn-2)+(45:1)$);
\coordinate (Nn) at ($(Nn-2)+(-45:1)$);

\foreach \i in {1,2,n-3,n-2,n}{\filldraw (N\i) circle[radius=2pt] node[below]{\small{$\i$}};}
\foreach \i in {n-1}{\filldraw (N\i) circle[radius=2pt] node[above]{\small{$\i$}};}

\draw[thick] (N1)--(N2);
\draw[dotted,thick] (N2)--(Nn-3);
\draw[thick] (N3) -- +(.7,0);
\draw[thick] (Nn-3) -- +(-.7,0);
\draw[thick] (Nn-3)--(Nn-2)--(Nn-1);
\draw[thick] (Nn-2)--(Nn);

\node at ($(N2)!.5!(Nn-2)+(0,1)$){$\tmf{g}=\rmf{g}=D_n$, $(n\geq 4)$};

\end{tikzpicture}

\\
\\

\begin{tikzpicture}
\coordinate (N1) at (0,0);
\coordinate (N2) at ($(N1)+(1,0)$);
\coordinate (N3) at ($(N2)+(1,0)$);
\coordinate (N4) at ($(N3)+(0,1)$);
\coordinate (N5) at ($(N3)+(1,0)$);
\coordinate (N6) at ($(N5)+(1,0)$);

\foreach \i in {1,2,...,6}{\filldraw (N\i) circle[radius=2pt] node[below right]{$\i$};}

\draw[thick] (N1)--(N2) -- (N3) -- (N5) --(N6);
\draw[thick] (N3) -- (N4);

\node at ($(N1)!.5!(N6)+(0,2)$) {$\tmf{g}=\rmf{g}=E_{6}$};
\end{tikzpicture}

\\
\\

\begin{tikzpicture}
\coordinate (N1) at (0,0);
\coordinate (N2) at ($(N1)+(1,0)$);
\coordinate (N3) at ($(N2)+(1,0)$);
\coordinate (N4) at ($(N3)+(1,0)$);
\coordinate (N5) at ($(N4)+(0,1)$);
\coordinate (N6) at ($(N4)+(1,0)$);
\coordinate (N7) at ($(N6)+(1,0)$);

\foreach \i in {1,2,...,7}{\filldraw (N\i) circle[radius=2pt] node[below right]{$\i$};}

\draw[thick] (N1)--(N2) -- (N3) -- (N4) --(N6) -- (N7);
\draw[thick] (N4) -- (N5);

\node at ($(N1)!.5!(N7)+(0,2)$) {$\tmf{g}=\rmf{g}=E_{7}$};
\end{tikzpicture}

\\
\\

\begin{tikzpicture}
\coordinate (N1) at ($(N0)+(1,0)$);
\coordinate (N2) at ($(N1)+(1,0)$);
\coordinate (N3) at ($(N2)+(1,0)$);
\coordinate (N4) at ($(N3)+(1,0)$);
\coordinate (N5) at ($(N4)+(1,0)$);
\coordinate (N6) at ($(N5)+(0,1)$);
\coordinate (N7) at ($(N5)+(1,0)$);
\coordinate (N8) at ($(N7)+(1,0)$);

\foreach \i in {1,2,...,8}{\filldraw (N\i) circle[radius=2pt] node[below right]{$\i$};}

\draw[thick] (N1)--(N2) -- (N3) -- (N4) --(N5) -- (N7) -- (N8);
\draw[thick] (N6) -- (N5);

\node at ($(N1)!.5!(N8)+(0,2)$) {$\tmf{g}=\rmf{g}=E_{8}$};
\end{tikzpicture}

\end{tabular}
\end{center}
\end{table}
\end{center}


\begin{center}
\begin{table}[h]
\caption{Dynkin diagram of $\tmf{g}$ with the automorphism $\sigma$ (of order $r>1$), and the Dynkin diagram of the folded algebra $\rmf{g}$. In the table, we set $D_3=A_3$.}\label{table:dynkinfd(r>1)}
\begin{center}
\begin{tabular}{c c}

\begin{tikzpicture}
\coordinate (N1) at (0,0);
\coordinate (N2) at (1,0) ; 
\coordinate (Nn_2) at (3,0) ;
\coordinate (Nn_1) at (4,0);
\filldraw (N1) circle[radius=2pt] node[above right]{\small{$1$}};
\filldraw (N2) circle[radius=2pt] node[above]{\small{$2$}};
\filldraw (Nn_2) circle[radius=2pt] node[below]{\small{$n-2$}};
\filldraw (Nn_1) circle[radius=2pt] node[above]{\small{$n-1$}};
\draw[thick] (N1)--(N2);
\draw[thick] (N2) -- +(.7,0);
\draw[dotted,thick ] (N2)--(Nn_2);
\draw[thick] (Nn_2) -- +(-.7,0);
\draw[thick] (Nn_2) -- (Nn_1);

\coordinate (Nn) at ($(Nn_1)+(45:1)$);
\filldraw (Nn) circle[radius=2pt] node[above]{\small{$n$}};
\draw[thick] (Nn_1)--(Nn);

\coordinate (Nn+1) at ($(Nn_1)+(-45:1)$);
\filldraw (Nn+1) circle[radius=2pt] node[below]{\small{$n+1$}};
\draw[thick] (Nn_1)--(Nn+1);

\coordinate (Nmed) at ($(Nn)!.5!(Nn+1)$);
\node at ($(N1)!.5!(Nmed)+(0,1.5)$) {$\tmf{g}=D_{n+1}$, $(n\geq 2)$};

\draw [<->, >=stealth, dashed, thin](Nn+1) .. controls ($(Nn+1)!.5!(Nn)+(.2,0)$)   .. (Nn) ;
\node[right] at ($(Nn+1)!.5!(Nn)+(.2,0)$){$\sigma$};
\end{tikzpicture}

&

\begin{tikzpicture}
\coordinate (N1) at (0,0);
\coordinate (N2) at (1,0) ; 
\coordinate (Nn_1) at (3,0) ;
\coordinate (Nn) at (4,0);
\node at ($(N1)!.5!(Nn)+(0,1.5)$) {$\rmf{g}=B_{n}$, $(n\geq 2)$};
\filldraw (N1) circle[radius=2pt] node[above]{\small{$1$}};
\filldraw (N2) circle[radius=2pt] node[above]{\small{$2$}};
\filldraw (Nn_1) circle[radius=2pt] node[above]{\small{$n-1$}};
\filldraw (Nn) circle[radius=2pt] node[above]{\small{$n$}};
\draw[thick] (N1)--(N2);
\draw[thick] (N2) -- +(.7,0);
\draw[dotted,thick ] (N2)--(Nn_1);
\draw[thick] (Nn_1) -- +(-.7,0);
\draw[->,-stealth,thick,double] (Nn_1) -- (Nn);
\end{tikzpicture}
\\
\\
\\

\begin{tikzpicture}
\coordinate (Nn) at (0,0);
\coordinate (Nn+1) at ($(Nn)+(120:1)$);
\coordinate (N2n_2) at ($(Nn+1)+(-2,0)$);
\coordinate (N2n_1) at ($(N2n_2)+(-1,0)$);
\coordinate (Nn_1) at ($(Nn)+(240:1)$);
\coordinate (N2) at ($(Nn_1)+(-2,0)$);
\coordinate (N1) at ($(N2)+(-1,0)$);
\filldraw (Nn) circle[radius=2pt] node[above right]{\small{$n$}};
\filldraw (Nn+1) circle[radius=2pt] node[above]{\small{$n+1$}};
\filldraw (N2n_2) circle[radius=2pt] node[above]{\small{$2n-2$}};
\filldraw (N2n_1) circle[radius=2pt] node[above]{\small{$2n-1$}};
\filldraw (Nn_1) circle[radius=2pt] node[below]{\small{$n-1$}};
\filldraw (N2) circle[radius=2pt] node[below]{\small{$2$}};
\filldraw (N1) circle[radius=2pt] node[below]{\small{$1$}};

\draw[thick] (Nn) -- (Nn_1);
\draw[thick] (Nn_1) -- +(-.7,0);
\draw[dotted,thick] (Nn_1) -- (N2);
\draw[thick] (N2) -- +(.7,0);
\draw[thick] (N1) -- (N2);
\draw[thick] (Nn) -- (Nn+1);
\draw[thick] (Nn+1) -- +(-.7,0);
\draw[dotted,thick] (Nn+1) -- (N2n_2);
\draw[thick] (N2n_2) -- +(.7,0);
\draw[thick] (N2n_1) -- (N2n_2);

\draw [<->, >=stealth, dashed, thin](Nn+1) .. controls ($(Nn+1)!.5!(Nn_1)+(-.2,0)$)   .. (Nn_1) ;
\draw [<->, >=stealth, dashed, thin](N2n_2) .. controls ($(N2n_2)!.5!(N2)+(-.2,0)$)   .. (N2) ;
\node[right] at ($(N2n_2)!.5!(N2)+(-.2,0)$){$\sigma$};
\draw [<->, >=stealth, dashed, thin](N2n_1) .. controls ($(N2n_1)!.5!(N1)+(-.2,0)$)   .. (N1) ;

\node at ($($(N2n_1)!.5!(N1)+(-.7,0)$)!.5!(Nn)+(0,2)$) {$\tmf{g}=A_{2n-1}$, $(n\geq 3)$};

\end{tikzpicture}

&

\begin{tikzpicture}
\coordinate (Nn) at (0,0);
\coordinate (Nn_1) at ($(Nn)+(-1,0)$);
\coordinate (N3) at ($(Nn_1)+(-2,0)$);
\coordinate (N1) at ($(N3)+(-1,0)$);
\filldraw (Nn) circle[radius=2pt] node[above]{\small{$n$}};
\filldraw (Nn_1) circle[radius=2pt] node[above]{\small{$n-1$}};
\filldraw (N3) circle[radius=2pt] node[above]{\small{$2$}};
\filldraw (N1) circle[radius=2pt] node[above right]{\small{$1$}};

\draw[->,-stealth,thick,double] (Nn) -- (Nn_1);
\draw[thick] (Nn_1) -- +(-.7,0);
\draw[dotted,thick] (Nn_1) -- (N3);
\draw[thick] (N3) -- +(.7,0);
\draw[thick] (N1) -- (N3);

\node at ($(N1)!.5!(Nn)+(0,2)$) {$\rmf{g}=C_{n}$, $(n\geq 3)$};

\end{tikzpicture}
\\
\\

\begin{tikzpicture}
\coordinate (N3) at (0,0);
\coordinate (N4) at ($(N3)+(1,0)$);
\coordinate (N2) at ($(N3)+(135:1)$);
\coordinate (N1) at ($(N2)+(-1,0)$);
\coordinate (N5) at ($(N3)+(225:1)$);
\coordinate (N6) at ($(N5)+(-1,0)$);
\filldraw (N1) circle[radius=2pt] node[above]{\small{$1$}};
\filldraw (N2) circle[radius=2pt] node[above]{\small{$2$}};
\filldraw (N3) circle[radius=2pt] node[above]{\small{$3$}};
\filldraw (N4) circle[radius=2pt] node[below]{\small{$4$}};
\filldraw (N5) circle[radius=2pt] node[below]{\small{$5$}};
\filldraw (N6) circle[radius=2pt] node[below]{\small{$6$}};

\draw[thick] (N1) -- (N2);
\draw[thick] (N2) -- (N3);
\draw[thick] (N3) -- (N4);
\draw[thick] (N3) -- (N5);
\draw[thick] (N5) -- (N6);

\draw [<->, >=stealth, dashed, thin](N2) .. controls ($(N2)!.5!(N5)+(-.2,0)$)   .. (N5) ;
\draw [<->, >=stealth, dashed, thin](N1) .. controls ($(N1)!.5!(N6)+(-.2,0)$)   .. (N6) ;
\node[left] at ($(N1)!.5!(N6)+(-.2,0)$){$\sigma$};

\node at ($($(N1)!.5!(N6)+(-.8,0)$)!.5!(N4)+(0,1.5)$) {$\tmf{g}=E_{6}$};

\end{tikzpicture}

&

\begin{tikzpicture}
\coordinate (N3) at (0,0);
\coordinate (N4) at ($(N3)+(1,0)$);
\coordinate (N2) at ($(N3)+(-1,0)$);
\coordinate (N1) at ($(N2)+(-1,0)$);

\foreach \i in {1,...,4}{\filldraw (N\i) circle[radius=2pt] node[above]{\small{$\i$}};}

\draw[thick] (N1) -- (N2);
\draw[->,-stealth,thick,double] (N3) -- (N2);
\draw[thick] (N3) -- (N4);

\node at ($(N1)!.5!(N4)+(0,1.5)$) {$\rmf{g}=F_{4}$};

\end{tikzpicture}
 \\
 \\

\begin{tikzpicture}
\coordinate (N2) at (0,0);
\coordinate (N1) at ($(N2)+(-1,0)$);
\coordinate (N4) at ($(N2)+(60:1)$);
\coordinate (N3) at ($(N2)+(300:1)$);
\foreach \i in {2,3,4}{\filldraw (N\i) circle[radius=2pt] node[right]{\small{$\i$}};}
\foreach \i in {1}{\filldraw (N\i) circle[radius=2pt] node[left]{\small{$\i$}};}
\draw[thick] (N2) -- (N3);
\draw[thick] (N2) -- (N4);
\draw[thick] (N1) -- (N2);
\draw [->, >=stealth, dashed, thin](N4) .. controls ($(N4)!.5!(N1)+(120:0.3)$)   .. (N1) ;
\draw [->, >=stealth, dashed, thin](N1) .. controls ($(N1)!.5!(N3)+(240:0.3)$)   .. (N3) ;
\draw [<-, >=stealth, dashed, thin](N4) .. controls ($(N4)!.5!(N3)+(.3,0)$)   .. (N3) ;
\node[above left] at ($(N4)!.5!(N1)+(-.2,0)$){$\sigma$};
\node at ($(N1)!.5!($(N4)!.5!(N3)$)+(0,1.5)$) {$\tmf{g}=D_{4}$};

\end{tikzpicture}

&

\begin{tikzpicture}
\tikzset{Rightarrow/.style={double equal sign distance,>={Implies},->},
triple/.style={-,preaction={draw,Rightarrow}},
quadruple/.style={preaction={draw,Rightarrow,shorten >=0pt},shorten >=1pt,-,double,double
distance=0.2pt}}

\coordinate (N2) at (0,0);
\coordinate (N1) at ($(N2)+(-1,0)$);

\filldraw (N1) circle[radius=2pt] node[above]{\small{$1$}};
\filldraw (N2) circle[radius=2pt] node[above]{\small{$2$}};

\draw[thick,->,-stealth,triple] (N2) -- (N1);

\node at ($(N1)!.5!(N2)+(0,1.5)$) {$\rmf{g}=G_{2}$};
\end{tikzpicture}

\end{tabular}
\end{center}
\end{table}
\end{center}

\subsection{The affine Kac-Moody algebra $\mf{g}=\tmf{g}^{(r)}$.} 
Let $\mf{g}=\tmf{g}^{(r)}$ be one of the affine Kac-Moody algebra listed in  Table \ref{table:affine}. We recall the loop presentation of $\mf{g}$, the construction is standard \cite{Kac90}. Let $\mc{L}(\tilde{\mf{g}})=\tilde{\mf{g}}[\lambda,\lambda^{-1}]$ denote the loop algebra of $\tilde{\mf{g}}$, with the natural Lie algebra structure which extends the one of  $\tilde{\mf{g}}$.   Extend also the action of $\sigma$ from  $\tilde{\mf{g}}$ to $\mc{L}(\tilde{\mf{g}})$ by letting
$$\sigma(\lambda^m x)=
\e^{-m}
\lambda^m\sigma(x),\qquad m\in\bb{Z}, x\in \tilde{\mf{g}},$$
with $\e$ given by \eqref{epsilon/0}. Note that this action reduces to the identity when $r=1$. The affine Kac-Moody algebra $\mf{g}=\tilde{\mf{g}}^{(r)}$ is defined as
\beq\label{loopg}
\mf{g}=\tilde{\mf{g}}^{(r)}=\bigoplus_{k=0}^{r-1}\lambda^k\tilde{\mf{g}}_k[\lambda^{r},\lambda^{-r}]\oplus \bb{C}K\oplus \bb{C}{\bf d},
\eeq
where $K$ is central and the element $\scaling$ satisfies $[{\bf d},\lambda^mx]=-m\lambda^mx$, for $m\in\bb{Z}$ and $x\in\tilde{\mf{g}}$. Defining $\sigma(K)=K$ and $\sigma({\bf d})={\bf d}$ then $\tilde{\mf{g}}^{(r)}$ is $\sigma$-invariant. Let $I=\{0,1,\dots,n\}=\{0\}\cup\mr{I}$. The Chevalley generators $\{e_i,\alpha^\vee_i,f_i \, |\, i\in I\}$ of $\tilde{\mf{g}}^{(r)}$ can be written in terms of $\tilde{\mf{g}}$ as follows. The elements $\{e_i,\alpha^\vee_i,f_i \, |\, i\in\mr{I}\}$ are given by \eqref{220717-4/0}, in particular, they generate the simple Lie algebra $\tilde{\mf{g}}_0\simeq\mathring{\mf{g}}$ described above. On the other hand, the generators $e_0$, $\alpha^\vee_0$ and $f_0$ are given by
\beq\label{221116-1/0}
f_0=\lambda v_\theta,\qquad \alpha^\vee_0=K-\theta^\vee,\qquad e_0=\lambda^{-1}v_{-\theta},
\eeq
where $\theta^\vee\in\rmf{h}$ is given by \eqref{221026-3/0}, the element $v_{\theta}\in\tmf{g}$ was introduced in Definition \ref{221108-2/0},  and $v_{-\theta}\in\tmf{g}$ is uniquely defined by the relation $(v_\theta|v_{-\theta})=1$, where $(\cdot|\cdot)$ is the normalized invariant bilinear form on $\tmf{g}$. Finally, the scaling element $\scaling$ is realized  as
\beq\label{221116-2}
\scaling=-\lambda\partial_\lambda.
\eeq

\begin{remark}\label{rmk:negativesign}
The choice of the negative sign in \eqref{221116-2} together with \eqref{221116-1/0} implies the relations $[\scaling,e_0]=e_0$, $[\scaling,f_0]=-f_0$.
\end{remark}
We define
\beq\label{f}
f=\sum_{i\in I}f_i=\sum_{i\in\mr{I}}f_i+f_0=\mr{f}+f_0,
\eeq
where $\mr{f}$ is the principal nilpotent element \eqref{221026-2/0}. Equivalently, we can write
\beq\label{fla}
f=\mr{f}+\lambda v_\theta.
\eeq
We extend the action of the group $\mr{\mc{N}}$ to $\mf{g}$ by setting
\beq\label{230913-1}
g(\lambda^mx)=\lambda^m gx,\qquad gK=K,\qquad g\scaling=\scaling,
\eeq
for $g\in\mr{\mc{N}}$, $m\in\bb{Z}$, $x\in\tmf{g}$. In particular, due to \eqref{221115-2/0} we have
\beq\label{230913-2}
gf_0=g(\la v_\theta)=\la gv_\theta=\la v_\theta=f_0.
\eeq

\subsection{ $\tmf{g}$-modules}
Let $\tmf{g}$ be the simple Lie algebra \eqref{220728-3/0} with Cartan subalgebra \eqref{220718-6/0}.  Denote by $P(\tmf{g})\subset\tmf{h}^\ast$ the weight lattice of $\tmf{g}$ and by $P^+(\tmf{g})$ the set of dominant integral weights. For every $\omega\in P^+(\tmf{g})$ we denote by $L(\omega)$ the irreducible highest weight $\tmf{g}$-module with highest weight $\omega$, and we let $P_\omega(\tmf{g})\subset P(\tmf{g})$ be the set of weights of $L(\omega)$.   Let $\{\tilde{\omega}_i\,|\,i\in\tilde{I}\}\subset P^+(\tmf{g})$ be fundamental weights of $\tmf{g}$, satisfying $\langle\bar{\alpha}^\vee_i,\tilde{\omega}_j\rangle=\delta_{ij}$ ($i,j\in\tilde{I}$).  The corresponding $\tmf{g}$-modules $L(\tilde{\omega}_i)$  ($i\in\tilde{I}$) are known as fundamental $\tmf{g}$-modules. For every $i\in\tilde{I}$ we fix a highest weight vector $v_i$ of $L(\tilde{\omega}_i)$. Let $\widetilde{B}=(\widetilde{B}_{ij})_{i,j\in\tilde{I}}$ be the incidence matrix of the Dynkin diagram of $\tmf{g}$, namely $\widetilde{B}_{ij}=2\delta_{ij}-\widetilde{C}_{ij}$ ($i,j\in\tilde{I}$). Note that $\widetilde{B}_{ij}\geq0$. Define the dominant weights
$$\eta_i=\sum_{j\in\tilde{I}}\widetilde{B}_{ij}\tilde{\omega}_j\qquad i\in\tilde{I}.$$
Then as proved in \cite{marava15} for every $i\in\tilde{I}$ the weight $\eta_i$ is a highest weight (of multiplicity one) of the following two $\tmf{g}-$modules:
$$\bigwedge^2 L(\tilde{\omega}_i),\qquad \bigotimes_{j\in\tilde{I}}L(\tilde{\omega}_j)^{\otimes\widetilde{B}_{ij}},$$ 
with highest weight vector given, respectively, by $\bar{f}_iv_i\wedge v_i$ and $\otimes_{j\in\tilde{I}}v_j^{\widetilde{B}_{ij}}$ ($i\in\tilde{I}$). By complete reducibility, for every $i\in\tilde{I}$ there exists a $\tmf{g}-$module $U_i$ such that $\bigwedge^2 L(\tilde{\omega}_i)=L(\eta_i)\oplus U_i$, and an homomorphism of Lie algebras
\beq\label{mtilde}
\widetilde{m}_i:\bigwedge^2 L(\tilde{\omega}_i)\to \bigotimes_{j\in\mr{I}}\bigotimes_{\ell=0}^{\langle j\rangle -1}L(\tilde{\omega}_{\sigma^\ell(j)})^{\otimes\widetilde{B}_{i\sigma^\ell(j)}}
\eeq
uniquely fixed by the conditions $\ker \widetilde{m}_i=U_i$ and $\widetilde{m}_i(\bar{f}_iv_i\wedge v_i)=\otimes_{j\in\tilde{I}}v_j^{\widetilde{B}_{ij}}.$ \\

We now consider twisted $\tmf{g}-$modules, and we extend the action of $\sigma$ on $\tmf{g}-$modules.
If $V$ is a $\tmf{g}$-module and $\Phi:\tmf{g}\to\End(V)$ the corresponding representation, then we define the twisted  representation as
\beq\label{220811-1}
\Phi^\sigma=\pi\circ\sigma^{-1}:\tmf{g}\to\End(V), 
\eeq
and we denote by $V^\sigma$ the vector space $V$ with the $\tmf{g}-$module structure induced by $\Phi^\sigma$.   We define the action of $\sigma$ on $\tmf{h}^\ast$ by
$$\langle h,\sigma\omega \rangle=\langle \sigma^{-1}h,\omega \rangle, \qquad h\in\tmf{h},\omega\in\tmf{h}^\ast,$$
so that in particular $\sigma\tilde{\omega}_i=\tilde{\omega}_{\sigma(i)}$ for every fundamental weight $\tilde{\omega}_i$ ($i\in\tilde{I}$).
The following lemma is an elementary extension of a result proved in \cite{marava17}:
\begin{lemma}\label{221111-1}
Let $L(\omega)$ be an irreducible $\tmf{g}-$module with highest weight $\omega$. Then $L(\omega)^\sigma$ is irreducible and there  exists an isomorphism of $\tmf{g}-$modules  $L(\omega)^\sigma\simeq L(\sigma\omega).$ In particular, for every $i\in \tilde{I}$ the  $\tmf{g}-$module $L(\tilde{\omega}_i)^\sigma$ is isomorphic to the fundamental $\tmf{g}-$module $L(\tilde{\omega}_{\sigma(i)})$.
\end{lemma}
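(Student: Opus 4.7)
The plan is to establish both assertions by transporting structure along the automorphism $\sigma$. First I would address irreducibility: since $\sigma:\tmf{g}\to\tmf{g}$ is an automorphism, the map $x\mapsto \sigma^{-1}(x)$ is a bijection of $\tmf{g}$, so a subspace $W\subseteq L(\omega)$ is stable under $\Phi^\sigma$ if and only if it is stable under $\Phi$. The irreducibility of $L(\omega)$ therefore immediately transfers to $L(\omega)^\sigma$.

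Next I would identify the highest weight of $L(\omega)^\sigma$. Let $v\in L(\omega)$ be a highest weight vector, so that $\Phi(h)v=\langle h,\omega\rangle v$ for all $h\in\tmf{h}$ and $\Phi(\tilde e_i)v=0$ for all $i\in\tilde I$. Since the Dynkin diagram automorphism $\sigma$ permutes the Chevalley generators by \eqref{220723-2/0}, one has $\sigma^{-1}(\tilde e_i)=\tilde e_{\sigma^{-1}(i)}$ and $\sigma^{-1}(\tmf{h})=\tmf{h}$. Therefore
\begin{equation*}
\Phi^\sigma(\tilde e_i)v=\Phi(\tilde e_{\sigma^{-1}(i)})v=0,\qquad i\in\tilde I,
\end{equation*}
and for $h\in\tmf{h}$, using the definition $\langle h,\sigma\omega\rangle=\langle \sigma^{-1}h,\omega\rangle$,
\begin{equation*}
\Phi^\sigma(h)v=\Phi(\sigma^{-1}h)v=\langle \sigma^{-1}h,\omega\rangle v=\langle h,\sigma\omega\rangle v.
\end{equation*}
Thus $v$ is a highest weight vector of $L(\omega)^\sigma$ of weight $\sigma\omega$. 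Combined with the first step, $L(\omega)^\sigma$ is an irreducible highest weight $\tmf{g}$-module of highest weight $\sigma\omega$, hence isomorphic to $L(\sigma\omega)$.

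For the second statement, I would simply specialize to $\omega=\tilde\omega_i$. Because $\sigma$ permutes simple coroots by $\sigma(\tilde\alpha^\vee_j)=\tilde\alpha^\vee_{\sigma(j)}$, duality gives
\begin{equation*}
\langle \tilde\alpha^\vee_j,\sigma\tilde\omega_i\rangle=\langle\tilde\alpha^\vee_{\sigma^{-1}(j)},\tilde\omega_i\rangle=\delta_{\sigma^{-1}(j),i}=\delta_{j,\sigma(i)}=\langle\tilde\alpha^\vee_j,\tilde\omega_{\sigma(i)}\rangle,
\end{equation*}
so $\sigma\tilde\omega_i=\tilde\omega_{\sigma(i)}$, and the first part of the lemma yields $L(\tilde\omega_i)^\sigma\simeq L(\tilde\omega_{\sigma(i)})$.

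There is no real obstacle: every step is a one-line verification once the correct definitions (the twisted action $\Phi^\sigma=\Phi\circ\sigma^{-1}$, the dual action of $\sigma$ on $\tmf{h}^*$, and the action of $\sigma$ on Chevalley generators) are unwound. The only mild subtlety is keeping track of $\sigma$ versus $\sigma^{-1}$ in the transposition between $\tmf{h}$ and $\tmf{h}^*$, which is handled automatically by the convention $\langle h,\sigma\omega\rangle=\langle \sigma^{-1}h,\omega\rangle$ introduced right before the lemma.
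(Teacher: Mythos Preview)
Your proof is correct and follows essentially the same approach as the paper: both argue that irreducibility is preserved because $\sigma$ is a bijection on $\tmf{g}$, and both compute directly that a highest weight vector $v$ for $\Phi$ is again a highest weight vector for $\Phi^\sigma$ with weight $\sigma\omega$, using $\Phi^\sigma(\tilde e_i)v=\Phi(\tilde e_{\sigma^{-1}(i)})v=0$ and $\Phi^\sigma(h)v=\langle\sigma^{-1}h,\omega\rangle v=\langle h,\sigma\omega\rangle v$. Your write-up is slightly more explicit (you spell out the irreducibility argument and verify $\sigma\tilde\omega_i=\tilde\omega_{\sigma(i)}$ rather than quoting it), but the substance is identical.
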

\begin{proof}
It is easy to show that that a finite dimensional $\tmf{g}-$module $V$ is irreducible if and only if $V^\sigma$ is irreducible. If $v\in L(\omega)$ is an highest weight vector, namely $\Phi(\bar{e}_i)v=0$ and $\Phi(\bar{\alpha}^\vee_i)v=\langle\bar{\alpha}^\vee_i,\omega\rangle v$, then $\Phi^\sigma(\bar{e}_i)v=\Phi(\sigma^{-1}\bar{e}_i)v=\Phi(\bar{e}_{\sigma^{-1}(i)})v=0$ and $\Phi^\sigma(\bar{\alpha}^\vee_i)v=\Phi(\sigma^{-1}\bar{\alpha}^\vee_i)v=\langle\sigma^{-1}\bar{\alpha}^\vee_i,\omega\rangle v=\langle \bar{\alpha}^\vee_i,\sigma\omega\rangle v$.
\end{proof}
For every finite dimensional $\tmf{g}-$module $V$ we define a $\bb{C}-$linear bijective map, denoted $\sigma$ by abuse of notation
\beq\label{V->Vsigma}
\sigma:V\to V^\sigma,\qquad v\mapsto \sigma(v),
\eeq
satisfying
\beq\label{221111-2}
\sigma(\Phi(x)v)=\Phi^\sigma(\sigma x)\sigma(v),\qquad x\in\tmf{g}, v\in V.
\eeq
In particular, \eqref{V->Vsigma} maps highest weight vectors to highest weight vectors. In the case when $V=L(\tilde{\omega}_i)$ is the $i-$th fundamental $\tmf{g}-$module, using Lemma \ref{221111-1} we thus obtain, for each $i\in\tilde{I}$, a $\bb{C}-$linear map 
\beq\label{sigmaLi}
\sigma:L(\tilde{\omega}_i)\to  L(\tilde{\omega}_{\sigma(i)}),
\eeq
uniquely specified, up to a nonzero scalar multiple, by condition \eqref{221111-2}. If $v_i\in L(\tilde{\omega}_i)$ ($i\in\tilde{I}$) is a highest weight vector, then we can choose a normalization so that \eqref{221111-2} reads
\beq\label{221111-3}
\sigma(\Phi_i(x)v_i)=\Phi_{\sigma(i)}(\sigma x)v_{\sigma(i)},\qquad x\in\tmf{g}.
\eeq
Furthermore, if $\sigma(i)=i$ then the map $v\mapsto \sigma(v)$  is  an automorphism of $L(\tilde{\omega}_i)$, and \eqref{221111-3} reduces to 
$$\sigma(\Phi_i(x)v_i)=\Phi_{i}(\sigma x)v_{i},\qquad x\in\tmf{g}.$$
For every $i\in\tilde{I}$ we finally introduce the $\bb{C}-$linear map $R_i:L(\tilde{\omega}_i)\to L(\tilde{\omega}_i)$ as
\beq\label{22111-3}
R_i(v)=
\begin{cases}
v & i\neq\sigma(i),\\
\sigma(v) &i=\sigma(i).
\end{cases}
\eeq
These maps will be useful later to construct the so called $\Psi-$system for  $\mf{g}$.

\subsection{Cyclic elements of $\tmf{g}$}
Given the set $\tilde{I}$ of the vertices of the Dynkin diagram of $\tmf{g}$ we introduce a bipartition
\cite{moody87} of the form $\tilde{I}=\tilde{I}_1\cup\tilde{I}_2$ such that $1\in \tilde{I}_1$ and all edges of the Dynkin diagram of $\tmf{g}$ lead
from $\tilde{I}_1$ to $\tilde{I}_2$. Then, we define the function $p:\tilde{I}\longrightarrow \bb{Z}/2\mb Z$ as
$$p(i)=
\begin{cases}
0\quad i\in\tilde{I}_1,\\
1\quad i\in\tilde{I}_2.
\end{cases}
$$
Note that $p(\sigma(i))=p(i)$ ($i\in\tilde{I}$). 
\begin{table}[H]\label{table:ki}
\caption{The values of the scalars $\kappa_i$. Note that $\kappa_{\sigma(i)}=\kappa_i$ ($i\in \tilde{I}$).}
{\tabulinesep=1.2mm
\begin{tabu}{ |c|c|c| }
\hline
 $\wg$   & $r$  &   $\kappa_i$, \;$i\in\widetilde{I}$    \\
\hline
\hline
$ADE$  &  $1$ & $\kappa_i=\frac{p(i)}{2}$\\
\hline
$ A_{2n-1}$ & $2$ &
$\kappa_i=\frac{p(i)}{2}$\\
\hline
$ D_{n+1}$,\; $n$ even & $2$ & 
$\kappa_i=\frac{p(i)}{4}$, \;$1\leq i\leq n-1$,\qquad $\kappa_n=\kappa_{n+1}=\frac{1}{2}$ \\
\hline
$ D_{n+1}$,\; $n$ odd &  $2$ &
$\kappa_i=\frac{p(i)}{4}$,\; $1\leq i\leq n-1$,\qquad $\kappa_n=\kappa_{n+1}=-\frac{1}{4}$\\
\hline
$E_6$ &  $2$ & $\kappa_i=\frac{p(i)}{2}$,\; $i\neq 4$,\qquad $\kappa_4=\frac{1}{4}$\\
\hline
$D_4$  &  $3$ & $\kappa_i=\frac{p(i)}{2}$\\
\hline
\end{tabu}
}
\end{table}

 For $t\in\bb{R}$ introduce the following cyclic elements of $\tmf{g}$
\beq\label{lambdat}
\Lambda(t)=\mr{f}+ e^{2 \pi it}v_\theta,
\eeq
where $\mr{f}$ is given in \eqref{221026-2/0} and $v_\theta\in\tmf{g}$ is given in Definition \ref{221108-2/0}.
\begin{definition}\label{def:maximal}
Let $A$ be an endomorphism of a vector space $V$.
We say that a eigenvalue $\mu$ of $A$ is maximal if it is real,  its algebraic multiplicity
is one, and $\mu> \Re\mu'$ for every eigenvalue $\mu'\neq \mu$ of $A$.
\end{definition}
Recall that  $v_\theta\in\tmf{g}$  is defined up to a nonzero scalar multiple. 
\begin{theorem}\label{thm:genlamba}\cite[4.7]{marava17}
Let $\Lambda(t)$ be given by \eqref{lambdat} and let  $\kappa_i$ ($i\in\tilde{I}$) be defined  as in Table \ref{table:ki}.   We can always choose the element $v_\theta$ such that for all $i\in \mr{I}$, the following facts hold true:
\begin{enumerate}[(i)]
\item  For $\ell=0,\dots,\langle i\rangle-1$, the matrix representing the element
$$
\Lambda(\kappa_i-\tfrac{\ell}{r}) 
$$
in the fundamental $\tmf{g}-$module $L(\omega_{\sigma^\ell(i)})$
has a maximal eigenvalue
$\mu^{(\sigma^\ell(i))}$ and, in particular,  $\mu^{(1)}=1$.
 We denote by $\psi^{(\sigma^\ell(i))}$ the corresponding unique (up to a constant factor) eigenvector. Moreover, 
 $$\mu^{(\sigma^\ell(i))}=\mu^{(i)},\qquad \psi^{(\sigma^\ell(i))}=\sigma^\ell\left(\psi^{(i)}\right),$$
 where the action of $\sigma$ on $L(\tilde{\omega}_i)$ is defined in \eqref{sigmaLi}.
\item given $D_i$ as in \eqref{Di/0}, the matrix representing the element 
\beq\label{221111-4}
\Lambda(\kappa_i-\tfrac{D_i}{2}),\qquad i\in \mr{I}.
\eeq
 in the $\tmf{g}-$module $\bigwedge^2 L(\tilde{\omega}_i)$
has a maximal eigenvalue 
\beq\label{mui}
\left(e^{-\frac{\pi\sqrt{-1}D_i}{h}}+e^{\frac{\pi\sqrt{-1}D_i}{h}}\right)\lambda^{(i)}.
\eeq
The corresponding eigenvector is given by
\beq
\psi^{(i)}_\wedge= R_i\left(e^{\frac{\pi \sqrt{-1} D_i}{h}\mr{\rho}^\vee}\psi^{(i)}\right)\wedge e^{-\frac{\pi \sqrt{-1} D_i}{h}\mr{\rho}^\vee}\psi^{(i)},
\eeq
where $R_i$ is given by \eqref{22111-3}.
\item for the matrix representing the element \eqref{221111-4} in the $\tmf{g}-$module $\bigotimes_{j\in\tilde{I}}L(\tilde{\omega}_j)^{\otimes\bar{B}_{ij}}$
the scalar \eqref{mui} is maximal eigenvalue. The corresponding eigenvector is
\beq
\psi^{(i)}_\otimes=\bigotimes_{j\in \mr{I}}\bigotimes_{\ell=0}^{\langle j\rangle-1}\left(e^{\frac{\pi \sqrt{-1} (\kappa_j-\kappa_i+D_i/2-\ell/r)}{h}\mr{\rho}^\vee}\psi^{(\sigma^\ell(j))}\right)^{\otimes \widetilde{B}_{i\sigma^{\ell}(j)}}
\eeq
\item We can normalize the maps $\widetilde{m}_i$ ($i\in\mr{I}$) given in \eqref{mtilde} such that the following identity, known as \emph{algebraic $\Psi-$system}, holds:
\beq
\widetilde{m}_i(\psi^{(i)}_\wedge)=\psi^{(i)}_\otimes
\eeq
 \end{enumerate}
\end{theorem}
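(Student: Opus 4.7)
The plan is to bootstrap the twisted cases from the untwisted ADE result and then lift the eigenvalue data from $L(\tilde{\omega}_i)$ to the wedge and tensor modules via gauge transformations along $\mr{\rho}^\vee$. The starting point is the ADE case $r=1$, where it is classical (and established in \cite{marava15}) that $\Lambda(0)=\mr{f}+v_\theta$ is regular semisimple, conjugate to a Coxeter element, and acts on each fundamental $\tmf{g}$-module $L(\tilde{\omega}_i)$ with a simple real eigenvalue $\mu^{(i)}$ of strictly largest real part. One then rescales $v_\theta$ (which is unique only up to a nonzero scalar) so that $\mu^{(1)}=1$; this single normalisation, combined with part~(iv) below, will fix $v_\theta$ uniquely, as required by Definition \ref{221108-2/0}.

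The two workhorse identities are the gradation formula
\[
 e^{s\mr{\rho}^\vee}\,\Lambda(t)\,e^{-s\mr{\rho}^\vee}
 =e^{-s}\mr{f}+e^{(h-1)s}e^{2\pi i t}v_\theta
 =e^{-s}\,\Lambda\!\left(t+\tfrac{hs}{2\pi i}\right),
\]
which uses that $\mr{f}\in\rmf{g}^{-1}$ and $v_\theta\in\rmf{g}^{h-1}$ (since $\theta$ has height $h-1$), and the $\sigma$-equivariance $\sigma(\Lambda(t))=\Lambda(t+\tfrac{1}{r})$, which follows from $\sigma(\mr{f})=\mr{f}$ (by \eqref{221026-2/0}) together with $\sigma(v_\theta)=\e\,v_\theta$ (Definition \ref{221108-2/0} places $v_\theta\in\tmf{g}_1$ when $r>1$, and the $r=1$ case is trivial). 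Combining the $\sigma$-equivariance with \eqref{221111-3} and Lemma \ref{221111-1} propagates an eigenvector $\psi^{(i)}$ of $\Lambda(\kappa_i)$ on $L(\tilde{\omega}_i)$ to an eigenvector $\sigma^\ell(\psi^{(i)})$ of $\Lambda(\kappa_i-\tfrac{\ell}{r})$ on $L(\tilde{\omega}_{\sigma^\ell(i)})$ with the same eigenvalue, which is part~(i). The role of the shifts $\kappa_i$ from Table \ref{table:ki} is precisely to align, case by case, the residual phase produced by conjugation so that $\mu^{(i)}$ is real and positive; a finite case analysis over the rows of Table \ref{table:ki} verifies this.

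For part~(ii), a simple eigenvalue of $\Lambda(\kappa_i)$ gives, via the gradation identity with $s=\pm\pi i D_i/h$, two eigenvectors of $\Lambda(\kappa_i\mp\tfrac{D_i}{2})$ on $L(\tilde{\omega}_i)$ with eigenvalues $e^{\pm\pi i D_i/h}\mu^{(i)}$. The operator $R_i$ absorbs the sign mismatch: if $\sigma(i)\neq i$ nothing is needed, while if $\sigma(i)=i$ one has $\langle i\rangle=1$ so $D_i=1/r$ and applying $\sigma$ shifts $\Lambda(\kappa_i+\tfrac{D_i}{2})$ to $\Lambda(\kappa_i-\tfrac{D_i}{2})$, yielding the required second eigenvector of the \emph{same} operator. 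Their wedge is an eigenvector of the derivation extension on $\bigwedge^2L(\tilde{\omega}_i)$ with eigenvalue $(e^{\pi iD_i/h}+e^{-\pi iD_i/h})\mu^{(i)}$. Maximality follows because any other eigenvalue of $\Lambda$ on $\bigwedge^2L(\tilde{\omega}_i)$ is a sum of two eigenvalues of $\Lambda$ on $L(\tilde{\omega}_i)$, and by the maximality statement of (i) applied twice (together with the $\mr{\rho}^\vee$-shift) no such sum has larger real part. Part~(iii) is analogous: on each tensor factor $L(\tilde{\omega}_{\sigma^\ell(j)})$ the shift $e^{\pi i(\kappa_j-\kappa_i+D_i/2-\ell/r)\mr{\rho}^\vee/h}$ is chosen so that the gauged $\psi^{(\sigma^\ell(j))}$ is an eigenvector of \emph{the same} $\Lambda(\kappa_i-\tfrac{D_i}{2})$; the diagonal action sums the individual eigenvalues, and a direct computation shows the total equals \eqref{mui}. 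Maximality again reduces to part~(i) applied factor-by-factor.

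Finally, part~(iv) is forced: the weight $\eta_i=\sum_j\widetilde{B}_{ij}\tilde{\omega}_j$ appears with multiplicity one in both $\bigwedge^2L(\tilde{\omega}_i)$ and the tensor module, so the irreducible component $L(\eta_i)$ on each side is one-dimensional at the highest weight; $\widetilde{m}_i$ restricts to a scalar between these two lines, and the claim is that $\psi^{(i)}_\wedge$ and $\psi^{(i)}_\otimes$ lie in the respective $L(\eta_i)$-components and are aligned after the normalisation fixing $\widetilde{m}_i(\bar f_iv_i\wedge v_i)=\bigotimes_j v_j^{\widetilde{B}_{ij}}$. One checks that $\psi^{(i)}_\wedge$ and $\psi^{(i)}_\otimes$ have the correct weight $\eta_i$ and are highest-weight for the $L(\eta_i)$ component (their projections onto $U_i$, respectively onto the non-$L(\eta_i)$ summands of the tensor product, vanish by weight considerations and simplicity of the maximal eigenvalue), hence $\widetilde{m}_i(\psi^{(i)}_\wedge)=c_i\psi^{(i)}_\otimes$ for some $c_i\in\bb{C}^\times$; absorbing $c_i$ into the (still free) normalisation of $\widetilde{m}_i$ completes (iv). The main obstacle is the base case: establishing simplicity and strict maximality of $\mu^{(i)}$ on every fundamental ADE module is the delicate input, and all remaining steps are essentially bookkeeping on top of it.
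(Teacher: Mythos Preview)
The paper does not give a proof of this theorem: it is stated with the citation \cite[4.7]{marava17} and the text immediately continues ``From now on, we will fix the element $v_\theta$ such that the above theorem holds.'' So there is no proof in this paper to compare against; your plan is a reconstruction of what the cited reference does.

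Your overall strategy is the correct one and matches the approach of \cite{marava15,marava17}: the two identities you isolate---the $\mr{\rho}^\vee$-conjugation formula $e^{s\,\ad\mr{\rho}^\vee}\Lambda(t)=e^{-s}\Lambda(t+\tfrac{hs}{2\pi i})$ and the $\sigma$-equivariance $\sigma(\Lambda(t))=\Lambda(t+\tfrac1r)$---are exactly the engine, and parts (i)--(iii) do reduce to bookkeeping on top of the simply-laced maximal-eigenvalue input from \cite{marava15}.

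There is, however, a genuine error in your argument for (iv). You assert that $\psi^{(i)}_\wedge$ and $\psi^{(i)}_\otimes$ ``have the correct weight $\eta_i$'' and are highest-weight for the $L(\eta_i)$-component. This is false: they are eigenvectors of the cyclic element $\Lambda(\kappa_i-\tfrac{D_i}{2})$, which lies in a Cartan subalgebra \emph{in apposition} to $\tmf{h}$, not in $\tmf{h}$; they are not $\tmf{h}$-weight vectors and have no well-defined $\tmf{h}$-weight. The correct mechanism is purely spectral: since $\widetilde{m}_i$ is a $\tmf{g}$-module map it commutes with the action of $\Lambda$, so $\widetilde{m}_i(\psi^{(i)}_\wedge)$ is either zero or an eigenvector on the tensor side with the same (maximal, simple) eigenvalue \eqref{mui}, hence a scalar multiple of $\psi^{(i)}_\otimes$. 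The remaining point is to exclude $\psi^{(i)}_\wedge\in\ker\widetilde{m}_i=U_i$; this follows once one checks that the maximal eigenvalue of $\Lambda$ on the irreducible summand $L(\eta_i)\subset\bigwedge^2L(\tilde\omega_i)$ already equals \eqref{mui} (equivalently, that $\psi^{(i)}_\wedge$ has nonzero component along the highest-weight line $\tilde f_iv_i\wedge v_i$). That step requires a short argument, not ``weight considerations''.
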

From now on, we will fix the element $v_\theta$ such that the above theorem holds.

\begin{table}[H]
\caption{}\label{table:affine}
{\tabulinesep=0.6mm 
\begin{tabu}{ |c||c|c|c|c|c|c|} 
\hline 
$\mf{g}'=\,^L\mf{g}=(^L\rmf{g})^{(1)}$ & $\bmf{g}$ & $\mf{g}=\tmf{g}^{(r)}$ & $\tmf{g}$   &  $\mathring{\mf{g}}$ & $\mbox{spec}(\theta^\vee)$ in $\tmf{g}$     \\ 
\hline 
\hline
&&&&&\\
$A^{(1)}_n$, $n\geq1$ & $A^{(1)}_n$ & $A^{(1)}_n$ & $A_n$   & $A_n$ & \begin{tabular}[x]{@{}c@{}}$\{-2,0,2\}$ if $n=1$\\ $\{-2,-1,0,1,2\}$ if $n>1$\end{tabular}    \\
&&&&&\\
\hline
&&&&&\\
$B^{(1)}_n$, $n\geq3$ & $D^{(1)}_{2n}$ & $A^{(2)}_{2n-1}$ & $A_{2n-1}$  & $C_n$ & $\{-2,-1,0,1,2\}$  \\
&&&&&\\
\hline
&&&&&\\
$C^{(1)}_n$, $n\geq2$ & $D^{(1)}_{n+2}$ & $D^{(2)}_{n+1}$ & $D_{n+1}$  & $B_n$ & $\{-2,0,2\}$  \\
&&&&&\\
\hline
&&&&&\\
$D^{(1)}_n$, $n\geq4$ & $D^{(1)}_n$ & $D^{(1)}_{n}$ & $D_n$  & $D_n$ & $\{-2,-1,0,1,2\}$   \\
&&&&&\\
\hline
&&&&&\\
$E^{(1)}_6$ & $E^{(1)}_{6}$  & $E^{(1)}_6$ & $E_6$  & $E_6$ & $\{-2,-1,0,1,2\}$   \\
&&&&&\\
\hline
&&&&&\\
$E^{(1)}_7$ & $E^{(1)}_{7}$   & $E^{(1)}_7$ &  $E_7$  & $E_7$ &  $\{-2,-1,0,1,2\}$    \\
&&&&&\\
\hline
&&&&&\\
$E^{(1)}_8$ & $E^{(1)}_{8}$ & $E^{(1)}_8$  & $E_8$ & $E_8$ & $\{-2,-1,0,1,2\}$     \\
&&&&&\\
\hline
&&&&&\\
$F^{(1)}_4$ & $E^{(1)}_{7}$ & $E^{(2)}_{6}$   & $E_{6}$   & $F_4$ &  $\{-2,-1,0,1,2\}$    \\
&&&&&\\
\hline
&&&&&\\
$G^{(1)}_2$ & $E^{(1)}_{6}$ & $D^{(3)}_{4}$  & $ D_{4}$    & $G_2$ & $\{-3,-2,-1,0,1,2,3\}$    \\
&&&&&\\
\hline
\end{tabu}
}
\end{table}


\section{FFH connections and the Bethe Ansatz}\label{opers and ba}
In this Section, fixed an affine Kac-Moody algebra $\mf{g}=\tmf{g}^{(r)}$ as in Table \ref{table:affine}, we develop the analytic theory of FFH connections and we show that their monodromy data provide solutions to the Bethe equations for the quantum $^L\mf{g}$-Drinfeld-Sokolov model.
The result of this section builds on our previous paper \cite{marava15,marava17,mara18}.
Here however, we give a complete and unified theory, and we fill important gaps in the literature.

\subsection{Feigin-Frenkel-Hernandez connections}
Let $z$ be a local coordinate over the punctured complex plane $\bb{C}^\times$. Let $\mf{g}=\tmf{g}^{(r)}$ be an affine Kac-Moody algebra as listed in Table \ref{table:affine}. Recall the elements $f,\scaling\in\mf{g}$ as well as $v_\theta\in\tmf{g}$ introduced in Definition \ref{221108-2/0}.  We consider the following family of meromorphic connections with values in  $\tmf{g}^{(1)}$:
\begin{align}\label{eq:ffopersourgauge}
 \mc{L}(z)= & \partial_z+\frac{1}{z}\left(f+\ell-k \scaling + zv_\theta+ \sum_{j \in J} \frac{ r z^{r}}{z^r-w_j^r}\left(-\theta^\vee +X(j)
 \right)\right).
 \end{align}
Here $J$ is a possibly empty finite set, and
the parameters $\ell, k$, and $(w_j,X(j),y(j))$, $j\in J$ satisfy the following requirements
\begin{itemize}
\item $\ell\in\rmf{h}$;
\item $0<k<1$;
 \item $w_j \in \mathbb{C}^*, j \in J$ and $\big(w_j/w_i\big)^r \neq 1$ if $i \neq j$;
 \item $X(j) \in \mathring{\mf{n}}^+,\,\, j \in J$.
\end{itemize}
The FFH connections will be later defined as the subclass of connections \eqref{eq:ffopersourgauge} with trivial monodromy at each $w_j$ $(j\in J)$, see Definition \ref{def:trivialmonodromy} below.
We remark that since $\scaling = -\la \partial_\la$, \eqref{eq:ffopersourgauge} can be also tought of as a first-order linear partial
differential operator
whose coefficients are meromorphic functions with values in $\tmf{g}$, namely
\begin{align*}
z \mc{L}=  z\,\partial_z+ k\; \la\, \partial_{\la} + & \mr{f}+\ell +  (z+\la )v_\theta + \sum_{j \in J} \frac{ r z^{r}}{z^r-w_j^r}\left(-\theta^\vee +X(j)
  \right)  .
 \end{align*}

\subsubsection*{Rotated and twisted connections.}
Given $t\in\bb{R}$, the \emph{rotated connection} is the connection induced by the map $z\mapsto e^{2\pi it}z$, that is:
\begin{align}
 \mc{L}_t(z)=  \partial_z&+\frac{1}{z}\Bigg(f+\ell-k \scaling + e^{2\pi it}zv_\theta+ \sum_{j \in J} \frac{ re^{2\pi irt} z^{r}}{e^{2\pi irt}z^r-w_j^r}\left(\alpha_0^\vee + X(j)
  \right)\Bigg).\label{rotated}
\end{align}
On the other hand, the twisted connection is the operator obtain from \eqref{eq:ffopersourgauge} by acting with the automorphism $\sigma$. Since all terms in \eqref{eq:ffopersourgauge} are fixed by $\sigma$ except for $v_\theta$ (which satisfies $\sigma(v_\theta)=\e v_\theta$), the \emph{twisted operator} is the operator
\beq\label{twisted}
 \mc{L}^\sigma(z)=  \partial_z+\frac{1}{z}\left(f+\ell-k \scaling +\e zv_\theta+ \sum_{j \in J} \frac{ r z^{r}}{z^r-w_j^r}\left(\alpha_0^\vee +X(j)
  \right)\right),
\eeq
where $\e=e^{\frac{2\pi i}{r}}$.
A direct computation then shows that the connections of the form \eqref{eq:ffopersourgauge} satisfy the identity
\beq\label{Lsigma1/r0}
 \mc{L}^\sigma_{-\frac{1}{r}}(z)=\mc{L}(z),
\eeq
which is crucial in the deduction of the Bethe Equations.

\subsubsection*{Loop realization}
It is often useful to consider a different realization of the connection \eqref{eq:ffopersourgauge}, taking values  in the loop algebra $\tmf{g}[\la,\la^{-1}]$ (or, more precisely, in the current algebra $\tmf{g}[\la]$). Given $\mc{L}(z)$ as in \eqref{eq:ffopersourgauge}, we then define
\begin{align}
\mc{L}(z;\la):=\partial_z+\frac{1}{z}\left(\mr{f}+\ell +(z+z^k\lambda)v_\theta+ \sum_{j \in J} \frac{ r z^{r}}{z^r-w_j^r}\left(-\theta^\vee +X(j)
 \right)\right).\label{looprealization}
\end{align}
and we call it  the \emph{loop realization} of \eqref{eq:ffopersourgauge}. 
At least formally, $\mc{L}(z;\la)$ is obtained from $\mc{L}$ by the action
of the Gauge $z^{k\scaling}$, namely 
$$\mc{L}(z;\la)=z^{k\scaling}\mc{L}(z).$$ 
The precise analytic and algebraic meaning of this Gauge will be elucidated, respectively, in Lemma \ref{lem:AisoB} and in \eqref{L(z,la)defalg}. While the connection $\mc{L}$ is a meromorphic connection over $\bb{C}$, its loop realization is not, since $z=0$ is a branch point of its coefficients. Given $\mc{L}(z;\la)$ as in \eqref{looprealization} and $t\in\bb{R}$, we define the rotated connection $ \mc{L}_t(z;\la)$ as
\begin{align}
 \mc{L}_t(z;\la)=  \partial_z+\frac{1}{z}&\Bigg(\mr{f}+\ell + (e^{2\pi it}z+z^k\la)v_\theta+ \sum_{j \in J} \frac{ re^{2\pi irt} z^{r}}{e^{2\pi irt}z^r-w_j^r}\left(-\theta^\vee + X(j)
 \right)\Bigg).\label{rotatedloop}
\end{align}
Note that the map $\mc{L}(z;\la) \mapsto \mc{L}_t(z;\la)$ is not induced by the map $z\mapsto e^{2\pi it}z$,
but rather by
\begin{equation}\label{eq:symanzik}
(z,\la)\mapsto (e^{2\pi it}z,e^{-2\pi ikt}\la),\qquad t\in\bb{R}.
\end{equation}
\subsubsection*{Dorey-Tateo symmetry}
Since the connection $\mc{L}$ \eqref{eq:ffopersourgauge} is meromorphic at $z=0$ then $\mc{L}_1(z)=\mc{L}(z)$.
This is equivalent to the following identity for the loop realisation of $\mc{L}$,
\beq\label{DT}
\mc{L}_1(z;\la)=\mc{L}(z;\la).
\eeq
In the context of the ODE/IM correspondence, the identity \eqref{DT} is known as
 \emph{Dorey-Tateo symmetry} or \emph{Symanzik rescaling}. We have just shown that the Dorey-Tateo symmetry is nothing but the \lq loop counterpart\rq of the fact that the connection (\ref{eq:ffopersourgauge}) is meromorphic at $z=0$.

\begin{remark}\label{rem:xtoz} In what follows, we will import results from our previous papers \cite{marava15,marava17,mara18}, where we studied the loop realization (\ref{looprealization}) of connections of the form (\ref{eq:ffopersourgauge}) respectively in the cases $r=1$ and $J=\emptyset$ (simply-laced, ground state), $r>1$ and $J=\emptyset$ (non simply-laced, ground state), and $r=1$
and $J \neq \emptyset $ (simply laced, higher states). We note here that in \cite{marava15,marava17} we
used a different coordinate system on $\widetilde{\mathbb{C}^*}$.
In fact, while in the case  $J = \emptyset$, the connection (\ref{looprealization}) reads
\beq\label{LG(z;la)}
\mc{L}(z;\la)=\partial_z+\frac{1}{z}\left(\mr{f}+\ell+\left(z+z^{k}\la\right)v_\theta\right),
\eeq
in \cite{marava15,marava17}, we considered connections of the form
\beq\label{L(x;E)}
L(x;E)=\partial_x+\mr{f}+\frac{l}{x}+(x^{Mh}-E)v_\theta,
\eeq
with $l\in\mr{\mf{h}}$, $M>0$, and $E$  a complex parameter.
With the following change of coordinates and parameters \cite{FF11}
\begin{align*}
 & x=\varphi(z)=(h(M+1))^{\frac{1}{M+1}}z^{\frac{1}{h(M+1)}}, \\
&\ell=\frac{1}{h(M+1)}(l+\mr{\rho}^\vee), \; k=\frac{1}{M+1},
\;\la=-\left(\frac{1}{h(M+1)}\right)^{\frac{hM}{M+1}}E,
\end{align*}
one readily verifies that
$$\mc{L}(z;\la)=z^{\ad\rmf{\rho}^\vee}\varphi'(z)^{\ad\rmf{\rho}^\vee}(\varphi^\ast L)(z;E).$$
\end{remark}

\subsection{The space of global solutions}
We denote by $\widetilde{\bb{C}}^\times$
the universal cover of $\bb{C}^\times$ and let $\Pi$ be the corresponding canonical projection.
For every finite dimensional $\widetilde{\mf{g}}$ module $V$ we define
\beq\label{Vla}
V(\la)=V\otimes \mc{O}_{\la},
\eeq
where $\mc{O}_\la$ is the ring of entire functions in the variable $\lambda$.
We let the differential operator \eqref{eq:ffopersourgauge} act as a (meromorphic) connection of the trivial bundle $ \bb{C}^* \times V(\la)$. We are thus led to consider the
ODE
\begin{equation}\label{eq:psiVla}
 \mc{L}(z)\Psi(z)=0,  \qquad \Psi:\widetilde{\bb{C}}^\times_J \to V(\la).
\end{equation}
where
\begin{equation}\label{eq:C*J}
 \widetilde{\bb{C}}^\times_J:=\left(\widetilde{\bb{C}}^\times \setminus \bigcup_{j \in J} \bigcup_{l=0}^{r-1}\Pi^{-1} (\e^l w_j)\right),
\end{equation}
and, by abuse of notation, we will denote from now on by $z$ both the global co-ordinate on $\bb{C}^\times$ the local co-ordinate on  $\widetilde{\bb{C}}^\times$.

\begin{definition}
Given a connection $\mc{L}$ of the form \eqref{eq:ffopersourgauge}
and a finite dimensional $\widetilde{\mf{g}}$ module $V$,
we say that $\Psi:\widetilde{\bb{C}}^\times_J \to V(\la)$ is a global solution (or simply, a solution) if $\mc{L}(z)\Psi(z)=0$, for all $z \in \widetilde{\bb{C}}^\times_J$.
We denote by $\mc{A}_V$ the $\bb{C}$-vector space of global solutions of the ODE (\ref{eq:psiVla}).
\end{definition}
\begin{definition}
Let $D \subset \widetilde{\bb{C}}^\times_J$ be open and simply-connected. We say that $\psi: D \to V(\la)$ is a local solution if $\mc{L}(z)\psi(z)=0$ for all $z \in D$.
\end{definition}

Before addressing the study of global solutions,
we remark that a solution $\psi:D \to V(\la)$ is also an analytic function of two variables with domain
  $D \times \bb{C}$: given $\psi$, we denote
  by $\psi(z;\la) \in V$ the evaluation at the point $\la$ of $\psi(z)$.
  This view-point allows us to make a bridge between a FFH connection and its loop realisation. We have in fact the following
  \begin{lemma}\label{lem:loopcauchy}
  Let $\mc{L}$ a connection of the form \eqref{eq:ffopersourgauge}, $\mc{L}(z,\la)$ ist loop realisation, $D \subset \widetilde{\bb{C}}^\times_J$ be open and simply-connected,  and $V$ a finite dimensional $\widetilde{\mf{g}}$.
  The function $\psi:D \to V(\la)$ is a local solution of $\mc{L}\psi=0$ if and only if the function $ \widetilde{\psi}: D \times \bb{C} \to V$, with $\widetilde{\psi}(z;\la)=\psi(z)(\la z^{-k})$, satisfies the differential equation
  \begin{equation}\label{eq:loopcauchy}
   \mc{L}(z;\la)\widetilde{\psi}(z,\la)=0, \; \psi(z_0;\la)= g(\la z^{-k}),
  \end{equation}
   for all $(z,\la) \in D\times \bb{C}$, and it is analytic with respect to the parameter $\la$.
  \end{lemma}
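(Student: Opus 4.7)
The statement is a change-of-variable identity between the $V(\la)$-valued ODE $\mc{L}(z)\psi=0$ and its loop counterpart $\mc{L}(z;\la)\widetilde\psi=0$; it is the analytic incarnation of the formal gauge transformation $\mc{L}(z;\la)=z^{k\scaling}\mc{L}(z)$ sketched before the statement. The proof is a direct computation. What requires care is only the bookkeeping of the two roles played by the symbol $\la$: as a formal variable parametrising the target space $V(\la)=V\otimes\mc{O}_\la$ on one side, and as an explicit spectral parameter in the connection $\mc{L}(z;\la)$ on the other; throughout the proof I would use $\mu$ for the former and keep $\la$ for the latter.

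First I would spell out how each term of $\mc{L}(z)$ acts on $V(\la)$. From \eqref{221116-2}, $\scaling=-\la\partial_\la$, so $-k\scaling$ acts as the first-order operator $k\mu\,\partial_\mu$ on the $\mc{O}_\la$-factor; from \eqref{fla}, $f=\mr{f}+\la v_\theta$ acts as $\mr{f}$ on $V$ plus $\mu\cdot v_\theta$ (multiplication by $\mu$ composed with the action of $v_\theta$ on $V$). Writing the value of $\psi(z)\in V(\la)$ at the argument $\mu$ as $\psi(z,\mu)$, the equation $\mc{L}\psi=0$ becomes the first-order linear PDE
\begin{equation*}
\partial_z\psi(z,\mu)+\frac{1}{z}\Bigl[(\mr{f}+\ell)\psi+(z+\mu)v_\theta\psi+k\mu\,\partial_\mu\psi+\sum_{j\in J}\tfrac{rz^r}{z^r-w_j^r}\bigl(-\theta^\vee+X(j)\bigr)\psi\Bigr](z,\mu)=0.
\end{equation*}

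Next I would perform the substitution $\mu=z^{\pm k}\la$ corresponding to $\widetilde\psi(z;\la)=\psi(z)(z^{\pm k}\la)$, with the sign chosen to match the statement. A chain-rule computation gives
\begin{equation*}
\partial_z\widetilde\psi(z;\la)=(\partial_z\psi)(z,\mu)\pm\tfrac{k\mu}{z}(\partial_\mu\psi)(z,\mu)\Big|_{\mu=z^{\pm k}\la},
\end{equation*}
and the extra contribution from the chain rule cancels exactly the term $\tfrac{k\mu}{z}(\partial_\mu\psi)(z,\mu)$ produced by the $-k\scaling$ part of $\mc{L}(z)$. After this cancellation, and after replacing $\mu$ by $z^{\pm k}\la$ in the remaining coefficients, what is left is precisely $\mc{L}(z;\la)\widetilde\psi=0$ in the form \eqref{looprealization}; this proves one implication, the converse being obtained by the inverse substitution.

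The only non-trivial point in the converse direction is that the resulting $\psi(z)$ actually takes values in $V(\la)$, i.e.\ is \emph{entire} in the $\mc{O}_\la$-argument. This is automatic: with $z$ fixed, $\mu\mapsto z^{\mp k}\mu$ is a non-zero linear endomorphism of $\bb{C}$, and the pull-back of an entire function along an entire map is entire. This is precisely why the analyticity in the parameter $\la$ is included as a hypothesis in the statement; without it the substitution produces only multivalued or formal objects, not genuine elements of $V(\la)$. The Cauchy-problem assertion \eqref{eq:loopcauchy} is then immediate: the standard well-posedness of the linear holomorphic ODE $\mc{L}\psi=0$ on the simply-connected domain $D$ (Proposition \ref{prop:localsolution}) transports verbatim under the substitution, the initial datum $g\in V(\la)$ being merely relabelled according to the change of variable. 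I do not anticipate any serious obstacle; the only delicate item is the dual role of $\la$ discussed above.
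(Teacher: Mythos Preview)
Your approach is exactly the paper's: the paper's proof reads in full ``It follows from a direct computation'', and you have supplied that computation, correctly identifying the mechanism (the chain-rule contribution from the substitution cancels the $-k\scaling=k\mu\,\partial_\mu$ term, and the surviving coefficient $(z+\mu)v_\theta$ becomes the one in \eqref{looprealization}).

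One caveat: do not leave the exponent as $\pm$. If you actually carry the computation through, only the substitution $\widetilde\psi(z;\la)=\psi(z)(z^{+k}\la)$ works. With $\mu=z^{k}\la$ the chain rule produces $+\tfrac{k\mu}{z}\,\partial_\mu\psi$, which combines with $\partial_z\psi$ to give exactly the left-hand side of the PDE $(\partial_z+\tfrac{k\mu}{z}\partial_\mu)\psi=-\tfrac1z B\psi$, and the remaining coefficient $(z+\mu)$ becomes $(z+z^{k}\la)$ as in \eqref{looprealization}. With $\mu=z^{-k}\la$ the chain-rule term has the opposite sign, so the $\partial_\mu$ contributions double rather than cancel, and moreover the $v_\theta$-coefficient would read $(z+z^{-k}\la)$. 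Thus the formula $\psi(z)(\la z^{-k})$ printed in the lemma is a typo; the paper is internally inconsistent on this sign (compare the formula in Proposition~\ref{prop:Psisystem}, which inverts to the correct $\widetilde\psi(z;\la)=\psi(z)(\la z^{k})$, against the one in Theorem~\ref{prop:basisatinfinity}). Your sentence asserting that ``the sign chosen to match the statement'' yields the cancellation is therefore the one place your write-up is inaccurate; everything else is sound.
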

  \begin{proof}
   It follows from a direct computation.
  \end{proof}

\begin{proposition}\label{prop:localsolution}
Let $V$ be a finite dimensional $\widetilde{\mf{g}}$-module, $D\subset \widetilde{\bb{C}}^\times_J$ open and
simply-connected, $z_0 \in D$, and $g \in V(\la)$. The Cauchy problem
\begin{equation}\label{eq:localLpsi}
 \mc{L}(z)\Psi(z)=0, \qquad   \Psi(z_0)=g, z \in D.
\end{equation}
where $\mc{L}(z)$ is given by \eqref{eq:ffopersourgauge}, admits a unique solution $\Psi:D \to V(\la)$.
\begin{proof}
Due to Lemma \ref{lem:loopcauchy}, the above thesis is equivalent to the statement that for every $\la\in \bb{C}$ the Cauchy problem \eqref{eq:loopcauchy} admits a unique solution $\widetilde{\psi}(\cdot;\la): D \to V$, and that such a solution depends analytically on $\la$. This is a standard result in the theory of linear ODEs in the complex plane; see e.g. \cite[Theorem 1.1]{Ilya08}
\end{proof}

\end{proposition}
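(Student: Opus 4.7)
The plan is to reduce \eqref{eq:localLpsi} to a classical Cauchy problem for a linear ODE in the complex plane via the loop realization, and then recover the full $V(\la)$-valued statement by tracking analytic dependence on $\la$. By Lemma \ref{lem:loopcauchy}, a function $\Psi:D\to V(\la)$ solves \eqref{eq:localLpsi} if and only if $\widetilde{\psi}(z;\la):=\Psi(z)(\la z^{-k})$ solves the $\la$-parameter family
\[
\mc{L}(z;\la)\widetilde{\psi}(z;\la)=0,\qquad \widetilde{\psi}(z_0;\la)=g(\la z_0^{-k}),
\]
where a branch of $z^k$ is fixed on the simply-connected set $D\subset\widetilde{\bb{C}}^\times$. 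It therefore suffices to prove (i) existence and uniqueness of $\widetilde{\psi}(\cdot;\la)$ for each fixed $\la$, and (ii) joint holomorphicity of $\widetilde{\psi}$ in $(z,\la)$, so that $\Psi(z)\in V\otimes\mc{O}_\la=V(\la)$.

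For step (i), fix $\la\in\bb{C}$ and rewrite the loop equation in the standard form $\partial_z\widetilde{\psi}=A(z;\la)\widetilde{\psi}$, where
\[
A(z;\la)=-\tfrac{1}{z}\,\Phi\!\left(\mr{f}+\ell+(z+z^k\la)v_\theta+\sum_{j\in J}\tfrac{rz^r}{z^r-w_j^r}\bigl(-\theta^\vee+X(j)\bigr)\right)
\]
and $\Phi:\tmf{g}\to\End(V)$ is the representation. Because $D$ avoids both $z=0$ and the preimages $\Pi^{-1}(\e^l w_j)$, and because $z^k$ is holomorphic once a branch has been fixed on the simply-connected $D$, the matrix-valued coefficient $A(\cdot;\la)$ is holomorphic on $D$. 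The classical Cauchy--Lipschitz theorem for linear ODEs with holomorphic coefficients on a simply-connected complex domain (e.g.\ \cite[Theorem 1.1]{Ilya08}) then produces a unique holomorphic $\widetilde{\psi}(\cdot;\la):D\to V$ with the prescribed initial value.

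For step (ii), observe that $A(z;\la)$ is linear in $\la$ (only through the term $z^k\la v_\theta$) and that the initial datum $g(\la z_0^{-k})$ is entire in $\la$ since $g\in V(\la)$. Joint holomorphicity of $\widetilde{\psi}(z;\la)$ on $D\times\bb{C}$ then follows either by invoking the standard theorem on holomorphic dependence of ODE solutions on parameters, or directly by running Picard iteration starting from $\widetilde{\psi}_0\equiv g(\la z_0^{-k})$: each iterate is holomorphic in $(z,\la)$, and the standard Gronwall-type estimate on linear systems gives uniform convergence on compact subsets of $D\times\bb{C}$. Undoing the substitution yields the desired $\Psi\in V(\la)$, and uniqueness transfers back through the bijection of Lemma \ref{lem:loopcauchy}.

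The only mild obstacle is the bookkeeping around the fractional power $z^k$, which fails to be meromorphic on $\bb{C}^\times$; this is precisely why one must pass to the universal cover and insist on simple-connectedness of $D$, so that a single-valued holomorphic branch of $z^k$ exists and the problem falls squarely within classical linear ODE theory. Once this is in place no appeal to the non-meromorphic framework of \cite{cotti23} is needed at this purely local stage.
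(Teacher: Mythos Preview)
Your proof is correct and follows exactly the paper's approach: reduce via Lemma~\ref{lem:loopcauchy} to the $\la$-parameter family $\mc{L}(z;\la)\widetilde{\psi}=0$, then invoke the standard existence, uniqueness, and analytic-dependence theorem for linear ODEs with holomorphic coefficients on a simply-connected domain (the paper cites the same reference, \cite[Theorem 1.1]{Ilya08}). You merely supply more detail on the Picard iteration and the role of the branch of $z^k$, which the paper leaves implicit.
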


\begin{definition}\label{def:trivialmonodromy}
 We say that the the connection (\ref{eq:ffopersourgauge}) has trivial monodromy or is monodromy-free
 if for any finite dimensional $\tmf{g}$-module $V$ any local solution extends to a global solution.
\end{definition}
\begin{definition}\label{def:FFH}
 We call Feigin-Frenkel-Hernandez (FFH) connections the class of connections of the form (\ref{eq:ffopersourgauge}) which have trivial monodromy.
\end{definition}

\begin{definition}
We denote by
$\mc{O}'$ the sub-ring of the ring of holomorphic function
on $\widetilde{\bb{C}}^\times \times \bb{C} \ni (z;\la)$ given by
\begin{equation}
 \mc{O}'=\lbrace f(z;\la)= Q(z^{-k}\la), \mbox{ for some entire function } Q: \bb{C} \to \bb{C} \rbrace.
\end{equation}
\end{definition}
\begin{theorem}\label{prop:AVfree}
The connection $\mc{L}$ of the form (\ref{eq:ffopersourgauge}) has trivial monodromy if and only if for every finite $\tmf{g}$-module $V$ the space $\mc{A}_V$ of global solution
is a free $\mc{O}'$-module of rank $\dim V$ (i.e. $\mc{A}_V \cong V \otimes \mc{O}'$ as an $\mc{O}'$ module).
\end{theorem}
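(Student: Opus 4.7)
My plan is to reduce the theorem to two ingredients: a local structure theorem for solutions on a simply connected open set, and a Wronskian/analyticity argument to promote the \emph{rank} statement to trivial monodromy. First I would verify that $\mc{A}_V$ carries an $\mc{O}'$-module structure. For $f(z;\la)=Q(z^{-k}\la)\in \mc{O}'$ and $\Psi\in\mc{A}_V$, the scalar $f$ commutes with the Lie algebra part of $\mc{L}$, and since $\scaling=-\la\partial_\la$ the only nontrivial Leibniz contributions are $\bigl(\partial_z+\frac{k}{z}\la\partial_\la\bigr)f$, which vanishes by the chain rule on $Q(z^{-k}\la)$. Hence $f\Psi\in\mc{A}_V$.

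Next I would establish the local model. Fix a simply connected $D\subset\widetilde{\bb{C}}^\times_J$ and $z_0\in D$, and let $\mc{A}_V(D)$ denote the space of local solutions on $D$. By Proposition \ref{prop:localsolution} the evaluation $\mathrm{ev}_{z_0}:\mc{A}_V(D)\to V(\la)$, $\Psi\mapsto\Psi(z_0)$, is a $\bb{C}$-linear bijection. It intertwines the $\mc{O}'$-action with the action on $V(\la)=V\otimes\mc{O}_\la$ through the ring map $\mc{O}'\to\mc{O}_\la$, $Q(z^{-k}\la)\mapsto Q(z_0^{-k}\la)$, which is a ring isomorphism (simple rescaling of the argument). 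Consequently $\mc{A}_V(D)\cong V\otimes\mc{O}'$ is free of rank $d:=\dim V$. The forward implication is then immediate: if $\mc{L}$ has trivial monodromy, the restriction map $\mc{A}_V\to\mc{A}_V(D)$ is injective (by connectedness of $\widetilde{\bb{C}}^\times_J$ and uniqueness of analytic continuation) and surjective (by the defining property of trivial monodromy), so $\mc{A}_V\cong(\mc{O}')^d$.

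For the converse, suppose $\Psi_1,\dots,\Psi_d$ is an $\mc{O}'$-basis of $\mc{A}_V$. Passing to the loop realization via Lemma \ref{lem:loopcauchy}, the functions $\widetilde{\Psi}_i(z;\la)=\Psi_i(z)(\la z^{-k})$ are globally defined on $\widetilde{\bb{C}}^\times_J\times\bb{C}$ and satisfy $\mc{L}(z;\la)\widetilde{\Psi}_i=0$. Fix a basis of $V$ and set $W(\la)=\det\bigl(\Psi_i(z_0)(\la)\bigr)\in\mc{O}_\la$. The vectors $\Psi_i(z_0)$ are $\mc{O}_\la$-linearly independent in the torsion-free module $V(\la)=\mc{O}_\la^d$ (as the images of an $\mc{O}'$-basis under the $\mc{O}'$-linear injection $\mathrm{ev}_{z_0}$), hence also linearly independent over $\mathrm{Frac}(\mc{O}_\la)$, so $W\not\equiv 0$. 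For $\la$ outside the discrete zero set of $W(\la z_0^{-k})$ the vectors $\widetilde{\Psi}_i(z_0;\la)\in V$ form a basis, whence $\widetilde{\Psi}_i(\cdot;\la)$ span the $d$-dimensional space of local solutions of the ODE $\mc{L}(z;\la)\widetilde{\psi}=0$; since these $d$ solutions are globally defined on $\widetilde{\bb{C}}^\times_J$, the monodromy matrix $\rho_\la(\gamma)$ of this ODE around every loop $\gamma$ is the identity. Because the coefficients of $\mc{L}(z;\la)$ depend analytically on $\la$, the fundamental matrix and hence $\rho_\la(\gamma)$ are analytic in $\la$; vanishing of $\rho_\la(\gamma)-I$ on a dense subset of $\bb{C}$ forces $\rho_\la(\gamma)=I$ for every $\la$. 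Translating back through Lemma \ref{lem:loopcauchy}, every local solution of $\mc{L}$ extends globally, i.e. $\mc{L}$ has trivial monodromy.

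The delicate step is the converse. Over the non-field ring $\mc{O}'$, an injective map between free modules of equal finite rank is not automatically surjective (witness multiplication by $\la$ on $\mc{O}_\la$), so the existence of an $\mc{O}'$-basis of $\mc{A}_V$ cannot be turned into triviality of monodromy by a purely module-theoretic argument. The Wronskian, together with analyticity of the monodromy in the spectral parameter $\la$ and the identity theorem, is precisely what bridges this gap.
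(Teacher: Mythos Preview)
Your proof is correct. The forward direction and the verification that $\mc{A}_V$ is an $\mc{O}'$-module are essentially identical to the paper's argument: both use the local well-posedness of Proposition~\ref{prop:localsolution} to identify $\mc{A}_V$ with $V\otimes\mc{O}'$ via evaluation at a point once trivial monodromy is assumed.

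The converse is where your route departs. The paper fixes an $\mc{O}'$-basis $\{\varphi_1,\dots,\varphi_d\}$ of $\mc{A}_V$ and asserts directly that the linear system $\sum_i g_i(\la)\varphi_i(z_0)=g(\la)$ has a unique solution for every $g\in V(\la)$, i.e.\ that the evaluations $\varphi_i(z_0)$ form an $\mc{O}_\la$-basis of $V(\la)$; the global extension then follows by uniqueness of the Cauchy problem. You rightly point out that this step is not purely module-theoretic (an injection of free modules of equal rank over $\mc{O}_\la$ need not be surjective), and you supply the missing analytic ingredient: the Wronskian of the $\widetilde{\Psi}_i(\cdot;\la)$ is a nonzero entire function, hence the loop connection $\mc{L}(z;\la)$ has trivial monodromy for $\la$ off a discrete set, and analyticity of the monodromy in $\la$ forces it to be trivial identically. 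Your argument thus makes explicit the justification that the paper's proof treats as evident; what it buys is a rigorous bridge from the rank hypothesis to surjectivity of the evaluation map.
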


\begin{proof}
 The fact that $\mc{A}_V$ is an $\mc{O}'$-module follows by the fact that $\partial_z -k\scaling/z$ annihilates every function in $\mc{O}'$.

 Fixed a basis $\lbrace v_1, \dots, v_{\dim V} \rbrace$ of $V$ and a point $z_0 \in \widetilde{\bb{C}}^\times_J$, we
 define the local solutions
 $\psi_i(z)$ via the Cauchy problem $\psi(z_0)=v_i$, $ i=1 \dots \dim V $, where we consider $v_i\in V(\lambda)$ as a constant function (in fact, by Proposition \ref{prop:localsolution}, the local Cauchy problem is well-posed).

 Assume that $\mc{L}$ has trivial monodromy. By hypothesis  the solutions $\lbrace \psi_1, \dots, \psi_{\dim V} \rbrace$ extend to global solutions and, as we prove below, they form an $\mc{O}'$ basis of $\mc{A}_V$. Let in fact
 $\psi \in \mc{A}_V$. By hypothesis $\psi(z_0) \in V(\la)$, namely $\psi(z_0)=\sum_i g_i(\la) v_i$ for some
 $g_i \in \mc{O}_{\la}$; hence, by the well-posedness of the local Cauchy problem,
 $\psi(z)=\sum_{i} g_i(\la z^{-k} z_0^{k}) \psi_i(z)$. This prove the \textit{only if} part of the thesis.

 Now assume that $V$ is such that $\mc{A}_V$ is a free-module of rank $\dim V$. We show that in this
 case any local solution extends to a global solution. Let in fact $\psi$ be the solution of the
 local Cauchy problem $\mc{L}\psi=0, \psi(z_0)=g(\la)$, for an arbitrary pair $(z_0,g(\la))$.
 Fixed a basis $\lbrace \varphi_1, \dots, \varphi_{\dim V}\rbrace$ of global solutions,
 the linear system  $\sum_i  g_i(\la) \varphi_i(z_0) = g(\la)$ admits a unique solution. Therefore
 locally $\psi(z)=\sum_{i} g_i(\la z^{-k} z_0^{k}) \varphi_i(z)$, whence it extends to a global solution. This proves the \textit{if} part of the thesis.
 \end{proof}
\begin{definition}
 Let $\mc{L} (z;\la)$ be the loop realization of $\mc{L}(z)$ as per (\ref{looprealization}).
 We denote by $\mc{B}_V$ the $\bb{C}$-vector space of analytic functions
 $$\psi: \widetilde{\bb{C}}^\times_J\to V$$
 such that
 $$\mc{L}(z;\la)\psi(z;\la)=0,\qquad \forall (z,\la) \in  \widetilde{\bb{C}}^\times_J \times \bb{C} .$$
 \end{definition}
\begin{lemma}\label{lem:AisoB}
1. The space $\mc{B}_V$ is an $\mc{O}_{\la}-$module and the map
 \begin{equation}
  z^{-k \scaling}: \mc{A}_V \to \mc{B}_V , \qquad \psi \mapsto \widetilde{\psi},
 \end{equation}
where $ \widetilde{\psi}(z;\la)=\psi(z)(\la z^{-k})$, is an isomorphism of $\bb{C}-$vector spaces.

2. The following properties are equivalent
\begin{itemize}
 \item[i)] $\mc{L}(z)$ has trivial monodromy.
 \item[ii)] For every finite dimensional $\tmf{g}$-module $V$, the space $\mc{B}_V$ is a free $\mc{O}_{\la}-$module
 of rank $\dim V$ (i.e. $\mc{B}_V\cong V(\la)$ as $\mc{O}_{\la}-$modules).
 \item[iii)] For every finite dimensional $\tmf{g}$-module $V$, the loop realisation $\mc{L}(z;\la)$ of $\mc{L}(z)$  has trivial monodromy at the singular points $\e^l w_j, j \in J$, $l=1,\dots,r$  for every $\la \in \mathbb{C}$.
 \item[iv)] For every finite dimensional $\tmf{g}$-module $V$, the loop realisation $\mc{L}(z;\la)$ of $\mc{L}(z)$  has trivial monodromy at the singular points $w_j, j \in J$, for every $\la \in \mathbb{C}$.
\end{itemize}
\begin{proof}
Part 1 follows directly from Lemma \ref{lem:loopcauchy}.

Part 1. and Proposition \ref{prop:AVfree} imply the equivalence between property $i)$ and property $ii)$.
The equivalence between property $ii)$ and property $iii)$ is
proved in \cite[Lemma 5.4]{mara18} (for the case $r=1$). The same proof applies verbatim to the case $r>1$.
The equivalence between (iii) and (iv) is a consequence of the identity \eqref{Lsigma1/r0}, namely
$\mc{L}^{\sigma}_{-\frac1r}=\mc{L}$:
The loop realisation of $\mc{L}$ has trivial monodromy at $w_j$ if and only if the loop realisation of $\mc{L}^{\sigma}$ has trivial monodromy at $w_j$ -- as indeed the action of $\mc{L}^{\sigma}$ on $V(\la)$ coincide with the action of $\mc{L}$ on $V^{\sigma}$. Using \eqref{Lsigma1/r0}, we deduce that the loop realisation of $\mc{L}$ has trivial monodromy at $w_j$ if and only if it has trivial at $\e w_j$. This concludes the proof.
\end{proof}

\end{lemma}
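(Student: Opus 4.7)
For Part~1, I would first unwind the definition. The ``gauge'' $z^{-k\scaling}$ is formal; its content is the substitution $\la\mapsto\la z^{-k}$. Given $\psi\in\mc{A}_V$, writing $\psi(z)\in V\otimes\mc{O}_\la$ and setting $\widetilde{\psi}(z;\la)=\psi(z)(\la z^{-k})$ is precisely the transformation described in Lemma~\ref{lem:loopcauchy}, which says that $\widetilde{\psi}$ solves $\mc{L}(z;\la)\widetilde{\psi}=0$ and depends analytically on $\la$; hence $\widetilde{\psi}\in\mc{B}_V$. That $\mc{B}_V$ carries an $\mc{O}_\la$-module structure is immediate because $\mc{L}(z;\la)$ contains no $\partial_\la$ and its coefficients are entire in $\la$. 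The inverse map is the substitution $\la\mapsto\la z^{k}$: given $\widetilde{\psi}\in\mc{B}_V$, analyticity in $\la$ ensures that $\psi(z):=\widetilde{\psi}(z;\la z^{k})\in V(\la)$ is well-defined, and Lemma~\ref{lem:loopcauchy} again shows $\mc{L}\psi=0$.

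For the equivalence (i) $\Longleftrightarrow$ (ii) in Part~2, I would combine Part~1 with Theorem~\ref{prop:AVfree}. The key observation is that under the substitution $\la\mapsto\la z^{-k}$, the ring $\mc{O}'=\{Q(z^{-k}\la):Q\text{ entire}\}$ corresponds bijectively to $\mc{O}_\la$ by $Q(z^{-k}\la)\leftrightarrow Q(\la)$. Thus $\mc{A}_V\cong V\otimes\mc{O}'$ as $\mc{O}'$-modules if and only if $\mc{B}_V\cong V\otimes\mc{O}_\la$ as $\mc{O}_\la$-modules, and Theorem~\ref{prop:AVfree} identifies the former with trivial monodromy of $\mc{L}$.

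The equivalence (ii) $\Longleftrightarrow$ (iii) follows from the cited Lemma~5.4 of \cite{mara18}. In outline: for each fixed $\la$, $\mc{L}(z;\la)$ is an ordinary meromorphic connection on $\bb{C}^\times$ with apparent singularities at the $\e^l w_j$, so its local solution space is $\dim V$-dimensional; global single-valued solutions of prescribed rank exist for every $\la$ precisely when the monodromy at every finite singular point is trivial for every $\la$, and analytic dependence on $\la$ (Proposition~\ref{prop:localsolution}) promotes this to freeness of $\mc{B}_V$ over $\mc{O}_\la$. Because this argument only uses the $\dim V$-dimensionality of the local solution space and the analytic $\la$-dependence, the proof from the $r=1$ case transfers verbatim.

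The equivalence (iii) $\Longleftrightarrow$ (iv) is the step I expect to require the most care and is the main obstacle. The plan is to combine two observations. First, since $\sigma\in\Aut(\tmf{g})$, the monodromy of $\mc{L}^\sigma(z;\la)$ on $V$ around a singular point $p$ agrees, under the $\bb{C}$-linear identification $V\cong V^\sigma$, with the monodromy of $\mc{L}(z;\la)$ on $V^\sigma$ around $p$; so trivial monodromy of $\mc{L}^\sigma$ at $p$ on every finite-dimensional $V$ is equivalent to trivial monodromy of $\mc{L}$ at $p$ on every $V$. Second, by the identity \eqref{Lsigma1/r0} the rotation $(z,\la)\mapsto(\e^{-1}z,\e^{k}\la)$ intertwines $\mc{L}^\sigma$ with $\mc{L}$ and sends the singularity at $w_j$ to the singularity at $\e w_j$, so trivial monodromy of $\mc{L}$ at $w_j$ is equivalent to trivial monodromy of $\mc{L}^\sigma$ at $w_j$, which by the first point is equivalent to trivial monodromy of $\mc{L}$ at $\e w_j$. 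Iterating $l$ times gives (iv) $\Longleftrightarrow$ (iii).
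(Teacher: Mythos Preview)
Your proposal is correct and follows essentially the same approach as the paper: Part~1 via Lemma~\ref{lem:loopcauchy}, (i)$\Leftrightarrow$(ii) via Part~1 combined with Theorem~\ref{prop:AVfree} and the correspondence $\mc{O}'\leftrightarrow\mc{O}_\la$, (ii)$\Leftrightarrow$(iii) by citing \cite[Lemma~5.4]{mara18}, and (iii)$\Leftrightarrow$(iv) via the identity \eqref{Lsigma1/r0} together with the observation that the action of $\mc{L}^\sigma$ on $V$ coincides with that of $\mc{L}$ on $V^\sigma$. Your write-up is simply more detailed than the paper's terse version.
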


\subsubsection{Rotated and twisted solutions}
Let $\mc{L}$ be a Feigin-Frenkel-Hernandez connection, and for $t\in\bb{R}$ consider the  rotated connection $\mc{L}_t$, given by \eqref{rotated}.
Fixed $V(\la)$ as in \eqref{Vla} consider the ODE
$$ \mc{L}_t(z)\Psi(z)=0,  \quad \Psi:\left(\widetilde{\bb{C}}^\times \setminus
\bigcup_{j \in J} \bigcup_{l=0}^{r-1}\Pi^{-1} (e^{-2\pi i t} \e^l w_j)\right) \to V(\la),\; \e=e^{\frac{2\pi i}{r}}$$
and denote by $\mc{A}_{V,t}$ be the  corresponding space of solutions. Denoting
$$\psi_t(z)= \psi(e^{2\pi i t}z),$$
then the map
$$\mc{A}_V \to \mc{A}_{V,t}, \qquad \psi \mapsto \psi_t,$$
is an isomorphism of $\bb{C}$-vector spaces.
Moreover, since $\mc{L}$ is single-valued, for every $t\in \bb{R}$ the spaces $\mc{A}_{V,t}$ and $\mc{A}_{V,t+1}$ actually coincide.
Therefore there is a $\bb{C}$-automorphism $\mc{M}$ of $\mc{A}_{V,t}$, called the monodromy
(or monodromy operator) and defined by
\begin{equation}\label{monodromy}
\mc{M}: \mc{A}_{V,t} \to \mc{A}_{V,t} ,\qquad\psi \mapsto \psi_{1} .
\end{equation}
If $L(\tilde{\omega}_i)$ ($i \in \widetilde{I}$) is the $i$-th fundamental $\widetilde{\mf{g}}$-module, we denote
\begin{equation}
\mc{A}^{(i)}:=\mc{A}_{L(\tilde{\omega}_i),\kappa_i}, \quad \mc{A}^{(i)}_t=\mc{A}_{L(\tilde{\omega}_i),\kappa_i+t}
\end{equation}
 where $\kappa_i$'s are the rational numbers defined in Table \ref{table:ki}. 
 
Now recall the algebra $\widetilde{\mf{g}}$ carries a Dynkin automorphism $\sigma$ and that for any representation $V$ we have defined the  linear isomorphism \eqref{V->Vsigma} from $V$ to the twisted module $V^{\sigma}$. Since FFH connections  satisfy \eqref{Lsigma1/r0}, we have an isomorphism between the spaces of solutions
$\mc{A}_V$ and $\mc{A}_{V^\sigma,-\frac{1}{r}}$. More precisely, we have
\begin{lemma}
1. For every finite dimensional module $V$ and every $t\in\bb{R}$, the map \eqref{V->Vsigma} induces a $\mc{O}'-$isomorphism between $\mc{A}_{V,t}$ and $\mc{A}_{V^{\sigma},t+\frac{1}{r}}$.
In particular, for every $i \in \tilde{I}$,
\begin{align}
 & \Psi\in\mc{A}^{(i)}_t  \Longleftrightarrow \sigma(\Psi) \in \mc{A}^{(\sigma(i))}_{t+\frac{1}{r}} \label{eq:twistedandrotatedA}\\
 & \Psi \in  \mc{A}^{(i)}_t \Longleftrightarrow  R_i(\Psi) \in  \mc{A}^{(i)}_{t+ D_i} 
\end{align}
In the latter equation $R_i: L(\tilde{\omega}_i) \to L(\tilde{\omega}_i)$ is given by \eqref{22111-3} and the coefficients   $D_i$ by \eqref{Di/0}.

2. For $i\in \mr{I}$ and $t\in\bb{R}$ the map $\widetilde{m}_i$ defined in \eqref{mtilde} induces a  $\bb{C}-$linear map
$$
\widetilde{m}_i:\bigwedge^2 \mc{A}^{(i)}_t\to \bigotimes_{j\in\mr{I}}\bigotimes_{\ell=0}^{\langle j\rangle -1}\left(\mc{A}^{\sigma^\ell(j)}_{\kappa_i-\kappa_j+t}\right)^{\otimes\bar{B}_{i\sigma^\ell(j)}}
$$
Using \eqref{eq:twistedandrotatedA} there exists a $\bb{C}-$linear map
\begin{equation}\label{eq:laverami}
  m_i:\bigwedge^2 \mc{A}^{(i)}_{\frac{D_i}{2}}\longrightarrow \bigotimes_{j\in \mr{I}}
  \bigotimes_{l=0}^{B_{ij}-1}\mc{A}^{(j)}_{\frac{B_{ij}-1-2l}{2r}}\,, \qquad i\in \mr{I},
 \end{equation}
which has the property that 
\begin{equation}\label{eq:mi}
 m_i(v_i \wedge f_iv_i) = \otimes_{j \in \mr{I}} v^{\otimes B_{ij}}_j,
\end{equation}
where $v_j$ is a highest weight vector of $L(\tilde{\omega}_j)$.
\begin{proof} 
1. is a straightforward check and 2. is proved in  \cite[Proposition 4.5]{marava17}.
\end{proof}
\end{lemma}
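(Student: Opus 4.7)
The plan is as follows.

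For part 1, I would begin by observing that applying the automorphism $\sigma$ to the coefficients of \eqref{rotated}, and using that every term is fixed by $\sigma$ except $v_\theta$ (for which $\sigma(v_\theta) = \e v_\theta$), gives $\mc{L}^\sigma_t(z) = \mc{L}_{t+\tfrac{1}{r}}(z)$ for $t \in \bb{R}$, in agreement with the specialization \eqref{Lsigma1/r0} at $t = -1/r$. Now fix a finite dimensional $\tmf{g}$-module $V$ with representation $\Phi$, so that $V^\sigma$ carries the representation $\Phi^\sigma = \Phi \circ \sigma^{-1}$. Using the intertwining identity \eqref{221111-2}, which amounts to $\Phi^\sigma(x) \circ \sigma = \sigma \circ \Phi(\sigma^{-1} x)$ for all $x \in \tmf{g}$, a term-by-term computation shows that for any $\Psi: \widetilde{\bb{C}}^\times_J \to V(\la)$ one has $[\mc{L}_{t+\frac{1}{r}}]_{V^\sigma}(\sigma \circ \Psi) = \sigma \circ [\mc{L}_t]_V(\Psi)$, the key point being that the factor $e^{2\pi i/r}$ arising from the rotation cancels the factor $\e^{-1}$ produced by $\sigma^{-1}(v_\theta) = \e^{-1} v_\theta$. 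This yields the claimed $\mc{O}'$-isomorphism between $\mc{A}_{V,t}$ and $\mc{A}_{V^\sigma, t+\frac{1}{r}}$, $\mc{O}'$-linearity being clear since $\sigma$ is independent of $\la$. The first bullet follows by taking $V = L(\tilde{\omega}_i)$, invoking Lemma \ref{221111-1} to identify $V^\sigma \simeq L(\tilde{\omega}_{\sigma(i)})$, and noting $\kappa_{\sigma(i)} = \kappa_i$ from Table \ref{table:ki}. For the second bullet I split on the orbit structure: if $i = \sigma(i)$ then $\langle i\rangle = 1$, $D_i = 1/r$, and $R_i = \sigma$, so the statement reduces to the first bullet; if $i \neq \sigma(i)$, inspection of Tables \ref{table:dynkinfd(r>1)}--\ref{table:affine} shows the $\sigma$-orbit has full size $r$, hence $D_i = 1$ and $R_i = \mathrm{id}$, and the assertion is trivial because meromorphy of $\mc{L}$ at $z = 0$ forces $\mc{L}_{t+1} = \mc{L}_t$.

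For part 2, the main point is that the $\tmf{g}$-equivariant morphism $\widetilde{m}_i$ of \eqref{mtilde} commutes pointwise with the connection, since $\mc{L}_t$ acts on tensor and wedge products as a derivation through the coproduct. Thus $\widetilde{m}_i$ maps wedge products of solutions in $\mc{A}^{(i)}_t = \mc{A}_{L(\tilde{\omega}_i), \kappa_i + t}$ to solutions in $\mc{A}_{\bigotimes_{j, \ell} L(\tilde{\omega}_{\sigma^\ell(j)})^{\otimes \widetilde{B}_{i\sigma^\ell(j)}}, \kappa_i + t}$; by freeness (Theorem \ref{prop:AVfree}) this solution space is isomorphic to the corresponding tensor product of solution spaces on each factor, and the rotation shift $\kappa_i - \kappa_j + t$ is just the rewriting $\kappa_i + t = \kappa_{\sigma^\ell(j)} + (\kappa_i - \kappa_j + t)$ via $\kappa_{\sigma^\ell(j)} = \kappa_j$. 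To construct $m_i$ I specialize $t = D_i/2$ and, for each orbit $j \in \mr{I}$, apply the first bullet of part 1 iteratively $\ell$ times to transport each factor $\mc{A}^{(\sigma^\ell(j))}$ back to $\mc{A}^{(j)}$, accumulating a rotation shift of $-\ell/r$ in the process. Using the folding identity $B_{ij} = \sum_{\ell = 0}^{\langle j \rangle - 1} \widetilde{B}_{i \sigma^\ell(j)}$, the total multiplicity at node $j$ becomes $B_{ij}$ and the accumulated rotation parameters yield precisely $(B_{ij} - 1 - 2l)/(2r)$ for $l = 0, \ldots, B_{ij} - 1$, following \cite[Proposition 4.5]{marava17}. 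The identity $m_i(v_i \wedge f_iv_i) = \bigotimes_j v_j^{\otimes B_{ij}}$ then follows immediately from the defining normalization $\widetilde{m}_i(\tilde{f}_i v_i \wedge v_i) = \bigotimes_j v_j^{\otimes \widetilde{B}_{ij}}$ after collating factors orbit-by-orbit.

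The main obstacle is the combinatorial bookkeeping in part 2, namely matching the rotation parameters to $(B_{ij} - 1 - 2l)/(2r)$ after transport through the cyclic action of part 1; this requires handling the ordering of factors within each orbit consistently and repeatedly using $\mc{L}_{t+1} = \mc{L}_t$ to reduce modulo one. Part 1 is the straightforward check asserted by the authors, while part 2, apart from the extra orbit bookkeeping now needed when $\sigma^\ell(j) \neq j$, reduces to the argument already carried out in \cite{marava17}.
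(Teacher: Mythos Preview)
Your proposal is correct and follows essentially the same approach as the paper: the paper's proof merely asserts that part 1 is a straightforward check and that part 2 is established in \cite[Proposition 4.5]{marava17}, and you have faithfully unpacked what the straightforward check consists of (the intertwining identity \eqref{221111-2} combined with $\mc{L}^\sigma_t=\mc{L}_{t+1/r}$, plus the orbit dichotomy for $R_i$) and sketched the transport-by-$\sigma$ argument underlying the cited proposition. No genuine departure from the paper's route.
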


\subsubsection{Frobenius solutions: a basis of monodromy eigenvectors}
We discuss Frobenius-like solutions of the ODE (\ref{eq:psiVla}). From these solutions, under some genericity conditions on the parameters $\ell, k$  we will construct a basis of eigenvectors for the monodromy operator \eqref{monodromy}. We closely follow  \cite[Section 5.2]{mara18}, to which we refer for further details and proofs.

A Frobenius solution will be a global solution which admits at zero the convergent series expansion
\begin{equation*}
 \Psi(z)=z^{-\gamma} \sum_{m,n} c_{m} z^m \, g_m, \quad g_m \in V(\la),
\end{equation*}
for some $\gamma \in \bb{C}$.
We write the above sereis more conveniently as
\begin{equation}\label{eq:frobeniuseries}
 \Psi(z)(\la)=z^{-\gamma} \sum_{m,n} c_{m,n} z^m \la^n, \;  c_{m,n} \in V.
\end{equation}
Inserting the above ansatz in the differential equation $\mc{L}\psi=0$, the coefficients $c_{m,n}$ are seen to necessarily
satisfy the following recurrence
\begin{equation}
 \left( \mathring{f}+\ell-\gamma+m+k n \right) c_{m,n} + v_\theta c_{m,n-1}+ \sum_{l=1}^m A_l c_{m-l,n}=0
\end{equation}
for some $A_l \in \widetilde{\mf{g}}$. In order to solve the recurrence, it is natural to impose the following two conditions:
\begin{enumerate}
 \item $\gamma$ is an eigenvalue of $\mathring{f}+\ell$ in the representation $V$; 
 \item $\mathring{f}+\ell-\gamma+m+k n $ is invertible for every $m,n \geq 0, (m,n) \neq (0,0)$.
\end{enumerate}

Regarding condition (1), we notice that in any finite dimensional representation
the  spectra  of $\mathring{f}+\ell$ and of $\ell$ coincide, and that the condition that $\mathring{f}+\ell$ is semisimple
is generic in $\ell$.
Regarding condition (2), we notice that in any finite dimensional representation $V$, fixed $k$,
the condition (2) is  generic in $\ell$.
Therefore, we make the following definition.
\begin{definition}\label{genericpair}
For $i\in\mr{I}$ let $P(\tilde{\omega}_i)$ be the multi-set of weights -- with multiplicities -- of the fundamental representation $L(\tilde{\omega}_i)$. FIxed $k \in (0,1)$, $\ell\in \mathring{\mf{h}}$ is said to be generic if
$f+\ell$ is semisimple and if
for every $i\in\mr{I}$ and every $\omega \in P(\tilde{\omega}_i)$ the element
$$\mathring{f}+\ell-\omega(\ell)+m+k n $$
 is invertible  in $\End(L(\tilde{\omega}_i))$ for every $m,n \geq 0, (m,n) \neq (0,0)$.
\end{definition}

\begin{proposition}\label{prop:frobeniusolutions}
Fixed $k$ and let $\ell$ be generic as in Definition \ref{genericpair}. For any $i\in\mr{I}$ and any $\omega \in P(\tilde{\omega}_i)$ choose an eigenvector $\chi_{\omega}$ of $f+\ell$ with eigenvalue $\omega(\ell)$.

1. There exists a unique solution $\chi_{\omega} \in \mc{A}^{(i)}$, which at $0$ admits the Frobenius expansion
\begin{equation}\label{eq:chiomega}
 \chi_{\omega}(z,\la) = z^{-\omega(\ell)} \left( \chi_{\omega} + \sum_{(m,n) \neq (0,0)} c_{m,n} z^m \la^n \right),
\end{equation}
convergent in a neighborhood of $z=0$.\\
2. For every $i\in\mr{I}$, the collection of all Frobenius solutions
$\lbrace \chi_{\omega}(z)\rbrace_{\omega \in P(\tilde{\omega}_i)}$ is an
 $\mc{O}'$ basis of $\mc{A}^{(i)}$.\\
3. If $w$ in an element of the Weyl group $\mc{W}$ of $\rmf{g}$, we denote by $\chi^{(i)}_{w},\widetilde{\chi}^{(i)}_{w}\in\mc{A}^{(i)}$ the solutions corresponding respectively to the weights $w(\tilde{\omega}_i)$ and $w(\tilde{\omega}_i-\alpha_i)$.
We can find  a normalisation of these solutions such that they satisfy the following set of relations
\begin{align}\nonumber
 & m_i\left( R_i \big(\chi^{(i)}_{w,-\frac{D_i}{2}}\big) \wedge \widetilde{\chi}^{(i)}_{w,\frac{D_i}{2}} \right) =
 +e^{D_iw[\alpha_i] (\ell)} \otimes_{j\in \mr{I}}
  \otimes_{l=0}^{B_{ij}-1}\chi^{(j)}_{w,\frac{B_{ij}-1-2l}{2r}} \\ \label{eq:chisystem}
& m_i\left( R_i \big( \widetilde{\chi}^{(i)}_{w,\frac{-D_i}{2}} \big) \wedge \chi^{(i)}_{w,\frac{D_i}{2}} \right) = 
 -e^{-D_iw[\alpha_i] (\ell)} \otimes_{j\in \mr{I}}
  \otimes_{l=0}^{B_{ij}-1}\chi^{(j)}_{w,\frac{B_{ij}-1-2l}{2r}}
\end{align}
where $m_i$ is the linear map defined in (\ref{eq:mi}).

\begin{proof}
1. In \cite[Proposition 5.10]{mara18}, it is proven that
the Frobenius series \eqref{eq:frobeniuseries} and that its analytic continuation belongs to
$\mc{A}^{(i)}$, in the case $r=1$. The same proof applies to the general case.

The fact that $\chi_{\omega}$ is an eigenvalue of $\mc{M}$ is straightforward
\begin{equation}\label{eq:Mchi}
\mc{M}(\chi^{\omega})(z)=e^{-2\pi i \omega(\ell)}z^{-\omega(\ell)} \left( \chi_{\omega} + \sum_{(m,n) \neq (0,0)} c_{m,n} z^m \la^n \right) =
e^{-2\pi i \omega(\ell)} \chi^{\omega}(z) .
\end{equation}
2. The Frobenius solutions are a $\mc{O}'-$basis of $\mc{A}^{(i)}$ since by hypothesis $\lbrace \chi_{\omega}\rbrace_{\omega \in P(\tilde{\omega}_i)}$ is a basis
of $L(\tilde{\omega_i})$. \\
3. It follows directly from (\ref{eq:mi}) and (\ref{eq:chiomega}).
\end{proof}

\end{proposition}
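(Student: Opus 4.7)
The plan is to adapt the strategy of \cite[Proposition 5.10]{mara18}, where the analogous statement was proven for $r=1$; the key observation is that the recurrence-based analysis is purely local at $z=0$ and does not depend on $r$. The proof splits into the three parts of the statement.

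For Part 1, I would substitute the ansatz \eqref{eq:frobeniuseries} with $\gamma=\omega(\ell)$ and $c_{0,0}=\chi_\omega$ into $\mc{L}\chi_\omega=0$; this yields the three-term recurrence displayed before the proposition. Genericity of $\ell$ (Definition \ref{genericpair}) makes $\mr{f}+\ell-\omega(\ell)+m+kn$ invertible for every $(m,n)\neq(0,0)$, so the coefficients $c_{m,n}$ are determined uniquely. Convergence in a neighbourhood of $z=0$ would follow from the operator-norm estimate $\bigl\|(\mr{f}+\ell-\omega(\ell)+m+kn)^{-1}\bigr\|=O\bigl(1/(m+kn)\bigr)$ as $m+kn\to\infty$, which combined with the two-term recursive coupling yields a nonzero radius of convergence; this is exactly the estimate carried out in \cite{mara18} and it is insensitive to the value of $r$. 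To promote the local solution to an element of $\mc{A}^{(i)}$ I would invoke Theorem \ref{prop:AVfree}: since $\mc{L}$ is a FFH connection, every local solution extends to a global one. The monodromy eigenvalue $e^{-2\pi i\omega(\ell)}$ is then read off directly from the ansatz, as the author does in \eqref{eq:Mchi}.

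For Part 2, I would use Theorem \ref{prop:AVfree} to identify $\mc{A}^{(i)}$ with a free $\mc{O}'$-module of rank $\dim L(\tilde{\omega}_i)=|P(\tilde{\omega}_i)|$ counted with multiplicity. The leading coefficients $\chi_\omega$, being a basis of eigenvectors of $\mr{f}+\ell$ under the semisimplicity consequence of genericity, form a $\bb{C}$-basis of $L(\tilde{\omega}_i)$, so cardinality already matches. For $\mc{O}'$-linear independence, any relation $\sum_\omega f_\omega(\la z^{-k})\chi_\omega(z;\la)=0$ is grouped by the distinct leading $z$-exponents $-\omega(\ell)$ and then, within each generalized eigenspace of $\mr{f}+\ell$, the linear independence of the leading vectors $\chi_\omega$ forces every $f_\omega$ to vanish. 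This is the same argument used in the proof of \cite[Proposition 5.10]{mara18}.

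For Part 3, the $\chi$-system is the analytic lift of the algebraic $\Psi$-system of Theorem \ref{thm:genlamba}(iv), namely $\widetilde{m}_i(\psi^{(i)}_\wedge)=\psi^{(i)}_\otimes$. Both sides of \eqref{eq:chisystem} live in the same tensor product $\bigotimes_{j\in\mr{I}}\bigotimes_{l=0}^{B_{ij}-1}\mc{A}^{(j)}_{(B_{ij}-1-2l)/(2r)}$ and, by Part 1 applied in this combined module, a Frobenius-type solution is determined uniquely by its leading $z$-exponent and leading coefficient. The exponent match is the identity $w(\tilde{\omega}_i)(\ell)+w(\tilde{\omega}_i-\alpha_i)(\ell)=\sum_{j,l}w(\tilde{\omega}_j)(\ell)\cdot\#\{l\}$ coming from $2\tilde{\omega}_i-\alpha_i=\sum_j B_{ij}\tilde{\omega}_j$; the leading coefficient match is furnished by Theorem \ref{thm:genlamba}(iv) applied to the Weyl-transformed vectors, where the shifts $\pm D_i/2$ and the $R_i$-twist encode exactly the factors $e^{\pm\frac{\pi i D_i}{h}\mr{\rho}^\vee}$ appearing in $\psi^{(i)}_\wedge$. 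The residual scalar discrepancy produces precisely $\pm e^{\pm D_i w[\alpha_i](\ell)}$, and this fixes the normalization of $\chi^{(i)}_w$ and $\widetilde{\chi}^{(i)}_w$. The principal obstacle is the combinatorial bookkeeping of the phases and weight shifts introduced by the half-period rotations $\pm D_i/2$ and by $R_i$; once the normalizations of highest weight vectors fixed in Theorem \ref{thm:genlamba} are respected, uniqueness closes the argument.
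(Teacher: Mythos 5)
Your Parts 1 and 2 follow essentially the same route as the paper, whose own proof is little more than a citation: it defers the recurrence, the invertibility under genericity, the convergence estimate, and the extension to a global element of $\mc{A}^{(i)}$ to \cite[Proposition 5.10]{mara18}, observing that nothing in that argument depends on $r$ -- exactly your opening remark. Your basis argument for Part 2 is the paper's one-liner slightly expanded; if you want it airtight, note that $\mc{O}'$-independence plus matching cardinality does not by itself yield a basis of a free module, and the missing half-line is that the matrix relating the $\chi_{\omega}$ to a Cauchy-data basis at some $z_0$ has determinant a nowhere-vanishing element of $\mc{O}'$ (for each fixed $\la$ the $\chi_{\omega}(\cdot;\la)$ are $\bb{C}$-independent solutions of a linear ODE), hence a unit.

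The one point to correct is in Part 3. The relations \eqref{eq:chisystem} are a statement about leading behaviour at $z=0$, so the normalisation input is \eqref{eq:mi}, namely $m_i(v_i\wedge f_iv_i)=\otimes_j v_j^{\otimes B_{ij}}$ transported by $w$, combined with \eqref{eq:chiomega} and the fact that rotating by $t$ multiplies a Frobenius solution of exponent $\omega(\ell)$ by $e^{-2\pi i t\,\omega(\ell)}$ to leading order; this is precisely what the paper means by ``follows directly from \eqref{eq:mi} and \eqref{eq:chiomega}''. Theorem \ref{thm:genlamba}(iv) and the vectors $\psi^{(i)}_\wedge$ with their $e^{\pm\frac{\pi i D_i}{h}\mr{\rho}^\vee}$ factors concern the eigenvectors of the cyclic element $\Lambda(\kappa_i)$, which control the asymptotics at $\infty$ and enter Proposition \ref{prop:Psisystem}, not this proposition; if you applied Theorem \ref{thm:genlamba}(iv) literally here you would be matching the wrong leading vectors. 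Your exponent bookkeeping via $2\tilde{\omega}_i-\alpha_i=\sum_j B_{ij}\tilde{\omega}_j$ and the identification of the residual scalar with $\pm e^{\pm D_i w(\alpha_i)(\ell)}$ are correct, so the slip is a misattribution of the normalisation lemma rather than a failure of the argument.
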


\subsubsection{A basis at infinity. Subdominant solutions}
Here we study the asymptotic behaviour at $\infty$ of solutions in $\mc{A}^{(i)}$:
\begin{itemize}
 \item We provide the existence of distinguished bases, with prescribed exponential asymptotic behaviour
 in sectors of $\widetilde{\bb{C}^{\times}}$ of amplitude at least $\pi h$, with $h$ the Coxeter number of $\mf{g}$.
\item For every such a basis, we select a distinguished solution for its subdominant behaviour, and
we show that its asymptotic behaviour holds on a sector of amplitude at least $2 \pi h$.
\item We define the central connection matrix and the Stokes matrix, which, under the ODE/IM correspondence,
are identified with the $Q$ and $T$ operator-valued functions of the quantum Drinfeld-Sokolov model.
\end{itemize}

Our results follow rather directly from the following
\begin{theorem}[\cite{CGM23}]\label{thm:CGM}
 Let $V$ a finite dimensional vector space and, for $M$ positive, let $\widetilde{\mathbb{C}}^*_M=
 \lbrace z \in \widetilde{\mathbb{C}^*} , |z| > M \rbrace.$  Consider the differential equation
 \begin{equation}\label{eq:irregularnormal}
  \psi'(z)= \left(- A \, p(z;\la) + R(z;\la) \right) \psi(z), \quad \psi: \widetilde{\mathbb{C}}^*_M  \to V
 \end{equation}
 where $M$ is some positive number, and let
 \begin{itemize}
  \item  $A \in End(V)$ be a diagonalisable matrix with eigenvectors $\psi_j$
 and eigenvalues $\nu_j$, $j=1,\dots, \dim V$.
 \item $p(z;\la)=  z^{\sigma_0}+\sum_{n=1}^{N} c_n(\la)  z^{\sigma_n}$ where the exponents $\sigma_n$'s are real and ordered so that
 $$\sigma_0>\sigma_1 >\dots > \sigma_{N-1}>\sigma_{N} =-1.$$
 Moreover, the coefficients $c_n(\la)$ are analytic bounded functions of the parameter $\la$, which belongs to a domain $D \subset \mathbb{C}$. Given such a $p(z;\la)$ we define its primitive
$$P(z;\la)= \frac{z^{\sigma_0+1}}{\sigma_0+1} +\sum_{n=1}^{N-1} c_n(\la) \frac{z^{\sigma_n+1}}{\sigma_n+1}+ c_N \log z  .$$
 \item $R(z;\la)$ be a matrix valued function such that in an arbitrary closed sector of $\widetilde{\mathbb{C}}^*_M$ $|R(z;\la)|=O(z^{-1-\delta})$ uniformly with respect to $\la \in D$.
\end{itemize}
 1. Assume that the interval $[a,b]$ is such that the following condition holds:
 \begin{equation}\label{eq:nolines}
  \forall \varphi \in [a,b], \quad
  \Re (\nu_j e^{i\varphi})= \Re (\nu_{j'} e^{i\varphi}) \mbox{ if and only if } \nu_j=\nu_{j'}.
 \end{equation}
There exists a unique basis of solutions $\psi_j(z;\la)$ satisfying
the following asymptotics
\begin{align} \nonumber
 \psi_j(z,\la) & = \left(\psi_j + O(z^{-\delta}) \right) e^{-\nu_j P(z;\la)},\\ \label{eq:basesinfinity}
 & \mbox{as } z \to \infty \mbox{ in the sector }  \frac{a}{\sigma_0+1}\leq \arg z \leq\frac{b+\pi}{\sigma_0+1},
\end{align}
uniformly with respect to $\la \in D$.
Moreover, the functions $\psi_j(\cdot,\cdot)$'s are
analytic in $\widetilde{\mathbb{C}}^*_M \times D$.

2. Assume that $j_0$ is a subdominant index for the interval $[a,b]$, namely
\begin{equation}\label{eq:nosubdominantlines}
  \forall \varphi \in [a,b] \mbox{ and } j \neq j_0, \quad
  \Re (\nu_{j_0} e^{i\varphi}) > \Re (\nu_j e^{i\varphi}) .
 \end{equation}
There exists a unique solution $\Psi(z;\la)$, called subdominant, such that
\begin{align}\nonumber
 \Psi(z;\la) & =   \left(\psi_{j_0} + O(z^{-\delta}) \right) e^{-\nu_{j_0} P(z;\la)}, \\ \label{eq:subdominantgeneral}
 & \mbox{as } z \to \infty, \mbox{ in the sector }
 \frac{a-\pi}{\sigma_0+1} \leq \arg z
 \leq \frac{b + \pi}{\sigma_0+1},
\end{align}
uniformly with respect to $\la \in D$. Moreover, the function $\Psi(\cdot,\cdot)$ is
analytic in $\widetilde{\mathbb{C}}^*_M \times D$.
\end{theorem}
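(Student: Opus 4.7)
The proof will combine a change of the independent variable which brings the equation into near-canonical WKB form with a Volterra fixed-point argument controlled by the (sub)dominance hypotheses \eqref{eq:nolines}--\eqref{eq:nosubdominantlines}.

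\textbf{Step 1 (new variable).} I would first pass to the new independent variable $\zeta=P(z;\la)$. Since $P'(z;\la)=p(z;\la)\sim z^{\sigma_0}$ with $\sigma_0>-1$, for $M$ sufficiently large --- uniformly in $\la\in D$ by the boundedness of the coefficients $c_n(\la)$ --- the map $z\mapsto\zeta$ is a biholomorphism from a neighborhood of $\infty$ in $\widetilde{\bb{C}}^*_M$ onto a neighborhood of $\infty$ in the universal cover of $\bb{C}^*_\zeta$, under which sectors of angular amplitude $\Theta$ are sent to sectors of angular amplitude $(\sigma_0+1)\Theta$. The transformed equation reads
\[
\tfrac{d\psi}{d\zeta}=\bigl(-A+\widetilde{R}(\zeta;\la)\bigr)\psi,\qquad \widetilde{R}(\zeta;\la)=\tfrac{R(z;\la)}{p(z;\la)}=O\bigl(\zeta^{-1-\tilde{\delta}}\bigr),
\]
for some $\tilde{\delta}>0$ uniformly in $\la\in D$.

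\textbf{Step 2 (Volterra ansatz).} After diagonalising $A$ in the basis $\{\psi_j\}$, for each eigenpair $(\nu_j,\psi_j)$ I would seek a solution of the form $\psi(\zeta)=e^{-\nu_j\zeta}\bigl(\psi_j+\varphi_j(\zeta;\la)\bigr)$, which yields the integral equation
\[
\varphi_j(\zeta;\la)=\int_{\gamma_j(\zeta)}e^{(\nu_j-A)(\zeta-\eta)}\,\widetilde{R}(\eta;\la)\,\bigl(\psi_j+\varphi_j(\eta;\la)\bigr)\,d\eta.
\]
Here $\gamma_j(\zeta)$ is a path from $\infty$ to $\zeta$ chosen so that $\Re\bigl((\nu_j-\nu_{j'})(\eta-\zeta)\bigr)\geq 0$ for all $j'\neq j$. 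Hypothesis \eqref{eq:nolines} guarantees the existence of such paths for every $\zeta$ with $a\leq\arg\zeta\leq b+\pi$; along them the exponential kernel is bounded, and the $O(\eta^{-1-\tilde{\delta}})$ decay of $\widetilde{R}$ makes the Volterra operator a contraction on the Banach space with norm $\|\varphi\|=\sup|\eta|^{\tilde{\delta}}|\varphi(\eta;\la)|$. Iteration yields a unique $\varphi_j=O(\zeta^{-\tilde{\delta}})$; translating back through $\zeta=P(z;\la)$ delivers the asymptotic \eqref{eq:basesinfinity} uniformly in $\la\in D$. Uniqueness of a solution with such asymptotic in the given sector is standard, since any other would differ from $\psi_j(z;\la)$ by a linear combination of $\psi_{j'}(z;\la)$, $j'\neq j$, all of which grow exponentially there.

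\textbf{Step 3 (subdominant extension).} For Part 2, the strict inequality \eqref{eq:nosubdominantlines} supplies an extra angular cushion of $\pi$ around the subdominant direction: for every $\zeta$ with $a-\pi\leq\arg\zeta\leq b+\pi$, one can still find an integration path $\gamma_{j_0}(\zeta)$ along which $\Re\bigl((\nu_{j_0}-\nu_{j'})(\eta-\zeta)\bigr)\geq 0$ for all $j'\neq j_0$. Rerunning the contraction of Step 2 on this enlarged domain and undoing the change of variable gives the extended asymptotic sector of amplitude $(b-a+2\pi)/(\sigma_0+1)$, which is \eqref{eq:subdominantgeneral}.

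\textbf{Main obstacle.} The essential novelty compared with the classical irregular-singular theory (Sibuya, Wasow, Malgrange) is that the coefficients of the ODE are neither meromorphic nor polynomial and that the exponents $\sigma_n$ are real numbers, so the whole analysis must be carried out on the universal cover; moreover every estimate has to be uniform in the parameter $\la\in D$, so that the resulting $\psi_j(\cdot,\cdot)$ and $\Psi(\cdot,\cdot)$ are genuinely analytic on $\widetilde{\bb{C}}^*_M\times D$. The most delicate bookkeeping lies in showing that the auxiliary variable $\zeta=P(z;\la)$ is globally well defined on the required sectors uniformly in $\la$ (which requires tracking the subleading logarithmic term $c_N\log z$ and the fractional-power corrections), and in constructing the paths $\gamma_j(\zeta)$ in a way compatible with both the Stokes hypothesis and the required parameter-uniform decay.
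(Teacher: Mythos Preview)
The paper does not prove this theorem: its entire proof reads ``See \cite[Theorem 3.21]{CGM23}.'' So there is no in-paper argument to compare your proposal against; the result is imported wholesale from the companion paper of Cotti, Guzzetti and Masoero.

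That said, your outline is the standard Levinson-type strategy and is almost certainly close in spirit to what \cite{CGM23} does (its title explicitly advertises ``a Levinson type theorem on complex domains''). The change of variable $\zeta=P(z;\la)$ reducing to $\psi'=(-A+\widetilde R)\psi$ with $\widetilde R$ integrable at infinity, followed by a Volterra fixed-point argument with carefully chosen integration contours governed by the Stokes condition, is exactly the classical mechanism. Your identification of the main obstacle --- the non-meromorphic, fractional-power nature of the coefficients forcing one to work on the universal cover with uniform-in-$\la$ control of the map $z\mapsto\zeta$ --- is also on target.

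One point to tighten: in Step~2 you assert that for every $\zeta$ with $a\le\arg\zeta\le b+\pi$ there is a path $\gamma_j(\zeta)$ along which $\Re((\nu_j-\nu_{j'})(\eta-\zeta))\ge 0$ for \emph{all} $j'$ simultaneously. This is not quite right: the standard construction uses, for each component $j'$, a different branch of the path (one ray per eigenvalue pair), and the Volterra kernel is applied componentwise. You should also check that the contraction works on the full sector $[a,b+\pi]$ rather than just $[a,b]$; the extra $\pi$ comes from the fact that the path can be bent within a half-plane before the exponential kernel loses boundedness. These are routine refinements, but a referee of \cite{CGM23} would ask for them.
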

\begin{proof}
 See \cite[Theorem 3.21]{CGM23}.
\end{proof}

\begin{remark}
 Note that the interval $[a,b]$ satisfies (\ref{eq:nolines}) if and only if  $[a+m \pi,b+m \pi]$ satisfies the same
 property for every $m \in \mathbb{Z}$. Similarly,  the interval $[a,b]$ satisfies (\ref{eq:nosubdominantlines}) if and only if  $[a+2m \pi,b+2m \pi]$ satisfies the same
 property for every $m \in \mathbb{Z}$.
\end{remark}
We will also need the definition of a Stokes matrix
\begin{definition}\label{def:consecutive}
 Let $\psi_j$'s and $\psi'_{j'}$'s be bases of solution of the linear differential equation (\ref{eq:irregularnormal}), satisfying
 the asymptotic behaviour (\ref{eq:basesinfinity}) on the sectors  $\Sigma=\lbrace \frac{a}{\sigma_0+1}\leq \arg z \leq\frac{b+\pi}{\sigma_0+1}\rbrace$ and $\Sigma'=\lbrace \frac{a'}{\sigma_0+1}\leq \arg z \leq\frac{b'+\pi}{\sigma_0+1}\rbrace$. The bases
 $\psi_j$'s and $\psi'_j$'s are said to be consecutive bases if $\Sigma \cap \mc{S'}= \lbrace \frac{c}{\sigma_0+1} \leq \arg z \leq \frac{d}{\sigma_0+1} \rbrace$
 with $c<d$ such that condition (\ref{eq:nolines}) holds for all $\varphi \in [c,d]$.
The matrix $\mc{S}$ of change of basis,
\begin{equation}\label{eq:stokesm}
 \psi'_{j'}=\sum_{j=1}^{\dim V} \psi_{j} \mc{S}_{j,j'} , \; j' =1,\dots, \dim V,
\end{equation}
is called a Stokes matrix. 
 \end{definition}
 An important property of a Stokes matrix is that it is unipotent. In fact,
 we have the following lemma.
 \begin{lemma}\label{lem:Stokes}
  With the notation of Definition \ref{def:consecutive}. Assume that $\psi_j$'s and $\psi'_{j'}$'s are consecutive bases and the intersection of their respective sectors is
  $\lbrace \frac{c}{\sigma_0+1} \leq \arg z \leq \frac{d}{\sigma_0+1} \rbrace$ for some $c<d$.
  Then
  \begin{align}\label{eq:unipotent1}
 &  \mc{S}_{j,j}=1 , \; j=1,\dots, \dim V , \\ \label{eq:unipotent2}
 &  \mc{S}_{j,j'}=0, \mbox{ if } j\neq j' \mbox{ and } Re (\nu_{j} e^{i\varphi}) \leq \Re (\nu_{j'} e^{i\varphi}), \; \forall \varphi \in [c,d].
  \end{align}
   We remark that, due to (\ref{eq:nolines}), if the condition $Re (\nu_{j} e^{i\varphi}) \leq \Re (\nu_{j'} e^{i\varphi})$ holds for a $\varphi \in [c,d]$ then it holds for every $\varphi \in [c,d]$.
  \begin{proof}
  By definition of $\Sigma$ and by Theorem \ref{thm:genlamba},
\begin{align*}
\left( \psi_{j'} + O(z^{-\delta})\right) e^{-\nu_{j'}P(z;\la)}=& \sum_{j=1}^{\dim V} \left( \psi_{j} + O(z^{-\delta})\right)  e^{-\nu_{j}P(z;\la)}\mc{S}_{j,j'}, \\
& z \to \infty \mbox{ and }  \frac{c}{\sigma_0+1}\leq\arg z \leq \frac{d}{\sigma_0+1}.
\end{align*}
Fixed a $c>0$ and a $\varphi \in [c,d]$, we let
\begin{align*}
 l_{\varphi,c} & = \lbrace z, \Im \big(z^{\sigma_0+1}e^{-i\varphi}\big)= c\rbrace  \\
& = \lbrace |z|^{\sigma_0+1} \sin \big(-\varphi+(\sigma_0+1) \arg z\big) =c , \; \frac{\varphi}{\sigma_0+1}<\arg z <\frac{\varphi+\pi}{\sigma_0+1} \rbrace.
\end{align*}
We parameterise $l_{\varphi,c}$ by $t \in \bb{R}$ in such a way that
$z^{\sigma_0+1}\big(l_{\varphi,c}(t)\big)= (t+ i c) e^{+i\varphi}$. Clearly,
$l_{\varphi,c}$ belongs to the sector $\frac{c}{\sigma_0+1}\leq\arg z \leq \frac{d}{\sigma_0+1}$ if $t$ is positive and large enough. Hence, restricted to $l_{\varphi,c}$, the above estimate reads
\begin{equation*}
 \left( \psi_{j'} + O\big(t^{-\delta'}\big)\right) e^{(-\nu_{j'} e^{-i\varphi}) t}= \sum_{j=1}^{\dim V}
 \mc{S}_{j,j'}\left( \psi_{j} +  O\big(t^{-\delta'}\big)\right)
 e^{(-\nu_{j}e^{-i\varphi}) t} , t \to +\infty,
\end{equation*}
where $\delta'=\min \lbrace \frac{\delta}{\sigma_0+1}, 1-\frac{\sigma_1+1}{\sigma_0+1}\rbrace$.
Multiplying the two sides of the estimate by  $e^{\nu_{j'} e^{i\varphi} t}$ and comparing them as $t \to +\infty$, we obtain the thesis.
  \end{proof}
 \end{lemma}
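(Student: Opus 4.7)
The plan is to test the change-of-basis relation \eqref{eq:stokesm} along a family of curves inside the overlap sector on which the exponential factors $e^{-\nu_{j}P(z;\lambda)}$ can be controlled precisely. Concretely, for each $\varphi\in[c,d]$ and each large constant $M>0$ I introduce the curve $l_{\varphi,M}$ parameterised by $t>0$ via $z^{\sigma_{0}+1}=(t+iM)e^{i\varphi}$. For $t$ sufficiently large, $l_{\varphi,M}$ lies strictly inside the open overlap sector $\{c/(\sigma_{0}+1)<\arg z<d/(\sigma_{0}+1)\}$, so both asymptotic expansions in Theorem~\ref{thm:CGM} apply along it. Since $P(z;\lambda)=z^{\sigma_{0}+1}/(\sigma_{0}+1)+\cdots$, with the remaining terms of strictly lower order in $z$, a direct computation on $l_{\varphi,M}$ yields
\[
e^{-\nu_{j}P(z;\lambda)}=\exp\!\Bigl(-\tfrac{\nu_{j}e^{i\varphi}}{\sigma_{0}+1}\,t\Bigr)\bigl(1+O(t^{-\delta'})\bigr),
\]
uniformly in $\lambda$ on compact subsets of $D$, with $\delta'=\min\{\delta/(\sigma_{0}+1),\,1-(\sigma_{1}+1)/(\sigma_{0}+1)\}>0$ absorbing both the remainder in \eqref{eq:basesinfinity} and the subleading contributions to $P$.

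I would then substitute \eqref{eq:basesinfinity} into \eqref{eq:stokesm}, restrict to $l_{\varphi,M}$, and multiply both sides by the global scalar $\exp\!\bigl(\frac{\nu_{j'}e^{i\varphi}}{\sigma_{0}+1}t\bigr)$. Passing to the limit $t\to+\infty$ produces an asymptotic identity in $V$ of the form
\[
\psi_{j'}+o(1)=\sum_{j=1}^{\dim V}\mc{S}_{j,j'}\,\bigl(\psi_{j}+o(1)\bigr)\,\exp\!\Bigl(\tfrac{(\nu_{j'}-\nu_{j})e^{i\varphi}}{\sigma_{0}+1}\,t\Bigr)\bigl(1+O(t^{-\delta'})\bigr).
\]
By condition \eqref{eq:nolines}, for $j\neq j'$ the real part $\Re((\nu_{j'}-\nu_{j})e^{i\varphi})$ vanishes only if $\nu_{j}=\nu_{j'}$. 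Hence the indices split into three classes: those with $\Re(\nu_{j}e^{i\varphi})<\Re(\nu_{j'}e^{i\varphi})$, for which the right-hand exponential grows and comparison with the bounded left-hand side forces $\mc{S}_{j,j'}=0$; those with $\Re(\nu_{j}e^{i\varphi})>\Re(\nu_{j'}e^{i\varphi})$, for which the exponential decays and so contributes only to $o(1)$; and those with $\nu_{j}=\nu_{j'}$, which give constant contributions whose combined limit must equal $\psi_{j'}$ by linear independence of $\{\psi_{j}\}$. The last case pins down $\mc{S}_{j',j'}=1$ together with $\mc{S}_{j,j'}=0$ for $j\neq j'$ with $\nu_{j}=\nu_{j'}$, completing the proofs of \eqref{eq:unipotent1} and \eqref{eq:unipotent2}.

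The closing remark that the inequality in \eqref{eq:unipotent2} is independent of the point $\varphi\in[c,d]$ is then immediate from continuity: $\varphi\mapsto\Re((\nu_{j'}-\nu_{j})e^{i\varphi})$ is a continuous real function on $[c,d]$ which, by \eqref{eq:nolines}, cannot vanish unless $\nu_{j}=\nu_{j'}$, and therefore its sign is constant on the interval. The main technical obstacle I anticipate is the uniform control of the error terms along the whole family $\{l_{\varphi,M}\}_{\varphi\in[c,d]}$: one must verify that the implicit constants in $o(1)$ and $O(t^{-\delta'})$ are independent of $\varphi$ and of $\lambda$ on compact subsets of $D$, which ultimately relies on the boundedness of the coefficients $c_{n}(\lambda)$ assumed in Theorem~\ref{thm:CGM} and on taking $M$ large enough to push the entire family of curves strictly into the interior of the overlap sector.
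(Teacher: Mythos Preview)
Your proposal is correct and follows essentially the same approach as the paper: both arguments restrict the change-of-basis relation to the curve $z^{\sigma_0+1}=(t+iM)e^{i\varphi}$ inside the overlap sector, compare the exponential rates $\Re(\nu_j e^{i\varphi})$ as $t\to+\infty$, and read off the constraints on $\mc{S}_{j,j'}$. Your treatment is in fact slightly more explicit than the paper's in separating the three cases and in flagging the uniformity of the error terms, but the underlying idea is identical.
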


We want to use Theorem \ref{thm:CGM} in order to study asymptotic solutions of global solution
$\mc{A}^{(i)}$, for $i\in\mr{I}$. To do that, we consider the differential equation $\mc{L}_{\kappa_i}\psi=0$ in the loop realisation
\begin{equation}\label{eq:loopLki}
 \mc{L}_{\kappa_i}(z;\la) \psi(z;\la) =0,
\end{equation}
where
$$
\psi(\cdot;\la): \left(\widetilde{\bb{C}}^\times \setminus \bigcup_{j \in J} \bigcup_{l=0}^{r-1}\Pi^{-1} (e^{-2\pi i \kappa_i} \e^l w_j)\right) \to L(\tilde{\omega}_i).
$$
Explicitly, the operators $\mc{L}_{\kappa_i}(z;\la) $ are given by
\begin{align}
 \mc{L}_{\kappa_i}(z;\la)&= \partial_z+ \frac{1}{z}\Biggl(\mathring{f}+  e^{2 \pi i \kappa_i}z\big( 1+ e^{-2 \pi i \kappa_i}\la z^{k-1}\big)v_\theta + \ell+   \notag\\
&+\sum_{j =1}^N \frac{ e^{2r\pi i \kappa_i } r z^{r}}{ e^{2r\pi i \kappa_i} z^r-w^r}\left(-\theta^\vee + X(j)
\right)\Biggl),\label{Lkappai}
\end{align}
and the coefficients  $\kappa_i$'s are as in Table \ref{table:ki}.
Notice that we cannot apply Theorem \ref{thm:CGM} to the differential equation \eqref{eq:loopLki}, as it is not of the form
\eqref{eq:irregularnormal}. However, we will find that \eqref{eq:loopLki} is Gauge equivalent to
an equation of the form \eqref{eq:irregularnormal}. More precisely, we fill find a gauge $G$ such that
\begin{equation}
 G. \mc{L}_{\kappa_i}(z;\la) =\partial_z+ q_i(z;\la) \Lambda(\kappa_i)+ o(z^{-1}),
\end{equation}
where  $\Lambda(\kappa_i)$ is the cyclic element
$\mr{f} + e^{2 \pi i \kappa_i} v_{\theta} $ introduced in \eqref{lambdat}, while
$$ q_i(z;\la)= z^{\frac{1}{h}} \left( 1 + \sum_{l=1}^{\lfloor  \frac{1}{(1-k)h} \rfloor}
 c_l \, e^{-2 l \pi i \kappa_i }\la^l z^{l(1-k)}\right) .$$
Here $c_l$ are the coefficients of the McLaurin expansion of $(1-w)^{\frac{1}{h}}$.

Before proceeding further, we recall some facts about the spectrum of $\Lambda(\kappa_i)$.
The cyclic element is regular and semisimple \cite{Kac90},
and its centralizer is a Cartan subalgebra, say $\underline{\tmf{h}}$ (in the case $r=1$, $\underline{\tmf{h}}$ is said to be in apposition with respect to $\tmf{h}$ \cite{Kos59,moody87}). Letting $ P(\underline{\tilde{\omega}_i})$ be the multi-set of weights of the representation $L(\tilde{\omega}_i)$, corresponding to the Cartan sub-algebra $\underline{\tmf{h}}$,
the eigenvalues and eigenvectors are denoted by
$\mu^{(i)}_{\underline{\omega}}=\underline{\omega} \left(\Lambda(\kappa_i) \right) $ and $\psi^{(i)}_{\underline{\omega}}$.
In particular, the maximal eigenvalue $\mu^{(i)}$ studied in Theorem \ref{thm:genlamba}
corresponds to the fundamental weight $\underline{\tilde{\omega}_i}$, namely $\mu^{(i)}=\mu^{(i)}_{\underline{\tilde{\omega}_i}} $ and
$\psi^{(i)}=\psi^{(i)}_{\underline{\tilde{\omega}_i}}$; see \cite{marava15,marava17}.

\begin{definition}\label{def:stokeslinesLambda}
 We say that $[a,b]$ is a good interval for $\Lambda(\kappa_i)$ if for every $\varphi \in [a,b]$:
 \begin{equation}\label{eq:goodlambdai}
\Re (\mu^{(i)}_{\underline{\omega}} e^{i\varphi})
 = \Re (\mu^{(i)}_{\underline{\omega}'}e^{i\varphi}), \mbox{ if and only if } \mu^{(i)}_{\underline{\omega}}=\mu^{(i)}_{\underline{\omega}'}.
 \end{equation}
 We let $\zeta^{(i)}$ be the supremum of all positive numbers $\zeta$
 \footnote{Since $\mu^{(i)}$ is the maximal eigenvalue as per Definition \ref{def:maximal}, the
 condition (\ref{eq:deltai}) is satisfied whenever $\zeta$ is small enough. In the case
 $\mf{g}=A_n$, it is simple to verify that $\zeta^{(1)}=\zeta^{(n)}=\frac\pi{n+1}$.}
 such that for every $\varphi \in [-\zeta,\zeta]$ and $\underline{\omega}'\neq \underline{\tilde{\omega}_i}$:
 \begin{equation}\label{eq:deltai}
 \Re (\mu^{(i)}_{\underline{\tilde{\omega}_i}} e^{i\varphi}) >  \Re (\mu^{(i)}_{\underline{\omega}} e^{i\varphi}).
 \end{equation}
\end{definition}

\begin{proposition}\label{prop:basisatinfinity}
Let $i \in  i\in \mr{I}$ and
$S_i$ be the primitive of $z^{-1}q_i(z,\la)$ given by
\begin{equation}
S_i(z;\la)=
\begin{cases}
 & h \, z^{\frac{1}{h}} \left( 1 + \sum_{l=1}^{\lfloor  \frac{1}{(1-k)h} \rfloor}
 \frac{c_l e^{-2 l \pi i \kappa_i }\la^l z^{-l(k-1)} }{1- h l (1-k)} \right) , \quad  \frac{1}{h(1-k)} \notin \bb{N}\\
 & h \, z^{\frac{1}{h}} \left( 1 + \sum_{l=1}^{m-1}
 \frac{c_l e^{2 l \pi i \kappa_i }\la^l z^{-l(k-1)}}{1- l\,m}  \right) +
  c_{m}\log z
  , \,   m:=\frac{1}{h(1-k)} \in \bb{N}
\end{cases} .
\end{equation}
There exists a $\delta>0$ such that:

1. If $[a,b]$ a good interval for $\Lambda(\kappa_i)$,
for every $\la \in \bb{C}$, the differential equation (\ref{Lkappai})
admits a unique basis of solutions $\widetilde{\Phi}_{\underline{\omega}}^{(i)}(\cdot;\la)$, with
$\underline{\omega} \in P(\underline{\tilde{\omega}_i})$, such that
 \begin{equation}\label{eq:wtildePsiom}
  \widetilde{\Phi}_{\underline{\omega}}^{(i)}(z;\la)=  z^{\frac{1}{h}\ad \mathring{\rho}^\vee} \left( \psi_{\underline{\omega}}^{(i)} + O\big(z^{-\delta}\big) \right) e^{-\mu_{\underline{\omega}}^{(i)} S_i(z;\la)},
 \end{equation}
 as $z \to + \infty$ in the closed sector $ a h \leq\arg z \leq h (\pi+ b)$.
 Moreover, the solutions $\widetilde{\Psi}_{\underline{\omega}}^{(i)}$'s are entire functions of the parameter
 $\la$. Therefore, the elements $\Phi^{(i)}_{\underline{\omega}} \in \mc{A}^{(i)}, \underline{\omega} \in P(\underline{\tilde{\omega}_i})$  corresponding to $\widetilde{\Psi}^{(i)}_{\underline{\omega}}(\cdot;\la)$ under the isomorphism $z^{k\scaling}$, i.e.
 $\Phi^{(i)}_{\underline{\omega}} (z)(\la)=\widetilde{\Phi}^{(i)}_{\underline{\omega}}(z;\la z^{k})$, form a
 $\mc{O}'$-basis
 of $\mc{A}^{(i)}$.

2. For every $\la \in \bb{C}$, the differential equation (\ref{Lkappai})
admits a unique solution $\widetilde{\Phi}^{(i)}(\cdot;\la)$ such that for all $\theta< \theta^{(i)}$,
 \begin{equation}\label{eq:wtildePsii}
  \widetilde{\Phi}^{(i)}(z;\la)=  z^{\frac{1}{h}\ad \mathring{\rho}^\vee} \left( \psi^{(i)} + O\big(z^{-\delta}\big) \right) e^{-\mu^{(i)} S_i(z;\la)},
 \end{equation}
 as $z \to + \infty$ in the closed sector $|\arg z|\leq h (\pi+\theta^{(i)})$.
 Moreover, the solution $\widetilde{\Phi}^{(i)}$ is an entire function of the parameter $\la$.
 We denote by $\Psi^{(i)}$ the element of $\mc{A}^{(i)}$ defined by
 $\Phi^{(i)}(z;\la)= \widetilde{\Phi}^{(i)}(z;\la z^{k})$.
\end{proposition}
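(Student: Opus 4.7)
The plan is to bring $\mc{L}_{\kappa_i}(z;\la)$ into the normal form required by Theorem~\ref{thm:CGM} and then transfer the resulting asymptotic solutions to $\mc{A}^{(i)}$ via the isomorphism $z^{k\scaling}:\mc{A}_V\to \mc{B}_V$ of Lemma~\ref{lem:AisoB}. The first step is to apply the principal gauge $G=z^{\frac{1}{h}\ad \mr{\rho}^\vee}$. Using that $\mr{f}$ has principal grade $-1$ and that $v_\theta$ has principal grade $h-1$ (since $\theta$ has height $h-1$), the leading terms $z^{-1}\mr{f}$ and $e^{2\pi i\kappa_i}v_\theta$ combine into $z^{\frac{1}{h}-1}\Lambda(\kappa_i)$; the $\la$-dependent term becomes $\la\, z^{k+\frac{1}{h}-2}v_\theta$, which is subdominant since $k<1$; the terms $\tfrac{\ell}{z}$ and the contribution $\tfrac{1}{hz}\mr{\rho}^\vee$ from $G^{-1}\partial_z G$ are $O(z^{-1})$; and the Laurent expansion at infinity of the pole contributions, whose nilpotent summands gain negative fractional powers of $z$ under $G$, is likewise $O(z^{-1})$.

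The main technical obstacle is to eliminate the $v_\theta$-component of the $\la$-perturbation in favour of a scalar multiple of $\Lambda(\kappa_i)$, so that the dominant $\la$-dependent part of the equation has matrix coefficient proportional to $\Lambda(\kappa_i)$. Since $\Lambda(\kappa_i)$ is regular semisimple with centraliser the apposite Cartan $\underline{\tmf{h}}$, we have the decomposition $\tmf{g}=\underline{\tmf{h}}\oplus\im\ad\Lambda(\kappa_i)$. I would split $v_\theta=v_\theta^{c}+[\Lambda(\kappa_i),Y]$ and apply a secondary gauge of the form $\exp(\gamma(z;\la)\,Y)$ with $\gamma=O(z^{-(1-k)})$: this converts the off-centraliser piece into a new perturbation gaining a factor of $z^{-(1-k)}$. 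The centraliser piece $v_\theta^{c}$ further decomposes in $\underline{\tmf{h}}$ into a scalar $c_1 e^{-2\pi i\kappa_i}\Lambda(\kappa_i)$-part (with $c_l$ the Taylor coefficients of $(1-w)^{1/h}$, as in the statement) and components commuting with $\Lambda(\kappa_i)$ that are dealt with at subsequent iterations. Iterating generates the $\la^l$-series appearing in $S_i$; after $\lfloor\tfrac{1}{(1-k)h}\rfloor$ steps the unhandled part is $O(z^{-1-\delta})$ for some $\delta>0$, and the total coefficient of $\Lambda(\kappa_i)$ equals $\tfrac{dS_i}{dz}$. In the resonant case $\tfrac{1}{(1-k)h}\in\bb{N}$ the final integration contributes a $\log z$ term, matching the $c_m\log z$ summand in $S_i$. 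This cyclic-element bookkeeping, together with controlling the regular-singular contribution of the Cartan-valued $O(1/z)$ terms by normalisations commuting with $\Lambda(\kappa_i)$, is where I expect most of the technical work to concentrate.

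With the equation in normal form $\partial_z + \Lambda(\kappa_i)\tfrac{dS_i}{dz}(z;\la) + R(z;\la)$, $R=O(z^{-1-\delta})$, both parts follow by application of Theorem~\ref{thm:CGM}. For part~(1), $\Lambda(\kappa_i)$ acts on $L(\tilde{\omega}_i)$ with eigenvalues $\mu^{(i)}_{\underline{\omega}}$ and eigenvectors $\psi^{(i)}_{\underline{\omega}}$ indexed by $\underline{\omega}\in P(\underline{\tilde{\omega}_i})$; a good interval for $\Lambda(\kappa_i)$ in the sense of Definition~\ref{def:stokeslinesLambda} is precisely one on which condition \eqref{eq:nolines} holds. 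Theorem~\ref{thm:CGM}(1) then produces the required basis, and undoing $G$ restores the factor $z^{\frac{1}{h}\ad\mr{\rho}^\vee}$ in \eqref{eq:wtildePsiom}; the secondary gauges are $1+O(z^{-\delta'})$ and are absorbed into the error term. Entire dependence on $\la$ follows from the analyticity statement in Theorem~\ref{thm:CGM} applied on arbitrary compact disks in $\bb{C}$, together with the observation that all coefficients generated by the normalisation are polynomial in $\la$. Transferring through Lemma~\ref{lem:AisoB} via $\Phi^{(i)}_{\underline{\omega}}(z)(\la)=\widetilde{\Phi}^{(i)}_{\underline{\omega}}(z;\la z^k)$ yields an $\mc{O}'$-basis of $\mc{A}^{(i)}$. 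For part~(2), Definition~\ref{def:stokeslinesLambda} guarantees that $\mu^{(i)}=\mu^{(i)}_{\underline{\tilde{\omega}_i}}$ is subdominant on $[-\zeta,\zeta]$ for every $\zeta<\zeta^{(i)}$, so Theorem~\ref{thm:CGM}(2) produces a unique subdominant solution with the required asymptotic on the sector $|\arg z|\le h(\pi+\zeta)$, and analyticity in $\la$ transfers as in part~(1).
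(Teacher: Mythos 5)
Your overall strategy coincides with the paper's: reduce $\mc{L}_{\kappa_i}(z;\la)$ by gauge transformations to the normal form $\partial_z+z^{-1}q_i(z;\la)\Lambda(\kappa_i)+O(z^{-1-\delta})$ required by Theorem \ref{thm:CGM}, apply that theorem, and transfer back through $z^{k\scaling}$. Where you diverge is in how the normal form is reached, and this is precisely where your proposal is both heavier and incomplete. The paper does \emph{not} first apply $z^{\frac1h\ad\mr{\rho}^\vee}$ and then iterate against the splitting $\tmf{g}=\underline{\tmf{h}}\oplus\im\ad\Lambda(\kappa_i)$; it uses the single scalar gauge $G_1=q_i(z;\la)^{-\ad\mr{\rho}^\vee}$. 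Because $\mr f$ and $v_\theta$ have principal degrees $-1$ and $h-1$, conjugation by $w^{-\ad\mr{\rho}^\vee}$ rescales them by $w$ and $w^{-(h-1)}$ respectively, so choosing $w=q_i(z;\la)\approx z^{\frac1h}\bigl(1+e^{-2\pi i\kappa_i}\la z^{k-1}\bigr)^{\frac1h}$ balances $z^{-1}\mr f$ against the \emph{entire} coefficient of $v_\theta$ (including the $\la$-dependent part) in one stroke, yielding $z^{-1}q_i(z;\la)\Lambda(\kappa_i)$ up to an error $O(z^{-1-\delta_0})$ controlled by the truncation of the binomial series. The residual $O(1/z)$ term $\ell+\tfrac1{h^\vee}\mr{\rho}^\vee-\cdots$ is then removed by $G_2=\exp(q_i^{-1}\tilde n)$ with $[\tilde n,\mr f]$ equal to that term (and $[\tilde n,v_\theta]=0$ since $v_\theta$ is a highest weight vector) --- not, as you suggest, by ``normalisations commuting with $\Lambda(\kappa_i)$''.

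Your iterative scheme can in principle be made to work, but as written it has two genuine gaps exactly at the point you yourself identify as ``where most of the technical work concentrates''. First, for Theorem \ref{thm:CGM} to apply you need the dominant part to be a \emph{single} matrix $A$ times a scalar function; your iteration produces, at each order in $\la$, a component in $\underline{\tmf{h}}$ that you must show is a scalar multiple of $\Lambda(\kappa_i)$ itself and not some other element of the apposite Cartan (otherwise the exponent $P(z;\la)$ is not a scalar function times the spectrum of one matrix). This is true --- the relevant terms are homogeneous of principal degree $\equiv -1 \bmod h$ and the exponent $h-1$ occurs with multiplicity one --- but you neither state nor prove it. Second, you assert that the scalar coefficients produced by the iteration are exactly the Maclaurin coefficients $c_l$ appearing in $S_i$; in your scheme this requires an explicit computation at every order, whereas in the paper it is automatic because $q_i$ is \emph{defined} as the truncation of $z^{\frac1h}(1+e^{-2\pi i\kappa_i}\la z^{k-1})^{\frac1h}$ and the error \eqref{eq:delta0} is estimated once. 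I recommend you adopt the gauge $q_i(z;\la)^{-\ad\mr{\rho}^\vee}$: it eliminates both issues and reduces the normalization to a two-line computation. The remainder of your argument (good intervals versus condition \eqref{eq:nolines}, subdominance of $\mu^{(i)}$ via Definition \ref{def:stokeslinesLambda}, analyticity in $\la$, and the transfer to an $\mc{O}'$-basis of $\mc{A}^{(i)}$ via Lemma \ref{lem:AisoB}) matches the paper.
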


\begin{proof}
We find a Gauge transformation $G=G_2 \circ G_1$ that transforms the connection $\mc{L}_{\kappa_i}(z;\la)$ to
\begin{equation*}
\partial_z+ q_i(z;\la) \Lambda(\kappa_i)+ O(z^{-1-\delta}), \mbox{ for some } \delta>0.
\end{equation*}
By definition of the function $q_i$, we have that
\beq\label{eq:delta0}
z^{\frac{1}{h}}\big((1+ e^{-2 \pi i \kappa_i} \lambda z^{k-1})\big)^{\frac{1}{h}}-q_i(z;\la)=
 O\left(z^{-\delta_0} \right),
\eeq
with
$$ \delta_0=(1-k)\left(\lfloor  \frac{1}{(1-k)h} \rfloor+1- \frac{1}{(1-k)h}\right)>0.$$
Therefore acting with $G_1=\big(q(z;\la)\big)^{-\ad \mathring{\rho}^\vee}$, we obtain
\begin{equation}
 G_1 \mc{L}_{\kappa_i}(z;\la)=\partial_z+ z^{-1}q_i(z;\la) \Lambda(\kappa_i) + \frac{\ell+ \frac{1}{h^\vee} \mathring{\rho}^\vee-
 r |J| r \theta^\vee}{z}+  O(z^{-1-\delta_0}).
\end{equation}
Now, we let
$\tilde{n} $ be the unique element in $\langle e_1,\dots,e_n \rangle$ such that
$[\tilde{n},\mathring{f}]=\ell+ \frac{1}{h^\vee} \mathring{\rho}^\vee- N r \theta^\vee$. Acting with
$G_2=\exp\left( q_i(z;\la)^{-1} \tilde{n}\right)$, we get
$$ \mc{L}_{\kappa_i,*}(z;\la):= G_2 \circ G_1 \mc{L}_{\kappa_i}(z;\la)=\partial_z+ z^{-1} q_i(z;\la) \Lambda(\kappa_i) +
 O(z^{-1-\delta}),$$
with $\delta=
\min \lbrace \frac{1}{h},\delta_0\rbrace>0$.

Applying Theorem \ref{thm:CGM} to $\mc{L}_{\kappa_i,*}(z;\la)$, we deduce the thesis.
\end{proof}

\subsection{Central connection matrix and Stokes matrix. Q and T functions.}
Let $\zeta_*>0$ be a sufficiently small numbers so that for every $\zeta, 0<\zeta <\zeta_*$, the interval
$[-\zeta_*,-\zeta]$ is a good interval for $\Lambda(\kappa_i)$ for every $i \in\mr{I} $.
After Proposition \ref{prop:basisatinfinity}, $\mc{A}^{(i)}$ admits the $\mc{O}'$-basis
 $\Psi^{(i)}_{\underline{\omega}}, \underline{\omega} \in P(\underline{\omega}_i)$, where
 $\Psi^{(i)}_{\underline{\omega}}$ is such that
 $\widetilde{\Psi}^{(i)}_{\underline{\omega}}=
z^{-k \scaling} \Psi^{(i)}_{\underline{\omega}} $ satisfies the
asymptotics (\ref{eq:wtildePsiom}) in the sector
$[-\zeta_*,\pi-\zeta]$ for all $0<\zeta<\zeta_*$.
Moreover, the solution $\Psi^{(i)}:=\Psi^{(i)}_{\underline{\tilde{\omega}_i}}$ is subdominant along the real positive axis,
and admits the asymptotic behaviour (\ref{eq:wtildePsii}) on the larger sector $[-\zeta_*-\pi,\zeta_*+\pi]$.

We also know from Proposition \ref{prop:frobeniusolutions}, that --
assuming that $\ell$ is generic -- $\mc{A}^{(i)}$ admits another $\mc{O}'$ distinguished basis -- the eigenbasis of the monodromy operator $\mc{M}$ -- whose elements are the Frobenius solutions
$\chi_{\omega}$, $\omega \in P(\tilde{\omega}_i)$.

The two bases are thefore related by invertible matrix $Q^{(i)}$, called \textit{central connection matrix}, whose coefficients belongs to $\mc{O}'$.  We write
\begin{equation}\label{eq:Qmatrix}
 \Psi^{(i)}_{\underline{\omega}}= \sum_{\omega \in P(\tilde{\omega}_i)}  Q^{(i)}_{\omega,\underline{\omega}} \chi_{\omega},
 \qquad \underline{\omega} \in P(\underline{\tilde{\omega}_i}).
\end{equation}
In particular, denoting by $Q^{(i)}_{\omega}:= Q^{(i)}_{\omega,\underline{\tilde{\omega}_i}}$, we have that
\begin{equation}\label{eq:PsiQchi}
 \Psi^{(i)}= \sum_{\omega \in P(\tilde{\omega}_i)}  Q^{(i)}_{\omega} \chi_{\omega}(z;\la).
\end{equation}
The coefficients $Q^{(i)}_{\omega,\omega'} \in \mc{O}' $ or $ Q^{(i)}_{\omega} \in \mc{O}'$ are called $Q$ functions.

We define now the Stokes matrix $\mc{T}$. Let us a fix a $0<\widetilde{\zeta}_*<\zeta$. It is straighforward to check that
the interval $[-\zeta,-\widetilde{\zeta}]$ is a good interval for $\Lambda(\kappa_i)$ for every $i$. Therefore, there exists a $\mc{O}'$ basis of $\mc{A}^{(i)}$,
$\Xi^{(i)}_{\omega}, \omega \in P(\underline{\tilde{\omega}_i})$, such that $\widetilde{\Xi}^{(i)}_{\omega}=z^{-k \scaling} \Xi^{(i)}_{\omega}$ has asymptotic behaviour (\ref{eq:wtildePsii}) in the sector
 $[-\zeta_*-\pi ,-\widetilde{\zeta_*}]$. Moreover, the bases
 $\widetilde{\Psi}^{(i)}_{\underline{\omega}}$'s -- which was defined above -- and $\widetilde{\Xi}^{(i)}_{\underline{\omega}'}$'s
 are consecutive bases of (\ref{Lkappai}), according to the Definition \ref{def:consecutive} above.

 We let therefore $\mc{T}^{(i)}$ be the matrix of change of basis, with coefficients in $\mc{O}'$, defined by
\begin{equation}\label{eq:Tmatrix}
 \Psi^{(i)}_{\underline{\omega}'}= \sum_{\underline{\omega} \in P(\tilde{\omega}_i)}
 \Xi^{(i)}_{\underline{\omega}} \mc{T}^{(i)}_{\underline{\omega},\underline{\omega}'}, \; \underline{\omega}' \in P(\tilde{\omega}_i).
\end{equation}
 The coefficients $\mc{T}^{(i)}_{\underline{\omega},\underline{\omega}'}$ are known as $\mc{T}$ functions.

 Since the bases $\widetilde{\Psi}^{(i)}_{\underline{\omega}}$'s and $\widetilde{\Xi}^{(i)}_{\underline{\omega}'}$'s $\mc{T}^{(i)}$ are consecutive then $\mc{T}^{(i)}$ is Stokes matrix. The following is direct corollary of Lemma \ref{lem:Stokes}
\begin{align}\label{eq:unipoT1}
 & \mc{T}^{(i)}_{\underline{\omega},\underline{\omega}}=1, \forall \underline{\omega} \in P(\underline{\tilde{\omega}_i}),\\ \label{eq:unipoT2}
 & \mc{T}^{(i)}_{\underline{\omega},\underline{\omega}'}=0, \mbox{ if } \underline{\omega} \neq  \underline{\omega}' \mbox{ and }
 \Re\big( e^{i\varphi}\mu_{\underline{\omega}}\big) \leq  \Re \big(e^{i\varphi}
 \mu_{\underline{\omega}'} \big) \mbox{ for } \varphi \in [-\zeta_*,-\widetilde{\zeta_*}].
\end{align}

\begin{remark}
As we have already stated in the introduction, the defintion of $Q^{(i)}$ matrix and of
the Stokes matrix $\mc{T}^{(i)}$ is, for the general Lie algebra, a novelty of the present work.
The thorough study of the matrices $\mc{T}^{(i)}$'s and $Q^{(i)}$'s is beyond the scope of the present paper.
In particular, we do not discuss the $\mc{T}Q$ relations. We only note here that in the case $\tmf{g}=A_n$ we have
$\Psi^{(1)}= \mc{M}^{h-1} \Xi^{(1)}$, where $\mc{M}$ is the monodromy operator. The $TQ$ relations for the algebra $A_n$ with $r=1$ follow from this \cite{dorey00}.
\end{remark}

\subsection{$\Psi$-system, QQ system and the Bethe Equations}

Let $\Psi^{(i)}\in\mc{A}^{(i)}$, $i \in \mr{I}$ be the elements defined in Proposition \ref{prop:Psisystem}.
They satisfy the following system of nonlinear relations known as $\Psi$-system
\begin{proposition}\label{prop:Psisystem}
 For every $i\in\mr{I}$, let $\Psi^{(i)}\in \mc{A}^{(i)}$ be defined by
$$\Psi^{(i)}(z)(\la)=\widetilde{\Psi}^{(i)}(z;\la z^{-k}).$$
Then the following identity, known as \emph{$\psi-$system} holds true:
 \begin{equation}\label{eq:Psisystem}
  m_i  \left(R_i \left(\Psi^{(i)}_{-\frac{D_i}2}\right) \wedge
  \Psi^{(i)}_{\frac{D_i}2} \right) = \bigotimes_{j \in \mr{I}} \bigotimes_{l=0}^{B_{ij}-1}
  \Psi^{(j)}_{\frac{B_{ij}-1-2l}{r}},
 \end{equation}
where $m_i$ is the linear map defined in (\ref{eq:mi}).
\begin{proof}
  This is proven in \cite[Theorem 4.7 (v)]{marava17} for the ground state oper. The same proofs hold for all FF opers, since it only
 depends on the asymptotic behaviour (\ref{eq:wtildePsii}), which, as Proposition \ref{prop:basisatinfinity} , is  independent of the additional singularities.
\end{proof}

\end{proposition}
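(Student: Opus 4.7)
The plan is to reduce the $\Psi$-system identity to the uniqueness of subdominant solutions at infinity, following and extending the argument of \cite[Theorem 4.7]{marava17}. Both sides of (\ref{eq:Psisystem}) lie in the $\mc{O}'$-module $\bigotimes_{j\in\mr{I}}\bigotimes_{l=0}^{B_{ij}-1}\mc{A}^{(j)}_{(B_{ij}-1-2l)/r}$: the right-hand side manifestly so, and the left-hand side because the map $m_i$ was built in (\ref{eq:laverami}) from the $\tmf{g}$-equivariant linear map $\widetilde{m}_i$ of (\ref{mtilde}) together with the rotation/twist isomorphisms, so it intertwines the FFH connection on $\bigwedge^2 \mc{A}^{(i)}_{D_i/2}$ with the connection governing the target tensor product. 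It therefore suffices to exhibit a distinguished global solution in this module and to show that both sides equal it.

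The distinguished solution I would single out is the unique subdominant solution along the positive real axis, in the sense of Proposition \ref{prop:basisatinfinity}(2) applied to the tensor product representation. For the right-hand side, this is immediate: the loop representative $z^{-k\scaling}$ applied to each rotated factor $\Psi^{(j)}_{(B_{ij}-1-2l)/r}$ has the subdominant asymptotic guaranteed by Proposition \ref{prop:basisatinfinity}(2), whose leading eigenvector is $\sigma^{l}(\psi^{(j)})$ by Theorem \ref{thm:genlamba}(i). The tensor product of these eigenvectors is exactly $\psi^{(i)}_\otimes$, a maximal eigenvector of $\Lambda(\kappa_i - D_i/2)$ on the target module with maximal eigenvalue $(e^{-\pi i D_i/h}+e^{\pi i D_i/h})\mu^{(i)}$, as asserted by Theorem \ref{thm:genlamba}(iii). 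For the left-hand side, Theorem \ref{thm:genlamba}(ii) identifies the leading eigenvector of the wedge $R_i(\Psi^{(i)}_{-D_i/2})\wedge \Psi^{(i)}_{D_i/2}$ as $\psi^{(i)}_\wedge$, with the same maximal eigenvalue. Applying $m_i$ and invoking the algebraic $\Psi$-system $\widetilde{m}_i(\psi^{(i)}_\wedge)=\psi^{(i)}_\otimes$ of Theorem \ref{thm:genlamba}(iv), the image has the same leading asymptotic as the right-hand side. The uniqueness clause in Proposition \ref{prop:basisatinfinity}(2) then forces equality.

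The main obstacle is combinatorial bookkeeping: one must verify that the orbit indexing $l=0,\dots,\langle j\rangle-1$ in Theorem \ref{thm:genlamba}(iii) matches, after reindexing through the Dynkin folding, the product $\bigotimes_{j\in\mr{I}}\bigotimes_{l=0}^{B_{ij}-1}$ on the right-hand side, and that the rotation parameters $(B_{ij}-1-2l)/r$ together with the shifts $\pm D_i/2$ on the wedge side produce exactly the same exponential factor once the transformation (\ref{eq:symanzik}) acts on the polynomial $S_i(z;\la)$ of Proposition \ref{prop:basisatinfinity}. Crucially, the entire argument relies only on the asymptotic behaviour (\ref{eq:wtildePsii}) at infinity of the distinguished solutions $\Psi^{(i)}$ and on $m_i$ being an intertwiner; it is completely insensitive to the presence of the additional singularities $w_j \in J$. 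This is the precise reason why the proof for the ground state case ($J=\emptyset$) in \cite{marava17} transports verbatim to general FFH connections.
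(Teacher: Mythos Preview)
Your proposal is correct and follows essentially the same approach as the paper: the paper simply cites \cite[Theorem 4.7(v)]{marava17} and observes that the argument there---which is exactly the uniqueness-of-subdominant-solutions argument you have spelled out---carries over unchanged because it depends only on the asymptotic behaviour (\ref{eq:wtildePsii}), which Proposition~\ref{prop:basisatinfinity} shows is unaffected by the additional singularities. Your write-up is a faithful unpacking of what that cited proof actually does.
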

As a corollary, we have the following
\begin{theorem}\label{thm:QQsystem}
Assume that the pair $(k,\ell)$ is generic as in Definition \ref{genericpair}, and that $\mathring{f}+\ell$ is semisimple.
Fixing an element $w$ of the Weyl group of $\rmf{g}$, denote $Q_w^{(i)}=Q_{w(\tilde{\omega}_i)}$ and
$\widetilde{Q}^{(i)}_w=Q_{w(\tilde{\omega}_i-\alpha_i)}$. \\
1. The following system, known as $QQ-$system, holds
for all $i \in \mr{I}$ and all $\la \in \bb{C}$
\begin{align}
\begin{split}\label{QQsystem}
\prod_{j\in \mr{I}}\prod_{s=0}^{B_{ij}-1}Q_w^{(j)}(q^{\frac{B_{ij}-1-2s}{r}} \la)
&=e^{\pi i D_i\langle \ell,w(\tilde{\alpha}_i)\rangle}Q_w^{(i)}(q^{D_i} \la)\widetilde{Q}_w^{(i)}(q^{- D_i}\la)\\
&-e^{- \pi i D_i\langle \ell,w(\tilde{\alpha}_i)\rangle}Q_w^{(i)}(q^{-D_i }\la)\widetilde{Q}_w^{(i)}(q^{D_i}\la)
\,, 
\end{split}
\end{align}
where $q=e^{\pi i k}$.\\
2. Let $\la^\ast$ be a zero of $Q_w^{(i)}$ such that
$$\prod_{j\in \mr{I}}\prod_{\ell=0}^{B_{ij}-1}Q^{(j)}(q^{\frac{B_{ij}-1-2\ell}{r}-D_i} \la^\ast)\neq 0.$$
Then, the following Bethe Equations hold
\beq
\prod_{j\in\mr{I}}e^{i\pi\overline{C}_{sj}\theta^j_w}\frac{Q^{(j)}(q^{\overline{C}_{sj}}\la^\ast)}{Q^{(j)}(q^{-\overline{C}_{sj}}\la^\ast)}=-1,
\eeq
where $\theta^j_w=\langle\ell,w(\tilde{\omega}_j)\rangle$, and 
$\overline{C}_{ij}=C_{ji}D_i$ is the symmetrized Cartan matrix of $^L\rmf{g}$.
\begin{proof}
Plug the decomposition (\ref{eq:PsiQchi}) into the $\Psi$-system (\ref{eq:Psisystem}) and use the relation (\ref{eq:chisystem}).
\end{proof}

\end{theorem}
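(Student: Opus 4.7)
Following the hint, the plan is to substitute the Frobenius expansion \eqref{eq:PsiQchi} into the $\Psi$-system \eqref{eq:Psisystem}, apply the algebraic identity \eqref{eq:chisystem} to evaluate the wedge on the LHS, and then match coefficients against an appropriate tensor of Frobenius solutions on the RHS. Throughout, I would carefully track how rotation by $t$ acts: by \eqref{eq:symanzik}, rotating by $t$ shifts the argument of any $Q$-function lying in $\mc{O}'$ by $\la \mapsto q^{-2t}\la$ (with $q = e^{\pi i k}$), while the Frobenius prefactor $z^{-\gamma(\ell)}$ contributes a phase $e^{-2\pi i t\gamma(\ell)}$. In particular, the rotated solutions $\Psi^{(i)}_{\pm D_i/2}$ produce precisely the shifts $q^{\mp D_i}\la$ and phases $e^{\mp\pi i D_i\langle\ell,w(\tilde{\alpha}_i)\rangle}$ that appear in the QQ-system.

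\textbf{Weight matching and computing the wedge.} After inserting the Frobenius expansion of $\Psi^{(i)}_{\pm D_i/2}$ into $m_i\bigl(R_i(\Psi^{(i)}_{-D_i/2}) \wedge \Psi^{(i)}_{D_i/2}\bigr)$, a weight argument in $\bigwedge^2 L(\tilde{\omega}_i)$ shows that the only Frobenius pairs $(\chi^{(i)}_\gamma,\chi^{(i)}_{\gamma'})$ whose image under $m_i$ has weight equal to that of the tensor $\bigotimes_{j,l}\chi^{(j)}_{w}$ on the RHS of \eqref{eq:Psisystem} are $(\chi^{(i)}_w,\widetilde{\chi}^{(i)}_w)$ and $(\widetilde{\chi}^{(i)}_w,\chi^{(i)}_w)$; all other pairs either lie in $\ker m_i = U_i$ or project onto a different weight component and so drop out. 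Applying \eqref{eq:chisystem} to these two surviving terms then gives
\begin{equation*}
m_i\bigl(R_i(\Psi^{(i)}_{-D_i/2}) \wedge \Psi^{(i)}_{D_i/2}\bigr) = \Bigl[e^{\pi i D_i\langle\ell,w(\tilde{\alpha}_i)\rangle} Q_w^{(i)}(q^{D_i}\la)\widetilde{Q}_w^{(i)}(q^{-D_i}\la) - e^{-\pi i D_i\langle\ell,w(\tilde{\alpha}_i)\rangle} Q_w^{(i)}(q^{-D_i}\la)\widetilde{Q}_w^{(i)}(q^{D_i}\la)\Bigr] \bigotimes_{j\in\mr{I}}\bigotimes_{l=0}^{B_{ij}-1}\chi^{(j)}_{w,\ldots}.
\end{equation*}

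\textbf{Matching on the RHS and conclusion of Part 1.} On the RHS of \eqref{eq:Psisystem}, I would expand each $\Psi^{(j)}_{(B_{ij}-1-2l)/r}$ via \eqref{eq:PsiQchi} and project onto the same tensor of distinguished Frobenius solutions $\bigotimes_{j,l}\chi^{(j)}_w$; the rotation law gives the coefficient $\prod_{j\in\mr{I}}\prod_{s=0}^{B_{ij}-1} Q_w^{(j)}(q^{(B_{ij}-1-2s)/r}\la)$. Since Frobenius solutions form an $\mc{O}'$-basis of $\mc{A}^{(j)}$, equating the two sides produces \eqref{QQsystem}. For Part 2, I would evaluate \eqref{QQsystem} (with $i=s$) at $\la = q^{D_s}\la^\ast$ and at $\la = q^{-D_s}\la^\ast$; each substitution kills one term on the RHS since $Q_w^{(s)}(\la^\ast)=0$, and the nonvanishing hypothesis on the LHS at $\la = q^{-D_s}\la^\ast$ (which is the stated genericity condition) lets me take the ratio of the two equations. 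This ratio eliminates $\widetilde{Q}_w^{(s)}$ and, after rewriting the shifts via $\overline{C}_{sj} = C_{js}D_s$ and $B_{ij} = 2\delta_{ij} - C_{ij}$, yields the advertised Bethe equations.

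\textbf{Main obstacle.} The only non-routine step is the weight-matching argument in the wedge: verifying that after applying $m_i$, precisely the two $(\chi^{(i)}_w,\widetilde{\chi}^{(i)}_w)$-terms survive, uniformly in $w \in \mc{W}$ and $i \in \mr{I}$. This requires using the description of $L(\eta_i) \subset \bigwedge^2 L(\tilde{\omega}_i)$ as the Cartan component and the weight structure of $L(\tilde{\omega}_i)$ relative to the Weyl-rotated extremal vectors; once this is in place, the remaining bookkeeping of rotation phases and argument shifts is a direct computation.
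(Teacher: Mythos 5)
Your proposal is correct and follows exactly the route the paper takes: its proof is precisely the one-line instruction to substitute \eqref{eq:PsiQchi} into the $\Psi$-system \eqref{eq:Psisystem} and apply \eqref{eq:chisystem}, with the weight-multiplicity argument (that $w(2\tilde{\omega}_i-\tilde{\alpha}_i)$ has multiplicity one in $\bigwedge^2 L(\tilde{\omega}_i)$ because $\tilde{\alpha}_i$ is simple) being the standard step imported from \cite{marava15,marava17}. Your expanded account, including the rotation bookkeeping and the derivation of the Bethe equations by evaluating at $q^{\pm D_s}\la^\ast$ and taking the ratio, fills in the same details the paper delegates to its references.
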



\section{Trivial Monodromy Conditions}\label{sec:trivialmonodromy}
In this section, following \cite{mara18}, we study the trivial monodromy conditions the connections \eqref{eq:ffopersourgauge}. In Theorem \ref{thm:trivialmonodromy}, we prove that
\begin{itemize}
 \item If $r=1$, the trivial monodromy conditions are equivalent to a complete system of
 $|J|(2h-2)$ equations in$|J|(2h-2)$ scalar unknowns;
\item If $r>1$, the trivial monodromy conditions are satisfied if and only if $J=\emptyset$.
\end{itemize}
In order to prove  Theorem \ref{thm:trivialmonodromy}, we notice that if $x=z-w_j$ is a local coordinate centered at $w_j$, the connection \eqref{looprealization} admits an expansion of the form
\begin{equation*}
\mc{L}=\partial_x+\frac{-\theta^\vee+\eta}{x}+ O (x), \; \eta \in \tmf{n}^+
\end{equation*}
We show that the property that $\mc{L}$ has
trivial monodromy at $x=0$ is
equivalent to a set of polynomial constraints on the coefficients of the Laurent expansion at $x=0$, which we study
to obtain the main theorem of the section. As a preliminary step, we first address the study of the gradation induced by
 $\theta^\vee$.

\subsection{The gradation induced by $\theta^\vee$}   Let $\theta^\vee\in\rmf{h}\in\rmf{g}\subset\tmf{g}$ be the element introduced in \eqref{221026-3/0} and appearing in the loop realization \eqref{looprealization}. Recall that  under the action of $\sigma$ the algebra $\tmf{g}$ decomposes as a direct sum of $\rmf{g}-$modules: the adjoint $\rmf{g}-$module $\tmf{g}_0=\rmf{g}$, whose weights are the zero weight and the roots of $\rmf{g}$, and $r-1$ copies of the $\rmf{g}-$module $\tmf{g}_1=V_\theta$, whose weights are the zero weight and the short roots of $\rmf{g}$. It follows from this that  the adjoint action of $\theta^\vee$ on $\tmf{g}$ induces a $\bb{Z}-$gradation:
\beq\label{230905-1}
\tmf{g}=\bigoplus_{j\in\bb{Z}}\tmf{g}_j[\theta^\vee],\qquad \tmf{g}_j[\theta^\vee]=\{x\in\tmf{g}\,|\,[\theta^\vee,x]=jx\}.
\eeq
We denote by $\pi_j:\tmf{g}\to\tmf{g}_j[\theta^\vee]$ the corresponding projection. The gradation \eqref{230905-1} is compatible with the principal gradation, that is
\beq\label{230904-1}
\langle \theta^\vee,\alpha_i\rangle=-\langle \alpha_0^\vee,\alpha_i\rangle=-C_{0i}\geq 0,\qquad i\in\mr{I},
\eeq 
where we used \eqref{221116-1/0}. This in particular implies that
\beq\label{231004-1}
\bigoplus_{j\geq 1}\tmf{g}_j[\theta^\vee]\subset\tmf{n}^+.
\eeq
\begin{lemma}\label{lemmaspectrum}
\begin{enumerate}
\item The spectrum of the adjoint action of $\theta^\vee$ on the $\rmf{g}-$module $\tmf{g}$ is given in Table \ref{table:affine}.
\item Let $r>1$, and consider the $\rmf{g}-$module $V_\theta\subset \tmf{g}$. Then
\beq\label{230904-3}
V_\theta\cap\tmf{g}_2[\theta^\vee]=\bb{C}v_\theta.
\eeq
\end{enumerate}
\end{lemma}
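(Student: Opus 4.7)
Both parts of the lemma can be read off the weight decomposition of $\tmf{g}$ regarded as a $\rmf{g}$-module. The plan is first to invoke the $\sigma$-grading \eqref{221102-2/0}, $\tmf{g}=\bigoplus_{\ell=0}^{r-1}\tmf{g}_\ell$, and to recall from the discussion preceding Definition \ref{221108-2/0} that $\tmf{g}_0\simeq \rmf{g}$ is the adjoint $\rmf{g}$-module, while for $r>1$ each $\tmf{g}_\ell$ with $\ell\geq 1$ is isomorphic as a $\rmf{g}$-module to the quasi-minuscule module $V_\theta$. Since $\theta^\vee\in\rmf{h}\subset\rmf{g}$, its adjoint action preserves this decomposition and acts on a weight vector of weight $\mu$ as multiplication by $\langle\theta^\vee,\mu\rangle$. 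Thus the spectrum of $\ad\theta^\vee$ on $\tmf{g}$ is the set of pairings $\langle\theta^\vee,\mu\rangle$ as $\mu$ runs through the roots of $\rmf{g}$ together with $0$, augmented (when $r>1$) by the pairings with the short roots of $\rmf{g}$ and $0$.

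For part (1) this reduces to a case-by-case inspection of the seven rows of Table \ref{table:affine}. In the simply-laced cases $\theta$ is the highest root of $\rmf{g}$, and it is standard that $\langle\theta^\vee,\alpha\rangle\in\{-2,-1,0,1,2\}$ for every root $\alpha$, collapsing to $\{-2,0,2\}$ precisely when $\rmf{g}=A_1$ (the only roots there being $\pm\theta$). For $\rmf{g}\in\{C_n,F_4\}$ a direct computation with $\theta^\vee$ paired against the roots and short roots of $\rmf{g}$ produces values in $\{-2,-1,0,1,2\}$; for $\rmf{g}=B_n$ (arising from $C_n^{(1)}$) the pairings collapse further to $\{-2,0,2\}$ since $\theta^\vee=2e_1$ annihilates every root not involving $\pm e_1$ and evaluates to $\pm 2$ on those that do. The only genuinely new phenomenon is $\rmf{g}=G_2$: long roots have square-length three times that of short ones, and the pairing of the short coroot $\theta^\vee$ with the highest long root equals $\langle\theta^\vee,\tilde\theta\rangle = 2(\theta,\tilde\theta)/(\theta,\theta)=3$, producing the extended spectrum $\{-3,-2,-1,0,1,2,3\}$; the contribution of $V_\theta\simeq \tmf{g}_1\simeq\tmf{g}_2$ remains in $\{-2,-1,0,1,2\}$ since its nonzero weights are short roots. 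This $G_2$ case is the main technical point of part (1), though it is still an elementary root-system calculation.

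For part (2), assume $r>1$, so that $V_\theta$ is an irreducible $\rmf{g}$-submodule of $\tmf{g}$ with highest weight $\theta$. Its weight decomposition consists of the zero weight (with some multiplicity) together with the short roots of $\rmf{g}$, each with multiplicity one; in particular $v_\theta$ spans the one-dimensional weight space $V_\theta^\theta$, and since $\langle\theta^\vee,\theta\rangle=2$ we have $v_\theta\in V_\theta\cap\tmf{g}_2[\theta^\vee]$. Conversely, any element of this intersection is a sum of weight vectors whose weights $\mu$ satisfy $\langle\theta^\vee,\mu\rangle=2$. The weight $\mu=0$ is excluded because the pairing is $0$; and for $\mu=\beta$ a short root, the equality $(\theta,\theta)=(\beta,\beta)$ combined with $\langle\theta^\vee,\beta\rangle=2(\theta,\beta)/(\theta,\theta)=2$ forces $(\theta,\beta)=(\theta,\theta)$, which by the equality case of Cauchy--Schwarz forces $\beta=\theta$. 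Hence $V_\theta\cap\tmf{g}_2[\theta^\vee]=V_\theta^\theta=\bb{C}v_\theta$, proving the claim.
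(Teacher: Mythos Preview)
Your proof is correct. Part (1) follows the same case-by-case strategy as the paper: both identify the spectrum by pairing $\theta^\vee$ against the roots of $\rmf{g}$ (and, for $r>1$, the short roots coming from $V_\theta$), and both single out $A_1$, $B_n$, and $G_2$ as the exceptional cases.

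Part (2), however, is genuinely different from the paper's argument. The paper argues via the principal gradation and the action of the $f_i$'s: it shows that if $f_iv_\theta$ were to remain in $\tmf{g}_2[\theta^\vee]$ then $C_{0i}=0$, and then checks directly that in this case $e_m f_iv_\theta=0$ for all $m$, forcing $f_iv_\theta=0$ by irreducibility. This is an inductive ``you can never stay at level $2$'' argument along the weight string. Your route is more direct: you use that the nonzero weights of $V_\theta$ are exactly the short roots (each with multiplicity one), and observe that $\langle\theta^\vee,\beta\rangle=2$ with $|\beta|=|\theta|$ forces $(\theta,\beta)=(\theta,\theta)$, whence $\beta=\theta$ by the equality case of Cauchy--Schwarz. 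This bypasses the $f_i$ analysis and the Cartan-matrix bookkeeping entirely, and it is somewhat cleaner; the paper's approach, on the other hand, does not need to invoke the multiplicity-one property of the nonzero weights of the quasi-minuscule module, relying instead only on irreducibility.
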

\begin{proof}
(1) The weights of the $\rmf{g}-$module $\tmf{g}$ are the zero weight and the roots of $\rmf{g}$.  Let $\widetilde{\theta}$ be the highest root of $\rmf{g}$. A direct case-by-case computation shows that $\langle\theta^\vee,\widetilde{\theta}\rangle=2$ in all cases except when $\mf{g}=D_4^{(3)}$  (i.e. when $\tmf{g}=D_4$ and $\rmf{g}=G_2$), in which case $\langle\theta^\vee,\widetilde{\theta}\rangle=3$. Since the $\theta^\vee-$gradation is compatible with the principal gradation this value is the maximum of the spectrum. Moreover, the spectrum is symmetric with respect to $0$, and all intermediate values are attained except when $\mf{g}=A_1^{(1)}$  (i.e. $\tmf{g}=\rmf{g}=A_1$) and when $\mf{g}=D_{n+1}^{(2)}$ (i.e. $\tmf{g}=D_{n+1}$ and $\rmf{g}=B_n$), in which case $\pm 1$ are not part of the spectrum.\\
(2) The weights of the $\rmf{g}-$module $V_\theta$ are the zero weight and the short roots of $\rmf{g}$. The highest short root of $\rmf{g}$ is $\theta$, and $\langle\theta^\vee,\theta\rangle=2$. Reasoning as in part (1) we get that the spectrum is given by $\{-2,0,2\}$ if $\mf{g}=D_{n+1}^{(2)}$, and it is  $\{-2,-1,0,1,2\}$ otherwise. To prove \eqref{230904-3}, since the $\theta^\vee-$gradation is compatible with the principal gradation, it is sufficient to show that $f_iv_\theta\notin\tmf{g}_2[\theta^\vee]$ for every $i\in\mr{I}$. To prove that, first notice that
\begin{align*}
[\theta^\vee,f_iv_\theta]=(2-\langle\theta^\vee,\alpha_i\rangle)f_iv_\theta=(2+\langle\alpha_0^\vee,\alpha_i\rangle)f_iv_\theta=(2+C_{0i})f_iv_\theta.
\end{align*}
then $f_iv_\theta\in\tmf{g}_2[\theta^\vee]$ if and only if $C_{0i}=0$. Assume that $f_iv_\theta\in\tmf{g}_2[\theta^\vee]$, so that $C_{i0}=C_{0i}=0$. Then, for every $m\in\mr{I}$ we have 
\begin{align*}
e_mf_iv_\theta&=f_ie_mv_\theta+[e_m,f_i]v_\theta=\delta_{mi}\alpha^\vee_i=\delta_{mi}\langle\alpha^\vee_i,\theta\rangle v_\theta\\
&=-\delta_{mi}\langle\alpha^\vee_i,\alpha_0\rangle v_\theta=-\delta_{mi}C_{i0}v_\theta=0.
\end{align*}
Since $V_\theta$ is an irreducible $\rmf{g}-$module, we conclude that $f_iv_\theta=0$.
\end{proof}
We consider the following subspace of $\tmf{g}$:
\beq\label{u}
\mf{u}=\ad_{\theta^\vee}\rmf{n}^++\bb{C}v_\theta.
\eeq
Note that if $r=1$ then $v_\theta=e_\theta\in\rmf{n}^+$, and $\mf{u}=\ad_{\theta^\vee}\rmf{n}^+$. If $r>1$ the sum in \eqref{u} is a direct sum. Due to  \eqref{231004-1}, and considering the $\sigma$-invariant part only, it follows that
\beq\label{231005-1}
\ad_{\theta^\vee}\rmf{n}^+=\bigoplus_{j\geq 1}\rmf{g}_j[\theta^\vee].
\eeq
We introduce the following basis of $\mf{u}$. Let $\mr{\Delta}$ denote the set of roots of $\rmf{g}$, and define
$$\mr{\Delta}_{\mf{u}}=\{\alpha\in\mr{\Delta}|\langle\theta^\vee,\alpha\rangle\geq1\}.$$
Note the inclusions  $\mr{\Delta}_{\mf{u}}\subset \mr{\Delta}_+$, where $\mr{\Delta}_+$ is the set of positive roots of $\rmf{g}$ with respect to the Borel subalgebra $\rmf{b}^+$.  Let $\{e_\alpha\,|\,\alpha\in\mr{\Delta}_+\}$ be the basis of $\rmf{n}^+$ introduced above. Then
\beq\label{ubasis}
\mf{u}=\left(\bigoplus_{\alpha\in \mr{\Delta}_{\mf{u}}}\bb{C}e_\alpha\right)\oplus\bb{C}v_\theta.
\eeq

%


\subsection{Local conditions on trivial monodromy}
If $x=z-w_j$ is a local coordinate centered at $w_j$, the connection
\eqref{looprealization} admits an expansion of the form
\beq\label{genoper1}
\mc{L}=\partial_x+\frac{R}{x}+a+bx+cx^2+O(x^3),
\eeq
with $ R=-\theta^\vee+\eta$, with $\eta \in \tmf{n}^+$ and $a,b,c\in\tmf{g}$.

If $\eta_0=\pi_0(\eta)$ is the part of degree $0$, with respect to the
gradation induced by $\theta^\vee$, of $\eta$, then $R-\eta_0$ is
semi-simple and conjugated to $-\theta^\vee$, with spectrum provided in Table \ref{table:affine}.
Therefore $\tmf{g}$ splits into eigenspaces of $R-\eta_0$:
\beq\label{gradationR}
\tmf{g}=\bigoplus_i\tmf{g}_i[R],\qquad \tmf{g}_i[R]=\lbrace x \in \tmf{g}, [R-\eta_0,x]= ix
\rbrace.
\eeq
We denote by $\pi^{R}_{i}:\tmf{g}\to \tmf{g}_i[R]$ the corresponding projections.
\begin{definition}\label{def:trivialgeneral}
 We say that the connection \eqref{genoper1} has trivial monodromy at $0$ if for every finite dimensional
 $\tmf{g}$-module $V$, the differential equation $\mc{L}\psi(x)=0, \psi: D \to V$, with $D$ a punctured neighborhood of
 $x=0$, has trivial monodromy.
\end{definition}

\begin{proposition}\label{prop:zeromonodromy}
If $\rmf{g}$ is not of type $G_2$, the operator \eqref{genoper1} has trivial monodromy at $x=0$ if and only if
\begin{subequations}\label{zeromonodromy}
\begin{align}
&\eta_0=0,\label{zeromonodromya}\\
&\pi^{R}_{-1}(a)=0,\label{zeromonodromyb}\\
&\pi^{R}_{-2}(b)=[\pi^{R}_{-2}(a),\pi^{R}_{0}(a)].\label{zeromonodromyc}
\end{align}
\end{subequations}

If $\rmf{g}$ is of type $G_2$, the operator \eqref{genoper1} has trivial monodromy at $x=0$ if and only if
\begin{subequations}\label{zeromonodromyg2}
\begin{align}
&\eta_0=0,\label{zeromonodromyag2}\\
&\pi^{R}_{-1}(a)=0,\label{zeromonodromybg2}\\
&\pi^{R}_{-2}(b)=[\pi^{R}_{-2}(a),\pi^{R}_{0}(a)]+\frac54[\pi^{R}_{-3}(a),\pi^{R}_{1}(a)],\label{zeromonodromycg2}\\
&\pi^{R}_{-3}(c)=\frac12[\pi^{R}_{-3}(a),\pi^{R}_{0}(b)]+[\pi^{R}_{-2}(a),\pi^{R}_{-1}(b)]-[\pi^{R}_{0}(a),\pi^{R}_{-3}(b)]
\nonumber\\
& +\frac32[\pi^{R}_{-2}(a),[\pi^{R}_{1}(a),\pi^{R}_{-2}(a)]] +\frac32
[\pi^{R}_{0}(a),[\pi^{R}_{-3}(a),\pi^{R}_{0}(a)]]. \label{zeromonodromydg2}
\end{align}
\end{subequations}
\end{proposition}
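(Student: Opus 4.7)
The plan is to transform the trivial monodromy requirement at the regular singular point $x=0$ into an order-by-order Frobenius compatibility hierarchy, and then recognise each constraint as one of the stated polynomial identities. Since the eigenvalues of $\theta^\vee$ on $\tmf{g}$ are all integers (Table~\ref{table:affine}), trivial monodromy around $x=0$ is equivalent to the conjunction of: (i) the residue $R$ is diagonalizable, and (ii) the Frobenius recursion for a solution $\psi(x)=x^{-\mu}\sum_{k\geq 0}c_k x^k$ can be integrated to all orders without producing logarithms.

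First I would establish \eqref{zeromonodromya} (and its $G_2$ counterpart \eqref{zeromonodromyag2}) by computing the Jordan--Chevalley decomposition of $R=-\theta^\vee+\eta$. Decomposing $\eta=\eta_0+\eta_{\geq 1}$ along the $\theta^\vee$-grading, the invertibility of $\ad_{\theta^\vee}$ on $\bigoplus_{j\geq 1}\tmf{g}_j[\theta^\vee]$ lets me solve inductively for $u\in\bigoplus_{j\geq 1}\tmf{g}_j[\theta^\vee]$ such that $\Ad_{\exp(u)}(R-\eta_0)=-\theta^\vee$. Hence the semisimple part of $R$ is conjugate to $-\theta^\vee$ and its nilpotent part is conjugate to $\eta_0$, so $R$ is diagonalizable if and only if $\eta_0=0$.

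Having imposed (a), I would apply a holomorphic gauge $g(x)=\exp(xu_1+x^2u_2+\cdots)$, chosen order by order, to bring $\mc{L}$ into the canonical form $\partial_x-\theta^\vee/x+\widetilde a+\widetilde b\,x+\widetilde c\,x^2+O(x^3)$ (this is possible because $\ad_{-\theta^\vee}+k\,\mathrm{id}$ is invertible off its kernel on $\tmf{g}$). The Frobenius recursion then reads $(-\theta^\vee+k-\mu)c_k=-\sum_{i+j=k-1}\widetilde A_i c_j$ with $\widetilde A_0=\widetilde a,\widetilde A_1=\widetilde b,\widetilde A_2=\widetilde c$, and avoidance of logarithms at order $k$ requires the $R$-degree $-k$ projection of the right-hand side to vanish. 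Since $\pi^R_i$ shifts the $R$-weight by $i$ and $(-\theta^\vee+k-\mu)^{-1}$ acts on the $(\mu+i)$-eigenspace by the scalar $(i+k)^{-1}$ for $i\neq -k$, each compatibility unfolds into an explicit algebraic identity in $\tmf{g}$; undoing the gauge $g$ recasts it as an intrinsic condition on the original data $a,b,c$.

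For the non-$G_2$ algebras the spectrum of $\ad_{\theta^\vee}$ on $\tmf{g}$ is contained in $\{-2,-1,0,1,2\}$, so only the order-one and order-two compatibilities yield nontrivial constraints: they give $\pi^R_{-1}(a)=0$ and the nested commutator $\pi^R_{-2}(b)=[\pi^R_{-2}(a),\pi^R_{0}(a)]$ respectively. The main obstacle I anticipate is the $G_2$ case. There the spectrum extends to $\pm 3$, which both (i) injects an additional term $\tfrac{5}{4}[\pi^R_{-3}(a),\pi^R_{1}(a)]$ into the order-two balance (the rational coefficient arising from combining the direct Frobenius contribution with the gauge shift absorbed into $\widetilde b$) and (ii) introduces a genuinely new order-three compatibility \eqref{zeromonodromydg2} involving $\pi^R_i(a)$, $\pi^R_i(b)$ in the relevant degrees together with nested double commutators of $a$. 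To pin down the exact rational coefficients in \eqref{zeromonodromycg2}--\eqref{zeromonodromydg2} unambiguously, I would cross-check the computation inside a faithful $D_4$-module, where all projections $\pi^R_i$ and the operators $(-\theta^\vee+k-\mu)^{-1}$ reduce to explicit finite matrix manipulations.
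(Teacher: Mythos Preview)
Your approach is essentially the paper's: both gauge the connection to a normal form near the Fuchsian point and read off the resonance conditions at orders $1,2$ (and $3$ for $G_2$). The paper packages this via the Babbitt--Varadarajan aligned form \cite{bava83} --- keeping the residue $R$ fixed and gauging so that the order-$j$ regular coefficient lies in $\tmf{g}_{-1-j}[R]$, whence trivial monodromy becomes simply $\eta_0=0$ and $d'_j=0$ for the relevant $j$ --- and then cites \cite[Theorem~8.4]{mara18} for the explicit identities; your Frobenius recursion is the same computation in different language. One small slip worth fixing: a holomorphic gauge $g(x)=\exp(xu_1+x^2u_2+\cdots)$ has $g(0)=1$ and therefore cannot move the residue from $R$ to $-\theta^\vee$, so either precompose with the constant conjugation $\exp(u)$ you already found in the Jordan--Chevalley step, or --- more economically, and as the paper does --- keep the residue as $R$ throughout and phrase everything directly in terms of the projections $\pi^R_i$, which also spares you the final ``undoing the gauge'' step.
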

\begin{proof}
According to the general theory \cite{bava83}, a connection of the form
\begin{equation*}
 \partial_x+\frac{R}{x}+ \sum_{j\geq0} d_j x^j ,
\end{equation*}
with $d_j\in\tmf{g}$, is Gauge equivalent to a connection of the form
\begin{equation*}
 \partial_x+\frac{R}{x}+\sum_{j\geq0} d'_j x^j , 
\end{equation*}
where $d'_j\in \tmf{g}_{-1-j}[R]$. The latter connection is said to be
in aligned form and it
has trivial monodromy at $0$ if and only if
$\eta_0=0$ and $d'_j=0$ for all $j \in \mbox{spec}(R-\eta_0), j\leq-1$.

Explicit computations of the aligned form of the connection lead to
the thesis; see
\cite[Theorem 8.4]{mara18} for details.
\end{proof}
It follows from the above proposition that if the connection \eqref{genoper1} has trivial monodromy then $\eta_0=0$.
This in turn implies that the coefficient $R$ is semisimple and conjugated to $-\theta^\vee$; explicitly, we have:
\beq\label{231004-3}
R=-e^{\ad\bar{\eta}}\theta^\vee,
\eeq
where
\begin{subnumcases}{\hspace{-80pt}\bar{\eta}=\label{etabar}}
\eta_1+\frac{1}{2}\eta_2\qquad & if $\rmf{g}\neq G_2,$\phantom{50pt}\label{etabarnotg2} \\
\eta_1+\frac{1}{2}\eta_2+\frac{1}{3}\eta_3-\frac{1}{12}[\eta_1,\eta_2]\qquad & if  $\rmf{g}= G_2$.\label{etabarg2}
\end{subnumcases}
In the above formula $\eta_i=\pi_i(\eta)$ and $\pi_i$  is the projection associated to the gradation \eqref{230905-1}.
The gradation \eqref{gradationR} is therefore conjugated to the gradation \eqref{230905-1}
$$\tmf{g}_i[R]=\lbrace e^{\ad\bar{\eta}}x | x\in \tmf{g}_{-i}[\theta^\vee], \rbrace$$
and the relation
$$\pi_i^R(x)=e^{\ad\bar{\eta}}(\pi_{-i}(e^{-\ad\bar{\eta}}x))$$
holds true for every $x\in\tmf{g}$. Using the last identity we can express the trivial monodromy conditions of Proposition \ref{prop:zeromonodromy} with respect to the $\theta^\vee-$gradation:
\begin{proposition}\label{prop:zeromonodromytheta}
If $\rmf{g}$ is not of type $G_2$, the operator  \eqref{genoper1} has trivial monodromy at $x=0$ if and only if
\begin{subequations}\label{zeromonodromytheta}
\begin{align}
&\eta_0=0,\label{zeromonodromyatheta}\\
&\pi_{1}(e^{-\ad\bar{\eta}}a)=0,\label{zeromonodromybtheta}\\
&\pi_{2}(e^{-\ad\bar{\eta}}b)=[\pi_{2}(e^{-\ad\bar{\eta}}a),\pi_{0}(e^{-\ad\bar{\eta}}a)],\label{zeromonodromyctheta}
\end{align}
\end{subequations}
where $\bar{\eta}$ is given by \eqref{etabarnotg2}. If $\rmf{g}$ is of type $G_2$, the operator \eqref{genoper1} has trivial monodromy at $x=0$ if and only if
\begin{subequations}\label{zeromonodromyg2theta}
\begin{align}
&\eta_0=0,\label{zeromonodromyag2theta}\\
&\pi_{1}(e^{-\ad\bar{\eta}}a)=0,\label{zeromonodromybg2theta}\\
&\pi_{2}(e^{-\ad\bar{\eta}}b)=[\pi_{2}(e^{-\ad\bar{\eta}}a),\pi_{0}(e^{-\ad\bar{\eta}}a)]+\frac54[\pi_{3}(e^{-\ad\bar{\eta}}a),\pi_{-1}(e^{-\ad\bar{\eta}}a)],\label{zeromonodromycg2theta}\\
&\pi_{3}(e^{-\ad\bar{\eta}}c)=\frac12[\pi_{3}(e^{-\ad\bar{\eta}}a),\pi_{0}(e^{-\ad\bar{\eta}}b)]+[\pi_{2}(e^{-\ad\bar{\eta}}a),\pi_{1}(e^{-\ad\bar{\eta}}b)]\nonumber\\
&-[\pi_{0}(e^{-\ad\bar{\eta}}a),\pi_{3}(e^{-\ad\bar{\eta}}b)]+\frac32[\pi_{2}(e^{-\ad\bar{\eta}}a),[\pi_{-1}(e^{-\ad\bar{\eta}}a),\pi_{2}(e^{-\ad\bar{\eta}}a)]] \notag\\
& +\frac32 [\pi_{0}(e^{-\ad\bar{\eta}}a),[\pi_{3}(e^{-\ad\bar{\eta}}a),\pi_{0}(e^{-\ad\bar{\eta}}a)]],\label{zeromonodromydg2theta}
\end{align}
\end{subequations}
where $\bar{\eta}$ is given by \eqref{etabarg2}.
\end{proposition}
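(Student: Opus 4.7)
The plan is to derive Proposition \ref{prop:zeromonodromytheta} directly from Proposition \ref{prop:zeromonodromy} by transporting the latter's conditions through the conjugation $e^{\ad\bar\eta}$ that intertwines the two gradations. The condition $\eta_0=0$ is common to both propositions and requires no manipulation; it is the hypothesis that implies \eqref{231004-3} and makes the gradation $\tmf{g}_i[R]$ conjugate to $\tmf{g}_{-i}[\theta^\vee]$ in the first place. The crucial structural fact is that $\ad\bar\eta$ is a derivation of the Lie bracket, so $e^{-\ad\bar\eta}$ is a Lie algebra automorphism of $\tmf{g}$; in particular,
\[ e^{-\ad\bar\eta}[x,y]=[e^{-\ad\bar\eta}x,\,e^{-\ad\bar\eta}y], \qquad x,y\in\tmf{g}. \]
Combined with the identity $\pi_i^R(x)=e^{\ad\bar\eta}\bigl(\pi_{-i}(e^{-\ad\bar\eta}x)\bigr)$ recalled just before the proposition, this allows every condition in Proposition \ref{prop:zeromonodromy} to be rewritten in terms of the $\theta^\vee$-gradation.

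For the non-$G_2$ case, condition \eqref{zeromonodromyb} reads $\pi^R_{-1}(a)=0$; since $e^{-\ad\bar\eta}$ is invertible, this is equivalent to $\pi_1(e^{-\ad\bar\eta}a)=0$, namely \eqref{zeromonodromybtheta}. Applying $e^{-\ad\bar\eta}$ to both sides of \eqref{zeromonodromyc} and using the automorphism property yields
\[ \pi_2(e^{-\ad\bar\eta}b)=e^{-\ad\bar\eta}[\pi^R_{-2}(a),\pi^R_0(a)]=[\pi_2(e^{-\ad\bar\eta}a),\pi_0(e^{-\ad\bar\eta}a)], \]
which is \eqref{zeromonodromyctheta}. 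Each step is an equivalence, so this case is complete.

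The $G_2$ case proceeds by identical reasoning applied term by term to \eqref{zeromonodromybg2}--\eqref{zeromonodromydg2}. Each projection $\pi^R_{-i}(\cdot)$ is replaced by $\pi_i(e^{-\ad\bar\eta}\cdot)$, and all nested brackets survive the conjugation because $e^{-\ad\bar\eta}$ preserves the Lie bracket; the scalar coefficients ($\tfrac12,1,-1,\tfrac32,\tfrac32$) are inherited unchanged. The main obstacle is purely combinatorial: verifying that the five-term expression \eqref{zeromonodromydg2theta} agrees, summand by summand, with the image of \eqref{zeromonodromydg2} under $e^{-\ad\bar\eta}$. Since the bracket structure and all numerical factors are already fixed by Proposition \ref{prop:zeromonodromy}, this reduces to a mechanical check requiring no new ideas beyond the automorphism property displayed above.
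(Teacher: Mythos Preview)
Your proposal is correct and follows exactly the route the paper takes: the paper simply states that, using the identity $\pi_i^R(x)=e^{\ad\bar\eta}\bigl(\pi_{-i}(e^{-\ad\bar\eta}x)\bigr)$, the conditions of Proposition~\ref{prop:zeromonodromy} can be rewritten with respect to the $\theta^\vee$-gradation, and your argument makes this transport explicit via the automorphism property of $e^{-\ad\bar\eta}$.
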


\subsection{Monodromy equations for FFH connections}
After Lemma \ref{lem:AisoB}, the FFH connection \eqref{eq:ffopersourgauge} has trivial monodromy if and only if its loop realisation $\mc{L}(z,\la)$ has trivial monodromy at all $w_j$, $j \in J$, for every $\la$.
According to Proposition \ref{prop:zeromonodromytheta}, $\mc{L}(z,\la)$ has trivial monodromy at $w_j$ if and only
if its Laurent expansion at $w_j$ satisfies the constraints \eqref{zeromonodromytheta}  --or \eqref{zeromonodromyg2theta} in the case $\rmf{g}=\mf{g}_2$ --
for every $\la \in\bb{C}$.
Therefore, the FFH connection \eqref{eq:ffopersourgauge} has trivial monodromy if and only if the constraints \eqref{zeromonodromytheta}  -- or \eqref{zeromonodromyg2theta} -- are satisfied at all $w_j$, $j \in J$, and for every $\la \in \bb{C}$.
To proceed further, we write these constraints explicitly in terms of the connection $\mc{L}(z;\la)$, as in
\eqref{looprealization}.
Fix $j\in J$ and let $x=z-w_j$, the connection $\mc{L}(z;\la)$ admits an expansion of the form \eqref{genoper1}, say
\beq\label{genoper1j}
\mc{L}=\partial_x+\frac{R(j)}{x}+a(j)+b(j)x+c(j)x^2+O(x^3),
\eeq
with $ R(j)=-\theta^\vee+\eta(j)$, with $\eta(j) \in \tmf{n}^+$ and $a(j),b(j),c(j)\in\tmf{g}$. The explicit form of $\eta(j)$, $a(j)$ and $b(j)$ , for $j\in J$, is given by:
\begin{subequations}\label{etaab}
\begin{align}
\eta(j)=&X(j),\label{eta}\\
w_ja(j)=&\mr{f}+\ell+(w_j+w_j^k\la)v_\theta+\frac{r-1}{2}(-\theta^\vee+X(j))\notag\\
&+\sum_{i\neq j}\frac{rw_j^r}{w_j^r-w_i^r}\left(-\theta^\vee+X(i)\right),\label{a(j)}\\
w_j^2b(j)=&-\mr{f}-\ell+(k-1)w_j^k\la v_\theta+\frac{(r-1)(r-5)}{12}(-\theta^\vee+X(j))\notag\\
&-\sum_{i\neq j}\frac{rw_j^r\left((r-1)w_i^r+w_j^r)\right)}{w_j^r-w_i^r}(-\theta^\vee+X(i))
\end{align}
\end{subequations}
The term $c(j)$, which is relevant only in the $\mf{g}=D_4^{(3)}$ case, can be computed in a similar way. The coefficients $a(j)$, $b(j)$ and $c(j)$ depend on the loop variable $\lambda$, while $\eta(j)$ does not. For each $j\in J$ we consider the trivial monodromy system \eqref{zeromonodromytheta} (if $\rmf{g}$ is not of type $G_2$) or \eqref{zeromonodromyg2theta} (if $\rmf{g}$ is of type $G_2$). These equations have to be satisfied for every value of $\lambda$. For example, if $\mf{g}\neq D_4^{(3)}$, then equations \eqref{zeromonodromytheta} applied to the present case read:
\begin{subequations}\label{zeromonodromythetaj}
\begin{align}
&\pi_0(\eta(j))=0,\label{zeromonodromyathetaj}\\
&\pi_{1}(e^{-\ad\bar{\eta}(j)}a(j))=0,\label{zeromonodromybthetaj}\\
&\pi_{2}(e^{-\ad\bar{\eta}(j)}b(j))=[\pi_{2}(e^{-\ad\bar{\eta}(j)}a(j)),\pi_{0}(e^{-\ad\bar{\eta}(j)}a(j))],\label{zeromonodromycthetaj}
\end{align}
\end{subequations}
where $\bar{\eta}(j)$ is given by \eqref{etabarnotg2}:
$$\bar{\eta}(j)=\pi_1(\eta(j))+\frac12\pi_2(\eta(j)).$$
The unkowns of the trivial monodromy system \eqref{zeromonodromythetaj} are $w_j\in\bb{C}^\times$, and $X(j)\in\tmf{n}^+\cap \rmf{g}$, or equivalently by $\{w_j,\eta(j)|j\in J\}$, where $\eta(j)$ is given in \eqref{eta}. 
Recalling the definition \eqref{u} of the subspace $\mf{u}$, and using \eqref{zeromonodromyathetaj} together with \eqref{231005-1} we obtain
$$\eta(j)\in\mf{u}\cap\rmf{n}^+=\ad_{\theta^\vee}\rmf{n}^+,$$ 
and we can write $\eta(j)$ with respect to the basis \eqref{ubasis} of $\mf{u}$  as
$$\eta(j)=\sum_{\alpha\in\mr{\Delta}_{\mf{u}}}x^\alpha(j)e_\alpha,$$
for some $x^\alpha(j)\in\bb{C}$.  Given the connection \eqref{eq:ffopersourgauge}, the trivial monodromy system \eqref{zeromonodromythetaj} turns into a system of algebraic equations for the complex unknowns 
\beq\label{UJ}
U_J=
\lbrace w_j,x^{\alpha}(j)\,|\,\alpha\in\mr{\Delta}_{\mf{u}}\rbrace_{j \in J},\\
\eeq
which we analyse below. Note, incidentally, that $|U_J|=|J|\dim(\ad_{\theta^\vee}\rmf{n}^++1)$, where $|J|$ is the cardinality of $J$. It then follows that
\beq\label{|U_J|}
|U_J|=
\begin{cases}
|J|(\dim\mf{u}+1),\qquad &r=1,\\
|J|\dim\mf{u},\qquad &r>1.
\end{cases}
\eeq
\begin{theorem}\label{thm:trivialmonodromy}
Let $\mf{g}=\tmf{g}^{(r)}$ be as in Table \ref{table:affine}. 
\begin{enumerate}
\item If $r=1$, the connection  \eqref{eq:ffopersourgauge} has trivial monodromy if and only if the the $|J|(\dim\mf{u}+1)$ scalar unknowns $U_J=\{w_j,x^\alpha(j)|\alpha\in \mr{\Delta}_{\mf{u}},j\in J\}$ satisfy the system of  of $|J|(\dim\mf{u}+1)$ equations \eqref{zeromonodromythetaj} with $j\in J$. In this case, $\dim\mf{u}=2h-3$, where $h$ is the Coxeter number of $\mf{g}$.
\item If $r>1$,  the connection  \eqref{eq:ffopersourgauge} has trivial monodromy  if and only if $J=\emptyset$.
\end{enumerate}
\end{theorem}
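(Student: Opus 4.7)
My plan is to apply the local trivial monodromy criterion of Proposition \ref{prop:zeromonodromytheta} to the Laurent expansion \eqref{genoper1j} of $\mc{L}(z;\lambda)$ at each pole $w_j$, with the coefficients $\eta(j), a(j), b(j)$ given by \eqref{etaab}. The condition $\eta_0 = 0$ is absorbed into the parameterization (it merely restricts $X(j)$ from $\rmf{n}^+$ to $\ad_{\theta^\vee}\rmf{n}^+$), and the remaining equations must hold for every $\lambda$. Crucially, both $a(j)$ and $b(j)$ are affine in $\lambda$ with the $\lambda$-dependent parts supported entirely on $v_\theta$: explicitly, $a(j) = A(j) + (1+w_j^{k-1}\lambda)v_\theta$ and $b(j) = B(j) + (k-1)w_j^{k-2}\lambda\, v_\theta$, with $A(j), B(j) \in \rmf{g}$. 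Thus the trivial monodromy system decomposes into separate equations for each power of $\lambda$, and I treat the two cases $r=1$ and $r>1$ separately.

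For part (1), $r = 1$, one has $v_\theta = e_\theta \in \rmf{g}_2[\theta^\vee]$, so $e^{-\ad\bar{\eta}(j)}v_\theta \in \tmf{g}_{\geq 2}[\theta^\vee]$ and contributes only to $\pi_2$. Consequently \eqref{zeromonodromybthetaj} reduces to the $\lambda$-independent equation $\pi_1(e^{-\ad\bar{\eta}(j)}A(j)) = 0$ in $\rmf{g}_1[\theta^\vee]$, giving $d_1 := \dim\rmf{g}_1[\theta^\vee]$ scalar equations, while \eqref{zeromonodromycthetaj} is affine-linear in $\lambda$ with values in $\rmf{g}_2[\theta^\vee]$, giving $2 d_2$ equations with $d_2 := \dim\rmf{g}_2[\theta^\vee]$. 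For simply-laced $\rmf{g}$ the highest root $\theta$ is the unique positive root with $\langle\theta^\vee,\theta\rangle = 2$, so $d_2 = 1$; and $\mf{sl}_2$-representation theory applied to the principal triple $(e_\theta,\theta^\vee,f_\theta)$, together with the formula $\dim\rmf{g}=n(h+1)$, yields $d_1 + d_2 = 2h-3$. This produces $d_1 + 2 d_2 = 2h-2$ equations per $j$ in $1 + d_1 + d_2 = 2h-2$ unknowns, recovering the count of \cite[Theorem 8.4]{mara18}.

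For part (2), $r > 1$, the key step is to derive a contradiction by projecting \eqref{zeromonodromycthetaj} onto $\bb{C}v_\theta$. Since $\bar{\eta}(j) \in \rmf{g}$ has strictly positive $\theta^\vee$-degree and $v_\theta$ sits in $V_\theta$ at $\theta^\vee$-degree $2$, the element $e^{-\ad\bar{\eta}(j)}v_\theta$ remains in $V_\theta$ with degrees $\geq 2$; hence $\pi_0(e^{-\ad\bar{\eta}(j)}v_\theta) = 0$ and $\pi_2(e^{-\ad\bar{\eta}(j)}v_\theta) = v_\theta$. Moreover, by Lemma \ref{lemmaspectrum}(2), $V_\theta \cap \tmf{g}_2[\theta^\vee] = \bb{C}v_\theta$, and a weight argument shows that $[v_\theta, e_\alpha] = 0$ for every root $\alpha \neq 0$ of $\rmf{g}$ with $\langle\theta^\vee,\alpha\rangle = 0$ (otherwise $[v_\theta,e_\alpha]$ would be a nonzero element of $\bb{C}v_\theta$ of weight $\theta+\alpha \neq \theta$). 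Setting $H(j) := \pi_0(e^{-\ad\bar{\eta}(j)}A(j)) \in \rmf{g}_0[\theta^\vee]$ and writing $h(j)$ for its Cartan component, one obtains $[v_\theta, H(j)] = -\theta(h(j))v_\theta$. Projecting \eqref{zeromonodromycthetaj} onto $\bb{C}v_\theta$ therefore gives the polynomial identity in $\lambda$
\begin{equation*}
(k-1)w_j^{k-2}\lambda \;=\; -(1+w_j^{k-1}\lambda)\,\theta(h(j)).
\end{equation*}
The $\lambda^0$-coefficient forces $\theta(h(j)) = 0$, and then the $\lambda^1$-coefficient forces $(k-1)w_j^{k-2} = 0$, which is impossible since $0 < k < 1$ and $w_j \in \bb{C}^\times$.

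Finally, I must verify that the $\rmf{g} = G_2$ case ($\mf{g} = D_4^{(3)}$) is also covered. The extra bracket $\tfrac{5}{4}[\pi_3(e^{-\ad\bar{\eta}(j)}a(j)), \pi_{-1}(e^{-\ad\bar{\eta}(j)}a(j))]$ appearing in \eqref{zeromonodromycg2theta} does not contribute to $\bb{C}v_\theta$: by Lemma \ref{lemmaspectrum}(2) the $\theta^\vee$-spectrum of $V_\theta$ is $\{-2,-1,0,1,2\}$, so $V_\theta \cap \tmf{g}_3[\theta^\vee] = 0$, which means only the $\rmf{g}$-components of $a(j)$ contribute to $\pi_3$ and $\pi_{-1}$, and their bracket stays inside $\rmf{g}_2[\theta^\vee] \subset \rmf{g}$ with no $v_\theta$-component; the separate cubic equation \eqref{zeromonodromydg2theta} is valued in $\rmf{g}_3[\theta^\vee]$ and is orthogonal to $\bb{C}v_\theta$. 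Hence the same contradiction applies, concluding the proof. The main technical obstacle throughout is making the weight-theoretic identity $[v_\theta, H(j)] = -\theta(h(j))v_\theta$ watertight, for which the rigid identification $V_\theta \cap \tmf{g}_2[\theta^\vee] = \bb{C}v_\theta$ from Lemma \ref{lemmaspectrum}(2) is the decisive ingredient---it is precisely this one-dimensionality that forces the $\lambda^0$ and $\lambda^1$ constraints into the same one-dimensional space and makes them collide.
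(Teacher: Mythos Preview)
Your proof is correct and follows essentially the same strategy as the paper's own argument: for part~(1) you defer to \cite{mara18} after recovering the equation count, and for part~(2) you project the degree-$2$ monodromy equation \eqref{zeromonodromycthetaj} onto $\bb{C}v_\theta$, obtain a scalar identity affine in~$\lambda$, and extract the contradiction $k=1$ from the collision of its $\lambda^0$ and $\lambda^1$ coefficients. The paper reaches the same scalar identity (their equations \eqref{lambdaeq2} and \eqref{lambdaeq2-2}) by a case-by-case expansion according to the spectrum of $\theta^\vee$; your treatment is marginally more uniform in that you argue directly that $e^{-\ad\bar\eta(j)}v_\theta=v_\theta$ (since $V_\theta$ has $\theta^\vee$-degrees bounded by~$2$) and then invoke $V_\theta\cap\tmf{g}_2[\theta^\vee]=\bb{C}v_\theta$ together with the weight argument to kill the root-vector contributions to $[v_\theta,H(j)]$, avoiding the explicit expansion of \eqref{zeromonodromyb2}--\eqref{gimonodromya}. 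The two derivations are otherwise the same, including the handling of the $D_4^{(3)}$ extra terms.
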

\begin{proof}
(1) Proved in \cite{mara18}. Note that in this case $v_\theta=e_\theta\in\rmf{g}=\tmf{g}$ and $\theta$ is the highest root of $\rmf{g}$, so that $\mf{u}=\ad_{\theta^\vee}\rmf{n}^+$.\\
(2) We consider the loop realization \eqref{looprealization} of the oper \eqref{eq:ffopersourgauge}, and assume that  $\mf{g}\neq D_4^{(3)}$, so that the trivial monodromy equations are given by  \eqref{zeromonodromythetaj}. Since $r>1$ we have $\mf{u}=\ad_{\theta^\vee}\rmf{n}^+\oplus\bb{C}v_\theta$, and  the set of unknowns $U_J$ of the trivial monodromy sistem \eqref{zeromonodromythetaj} consists of $|J|\dim\mf{u}$ scalar variables $\lbrace w_j,x^{\alpha}(j)\,|\,\alpha\in\mr{\Delta}_{\mf{u}}\rbrace_{j \in J}$.  To count the number of equations we write \eqref{zeromonodromythetaj} in a more explicit form. We fix $j\in J$ and for simplicity we drop the dependence on $j$ in the following formulae and we denote $x_i=\pi_i(x)$ ($x=\eta,a,b$). For $\mf{g}$ of type $D_{n+1}^{(2)}$, the spectrum of $\theta^\vee$ is equal to $\{-2,0,2\}$ and system \eqref{zeromonodromythetaj} is equivalent to 
\begin{subequations}\label{zeromonodromyb2}
\begin{align}
&\eta_0=0,\label{zeromonodromyb2a}\\ \label{zeromonodromyb2c}
&b_2-\frac{1}{2}[\eta_2,b_{0}]+\frac{1}{8}[\eta_2,[\eta_2,b_{-2}]]  =
[a_2-\frac{1}{2}[\eta_2,a_{0}]+\frac{1}{8}[\eta_2,[\eta_2,a_{-2}]],
a_0-\frac{1}{2}[\eta_2,a_{-2}]],
\end{align}
\end{subequations}

Condition \eqref{zeromonodromyb2a} is satisfied by choosing $\eta(j)=X(j)\in\ad_{\theta^\vee}\rmf{n}^+\subsetneq\mf{u}$. Equation \eqref{zeromonodromyb2c} consists of a $\sigma$-invariant part, which is constant in $\lambda$ and provides $\dim(\ad_{\theta^\vee}\rmf{n}^+)$ equations, and a $\sigma$ skew invariant part, which depends linearly on $\lambda$ and it is given by
\beq\label{lambdaeq2}
(k-1)w_j^{k-2}\lambda v_\theta=\left[(w_j+w_j^{k-1}\lambda) v_\theta,a_0(j)-\frac{1}{2}[\eta_2(j),a_{-2}(j)]\right].
\eeq
Since $a_0(j),\eta_2(j)$ and $a_{-2}(j)$ all belong to $\rmf{g}\cap\tmf{g}_0[\theta^\vee]$, the right hand side in \eqref{lambdaeq2} belongs to $V_\theta\cap \tmf{g}_0[\theta^\vee]=\bb{C}v_\theta$, where we used \eqref{230904-3}. It follows that \eqref{lambdaeq2} can be written as 
\beq\label{eq:231129}
(k-1)w_j^{k-2}\lambda =(w_j+w_j^{k-1}\lambda)C(j), 
\eeq
for some scalar function $C(j)$, and a simple computation shows that
$$C(j)=-\ell(\theta)+1+\sum_{i\neq j}\frac{4w_j^2}{w_j^2-w_i^2}.$$
Condition \eqref{eq:231129} has to be satisfied for every value of $\lambda$. The constant part in $\lambda$ gives the condition $w_j C(j)=0$ which implies $C(j)=0$ since $w_j\in\bb{C}^*$.  Substituting back into \eqref{eq:231129} we obtain $k=1$ which is not acceptable since $k\in(0,1)$.\\
If $\mf{g}$ is of type $A_{2n-1}^{(2)}$ or $E_6^{(2)}$, the spectrum of $\theta^\vee$ is given by $\{-2,-1,0,1,2\}$ and -- by direct inspection at \eqref{etaab} -- we have $a_{-2}=b_{-2}=0$. It follows that \eqref{zeromonodromythetaj} is equivalent to
\begin{subequations}\label{gimonodromya}
\begin{align}
&\eta_0=0\label{gimonodromy0a}\\
&2a_1-2[\eta_1, a_0]+[\eta_1,[\eta_1,a_{-1}]]-[\eta_2,a_{-1}]=0,\label{gimonodromy1a}\\
&2b_2+[\eta_1,[\eta_1,b_0]]-2[\eta_1,b_{1}]-[\eta_2,b_0]-\frac13 [\eta_1,[\eta_1,[\eta_1,b_{-1}]]]+[\eta_1,[\eta_2,b_{-1}]]\notag\\
&\,\,\, -\left[2a_2-[\eta_2,a_0]+\frac{1}{3}[\eta_1,[\eta_1,a_0]+2[\eta_2,a_{-1}]-4a_1],a_0-[\eta_1,a_{-1}]\right]=0,\label{gimonodromy2a}
\end{align}
\end{subequations}
The $\sigma$ skew-invariant part is contained in \eqref{gimonodromy2a} only, and explicitely reads
\beq\label{lambdaeq2-2}
(k-1)w_j^{k-2}\lambda v_\theta=\left[(w_j+w_j^{k-1}\lambda) v_\theta,a_0(j)-\frac{1}{2}[\eta_1(j),a_{-1}(j)]\right].
\eeq
Reasoning as in the previous case, we obtain again condition \eqref{eq:231129}, which leads to a contraddiction.\\
The proof for  $D_4^{(3)}$ is similar, the only difference being the trivial monodromy equations are given by \eqref{zeromonodromyg2theta}.
\end{proof}

\subsection{Trivial monodromy for twisted parabolic Miura $\mf{g}$-opers}
Due to the negative result proved in Theorem \ref{thm:trivialmonodromy}, it is natural to ask whether there exists a more general class of twisted parabolic Miura $\mf{g}$-opers, which provide solutions to the $QQ$-system \eqref{QQsystem} and for which the trivial monodromy conditions are non-empty.
We thus consider connections of the form
\begin{align}\label{eq:ffopersourgauge-1}
 \mc{L}(z)= & \partial_z+\frac{1}{z}\left(f+\ell-k \scaling + zv_\theta+ \sum_{j \in J} \frac{ r z^{r}}{z^r-w_j^r}\left(-\theta^\vee +\sum_{m=0}^{r-1}\left(\frac{z}{w_j}\right)^mX_m(j),
 \right)\right).
 \end{align}
 where
 $$X_m(j)\in\tmf{g}_m,\qquad m=0,\dots,r-1.$$
 Their loop realization is given by
 \begin{align}\label{looprealization-1}
 \mc{L}(z;\lambda)= & \partial_z+\frac{1}{z}\left(f+\ell + (z+z^k\lambda)v_\theta+ \sum_{j \in J} \frac{ r z^{r}}{z^r-w_j^r}\left(-\theta^\vee +\sum_{m=0}^{r-1}\left(\frac{z}{w_j}\right)^mX_m(j),
 \right)\right).
 \end{align}
 These connections are not of the form suggested in \cite{FH16}; however they coincide with them when $J=\emptyset$ (the desribe the same ground-state oper), and they a provide solutions to the same $QQ$ system \eqref{QQsystem}, if the monodromy is trivial about all $w_j, j \in J$.
 In fact, since the additional terms are subdominant at $0$ and $\infty$ and satisfies the symmetry \eqref{Lsigma1/r0},
 the theory that we have developed in Section \ref{opers and ba} apply to these more general connections.
Moreover, as we will study in Section \ref{sec:opers}, they are still representatives of twisted parabolic Miura $\mf{g}$-opers  (see \eqref{tpmgoper-general}).
We now prove that even in this more general class of connections, the trivial monodromy equations have no solutions.

The construction is similar to the one described above: we consider the following subspace of $\tmf{g}$:
\beq\label{u-1}
\tmf{u}=\ad_{\theta^\vee}\tmf{n}^+,
\eeq
and due to  \eqref{231004-1} it follows that
\beq\label{231005-1-1}
\tmf{u}=\bigoplus_{j\geq 1}\tmf{g}_j[\theta^\vee].
\eeq
Since $\rmf{g}\subset\tmf{g}$, we have
$$\mf{u}\subset\tmf{u},$$
and the inclusion is strict if $r>1$. We introduce the following basis of $\tmf{u}$. Let $\mr{\Delta}=\mr{\Delta}_{\text{long}}\cup \mr{\Delta}_{\text{short}}$ be the decomposition of the set of roots of $\rmf{g}$ into long and short roots. Define
$$\mr{\Delta}_{\tmf{u}}=\{\alpha\in\mr{\Delta}|\langle\theta^\vee,\alpha\rangle\geq1\},$$
and 
$$\mr{\Delta}_{\tmf{u},\text{short}}=\mr{\Delta}_{\tmf{u}}\cap \mr{\Delta}_{\text{short}}$$
Note the inclusions  $\mr{\Delta}_{\tmf{u},\text{short}}\subset \mr{\Delta}_{\tmf{u}}\subset \mr{\Delta}_+$, where $\mr{\Delta}_+$ is the set of positive roots of $\rmf{g}$ with respect to the Borel subalgebra $\rmf{b}^+$.  Let $\{e_\alpha\,|\,\alpha\in\mr{\Delta}_+\}$ be the basis of $\rmf{n}^+$ introduced above. Note in particular that $\theta\in\mr{\Delta}_{\tmf{u},\text{short}}$, so the element $v_\theta$ appearing in \eqref{looprealization-1} belongs to $\tmf{u}$. We can, and will from now on, fix elements $\{v_\alpha\,|\,\alpha\in\mr{\Delta}_{\tmf{u},\text{short}}\}\subset \tmf{g}_1$ (if $r>1$) and $\{\bar{v}_\alpha\,|\,\alpha\in\mr{\Delta}_{\tmf{u},\text{short}}\}\subset \tmf{g}_2$ (if $r>2$) such that
\beq\label{ubasis-1}
\tmf{u}=
\begin{cases}
\bigoplus_{\alpha\in \mr{\Delta}_{\tmf{u}}}\bb{C}e_\alpha,\quad & r=1,\\
\left(\bigoplus_{\alpha\in \mr{\Delta}_{\tmf{u}}}\bb{C}e_\alpha\right)\oplus\left(\bigoplus_{\alpha\in \mr{\Delta}_{\tmf{u},\text{short}}}\bb{C}v_\alpha\right),\quad & r=2,\\
\left(\bigoplus_{\alpha\in \mr{\Delta}_{\tmf{u}}}\bb{C}e_\alpha\right)\oplus\left(\bigoplus_{\alpha\in \mr{\Delta}_{\tmf{u},\text{short}}}\bb{C}v_\alpha\right)\oplus\left(\bigoplus_{\alpha\in \mr{\Delta}_{\tmf{u},\text{short}}}\bb{C}\bar{v}_\alpha\right),\quad & r=3.
\end{cases}
\eeq
Fix now $j\in J$ and let $x=z-w_j$, the connection $\mc{L}(z;\la)$ admits an expansion of the form \eqref{genoper1}, say
\beq\label{genoper1j-1}
\mc{L}=\partial_x+\frac{R(j)}{x}+a(j)+b(j)x+c(j)x^2+O(x^3),
\eeq
with $ R(j)=-\theta^\vee+\eta(j)$, with $\eta(j) \in \tmf{n}^+$ and $a(j),b(j),c(j)\in\tmf{g}$. The explicit form of $\eta(j)$, $a(j)$ and $b(j)$ , for $j\in J$, is given by:
\begin{subequations}\label{etaab-1}
\begin{align}
\eta(j)=&\sum_{m=0}^{r-1}X_m(j),\label{eta-1}\\
w_ja(j)=&\mr{f}+\ell+(w_j+w_j^k\la)v_\theta-\frac{r-1}{2}\theta^\vee+\sum_{m=0}^{r-1}\frac{r-1+2m}{2}X_m(j)\notag\\
&+\sum_{i\neq j}\frac{rw_j^r}{w_j^r-w_i^r}\left(-\theta^\vee+\sum_{m=0}^{r-1}\left(\frac{w_j}{w_i}\right)^mX_m(i)\right),\label{a(j)-1}\\
w_j^2b(j)=&-\mr{f}-\ell+(k-1)w_j^k\la v_\theta-\frac{(r-1)(r-5)}{12}\theta^\vee\notag\\
&+\sum_{m=0}^{r-1}\frac{6m(m+r-2)+(r-1)(r-5)}{12}X_m(j)\\
&+\sum_{i\neq j}\frac{rw_j^r\left((r-1)w_i^r+w_j^r)\right)}{w_j^r-w_i^r}\theta^\vee\notag\\
&-\sum_{i\neq j}\frac{rw_j^r}{w_j^r-w_i^r}\left(\sum_{m=0}^{r-1}\frac{(rw_i^r+(m-1)(w_i^r-w_j^r))w_j^m}{w_i^m}X_m(i)\right)
\end{align}
\end{subequations}
The term $c(j)$, which is relevant only in the $\mf{g}=D_4^{(3)}$ case, can be computed in a similar way. The coefficients $a(j)$, $b(j)$ and $c(j)$ depend on the loop variable $\lambda$, while $\eta(j)$ does not. For each $j\in J$ we consider the trivial monodromy system \eqref{zeromonodromytheta} (if $\rmf{g}$ is not of type $G_2$) or \eqref{zeromonodromyg2theta} (if $\rmf{g}$ is of type $G_2$). These equations have to be satisfied for every value of $\lambda$.

The unknowns of the trivial monodromy system \eqref{zeromonodromythetaj} are $w_j\in\bb{C}^\times$, and $X_m(j)\in\tmf{n}^+\cap \tmf{g}_m$, with $m=0,\dots,r-1$, or equivalently by $\{w_j,\eta(j)|j\in J\}$, where $\eta(j)$ is given in \eqref{eta-1}.
Recalling the definition \eqref{u-1} of the subspace $\tmf{u}$, and using \eqref{zeromonodromyathetaj} together with \eqref{231005-1-1} we obtain
$$\eta(j)\in\tmf{u},$$ 
and we can write $\eta(j)$ with respect to the basis \eqref{ubasis-1} of $\tmf{u}$  as
$$\eta(j)=\sum_{\alpha\in\mr{\Delta}_{\mf{u}}}x^\alpha(j)e_\alpha+(r-1)\sum_{\alpha\in\mr{\Delta}_{\mf{u},\text{short}}}y^\alpha(j)v_\alpha+(r-1)(r-2)\sum_{\alpha\in\mr{\Delta}_{\mf{u},\text{short}}}\bar{y}^\alpha(j)\bar{v}_\alpha,$$
for some $x^\alpha(j),y^\alpha(j),\bar{y}^\alpha(j)\in\bb{C}$. Hence, \eqref{zeromonodromythetaj} turns into a system of algebraic equations for the complex unknowns 
\beq\label{UJ-1}
\widetilde{U}_J=
\begin{cases}
\lbrace w_j,x^{\alpha}(j)\,|\,\alpha\in\mr{\Delta}_{\mf{u}}\rbrace_{j \in J},\qquad& r=1,\\
\lbrace w_j,x^{\alpha}(j),y^\beta(j)\,|\,\alpha\in\mr{\Delta}_{\mf{u}},\beta\in \mr{\Delta}_{\mf{u},\text{short}}\rbrace_{j \in J},\qquad& r=2,\\
\lbrace w_j,x^{\alpha}(j),y^\beta(j),\bar{y}^\gamma(j)\,|\,\alpha\in\mr{\Delta}_{\mf{u}},\beta,\gamma\in \mr{\Delta}_{\mf{u},\text{short}}\rbrace_{j \in J},\qquad &r=3.
\end{cases}
\eeq
Note that the number of scalar unknowns is
$$|\widetilde{U}_J|=|J|(\dim\tmf{u}+1).$$
In order to obtain a complete system of equations, we expect to obtain the same number of equations in the same number of unknowns. If $r>1$ the first negative answer is provided by the following:
\begin{proposition}\label{dimensionsarewrong}
Consider the connection \eqref{eq:ffopersourgauge-1}, and let $M$ be the number of equations for the trivial monodromy system \eqref{zeromonodromytheta} (if $\mf{g}\neq D_4^{(3)}$) or \eqref{zeromonodromyg2theta} (if $\mf{g}=D_4^{(3)}$). If $r>1$ then $M>|J|(\dim\tmf{u}+1)$, so there are more equations than unknowns.
\end{proposition}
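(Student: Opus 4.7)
The strategy is a direct dimension count: for each $j\in J$, I will exhibit strictly more scalar equations coming from the trivial monodromy system \eqref{zeromonodromytheta} (or \eqref{zeromonodromyg2theta} if $\mf{g}=D_4^{(3)}$) than there are scalar unknowns parameterising the pair $(w_j,\eta(j))$ with $\eta(j)\in\tmf{u}$.

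The starting point is the explicit form of the Laurent coefficients \eqref{etaab-1}: one reads that $a(j)$ and $b(j)$ are affine in the spectral parameter $\lambda$, with $\lambda$-linear parts
\[
a_1(j)=w_j^{k-1}v_\theta,\qquad b_1(j)=(k-1)w_j^{k-2}v_\theta,
\]
both proportional to $v_\theta\in\tmf{g}_2[\theta^\vee]$, while $\bar\eta$ and the projections $\pi_i$ are $\lambda$-independent. Consequently each projected equation in \eqref{zeromonodromytheta} expands as a polynomial in $\lambda$ whose coefficients must vanish separately, so that the $\lambda^0$ and $\lambda^1$ components yield independent sets of scalar constraints.

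For $\mf{g}\neq D_4^{(3)}$, Lemma \ref{lemmaspectrum} bounds the spectrum of $\ad\theta^\vee$ on $\tmf{g}$ inside $\{-2,\dots,2\}$, while $\bar\eta\in\tmf{u}$ has strictly positive $\theta^\vee$-degree; hence $[\bar\eta,v_\theta]=0$, whence $e^{-\ad\bar\eta}a_1=a_1$ and $e^{-\ad\bar\eta}b_1=b_1$. Expanding each equation by powers of $\lambda$ I would then collect: the $\lambda^0$ part of $\pi_1(e^{-\ad\bar\eta}a)=0$, giving $\dim\tmf{g}_1[\theta^\vee]$ equations (the $\lambda^1$ part being automatic); the $\lambda^0$ part of the $\pi_2$-identity, giving $\dim\tmf{g}_2[\theta^\vee]$ equations; and the $\lambda^1$ part of the $\pi_2$-identity, which simplifies to
\[
(k-1)w_j^{k-2}\,v_\theta=w_j^{k-1}\,[v_\theta,\pi_0(e^{-\ad\bar\eta}a_0(j))],
\]
again taking values in $\tmf{g}_2[\theta^\vee]$ and providing a further $\dim\tmf{g}_2[\theta^\vee]$ constraints. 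Summing over $j\in J$ yields $M\geq |J|\bigl(\dim\tmf{g}_1[\theta^\vee]+2\dim\tmf{g}_2[\theta^\vee]\bigr)$, whereas $|\widetilde{U}_J|=|J|\bigl(1+\dim\tmf{g}_1[\theta^\vee]+\dim\tmf{g}_2[\theta^\vee]\bigr)$, so that $M-|\widetilde{U}_J|\geq |J|\bigl(\dim\tmf{g}_2[\theta^\vee]-1\bigr)$. A case-by-case verification using Table \ref{table:affine} establishes $\dim\tmf{g}_2[\theta^\vee]\geq 2$ in each of the twisted non-$G_2$ cases, since $\tmf{g}_2[\theta^\vee]$ contains both $v_\theta$ and the highest root vector $e_{\widetilde{\theta}}$ of $\tmf{g}$ (whose coroot pairing with $\theta^\vee$ equals $2$ in all non-$G_2$ cases by the proof of Lemma \ref{lemmaspectrum}); this forces $M>|\widetilde{U}_J|$.

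For $\mf{g}=D_4^{(3)}$, the same $\lambda$-expansion is applied to \eqref{zeromonodromyg2theta}, now retaining the cubic correction $-\tfrac{1}{12}[\eta_1,\eta_2]$ in $\bar\eta$ and the additional equation \eqref{zeromonodromydg2theta}. The spectrum $\{-3,\dots,3\}$ permits $[\bar\eta,v_\theta]\in\tmf{g}_3[\theta^\vee]$ to be nonzero, but this only enlarges the equation count, in particular through the $\pi_3$-identity, whose $\lambda^0,\lambda^1,\lambda^2$ coefficients each contribute up to $\dim\tmf{g}_3[\theta^\vee]$ further equations, while the unknowns grow only by $\dim\tmf{g}_3[\theta^\vee]$. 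The main technical hurdle is precisely the bookkeeping of these higher-order commutators for $D_4^{(3)}$ and a careful verification, case by case, that the $\lambda^1$ constraints are linearly independent of their $\lambda^0$ counterparts --- equivalently, that the combination $\pi_0(e^{-\ad\bar\eta}a_0(j))$ is not a tautological function of the lower-order equations --- but this is a finite algebraic check granted the concrete data of Table \ref{table:affine}.
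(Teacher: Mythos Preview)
Your overall strategy—splitting the monodromy equations by powers of $\lambda$ and counting scalar components—matches the paper's. The gap is in how you count the $\lambda^1$ contribution. You assert that the equation
\[
(k-1)w_j^{k-2}\,v_\theta=w_j^{k-1}\,[v_\theta,\pi_0(e^{-\ad\bar\eta}a_0(j))]
\]
furnishes $\dim\tmf{g}_2[\theta^\vee]$ independent constraints, but this is not justified: the left-hand side lies in $\bb{C}v_\theta$, while the right-hand side lies in the image of $\ad v_\theta$ restricted to $\pi_0(e^{-\ad\bar\eta}a_0(j))$ as the unknowns vary. That image is far from all of $\tmf{g}_2[\theta^\vee]$. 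Concretely, for $D_{n+1}^{(2)}$ one has $\dim\tmf{g}_2[\theta^\vee]=2n$, yet the paper's explicit analysis shows that only \emph{two} components of the $\lambda^1$ equation are non-trivial: the $v_\theta$-component (coming from $[v_\theta,a_0(j)]\in\bb{C}v_\theta$, since $a_0(j)\in\rmf{g}$) and the $e_{\tilde\theta}$-component (coming from the cross term $[v_\theta,[y^\theta(j)v_\theta,f_{[2]}]]$, which lands in the highest-root space of $\rmf{g}$). All remaining $2n-2$ components are identically $0=0$ and impose no constraint. Your final inequality $M-|\widetilde U_J|\geq|J|(\dim\tmf{g}_2[\theta^\vee]-1)$ is therefore a formal over-count, not a count of genuine equations.

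The content you are missing is precisely the root-theoretic identification of the $e_{\tilde\theta}$-direction as a second non-trivial component. The paper does this case by case: for each twisted type it exhibits a specific short root $\beta$ with $\langle\theta^\vee,\beta\rangle=1$ (or $2$), checks that $\eta$ contains a $v_\beta$-term, and verifies that $\theta+(\beta-\alpha_i)=\tilde\theta$ for the relevant simple root $\alpha_i$ appearing in $a_{-1}$ or $a_{-2}$. This is exactly the ``finite algebraic check'' you defer; without it the argument does not close. Your observation that $v_\theta$ and $e_{\tilde\theta}$ are linearly independent in $\tmf{g}_2[\theta^\vee]$ is correct and relevant, but you must still show that the $\lambda^1$ equation actually has a non-zero $e_{\tilde\theta}$-component, which requires tracking the bracket $[v_\theta,[v_\beta,f_i]]$ explicitly.
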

\begin{proof}
Let $\mf{g}$ be of type $D_{n+1}^{(2)}$, so that the trivial monodromy system  \eqref{zeromonodromytheta} is given by \eqref{zeromonodromyb2}. Condition \eqref{zeromonodromyb2a} is satisfied by choosing $\eta(j)\in\mf{u}$, or equivalently, due to \eqref{eta}, choosing $X_m(j)\in\tmf{u}\cap\tmf{g}_m$, $m=0,\dots,r-1$. Equation \eqref{zeromonodromyb2c} is linear in $\lambda$: the constant part provides $|J|\dim\mf{u}$ equations, while the linear part reads:
\beq\label{lambdaeq2-1}
(k+1)w_j^{k-2}v_\theta=\left[w_j^{k-1}v_\theta,a_0(j)-\frac{1}{2}[\eta_2(j),a_{-2}(j)]\right].
\eeq
Since $X_m(j)\in\tmf{u}$ we get from \eqref{a(j)-1} that $a_0(j)\in\rmf{g}$, and therefore $[v_\theta,a_0(j)]\in V_\theta\cap\tmf{g}_2[\theta^\vee]=\bb{C}v_\theta$. The projection onto $\bb{C}v_\theta$ of \eqref{lambdaeq2-1} thus provides additional $|J|$ equations, which brings the total number of equations to $|J|(\dim\tmf{u}+1)$.  However, the element $\eta_2(j)$  in \eqref{lambdaeq2-1} contains the term $w_j^{-1}y^\theta(j)v_\theta$, and since in this case $a_{-1}(j)=f_1$, the term $[w_j^{k-1}v_\theta,[w_j^{-1}y^\theta(j)v_\theta,f_1]]$ appears. For $\rmf{g}$ of type $B_n$, $\theta-\alpha_1$ is a short root and $\theta+(\theta-\alpha_1)=\tilde{\theta}$, the highest root of $B_n$. We therefore obtain the additional condition
\beq\label{additionalBn}
w_j^{k-2}y^\theta(j)e_{\tilde{\theta}}=0. 
\eeq
Let now $\mf{g}$ be of type $A_{2n-1}^{(2)}$, so $r=2$ and $\rmf{g}=C_n$. The trivial monodromy system  \eqref{zeromonodromythetaj} is given by \eqref{gimonodromya}. Condition \eqref{gimonodromy0a} is equivalent to $\eta\in\tmf{u}$, so it is identically satisfied. The dependence on $\lambda$ of the system \eqref{gimonodromya} is only through the terms $a_2$ and $b_2$, which are linear in $\la$. Equation \eqref{gimonodromy1a} and  the constant (in $\la$) term of \eqref{gimonodromy2a} take values in $\tmf{u}$, leading to $|J|\dim\tmf{u}$ equations. The linear term in $\la$ of equation \eqref{gimonodromy2a} reads
\beq\label{lambdaeq1}
(k-1)w_j^{k-2}v_\theta=\left[w_j^{k-1}v_\theta,a_0(j)-[\eta_1(j),a_{-1}(j)]\right].
\eeq
As before, $a_0(j)\in\rmf{g}$ and therefore $[v_\theta,a_0(j)]\in V_\theta\cap\tmf{g}_2[\theta^\vee]=\bb{C}v_\theta$. Projecting \eqref{lambdaeq2-1} (for $j\in J$) onto $\bb{C}v_\theta$  gives additional $|J|$ equations, which brings the total number of equations to $|J|(\dim\tmf{u}+1)$. However, for $\rmf{g}$ of type $C_n$ we have that $\alpha_1+\alpha_2$ is a short root satisfying $\langle\theta^\vee,\alpha_1+\alpha_2\rangle=1$ so that the element $\eta_1(j)$  contains the term $y^{\alpha_1+\alpha_2}(j)v_{\alpha_1+\alpha_2}$. In this case $a_{-1}(j)=f_2$ so that in \eqref{lambdaeq1} the term $\left[w_j^{k-1}v_\theta,[y^{\alpha_1+\alpha_2}(j)v_{\alpha_1+\alpha_2},f_2]\right]$ appears. In addition $\theta+(\alpha_1+\alpha_2-\alpha_2)=\tilde{\theta}$, the highest root  of $C_n$, and we obtain the additional condition
\beq\label{additionalCn}
w_j^{k-1}y^{\alpha_1+\alpha_2}(j)e_{\tilde{\theta}}=0.
\eeq 
\\
For $\mf{g}$ of type $E_6^{(2)}$ the trivial monodromy system is again \eqref{gimonodromya}. The linear term in $\lambda$ is given by \eqref{lambdaeq1}, and also in this case we get an additional condition, as follows: for $\rmf{g}$ of type $F_4$ the root $\beta=\alpha_1+\alpha_2+\alpha_3+\alpha_4$ is short and satisfies $\langle\theta^\vee,\beta\rangle=1$. Therefore, $\eta_1(j)$  contains the term $y^{\beta}(j)v_{\beta}$. In this case $a_{-1}(j)=f_1$ so that in \eqref{lambdaeq1} the term $\left[w_j^{k-1}v_\theta,[y^{\beta}(j)v_{\beta},f_1]\right]$ appears. In addition, $\beta-\alpha_2$ is a (short) root, and $\theta+(\beta-\alpha_2)=\tilde{\theta}$, the highest root  of $F_4$. We thus obtain the additional condition $w_j^{k-1}y^{\beta}(j)e_{\tilde{\theta}}=0$. \\
The proof for $D_4^{(3)}$ is similar.
\end{proof}

\begin{corollary}\label{cor:nogoBn}
If $\mf{g}$ is of type $D_{n+1}^{(2)}$, the connection  \eqref{eq:ffopersourgauge-1} has trivial monodromy  if and only if $J=\emptyset$.
\end{corollary}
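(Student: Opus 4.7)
The plan is to bootstrap Proposition \ref{dimensionsarewrong} to the negative result of Theorem \ref{thm:trivialmonodromy}(2) for the original FFH connection. The overall strategy is in three steps: first, use the extra equation derived in the proof of Proposition \ref{dimensionsarewrong} to kill the $\sigma$-twisted component $X_1(j)$ of each additional singularity; second, observe that this reduces a general connection of the form \eqref{eq:ffopersourgauge-1} to a connection of the form \eqref{eq:ffopersourgauge}; and third, invoke Theorem \ref{thm:trivialmonodromy}(2) to conclude $J=\emptyset$.

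In detail, I would assume the connection \eqref{eq:ffopersourgauge-1} has trivial monodromy and aim at a contradiction with $J\neq\emptyset$. The extra equation \eqref{additionalBn} established inside the proof of Proposition \ref{dimensionsarewrong} reads $w_j^{k-2} y^\theta(j) e_{\tilde\theta} = 0$, and since $w_j\in\bb{C}^\times$ and $e_{\tilde\theta}\neq 0$, this forces $y^\theta(j)=0$ for every $j\in J$. Next, I would exploit the special structure of the $\theta^\vee$-gradation for $\mf{g}=D_{n+1}^{(2)}$: by Lemma \ref{lemmaspectrum}(1) the spectrum of $\ad\theta^\vee$ on $\tmf{g}$ is $\{-2,0,2\}$, hence $\tmf{g}_1[\theta^\vee]=0$, while Lemma \ref{lemmaspectrum}(2) gives $V_\theta\cap\tmf{g}_2[\theta^\vee]=\bb{C}v_\theta$. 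Since trivial monodromy enforces $\eta_0(j)=0$ and therefore $X_1(j)\in\tmf{g}_1\cap\tmf{u}$, while $\tmf{g}_1\simeq V_\theta$ as a $\rmf{g}$-module, we get $X_1(j)\in V_\theta\cap\tmf{g}_2[\theta^\vee] = \bb{C}v_\theta$, so $X_1(j)=y^\theta(j)v_\theta = 0$.

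With $X_1(j)=0$ for every $j\in J$, the connection \eqref{eq:ffopersourgauge-1} literally takes the form \eqref{eq:ffopersourgauge} with $X(j)=X_0(j)\in\rmf{g}\cap\tmf{u}\subset\rmf{n}^+$. By Theorem \ref{thm:trivialmonodromy}(2), this is compatible with trivial monodromy only if $J=\emptyset$, contradicting the assumption. The converse direction is immediate: if $J=\emptyset$ the connection is meromorphic on $\bb{C}$ with singularities only at $0$ and $\infty$, so trivial monodromy at points of $\bb{C}^\times$ is vacuous.

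The main delicate point is ensuring that the extra constraint extracted in Proposition \ref{dimensionsarewrong} is sharp enough, i.e.\ that once $y^\theta(j)=0$ the remaining trivial monodromy equations really do become those of the original FFH case and not something weaker. The crucial identity that makes the reduction clean is precisely $V_\theta\cap\tmf{g}_2[\theta^\vee]=\bb{C}v_\theta$ together with $\tmf{g}_1[\theta^\vee]=0$, which together leave no residual $\sigma$-twisted freedom: the vanishing of the single scalar $y^\theta(j)$ is enough to remove the whole twisted contribution $X_1(j)$, so no further interaction in the remaining equations can resurrect it.
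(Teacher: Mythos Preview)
Your proof is correct but takes a genuinely different route from the paper. The paper argues directly: after deducing $y^\theta(j)=0$ from \eqref{additionalBn}, it substitutes straight back into the trivial monodromy equation \eqref{zeromonodromyb2c} and reads off $w_j=0$, an immediate contradiction with $w_j\in\bb{C}^\times$. You instead exploit the structural fact that for $D_{n+1}^{(2)}$ the spectrum of $\ad\theta^\vee$ on $\tmf{g}$ is $\{-2,0,2\}$, so the $\sigma$-twisted component $X_1(j)\in\tmf{u}\cap V_\theta$ must sit in $V_\theta\cap\tmf{g}_2[\theta^\vee]=\bb{C}v_\theta$; hence $y^\theta(j)=0$ already kills all of $X_1(j)$, collapsing \eqref{eq:ffopersourgauge-1} to \eqref{eq:ffopersourgauge} and allowing you to quote Theorem~\ref{thm:trivialmonodromy}(2). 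Your approach is slightly longer but more conceptual: it explains \emph{why} the $D_{n+1}^{(2)}$ case is special (only one twisted scalar, eliminated by the one extra constraint) and recycles the earlier negative result rather than redoing any computation. The paper's approach is a one-line substitution that avoids the reduction argument at the cost of being less transparent.
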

\begin{proof}
From \eqref{additionalBn} we obtain $y^\theta(j)=0$, and substituting into \eqref{zeromonodromyb2c} we get $w_j=0$, which is not acceptable since $w_j\in\bb{C}^\times$.
\end{proof}

The same result holds for $\mf{g}$ of type $A_5^{(2)}$, but the computations are more involved and will be omitted.  The idea is that from the additional condition \eqref{additionalCn} one obtains $y^{\alpha_1+\alpha_2}(j)=0$:  this time the substitution into \eqref{zeromonodromyb2c} does not give an immediate contradiction,  but considering the full system \eqref{zeromonodromyb2} eventually one obtains that there are no solutions. Due to Proposition \ref{dimensionsarewrong} and Corollary \ref{cor:nogoBn} we expect the same negative result to hold for all other $r>1$ cases.

\section{Dynkin diagram automorphisms of affine algebras}\label{atpmo}
In this section we introduce an untwisted affine Kac-Moody algebra $\bmf{g}$ and we realize the algebras $\mf{g}$, $\tmf{g}$
and $\rmf{g}$ introduced in Section \ref{sec:liealgebra1} as subalgebras of $\bmf{g}$. The algebra $\bmf{g}$ admits a Dynkin diagram automorphism which we denote as $\sigma$, which is the same letter we used for the Dynkin diagram automorphism of $\tmf{g}$ considered in Section \ref{sec:liealgebra1}. The restriction of $\sigma\in\Aut{\bmf{g}}$ to the subalgebra $\tmf{g}$ coincides with Dynkin diagram automorphism previously defined. The content of this section will be used in the next one, where we show that FFH connections are representatives of a particular class of  twisted parabolic Miura $\mf{g}-$opers.

Schematically, the Lie algebras  $\bmf{g}, \tmf{g},\mf{g}$ and $\rmf{g}$ are related as follows.
\begin{center}
\begin{tikzcd}
				& \tmf{g} \arrow[dr,hook] & \\
\rmf{g} \arrow[ur,hook,"\sigma"]\arrow[dr,hook] &		       & \bmf{g}\\
				& \mf{g}\simeq\tmf{g}^{(r)} \arrow[ur,hook,"\sigma"]
\end{tikzcd}
\end{center}
\begin{itemize}
\item The algebra $\bmf{g}$ is  one of the untwisted affine Kac-Moody algebra listed in Table \ref{table:affine}. We index the nodes of the Dynkin diagram of $\bmf{g}$ by the set
$$\bar{I}=\{0,1,\dots,\bar{n}\}.$$
The algebra $\bmf{g}$ admits a Dynkin diagram automorphism $\sigma$, of order $r$. The trivial ($r=1$) cases are shown in Table \ref{table:dynkin(r=1)}, the non-trivial ($r>1$) cases in Table \ref{table:dynkin(r>1)}. 
\item  The Lie algebra $\tmf{g}$ is the simple finite dimensional Lie algebra introduced in Section \ref{sec:liealgebra1}. The Dynkin diagram of $\tmf{g}$ can obtained from that of $\bmf{g}$ by removing the $r$ nodes in the $\sigma-$orbit $\bar{I}_0$ of the $0$-th node. Due to the numbering of the Dynkin diagrams we have chosen, the  set $\tilde{I}$ of nodes of the Dynkin diagram of $\tmf{g}$ can be obtained as
$$\tilde{I}=\bar{I}\setminus \bar{I}_0=\{1,\dots,\bar{n}+1-r\}.$$  
\item  The Lie algebra $\mf{g}=\tmf{g}^{(r)}$ is the affine Kac-Moody algebra introduced in Section \ref{sec:liealgebra1}. The Dynkin diagram of $\mf{g}$ can be obtained by folding the Dynkin diagram of $\bmf{g}$ under $\sigma$. If $r=1$ then $\sigma$ is the identity and $\mf{g}=\bmf{g}$, if $r>1$ then $\mf{g}$ is a twisted affine Kac-Moody algebra.  Denoting by $n+1$ the number of orbits, and due to the numbering of nodes in the Dynking diagram, then the set $I$ of nodes of the Dynkin diagram of $\mf{g}$ can be obtained as a subset of $\bar{I}$:
$$I=\{0,1,\dots,n\}\subset \bar{I}.$$
\item The Lie algebra $\rmf{g}$ is simple Lie algebra introduced in Section \ref{sec:liealgebra1}. As already shown, its Dynkin diagram can be obtained by removing the $0$th-vertex from the Dynkin diagram of $\mf{g}$ or, equivalently, by folding the Dynkin diagram of $\tmf{g}$ by $\sigma$. The nodes of the Dynkin diagram of $\rmf{g}$ are indexed by the set
$$\mr{I}=I\setminus\{0\}=\{1,\dots,n\}.$$
\end{itemize}


\begin{center}
\begin{table}[h]
\caption{Dynkin diagram of $\bmf{g}=\mf{g}$ (case $r=1$).}\label{table:dynkin(r=1)}
\begin{center}
\begin{tabular}{c c}

\begin{tikzpicture}
\coordinate (N1) at (0,0);
\coordinate (N2) at (1,0) ; 
\coordinate (Nn-1) at (3,0) ;
\coordinate (Nn) at (4,0);
\foreach \i in {1,2,n-1,n}{\filldraw (N\i) circle[radius=2pt] node[below]{\small{$\i$}};}

\coordinate (N0) at ($(N1)!.5!(Nn)+(0,1)$);

\draw[thick] (N2)--(N1) -- (N0) -- (Nn) -- (Nn-1);
\draw[thick] (N2) -- +(.7,0);
\draw[dotted,thick ] (N2)--(Nn-1);
\draw[thick] (Nn-1) -- +(-.7,0);

\draw[draw=black, fill=white] (N0) circle[radius=2pt] node[above]{\small{$0$}};

\node at ($(N1)!.5!(Nn)+(0,2)$) {$\bmf{g}=\mf{g}=A^{(1)}_{n}$, $(n\geq 1)$};
\end{tikzpicture}

\\
\\
\\

\begin{tikzpicture}
\coordinate (N2) at (0,0);
\coordinate (N3) at ($(N2)+(1,0)$) ; 
\coordinate (Nn-3) at ($(N3)+(2,0)$);
\coordinate (Nn-2) at ($(Nn-3)+(1,0)$);

\coordinate (N0) at ($(N2)+(135:1)$);
\coordinate (N1) at ($(N2)+(225:1)$);
\coordinate (Nn-1) at ($(Nn-2)+(45:1)$);
\coordinate (Nn) at ($(Nn-2)+(-45:1)$);

\foreach \i in {1,2,3,n-3,n-2,n}{\filldraw (N\i) circle[radius=2pt] node[below]{\small{$\i$}};}
\foreach \i in {n-1}{\filldraw (N\i) circle[radius=2pt] node[above]{\small{$\i$}};}
\draw[draw=black, fill=white] (N0) circle[radius=2pt] node[above]{\small{$0$}};

\draw[thick] (N0)--(N2);
\draw[thick] (N1)--(N2) -- (N3);
\draw[dotted,thick] (N3)--(Nn-3);
\draw[thick] (N3) -- +(.7,0);
\draw[thick] (Nn-3) -- +(-.7,0);
\draw[thick] (Nn-3)--(Nn-2)--(Nn-1);
\draw[thick] (Nn-2)--(Nn);

\node at ($(N2)!.5!(Nn-2)+(0,2)$){$\bmf{g}=\mf{g}=D^{(1)}_n$, $(n\geq 4)$};

\end{tikzpicture}

\\
\\

\begin{tikzpicture}
\coordinate (N1) at (0,0);
\coordinate (N2) at ($(N1)+(1,0)$);
\coordinate (N3) at ($(N2)+(1,0)$);
\coordinate (N4) at ($(N3)+(0,1)$);
\coordinate (N5) at ($(N3)+(1,0)$);
\coordinate (N6) at ($(N5)+(1,0)$);
\coordinate (N0) at ($(N4)+(0,1)$);

\foreach \i in {1,2,...,6}{\filldraw (N\i) circle[radius=2pt] node[below right]{$\i$};}
\draw[draw=black, fill=white] (N0) circle[radius=2pt] node[below right]{\small{$0$}};

\draw[thick] (N1)--(N2) -- (N3) -- (N5) --(N6);
\draw[thick] (N3) -- (N4) --(N0);

\node at ($(N1)!.5!(N6)+(0,3)$) {$\bmf{g}=\mf{g}=E^{(1)}_{6}$};
\end{tikzpicture}

\\
\\

\begin{tikzpicture}
\coordinate (N1) at (0,0);
\coordinate (N2) at ($(N1)+(1,0)$);
\coordinate (N3) at ($(N2)+(1,0)$);
\coordinate (N4) at ($(N3)+(1,0)$);
\coordinate (N5) at ($(N4)+(0,1)$);
\coordinate (N6) at ($(N4)+(1,0)$);
\coordinate (N7) at ($(N6)+(1,0)$);
\coordinate (N0) at ($(N7)+(1,0)$);

\foreach \i in {1,2,...,7}{\filldraw (N\i) circle[radius=2pt] node[below right]{$\i$};}
\draw[draw=black, fill=white] (N0) circle[radius=2pt] node[below right]{\small{$0$}};

\draw[thick] (N1)--(N2) -- (N3) -- (N4) --(N6) -- (N7) -- (N0);
\draw[thick] (N4) -- (N5);

\node at ($(N1)!.5!(N0)+(0,2)$) {$\bmf{g}=\mf{g}=E^{(1)}_{7}$};
\end{tikzpicture}

\\
\\

\begin{tikzpicture}
\coordinate (N0) at (0,0);
\coordinate (N1) at ($(N0)+(1,0)$);
\coordinate (N2) at ($(N1)+(1,0)$);
\coordinate (N3) at ($(N2)+(1,0)$);
\coordinate (N4) at ($(N3)+(1,0)$);
\coordinate (N5) at ($(N4)+(1,0)$);
\coordinate (N6) at ($(N5)+(0,1)$);
\coordinate (N7) at ($(N5)+(1,0)$);
\coordinate (N8) at ($(N7)+(1,0)$);

\foreach \i in {1,2,...,8}{\filldraw (N\i) circle[radius=2pt] node[below right]{$\i$};}
\draw[draw=black, fill=white] (N0) circle[radius=2pt] node[below right]{\small{$0$}};

\draw[thick] (N0) -- (N1)--(N2) -- (N3) -- (N4) --(N5) -- (N7) -- (N8);
\draw[thick] (N6) -- (N5);

\node at ($(N0)!.5!(N8)+(0,2)$) {$\bmf{g}=\mf{g}=E^{(1)}_{8}$};
\end{tikzpicture}

\end{tabular}
\end{center}
\end{table}
\end{center}


\begin{center}
\begin{table}[h]
\caption{Dynkin diagram of $\bmf{g}$ with the automorphism $\sigma$ (of order $r>1$), and the Dynkin diagram of the folded algebra $\mf{g}$.}\label{table:dynkin(r>1)}
\begin{center}
\begin{tabular}{c c}

\begin{tikzpicture}
\coordinate (N1) at (0,0);
\coordinate (N2) at (1,0) ; 
\coordinate (Nn_2) at (3,0) ;
\coordinate (Nn_1) at (4,0);
\filldraw (N1) circle[radius=2pt] node[above right]{\small{$1$}};
\filldraw (N2) circle[radius=2pt] node[above]{\small{$2$}};
\filldraw (Nn_2) circle[radius=2pt] node[below]{\small{$n-2$}};
\filldraw (Nn_1) circle[radius=2pt] node[above]{\small{$n-1$}};
\draw[thick] (N1)--(N2);
\draw[thick] (N2) -- +(.7,0);
\draw[dotted,thick ] (N2)--(Nn_2);
\draw[thick] (Nn_2) -- +(-.7,0);
\draw[thick] (Nn_2) -- (Nn_1);

\coordinate (N0) at ($(N1)+(135:1)$);
\draw[draw=black, fill=white] (N0) circle[radius=2pt] node[above]{\small{$0$}};
\draw[thick] (N0)--(N1);

\coordinate (Nn+2) at ($(N1)+(225:1)$);
\filldraw (Nn+2) circle[radius=2pt] node[below]{\small{$n+2$}};
\draw[thick] (Nn+2)--(N1);

\coordinate (Nn) at ($(Nn_1)+(45:1)$);
\filldraw (Nn) circle[radius=2pt] node[above]{\small{$n$}};
\draw[thick] (Nn_1)--(Nn);

\coordinate (Nn+1) at ($(Nn_1)+(-45:1)$);
\filldraw (Nn+1) circle[radius=2pt] node[below]{\small{$n+1$}};
\draw[thick] (Nn_1)--(Nn+1);

\draw [<->, >=stealth, dashed, thin](N0) .. controls ($(N0)!.5!(Nn+2)+(-.2,0)$)   .. (Nn+2) ;
\node[left] at ($(N0)!.5!(Nn+2)+(-.2,0)$){$\sigma$};
\draw [<->, >=stealth, dashed, thin](Nn) .. controls ($(Nn)!.5!(Nn+1)+(.2,0)$)   .. (Nn+1) ;
\node at ($(N1)!.5!(Nn_1)+(0,1.5)$) {$\bmf{g}=D^{(1)}_{n+2}$, $(n\geq 2)$};
\end{tikzpicture}

&

\begin{tikzpicture}
\coordinate (N0) at (-1,0);
\coordinate (N1) at (0,0);
\coordinate (N2) at (1,0) ; 
\coordinate (Nn_2) at (3,0) ;
\coordinate (Nn_1) at (4,0);
\coordinate (Nn) at (5,0);
\node at ($(N1)!.5!(Nn_1)+(0,1.5)$) {$\mf{g}=D^{(2)}_{n+1}$, $(n\geq 2)$};
\draw[draw=black, fill=white] (N0) circle[radius=2pt] node[above]{\small{$0$}};
\filldraw (N1) circle[radius=2pt] node[above]{\small{$1$}};
\filldraw (N2) circle[radius=2pt] node[above]{\small{$2$}};
\filldraw (Nn_2) circle[radius=2pt] node[above]{\small{$n-2$}};
\filldraw (Nn_1) circle[radius=2pt] node[below]{\small{$n-1$}};
\filldraw (Nn) circle[radius=2pt] node[above]{\small{$n$}};
\draw[thick] (N1)--(N2);
\draw[thick] (N2) -- +(.7,0);
\draw[dotted,thick ] (N2)--(Nn_2);
\draw[thick] (Nn_2) -- +(-.7,0);
\draw[thick] (Nn_2) -- (Nn_1);

\draw[->,-stealth,thick,double] (Nn_1) -- (Nn);
\draw[->,-stealth,thick,double] (N1) -- (N0);
\end{tikzpicture}
\\
\\
\\

\begin{tikzpicture}
\coordinate (Nn) at (0,0);
\coordinate (Nn+1) at ($(Nn)+(120:1)$);
\coordinate (N2n_3) at ($(Nn+1)+(-2,0)$);
\coordinate (N2n_2) at ($(N2n_3)+(-1,0)$);
\coordinate (Nn_1) at ($(Nn)+(240:1)$);
\coordinate (N3) at ($(Nn_1)+(-2,0)$);
\coordinate (N2) at ($(N3)+(-1,0)$);
\filldraw (Nn) circle[radius=2pt] node[above right]{\small{$n$}};
\filldraw (Nn+1) circle[radius=2pt] node[above]{\small{$n+1$}};
\filldraw (N2n_3) circle[radius=2pt] node[above]{\small{$2n-3$}};
\filldraw (N2n_2) circle[radius=2pt] node[above]{\small{$2n-2$}};
\filldraw (Nn_1) circle[radius=2pt] node[below]{\small{$n-1$}};
\filldraw (N3) circle[radius=2pt] node[below]{\small{$3$}};
\filldraw (N2) circle[radius=2pt] node[below]{\small{$2$}};

\draw[thick] (Nn) -- (Nn_1);
\draw[thick] (Nn_1) -- +(-.7,0);
\draw[dotted,thick] (Nn_1) -- (N3);
\draw[thick] (N3) -- +(.7,0);
\draw[thick] (N2) -- (N3);
\draw[thick] (Nn) -- (Nn+1);
\draw[thick] (Nn+1) -- +(-.7,0);
\draw[dotted,thick] (Nn+1) -- (N2n_3);
\draw[thick] (N2n_3) -- +(.7,0);
\draw[thick] (N2n_2) -- (N2n_3);

\coordinate (N0) at ($(N2)+(150:1)$);
\draw[draw=black, fill=white] (N0) circle[radius=2pt] node[below]{\small{$0$}};
\draw[thick] (N0)--(N2);

\coordinate (N1) at ($(N2)+(210:1)$);
\filldraw (N1) circle[radius=2pt] node[below]{\small{$1$}};
\draw[thick] (N1)--(N2);

\coordinate (N2n_1) at ($(N2n_2)+(150:1)$);
\filldraw (N2n_1) circle[radius=2pt] node[above]{\small{$2n-1$}};
\draw[thick] (N2n_1)--(N2n_2);

\coordinate (N2n) at ($(N2n_2)+(210:1)$);
\filldraw (N2n) circle[radius=2pt] node[above]{\small{$2n$}};
\draw[thick] (N2n)--(N2n_2);

\draw [<->, >=stealth, dashed, thin](Nn+1) .. controls ($(Nn+1)!.5!(Nn_1)+(-.2,0)$)   .. (Nn_1) ;
\draw [<->, >=stealth, dashed, thin](N2n_3) .. controls ($(N2n_3)!.5!(N3)+(-.2,0)$)   .. (N3) ;
\node[right] at ($(N2n_3)!.5!(N3)+(-.2,0)$){$\sigma$};
\draw [<->, >=stealth, dashed, thin](N2n_2) .. controls ($(N2n_2)!.5!(N2)+(-.2,0)$)   .. (N2) ;
\draw [<->, >=stealth, dashed, thin](N2n) .. controls ($(N2n)!.5!(N0)+(-.2,0)$)   .. (N0) ;
\draw [<->, >=stealth, dashed, thin](N2n_1) .. controls ($(N2n_1)!.5!(N1)+(-.7,0)$)   .. (N1) ;

\node at ($($(N2n_1)!.5!(N1)+(-.7,0)$)!.5!(Nn)+(0,2)$) {$\bmf{g}=D^{(1)}_{2n}$, $(n\geq 3)$};

\end{tikzpicture}

&

\begin{tikzpicture}
\coordinate (Nn) at (0,0);
\coordinate (Nn_1) at ($(Nn)+(-1,0)$);
\coordinate (N3) at ($(Nn_1)+(-2,0)$);
\coordinate (N2) at ($(N3)+(-1,0)$);
\filldraw (Nn) circle[radius=2pt] node[above]{\small{$n$}};
\filldraw (Nn_1) circle[radius=2pt] node[above]{\small{$n-1$}};
\filldraw (N3) circle[radius=2pt] node[above]{\small{$3$}};
\filldraw (N2) circle[radius=2pt] node[above right]{\small{$2$}};

\draw[->,-stealth,thick,double] (Nn) -- (Nn_1);
\draw[thick] (Nn_1) -- +(-.7,0);
\draw[dotted,thick] (Nn_1) -- (N3);
\draw[thick] (N3) -- +(.7,0);
\draw[thick] (N2) -- (N3);

\coordinate (N0) at ($(N2)+(150:1)$);
\draw[draw=black, fill=white] (N0) circle[radius=2pt] node[above right]{\small{$0$}};
\draw[thick] (N0)--(N2);

\coordinate (N1) at ($(N2)+(210:1)$);
\filldraw (N1) circle[radius=2pt] node[below right]{\small{$1$}};
\draw[thick] (N1)--(N2);

\node at ($($(N0)!.5!(N1)$)!.5!(Nn)+(0,2)$) {$\mf{g}=A^{(2)}_{2n-1}$, $(n\geq 3)$};

\end{tikzpicture}
\\
\\

\begin{tikzpicture}
\coordinate (N3) at (0,0);
\coordinate (N4) at ($(N3)+(1,0)$);
\coordinate (N2) at ($(N3)+(135:1)$);
\coordinate (N1) at ($(N2)+(-1,0)$);
\coordinate (N0) at ($(N1)+(-1,0)$);
\coordinate (N5) at ($(N3)+(225:1)$);
\coordinate (N6) at ($(N5)+(-1,0)$);
\coordinate (N7) at ($(N6)+(-1,0)$);
\draw[draw=black, fill=white] (N0) circle[radius=2pt] node[above right]{\small{$0$}};
\filldraw (N1) circle[radius=2pt] node[above]{\small{$1$}};
\filldraw (N2) circle[radius=2pt] node[above]{\small{$2$}};
\filldraw (N3) circle[radius=2pt] node[above]{\small{$3$}};
\filldraw (N4) circle[radius=2pt] node[below]{\small{$4$}};
\filldraw (N5) circle[radius=2pt] node[below]{\small{$5$}};
\filldraw (N6) circle[radius=2pt] node[below]{\small{$6$}};
\filldraw (N7) circle[radius=2pt] node[below]{\small{$7$}};

\draw[thick] (N0) -- (N1);
\draw[thick] (N1) -- (N2);
\draw[thick] (N2) -- (N3);
\draw[thick] (N3) -- (N4);
\draw[thick] (N3) -- (N5);
\draw[thick] (N5) -- (N6);
\draw[thick] (N6) -- (N7);

\draw [<->, >=stealth, dashed, thin](N2) .. controls ($(N2)!.5!(N5)+(-.2,0)$)   .. (N5) ;
\draw [<->, >=stealth, dashed, thin](N1) .. controls ($(N1)!.5!(N6)+(-.2,0)$)   .. (N6) ;
\node[left] at ($(N1)!.5!(N6)+(-.2,0)$){$\sigma$};
\draw [<->, >=stealth, dashed, thin](N0) .. controls ($(N0)!.5!(N7)+(-.2,0)$)   .. (N7) ;

\node at ($($(N0)!.5!(N7)+(-.8,0)$)!.5!(N4)+(0,1.5)$) {$\bmf{g}=E^{(1)}_{7}$};

\end{tikzpicture}

&

\begin{tikzpicture}
\coordinate (N3) at (0,0);
\coordinate (N4) at ($(N3)+(1,0)$);
\coordinate (N2) at ($(N3)+(-1,0)$);
\coordinate (N1) at ($(N2)+(-1,0)$);
\coordinate (N0) at ($(N1)+(-1,0)$);

\foreach \i in {1,...,4}{\filldraw (N\i) circle[radius=2pt] node[above]{\small{$\i$}};}
\draw[draw=black, fill=white] (N0) circle[radius=2pt] node[above]{\small{$0$}};

\draw[thick] (N0) -- (N1);
\draw[thick] (N1) -- (N2);
\draw[->,-stealth,thick,double] (N3) -- (N2);
\draw[thick] (N3) -- (N4);

\node at ($(N0)!.5!(N4)+(0,1.5)$) {$\mf{g}=E^{(2)}_{6}$};

\end{tikzpicture}
 \\
 \\

\begin{tikzpicture}
\coordinate (N2) at (0,0);
\coordinate (N4) at ($(N2)+(-1,0)$);
\coordinate (N0) at ($(N4)+(-1,0)$);
\coordinate (N3) at ($(N2)+(120:1)$);
\coordinate (N6) at ($(N3)+(-1,0)$);
\coordinate (N1) at ($(N2)+(240:1)$);
\coordinate (N5) at ($(N1)+(-1,0)$);

\foreach \i in {2,3,4,6}{\filldraw (N\i) circle[radius=2pt] node[above right]{\small{$\i$}};}
\draw[draw=black, fill=white] (N0) circle[radius=2pt] node[above right]{\small{$0$}};
\foreach \i in {1,5}{\filldraw (N\i) circle[radius=2pt] node[below right]{\small{$\i$}};}

\draw[thick] (N2) -- (N3);
\draw[thick] (N3) -- (N6);
\draw[thick] (N2) -- (N4);
\draw[thick] (N4) -- (N0);
\draw[thick] (N1) -- (N2);
\draw[thick] (N5) -- (N1);

\draw [->, >=stealth, dashed, thin](N6) .. controls ($(N6)!.5!(N0)+(-.2,0)$)   .. (N0) ;
\draw [->, >=stealth, dashed, thin](N0) .. controls ($(N0)!.5!(N5)+(-.2,0)$)   .. (N5) ;
\draw [<-, >=stealth, dashed, thin](N6) .. controls ($(N6)!.5!(N5)+(.3,0)$)   .. (N5) ;
\draw [->, >=stealth, dashed, thin](N3) .. controls ($(N3)!.5!(N4)+(-.2,0)$)   .. (N4) ;
\draw [->, >=stealth, dashed, thin](N4) .. controls ($(N4)!.5!(N1)+(-.2,0)$)   .. (N1) ;
\draw [<-, >=stealth, dashed, thin](N3) .. controls ($(N3)!.5!(N1)+(.3,0)$)   .. (N1) ;

\node[above left] at ($(N6)!.5!(N0)+(-.2,0)$){$\sigma$};

\node at ($(N0)!.5!(N2)+(0,1.7)$) {$\bmf{g}=E^{(1)}_{6}$};

\end{tikzpicture}

&

\begin{tikzpicture}
\tikzset{Rightarrow/.style={double equal sign distance,>={Implies},->},
triple/.style={-,preaction={draw,Rightarrow}},
quadruple/.style={preaction={draw,Rightarrow,shorten >=0pt},shorten >=1pt,-,double,double
distance=0.2pt}}

\coordinate (N2) at (0,0);
\coordinate (N1) at ($(N2)+(-1,0)$);
\coordinate (N0) at ($(N1)+(-1,0)$);

\draw[draw=black, fill=white] (N0) circle[radius=2pt] node[above]{\small{$0$}};
\filldraw (N1) circle[radius=2pt] node[above]{\small{$1$}};
\filldraw (N2) circle[radius=2pt] node[above]{\small{$2$}};

\draw[thick] (N0) -- (N1);
\draw[thick,->,-stealth,triple] (N2) -- (N1);

\node at ($(N0)!.5!(N2)+(0,1.7)$) {$\mf{g}=D^{(3)}_{4}$};
\end{tikzpicture}

\end{tabular}
\end{center}
\end{table}
\end{center}

\subsection{The untwisted affine Kac-Moody algebra $\bmf{g}$} 
Let $\bmf{g}$ one of the Kac--Moody algebra of  affine type in Table \ref{table:affine}, with Dynkin diagram labeled as in Table \ref{table:dynkin(r=1)} and Table \ref{table:dynkin(r>1)}. Note that $\bmf{g}$ is always an untwisted affine Lie algebra and it is simply-laced, namely its  Cartan matrix is symmetric.  Let $\bar{I}=\{0,\dots,\bar{n}\}$ denote the set of nodes of the Dynkin diagram of $\bmf{g}$. Let $\{\bar{f}_i,\bar{\alpha}^\vee_i,\bar{e}_i\,|\,i\in\bar{I}\}$  be Chevalley generators of $\bmf{g}$, satisfying the relations ($i,j\in\bar{I}$):
$$[\bar{\alpha}^\vee_i,\bar{\alpha}^\vee_j]=0,\quad[\bar{\alpha}^\vee_i,\bar{e}_j]=\bar{C}_{ij}\bar{e}_j,\quad [\bar{\alpha}^\vee_i,\bar{f}_j]=-\bar{C}_{ij}\bar{f}_j,\quad [\bar{e}_i,\bar{f}_j]=\delta_{ij}\bar{\alpha}^\vee_i,$$
where $\bar{C}=(\bar{C}_{ij})_{i,j\in\bar{I}}$ is the Cartan matrix of $\bmf{g}$. We define
\beq\label{220717-3}
f=\sum_{i\in\bar{I}}\bar{f}_i\in\bmf{g}.
\eeq
Fix a Cartan subalgebra $\bmf{h}$ of $\bmf{g}$ such that $\{\bar{\alpha}^\vee_i\,|\,i\in\bar{I}\}\subset\bmf{h}$ is the set of simple coroots and denote by $\{\bar{\alpha}_i\,|\,i\in\bar{I}\}\subset\bmf{h}^\ast$ the corresponding set of simple roots, such that $\langle\bar{\alpha}^\vee_i,\bar{\alpha}_j\rangle=\bar{C}_{ij}$. Fix a scaling element $\bar{\scaling}\in\bmf{h}$ satisfying $\langle\bar{\scaling},\bar{\alpha}_i\rangle=\delta_{i0}$. Then 
\beq\label{220718-3}
\bmf{h}=\langle\bar{\alpha}^\vee_0,\dots,\bar{\alpha}^\vee_{\bar{n}},\bar{\scaling}\rangle. 
\eeq
Let $\{\bar{a}_i\,|\,i\in\bar{I}\}$ be Kac's lables of $\bmf{g}$ (these are equal to the dual Kac's labels, since $\bmf{g}$ is simply laced), namely a set of positive integers defined by the relations
\beq\label{220729-1}
\sum_{j\in\bar{I}}\bar{C}_{ij}\bar{a}_j=0,\qquad i\in\bar{I},
\eeq  
and which we normalize by setting $\bar{a}_0=1$. The element of $\bmf{h}$ given by
\beq\label{220718-1}
K=\sum_{i\in\bar{I}}\bar{a}_i\bar{\alpha}^\vee_i
\eeq
is the canonical central element of $\bmf{g}$. Denote by $\bmf{n}^+$ (resp. $\bmf{n}^-$) the subalgebra of $\bmf{g}$ generated by $\{\bar{e}_i\,|\,i\in\bar{I}\}$ (resp. $\{\bar{f}_i\,|\,i\in\bar{I}\}$), so that $\bmf{g}$ admits the triangular decomposition
\beq\label{220717-6}
\bmf{g}=\bmf{n}^-\oplus\bmf{h}\oplus\bmf{n}^+.
\eeq
Denote by $\bmf{b}_\pm=\bmf{h}\oplus\bmf{n}_\pm$ the corresponding Borel subalgebras. Let $\bar{\Delta}\subset\bmf{h}^\ast$ be the set of roots of $\bmf{g}$, so that
\beq\label{220728-2}
\bmf{g}=\bmf{h}\oplus\bigoplus_{\alpha\in\bar{\Delta}}\bmf{g}_\alpha,
\eeq
with $\bmf{g}_\alpha$ the corresponding root space. Let $\{\bar{\omega}^\vee_i\,|\,i\in\bar{I}\}\subset\bmf{h}$ be fundamental coweights of $\bmf{g}$, satisfying $\langle \bar{\omega}^\vee_i,\bar{\alpha}_j\rangle=\delta_{ij}$ ($i,j\in\bar{I}$). In particular, we set $\bar{\omega}^\vee_0=\bar{\scaling}$. Consider the element
\beq\label{220717-5}
\rho^\vee=\sum_{i\in\bar{I}}\bar{\omega}^\vee_i\in P^\vee({\bmf{g}}).
\eeq
The adjoint action of $\rho^\vee$ defines the principal gradation on $\bmf{g}$:
\beq\label{220722-2}
\bmf{g}=\bigoplus_{j\in\bb{Z}}\bmf{g}^j,\qquad \bmf{g}^j=\{x\in\bmf{g}\,|\,[\rho^\vee,x]=jx\}.
\eeq
In particular,
\beq\label{2207118-5}
\bmf{g}^{-1}=\bigoplus_{j\in\bar{I}}\bb{C}\bar{f}_j,
\eeq
so that $f$ defined in \eqref{220717-3} belongs to $\bmf{g}^{-1}$. For $k\in\bb{Z}$ we denote
\beq\label{220722-3}
\left(\bmf{g}\right)^{\geq k}=\bigoplus_{j\geq k}\bmf{g}^j,
\eeq
from which it follows that $\left(\bmf{g}\right)^{\geq 1}=\bmf{n}^+$, while $\left(\bmf{g}\right)^{\geq 0}=\bmf{h}\oplus\bmf{n}^+=\bmf{b}^+$  and
\beq\label{220718-2}
\left(\bmf{g}\right)^{\geq-1}=\left(\bigoplus_{j\in\bar{I}}\bb{C}\bar{f}_j\right)\oplus \bmf{b}^+.
\eeq

\subsection{Dynkin diagram automorphism of $\bmf{g}$ and the folded affine algebra $\mf{g}$}
Let $\sigma$ be a Dynkin diagram automorphism of $\bmf{g}$, namely a permutation of the set $\bar{I}$ such that $\bar{C}_{\sigma(i)\sigma(j)}=\bar{C}_{ij}$. Extend $\sigma$ to a Lie algebra automorphism (still denoted by $\sigma$) $\sigma\in\Aut(\bmf{g})$ defined on Chevalley generators by ($i\in\bar{I}$) 
\beq\label{220723-2}
\sigma(\bar{e}_i)=\bar{e}_{\sigma(i)},\quad \sigma(\bar{\alpha}^\vee_i)=\bar{\alpha}^\vee_{\sigma(i)},\quad \sigma(\bar{f}_i)=\bar{f}_{\sigma(i)}.
\eeq
The automorphism $\sigma$ induces the following gradation on $\bmf{g}$:
\beq\label{220715-2}
\bmf{g}=\bigoplus_{\ell=0}^{r-1}\bmf{g}_\ell,\qquad \bmf{g}_\ell=\left\{x\in \bmf{g}\,|\,\sigma(x)=\e^\ell x\right\}
\eeq
where
\beq\label{epsilon}
\e=e^\frac{2\pi i}{r}
\eeq
and $r$ is the order of $\sigma$.
\begin{remark}
Dynkin diagram automorphisms of affine Kac-Moody algebras, whose folded diagram is the Dynkin diagram of an affine Kac-Moody algebra were considered by Liu, Wu, Zhang and Zhou \cite{lwzz20}. It was proved in the same paper that the fixed point subalgebra $\bmf{g}_0$ does not need to be isomorphic to the affine algebra $\mf{g}$ associated to the folded diagram. Indeed, one has $\bmf{g}_0=\mf{g}\oplus\mc{H}$, where $\mc{H}$ is either $\{0\}$ or a proper vector subspace of the principal Heisenberg subalgebra of $\bmf{g}$.  The subspace $\mc{H}$ plays no role in the present construction, since (as shown below) the $\bmf{g}_0-$part of the opers we are interested in lies in $\mf{g}$.
\end{remark}
 The folded affine Lie algebra $\mf{g}$ obtained from the Dynkin diagram of $\bmf{g}$ is an affine algebra of type $\tmf{g}^{(r)}$ and therefore it is  twisted if $r>1$, see Table \ref{table:dynkin(r>1)}. The structure of $\mf{g}$ can be obtained from that of $\bmf{g}$ as follows. We denote by $\bar{I}_j\subset\bar{I}$ the $\sigma$-orbit of $j\in\bar{I}$, and let $n+1$ be the number of orbits. Due to the numbering of the Dynkin diagrams in Table \ref{table:dynkin(r>1)} we have that the nodes $0,\dots,n$ lie in different orbits, so we can represent the set of orbits by $I=\{0\dots,n\}\subset \bar{I}$. For $i\in\bar{I}$ we denote by $\langle i\rangle\in\bb{Z}^+$ the cardinality of the $\sigma$-orbit $\bar{I}_i$. The Cartan matrix $C=(C_{ij})_{i,j\in I}$ of $\mf{g}$ can be obtained summing over the columns of $\bar{C}$ along the orbits of $\sigma$:
$$C_{ij}=\sum_{\ell=1}^{\langle i\rangle}\bar{C}_{\sigma^\ell(i)j},\qquad i,j\in I.$$
 The elements
\beq\label{220717-4}
e_i=\sum_{\ell=1}^{\langle i\rangle}\bar{e}_{\sigma^\ell(i)},\quad \alpha^\vee_i=\sum_{\ell=1}^{\langle i\rangle}\bar{\alpha}^\vee_{\sigma^\ell(i)},\quad f_i=\sum_{\ell=1}^{\langle i\rangle}\bar{f}_{\sigma^\ell(i)},\qquad i\in I.
\eeq
satisfy the relations ($i,j\in I$)
$$[\alpha^\vee_i,\alpha^\vee_j]=0,\quad[\alpha^\vee_i,e_j]=C_{ij}e_j,\quad [\alpha^\vee_i,f_j]=-C_{ij}f_j,\quad [e_i,f_j]=\delta_{ij}\alpha^\vee_i,$$
together with the Serre relations \cite{lwzz20}, and are therefore Chevalley generators of $\mf{g}$. In addition, the elements
\beq\label{221026-1}
\alpha_i=\frac{1}{\langle i\rangle}\sum_{\ell=1}^{\langle i\rangle}\bar{\alpha}_{\sigma^\ell(i)},\qquad i\in I
\eeq
are simple roots of $\mf{g}$, namely they satisfy $\langle \alpha_i^\vee,\alpha_j\rangle=C_{ij}$ ($i,j\in I$).  The following elementary result will be useful later:
\begin{lemma}\label{lemma220729-2}
Recall the Kac's labels $\{\bar{a}_i\,|\,i\in\bar{I}\}$ of $\bmf{g}$. The quantities
$$a_i=
\frac{\langle i\rangle}{r}\bar{a}_i,\qquad a^\vee_i=
\bar{a}_i,\qquad i\in I$$
are, respectively, Kac's labels and dual Kac's labels of $\mf{g}$, satisfying the relations \eqref{220729-2/0}. In particular, $a_0=a^\vee_0=1$. 
\end{lemma}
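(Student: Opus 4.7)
\textbf{Plan of proof for Lemma \ref{lemma220729-2}.}
The strategy is to directly substitute the proposed expressions for $a_i$ and $a_i^\vee$ into the defining relations \eqref{220729-2/0} for Kac labels of $\mf{g}$, and reduce the resulting sums to the analogous relation \eqref{220729-1} for $\bmf{g}$ by summing along $\sigma$-orbits. The normalization $a_0=a_0^\vee=1$ is then an inspection of Tables \ref{table:dynkin(r=1)}--\ref{table:dynkin(r>1)}.

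First I would treat the \emph{dual} labels. Using $C_{ij}=\sum_{\ell=1}^{\langle i\rangle}\bar{C}_{\sigma^\ell(i),j}$ and the fact that $\bar{a}_k$ is constant on $\sigma$-orbits (since $\sigma$ is a diagram automorphism and $\bar{a}$ is the unique positive null vector of $\bar{C}$ normalized by $\bar a_0=1$), I rewrite
\begin{equation*}
\sum_{i\in I}a_i^\vee C_{ij}=\sum_{i\in I}\bar{a}_i\sum_{\ell=1}^{\langle i\rangle}\bar{C}_{\sigma^\ell(i),j}
=\sum_{i\in I}\sum_{\ell=1}^{\langle i\rangle}\bar{a}_{\sigma^\ell(i)}\bar{C}_{\sigma^\ell(i),j}
=\sum_{k\in\bar I}\bar{a}_k\bar{C}_{k,j}=0,
\end{equation*}
where the last equality uses the bijection $(i,\ell)\mapsto \sigma^\ell(i)$ between pairs with $i\in I$, $1\le \ell\le\langle i\rangle$ and elements of $\bar I$, followed by \eqref{220729-1} (since $\bar C$ is symmetric, $\sum_k\bar a_k\bar C_{k,j}=\sum_k \bar C_{j,k}\bar a_k=0$).

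Next I would treat the labels $a_i=\tfrac{\langle i\rangle}{r}\bar a_i$. The key preliminary observation is that the quantity $C_{j,k}:=\sum_{\ell=1}^{\langle j\rangle}\bar{C}_{\sigma^\ell(j),k}$, extended to arbitrary $k\in\bar I$, is constant on the $\sigma$-orbit of $k$: the substitution $k\mapsto \sigma(k)$ together with $\bar C_{\sigma a,\sigma b}=\bar C_{a,b}$ shifts the summation index $\ell$ by $-1$, which is absorbed by periodicity modulo $\langle j\rangle$. Using this, together with $\langle i\rangle\,\bar a_i=\sum_{m=0}^{\langle i\rangle-1}\bar a_{\sigma^m(i)}$, I compute
\begin{equation*}
\sum_{i\in I}C_{ji}\,a_i=\frac{1}{r}\sum_{i\in I}\sum_{m=0}^{\langle i\rangle-1}C_{j,\sigma^m(i)}\,\bar a_{\sigma^m(i)}
=\frac{1}{r}\sum_{k\in\bar I}C_{j,k}\,\bar a_k
=\frac{1}{r}\sum_{\ell=1}^{\langle j\rangle}\sum_{k\in\bar I}\bar C_{\sigma^\ell(j),k}\,\bar a_k=0,
\end{equation*}
again by \eqref{220729-1} applied to the node $\sigma^\ell(j)\in\bar I$.

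Finally, for the normalization, $a_0^\vee=\bar a_0=1$ is immediate. For $a_0=\tfrac{\langle 0\rangle}{r}$ one checks by inspection of Tables \ref{table:dynkin(r=1)} and \ref{table:dynkin(r>1)} that in every case the $\sigma$-orbit of the node $0$ has cardinality exactly $r$ (trivial when $r=1$; for $r>1$ this is visible directly from the indicated arrows on the Dynkin diagrams of $\bmf{g}$). No step is expected to pose any real obstacle: the only mildly subtle point is the orbit-constancy of $C_{j,k}$, which is a routine consequence of $\sigma$-invariance of $\bar C$.
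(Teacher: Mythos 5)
Your proof is correct and follows essentially the same route as the paper's: both reduce the defining relations for $\mf{g}$ to \eqref{220729-1} by summing along $\sigma$-orbits, using the $\sigma$-invariance of $\bar{C}$ and of $\bar{a}$, an index shift along the orbit (your "orbit-constancy of $C_{j,k}$" is the paper's inline substitution $\sigma^{m}(i),\sigma^{\ell}(j)\mapsto\sigma^{m-\ell}(i),j$), and a case-by-case check that $\langle 0\rangle=r$. The only cosmetic difference is that you write out the dual-label case explicitly (via symmetry of $\bar{C}$), where the paper simply says the argument is similar.
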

\begin{proof}
From the relation \eqref{220729-1} and the invariance of $\bar{C}$ under $\sigma$ it follows that the $\bar{a}_i$'s are $\sigma-$invariant: $\bar{a}_{\sigma(i)}=\bar{a}_i$ ($i\in\bar{I}$). Moreover, from \eqref{220729-1} for $i\in\bar{I}$ one has
$$0=\sum_{j\in\bar{I}}\bar{C}_{ij}\bar{a}_j=\sum_{j\in I}\sum_{\ell=1}^{\langle j \rangle}\bar{C}_{i,\sigma^\ell(j)}\bar{a}_{\sigma^\ell(j)}=\sum_{j\in I}\left(\sum_{\ell=1}^{\langle j \rangle}\bar{C}_{i,\sigma^\ell(j)}\right)\bar{a}_{j},$$
and summing over the $i$-th orbit we get ($i\in I$):
\begin{align*}
0&=\sum_{m= 1}^{\langle i\rangle}\sum_{j\in\bar{I}}\bar{C}_{\sigma^m(i),j}\bar{a}_j=\sum_{m= 1}^{\langle i\rangle}\sum_{j\in I}\left(\sum_{\ell=1}^{\langle j \rangle}\bar{C}_{\sigma^m(i),\sigma^\ell(j)}\right)\bar{a}_{j}\\
&=\sum_{j\in I}\sum_{\ell=1}^{\langle j \rangle}\left(\sum_{m= 1}^{\langle i\rangle}\bar{C}_{\sigma^{m-\ell}(i),j}\right)\bar{a}_{j}=\sum_{j\in I} C_{ij}\langle j \rangle\bar{a}_{j}.
\end{align*}
By a direct (case-by-case) inspection we obtain that $\langle 0\rangle=r$, so that the quantities $a_j=\langle j \rangle \bar{a}_j/r$ ($j\in I$) satisfy \eqref{220729-2/0} and $a_0=1$. For the $a^\vee_i$'s the proof is similar.
\end{proof}
We set
\beq\label{221021-1}
D_i=\frac{a_i}{a^\vee_i},\qquad i\in I.
\eeq
Due to the above lemma, this is consistent with \eqref{Di/0}. We now show that certain elements of $\bmf{g}$ introduced above are $\sigma$-invariant, and - moreover - they belong to $\mf{g}\subset\bmf{g}$. 
\begin{enumerate}[i)]
\item The element $f\in\bmf{g}$ defined in \eqref{220717-3} is $\sigma$-invariant (i.e. $\sigma(f)=f$), and recalling the elements $f_i\in\mf{g}$ introduced in \eqref{220717-4} we have
\beq\label{220721-1}
f=\sum_{i\in\bar{I}}\bar{f}_i=\sum_{j\in I}\left(\sum_{\ell=1}^{\langle j\rangle}\bar{f}_{\sigma^\ell(j)}\right)=\sum_{j\in I}f_j
\eeq
from which it follows that $f\in\mf{g}$. In particular, it coincides with \eqref{f}.
\item Similarly, the canonical central element $K\in\bmf{h}$ of $\bmf{g}$ given by \eqref{220718-1} is $\sigma$-invariant ($\sigma(K)=K$) and, using Lemma \ref{lemma220729-2}, it turns out to be equal to the canonical central element $K\in\mf{h}$ of $\mf{g}$:
\beq
K=\sum_{i\in I}a^\vee_i \alpha^\vee_i.
\eeq
\item  The fundamental coweights $\{\omega^\vee_i\,|\,i\in I\}$ of $\mf{g}$ can be written in terms of those of $\bmf{g}$ as
\beq\label{221026-4}
\omega^\vee_i=\sum_{\ell=1}^{\langle i\rangle}\bar{\omega}^\vee_{\sigma^\ell(i)},\qquad i\in I,
\eeq
and they satisfy $\langle \omega_i^\vee,\alpha_j\rangle=\delta_{ij}$ ($i,j\in I$). We denote the coweight lattice and the coroot lattice of $\mf{g}$, respectively, as 
$$P^\vee(\mf{g})=\bigoplus_{i\in I}\bb{Z}\omega^\vee_i,\qquad Q^\vee(\mf{g})=\bigoplus_{i\in I}\bb{Z}\alpha^\vee_i.$$
Recalling that $\langle 0\rangle=r$, we denote 
\beq\label{220718-4}
\scaling=\omega^\vee_0=\sum_{\ell=1}^{r}\bar{\omega}^\vee_{\sigma^\ell(0)}=\bar{\scaling}+\sum_{\ell=2}^{r}\bar{\omega}^\vee_{\sigma^\ell(0)},
\eeq
where $\bar{\scaling}$ is the scaling element of $\bmf{g}$. The element $\scaling\in P^\vee(\mf{g})$, is manifestly $\sigma-$invariant, satisfies $\langle\scaling,\alpha_j\rangle=\delta_{0j}$ ($j\in I$), and it is therefore a scaling element of $\mf{g}$. Due to \eqref{220718-4} we can write the Cartan subalgebra \eqref{220718-3} of $\bmf{h}$ equivalently as
\beq\label{220718-5}
\bmf{h}=\langle\bar{\alpha}^\vee_0,\dots,\bar{\alpha}^\vee_{\bar{n}},\scaling\rangle. 
\eeq
The Cartan subalgebra $\mf{h}=\bmf{h}\cap\mf{g}$ of $\mf{g}$ is thus given by 
\beq\label{220720-7}
\mf{h}=\langle\alpha^\vee_0,\dots,\alpha^\vee_n,\scaling\rangle,
\eeq
where the $\alpha^\vee_i$'s are given by \eqref{220717-4}.
\item The element $\rho^\vee\in \bmf{h}$ defined by \eqref{220717-5} is fixed by $\sigma$ (i.e. $\sigma(\rho^\vee)=\rho^\vee$) and it can be written as the sum of the fundamental coweights of $\mf{g}$:
\beq\label{220721-4}
\rho^\vee=\sum_{i\in I}\omega^\vee_i.
\eeq
In particular, $\rho^\vee\in P^\vee(\mf{g})\subset \mf{h}$. 
\end{enumerate}

\subsection{The simple Lie algebras $\tmf{g}$ and $\rmf{g}$ as subalgebras of $\bmf{g}$.}
We can recover the simple Lie algebras  $\tmf{g}$ and $\rmf{g}$ introduced in Section \ref{sec:liealgebra1} as subalgebras of $\bmf{g}$ and $\mf{g}$. First, note that we can write the set $\tilde{I}$, the nodes of $\tmf{g}$ as $\tilde{I}=\bar{I}\setminus\bar{I}_0$. In other words, $\tilde{I}$ is the subset of nodes in the Dynkin diagram of $\bmf{g}$ with the orbit of the $0-$th node removed.  The subalgebra of $\bmf{g}$ generated by $\{\bar{e}_i,\bar{\alpha}^\vee_i,\bar{f}_i\,|\,i\in\tilde{I}\}$ is isomorphic to the simple finite dimensional Lie algebra $\tmf{g}$; in particular, $\bar{C}_{ij}=\tilde{C}_{ij}$ ($i,j\in\tilde{I}$). We fix this isomorphims by identifying $\tilde{e}_i=\bar{e}_i$, $\tilde{\alpha_i}^\vee=\bar{\alpha}_i^\vee$, $\tilde{f}_i=\bar{f}_i$ ($i\in\tilde{I}$). Since we obtained $\rmf{g}$ as a subalgebra of $\tmf{g}$ we obtain the following inclusions of Lie subalgebras:
$$\rmf{g}\subseteq \tmf{g}\subset\bmf{g},$$
where $\rmf{g}= \tmf{g}$ in the $r=1$ case. On the other hand, the Dynkin diagram of the simple Lie algebra $\rmf{g}$ can also be obtained from the Dynkin diagram of the affine algebra $\mf{g}$ by removing the $0-$th vertex ($\mr{I}=I\setminus\{0\}$) so that we can view   $\rmf{g}$ as a subalgebra of $\mf{g}$. Since $\mf{g}$ is a subalgebra of $\tmf{g}$, we obtain another set of inclusions
$$\rmf{g}\subset \mf{g}\subseteq\bmf{g},$$
where $\mf{g}=\bmf{g}$ in the $r=1$ case. The above inclusions are consistent with the action of $\sigma$ and with the triangular decomposition: the restriction of the Dynkin diagram automorphism $\sigma\in\Aut(\bmf{g})$ to the subalgebra $\rmf{g}$ coincides with the Dynkin diagram automorphism $\sigma\in\Aut(\tmf{g})$ introduced in Section \ref{sec:liealgebra1}, and we have the inclusions
\begin{gather}
\rmf{n}^\pm\subset\tmf{n}^\pm\subset \bmf{n}^\pm,\qquad \rmf{h}\subset\tmf{h}\subset \bmf{h},\label{231009-3}\\
\rmf{n}^\pm\subset\mf{n}^\pm\subset \bmf{n}^\pm,\qquad \rmf{h}\subset\mf{h}\subset \bmf{h}.
\end{gather}
It is a result of Kostant \cite{Kos59} that the Borel subalgebra $\rmf{b}^+=\rmf{h}\oplus \rmf{n}^+$ of $\rmf{g}$ admits the decomposition
\beq\label{230913-4}
\rmf{b}^+=[\mr{f},\rmf{n}^+]\oplus \mf{s},
\eeq
where $\mf{s}$ is a $\ad_{\mr{\rho}^\vee}-$invariant subspace, known as transversal subspace, of dimension $n=\rank{\rmf{g}}$. The choice of $\mf{s}$ is not unique, and we will consider in the present paper a fixed but otherwise arbitrary transversal subspace. Recalling the unipotent group \eqref{221108-1/0}, there exists an isomorphism of affine varieties
\begin{equation}\label{eq:231114-2}
\mr{\mc{N}}\times(\mr{f}+\mf{s})\longrightarrow \mr{f}+\rmf{b}^+.
\end{equation}
We finally recall that if $\{\mr{\omega}^\vee_i | i\in\mr{I}\}$ be fundamental coweights of $\rmf{g}$, defined by the relations $\langle\mr{\omega}^\vee_j,\alpha_j\rangle=\delta_{ij}$ ($i,j\in\mr{I}$), then for $i\in\mr{I}$ one has $\omega^\vee_i=\mr{\omega}^\vee_i+a_i\scaling$ where the $\omega^\vee_i$s are fundamental coweights of $\mf{g}$ given in \eqref{221026-4} and $\scaling$ is the scaling element \eqref{220718-4} of $\mf{g}$.  The relation between the elements \eqref{mrrho} and \eqref{220721-4} is therefore given by
\beq\label{220721-4/1}
\rho^\vee=\mr{\rho}^\vee+h\scaling,
\eeq
where $h$ is the Coxeter number of $\mf{g}$.

\subsection{A parabolic subalgebra of $\bmf{g}$.}
Let $\tmf{n}^+$ be the nilpotent Lie algebra defined by \eqref{triangulartildeg}. Due to \eqref{231009-3} we can view $\tmf{n}^+$ as a subalgebra of $\bmf{n}^+$, and more precisely as the subalgebra generated by the elements $\{\tilde{e}_i=\bar{e}_i|i\in\tilde{I}\}$. Given the triangular decomposition \eqref{220717-6} of $\bmf{g}$, and recalling that $\bmf{b}^-=\bmf{n}^-\oplus\bmf{h}$, we define the following subalgebra of $\bmf{g}$:
\beq\label{220717-1}
\bmf{p}=\bmf{b}^-\oplus\tmf{n}^+.
\eeq
This is a parabolic subalgebra of $\bmf{g}$, standard with respect to the Borel subalgebra $\bmf{b}^-$, and it is stable under the action of $\sigma$:
$$x\in\bmf{p}\quad\Longleftrightarrow\quad \sigma x\in \bmf{p}.$$
In view of the application to the theory of opers, we are interested only in the intersection of $\bmf{p}$ with the subspace \eqref{220718-2} of $\bmf{g}$.
\begin{lemma}\label{220720-2}
\begin{enumerate}[i)]
\item  We have
\beq\label{220722-1}
\bmf{p}\cap\left(\bmf{g}\right)^{\geq-1}=\bmf{g}^{-1}\oplus\bmf{h}\oplus\tmf{n}^+,
\eeq
where $\bmf{g}^{-1}$ is given by \eqref{2207118-5} and $\bmf{h}$ is the Cartan subalgebra \eqref{220718-5} of $\bmf{g}$. In particular, $\bmf{p}\cap\left(\bmf{g}\right)^{\geq-1}$ is a finite dimensional $\bb{C}-$vector space. 
\item The Cartan subalgebra $\bmf{h}$ of $\bmf{g}$ decomposes as
\beq\label{220720-3}
\bmf{h}=\bb{C}\scaling\oplus\bb{C}\tmf{h}\oplus \bmf{h}',
\eeq
where 
\beq\label{221004-1}
\bmf{h}'=\langle\bar{\alpha}^\vee_{\sigma^\ell(0)}\,|\,\ell=1,\dots,r\rangle,
\eeq
and $\tmf{h}$ is the Cartan subalgebra \eqref{220718-6/0} of $\tmf{g}$.
\end{enumerate}
\end{lemma}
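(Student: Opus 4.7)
For part (i), the strategy is a direct computation using the explicit triangular decompositions already fixed. Writing $\bmf{p} = \bmf{n}^-\oplus \bmf{h}\oplus \tmf{n}^+$, and recalling from \eqref{220722-2}--\eqref{2207118-5} that the principal gradation gives
\[
(\bmf{g})^{\geq -1} = \bmf{g}^{-1} \oplus \bmf{g}^{0} \oplus \bigoplus_{j\geq 1} \bmf{g}^{j} = \bmf{g}^{-1}\oplus \bmf{h}\oplus \bmf{n}^+,
\]
I would intersect the two decompositions component-by-component. The $\bmf{h}$-summand lies entirely in $\bmf{g}^0$, hence in $(\bmf{g})^{\geq -1}$. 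The summand $\tmf{n}^+ \subset \bmf{n}^+$ survives in full. The only nontrivial point is the $\bmf{n}^-$-part: since $\bmf{n}^-$ has only strictly negative principal degrees, its intersection with $(\bmf{g})^{\geq -1}$ is exactly the degree $-1$ component, which is $\bmf{g}^{-1}=\bigoplus_{i\in\bar I}\bb{C}\bar f_i$. Finite dimensionality is immediate: $\bmf{g}^{-1}$ has dimension $|\bar I|=\bar n+1$, $\bmf{h}$ has dimension $\bar n+2$, and $\tmf{n}^+$ is finite-dimensional since $\tmf{g}$ is.

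For part (ii), the right-hand side is a sum of three subspaces of $\bmf{h}$ of respective dimensions $1$, $|\tilde I|=\bar n+1-r$ and $r$ (the three index sets are disjoint, and the coroots $\bar\alpha_i^\vee$ are linearly independent). The total is $\bar n+2 = \dim\bmf{h}$, so it suffices to show that the sum is direct, and for this in turn it suffices to prove
\[
\scaling \;\notin\; \tmf{h}\oplus\bmf{h}' \;=\; \langle\bar\alpha_i^\vee \mid i\in\bar I\rangle.
\]
Suppose for contradiction that $\scaling=\sum_{i\in\bar I} c_i\bar\alpha_i^\vee$. Pairing with $\bar\alpha_j$ for $j\in\bar I$ and using the computation (which I would carry out using \eqref{220718-4} and $\langle\bar\omega_i^\vee,\bar\alpha_j\rangle=\delta_{ij}$)
\[
\langle\scaling,\bar\alpha_j\rangle = \begin{cases}1 & j\in\bar I_0,\\ 0 & j\notin\bar I_0,\end{cases}
\]
gives the linear system $\sum_i c_i \bar C_{ij} = \mathbbm{1}_{\bar I_0}(j)$. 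Contracting this identity against the Kac labels $\bar a_j$ and using the null-vector relation $\sum_j \bar C_{ij}\bar a_j=0$ collapses the left-hand side to $0$, while the right-hand side becomes $\sum_{j\in\bar I_0}\bar a_j = r\bar a_0 = r>0$ (by the $\sigma$-invariance of the labels shown in Lemma \ref{lemma220729-2}); contradiction.

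The main delicate point is the computation of $\langle\scaling,\bar\alpha_j\rangle$, where one must track the contribution of the extra fundamental coweights $\bar\omega^\vee_{\sigma^\ell(0)}$ entering \eqref{220718-4}; everything else is straightforward dimension counting and direct intersection of subspaces. Once the contradiction is in hand, the dimension identity $\dim(\bb{C}\scaling + \tmf{h} + \bmf{h}') = \bar n+2 = \dim\bmf{h}$ closes the argument.
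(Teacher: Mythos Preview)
Your proof is correct. Part~(i) matches the paper's argument exactly. For part~(ii) you take a slightly different route: the paper simply invokes the basis \eqref{220718-5} of $\bmf{h}$ (already recorded just before the lemma as a consequence of \eqref{220718-3} and \eqref{220718-4}), and then observes that the $\bar n+1$ simple coroots $\bar\alpha_i^\vee$ partition according to $\bar I=\tilde I\sqcup\bar I_0$, giving $\tmf{h}$ and $\bmf{h}'$ respectively; together with $\scaling$ this is the decomposition. You instead give a self-contained argument, proving $\scaling\notin\langle\bar\alpha_i^\vee\mid i\in\bar I\rangle$ directly via the Kac-label null-vector contradiction. This buys you independence from the earlier (unproved in detail) passage from \eqref{220718-3} to \eqref{220718-5}, at the price of a few extra lines; the paper's version is shorter precisely because that passage is already on record.
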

\begin{proof}
\begin{enumerate}[i)]
\item Obvious from \eqref{220718-2} and \eqref{2207118-5}, and the fact that $\tmf{n}^+\subset\bmf{b}^+$.
\item The Cartan subalgebra $\bmf{h}$ has a basis given by \eqref{220718-5}.  By definition, the coroots in the $\sigma$-orbit of $0$ are a basis of $\bmf{h}'$, while from \eqref{220718-6/0} the remainig coroots form a basis of $\tmf{h}$.
\end{enumerate}
\end{proof}
\begin{proposition}\label{220720-4}
The subalgebra $\bmf{h}\oplus\tmf{n}^+$ of $\bmf{g}$ decomposes with respect to the gradation \eqref{220715-2} as
\beq\label{gradml}
\bmf{h}\oplus\tmf{n}^+=\bigoplus_{\ell=0}^{r-1}\mf{m}_\ell,
\eeq
where $\mf{m}_\ell=(\bmf{h}\oplus\tmf{n}^+)\cap\bmf{g}_\ell$ is given by
$$
\mf{m}_\ell=
\begin{cases}
\mf{h}\oplus\rmf{n}^+,& \ell=0\\
\bmf{h}'_\ell\oplus\tmf{b}^+_\ell,& \ell\geq 1.
\end{cases}
$$
In the formula above, $\mf{h}$ is the Cartan subalgebra \eqref{220720-7} of $\mf{g}$ and $\rmf{n}^+$ is given by \eqref{220720-1/0}.  Moreover,   $\bmf{h}'_\ell=\bmf{h}'\cap\bmf{g}_\ell$, where $\bmf{h}'$ is given by \eqref{221004-1}, and $\tmf{b}^+_\ell=\tmf{b}^+\cap\tmf{g}_\ell$. 
\end{proposition}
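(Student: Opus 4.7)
The plan is as follows. First I would observe that both $\bmf{h}$ and $\tmf{n}^+$ are $\sigma$-stable: $\bmf{h}$ because $\sigma$ permutes the simple coroots, and $\tmf{n}^+$ because $\sigma$ preserves the set $\tilde{I}=\bar{I}\setminus\bar{I}_0$ setwise (as $\sigma$ permutes orbits, and in particular preserves $\bar{I}_0$). Consequently $\bmf{h}\oplus\tmf{n}^+$ is $\sigma$-stable and inherits the gradation \eqref{220715-2}, so the direct sum decomposition $\bmf{h}\oplus\tmf{n}^+=\bigoplus_{\ell=0}^{r-1}\mf{m}_\ell$ with $\mf{m}_\ell=(\bmf{h}\oplus\tmf{n}^+)\cap\bmf{g}_\ell$ is immediate. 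Moreover, since $\bmf{h}\cap\tmf{n}^+=0$, each intersection splits as $\mf{m}_\ell=(\bmf{h}\cap\bmf{g}_\ell)\oplus(\tmf{n}^+\cap\bmf{g}_\ell)$, and it remains to identify these two summands for each $\ell$.

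Next I would analyze the Cartan part using the decomposition $\bmf{h}=\bb{C}\scaling\oplus\tmf{h}\oplus\bmf{h}'$ from Lemma \ref{220720-2}. Note that $\scaling$ is $\sigma$-invariant (shown in \eqref{220718-4}), the subspace $\tmf{h}$ is $\sigma$-stable with $\tmf{h}\cap\bmf{g}_0=\rmf{h}$ by \eqref{220720-1/0}, and $\bmf{h}'$ is $\sigma$-stable with $\sigma$ cyclically permuting the basis $\{\bar{\alpha}^\vee_{\sigma^\ell(0)}\}_{\ell=1}^r$. Hence $(\bmf{h}')^\sigma=\bb{C}\bigl(\sum_{\ell=1}^r\bar{\alpha}^\vee_{\sigma^\ell(0)}\bigr)=\bb{C}\alpha^\vee_0$ by \eqref{220717-4}, giving
\[
\bmf{h}\cap\bmf{g}_0=\bb{C}\scaling\oplus\rmf{h}\oplus\bb{C}\alpha^\vee_0=\mf{h},
\]
where the last equality is \eqref{220720-7}. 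For $\ell\geq1$, since $\scaling$ is invariant the $\bb{C}\scaling$ summand contributes nothing, so $\bmf{h}\cap\bmf{g}_\ell=\tmf{h}_\ell\oplus\bmf{h}'_\ell$, where $\tmf{h}_\ell=\tmf{h}\cap\tmf{g}_\ell$.

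For the nilpotent part, by the definition of the action of $\sigma$ on $\tmf{g}$ in Section \ref{sec:liealgebra1} one has $\tmf{n}^+\cap\bmf{g}_0=(\tmf{n}^+)^\sigma=\rmf{n}^+$ and more generally $\tmf{n}^+\cap\bmf{g}_\ell=\tmf{n}^+_\ell$, the grade-$\ell$ component of the induced gradation \eqref{221102-2/0} on $\tmf{g}$. Combining the two calculations, for $\ell=0$ we obtain $\mf{m}_0=\mf{h}\oplus\rmf{n}^+$, while for $\ell\geq1$ we get
\[
\mf{m}_\ell=\tmf{h}_\ell\oplus\bmf{h}'_\ell\oplus\tmf{n}^+_\ell=\bmf{h}'_\ell\oplus\tmf{b}^+_\ell,
\]
using the triangular decomposition $\tmf{b}^+=\tmf{h}\oplus\tmf{n}^+$ and the fact that the gradations are compatible.

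No real obstacle is anticipated; the only delicate point is the identification $(\bmf{h}')^\sigma=\bb{C}\alpha^\vee_0$, which relies on the fact that the $\sigma$-orbit of the $0$-th node has full length $r$ (verified case-by-case in Lemma \ref{lemma220729-2}), ensuring that the cyclic permutation of $\{\bar{\alpha}^\vee_{\sigma^\ell(0)}\}_{\ell=1}^r$ has a one-dimensional fixed subspace.
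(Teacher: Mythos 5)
Your proposal is correct and follows essentially the same route as the paper: decompose $\bmf{h}\oplus\tmf{n}^+$ via $\bb{C}\scaling\oplus\bmf{h}'\oplus\tmf{b}^+$ (you keep $\tmf{h}$ and $\tmf{n}^+$ separate and regroup into $\tmf{b}^+_\ell$ at the end, which is only a cosmetic difference), identify the $\sigma$-invariant part of $\bmf{h}'$ as $\bb{C}\alpha^\vee_0$, and use $\tmf{b}^+\cap\bmf{g}_0=\rmf{b}^+$ to assemble $\mf{m}_0=\mf{h}\oplus\rmf{n}^+$. The delicate point you flag — that $\langle 0\rangle=r$ so $\sigma$ acts on $\bmf{h}'$ by a full $r$-cycle — is exactly the input the paper relies on as well.
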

\begin{proof}
Since $\bmf{h}$ decomposes as in \eqref{220720-3}, we can write $\bmf{h}\oplus\tmf{n}^+=\bb{C}\scaling\oplus\bmf{h}'\oplus\tmf{b}^+$. The Borel subalgebra $\tmf{b}^+$ can be written as
$$\tmf{b}^+=\bigoplus_{\ell=0}^{r-1}\tmf{b}^+_\ell$$
where in particular $\tmf{b}^+_0=\rmf{b}^+$. On the other hand, the subspace $\bmf{h}'$ given by \eqref{221004-1} decomposes as
$$\bmf{h}'=\bigoplus_{\ell=0}^{r-1}\bmf{h}'_\ell,$$
where each space $\bmf{h}'_\ell$ is $1$-dimensional, generated by
$$\kappa_\ell=\sum_{m=0}^{r-1}\e^{-m\ell}\bar{\alpha}^\vee_{\sigma^m(0)},\qquad \ell=0,\dots,r-1.$$
In particular, $\kappa_0=\alpha^\vee_0.$ The scaling element $\scaling$ satisfies $\sigma(\scaling)=\scaling$, so that $\bb{C}\scaling\in\mf{m}_0$. Thus for $\ell\geq 1$ we get $\mf{m}_\ell=(\bmf{h}\oplus\tmf{n}^+)\cap\bmf{g}_\ell=\bmf{h}'_\ell\oplus\tmf{b}^+_\ell$. On the other hand, $\mf{m}_0=(\bmf{h}\oplus\tmf{n}^+)\cap\bmf{g}_0=(\bb{C}\scaling\oplus\bmf{h}'\oplus\tmf{b}^+)\cap\bmf{g}_0=\bb{C}\scaling\oplus\bb{C}\alpha^\vee_0\oplus(\tmf{b}^+\cap\bmf{g}_0)$. Since $\tmf{b}^+\subset\tmf{g}$ and $\bmf{g}_0\cap\tmf{g}=\tmf{g}_0$, then $\tmf{b}^+\cap\bmf{g}_0=\tmf{b}^+\cap\tmf{g}_0=\rmf{b}^+=\rmf{h}\oplus\rmf{n}^+$, where we used \eqref{220720-1/0}. So $\mf{m}_0=\bb{C}\scaling\oplus\bb{C}\alpha^\vee_0\oplus\rmf{h}\oplus\rmf{n}^+=\langle\scaling,\alpha^\vee_0,\alpha^\vee_1,\dots,\alpha^\vee_n\rangle\oplus\rmf{n}^+=\mf{h}\oplus\rmf{n}^+$. 
\end{proof}
The following results are an immediate consequence of the above proposition.
\begin{corollary}\label{220721-3}
Let $f\in\mf{g}\subset\bmf{g}$ be given by \eqref{220721-1}.
\begin{enumerate}[i)]
\item The affine subspace $f+\mf{m}_0=f+\mf{h}\oplus\rmf{n}^+$ of $\bmf{g}$ is contained in $\mf{g}$. In particular, each element of $f+\mf{m}_0$ is fixed by $\sigma$. Moreover, $f+\mf{m}_0$ is stable under the adjoint action of $\rmf{n}^+$.
\item For each $\ell\geq 1$ the subspace $\mf{m}_\ell=\bmf{h}'_\ell\oplus \tmf{b}^+_\ell$ is a  $\mf{m}_0$-module. 
\item The vector space $\bmf{g}^{-1}\oplus\bmf{h}\oplus\tmf{n}^+$ is a $\mf{m}_0$-module.
\item The affine space $f+\bmf{h}\oplus\tmf{n}^+$ is stable under the adjoint action of $\rmf{n}^+$.
\end{enumerate}
\end{corollary}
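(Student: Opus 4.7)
All four statements are formal consequences of Proposition \ref{220720-4} together with two ingredients already available in the paper: the fact that $\bmf{h}\oplus\tmf{n}^+$ is a $\sigma$-graded Lie subalgebra of $\bmf{g}$, and the key computation
\beq\label{planrf}
[\rmf{n}^+,f]\subset \rmf{h}\oplus\rmf{n}^+\subset \bmf{h}\oplus\tmf{n}^+,
\eeq
which itself splits as $[\rmf{n}^+,\mr{f}]\subset\rmf{h}\oplus\rmf{n}^+$ (a standard fact about $\rmf{g}$, indeed a consequence of Kostant's transversal decomposition \eqref{230913-4}) together with $[\rmf{n}^+,f_0]=[\rmf{n}^+,\lambda v_\theta]=0$, the latter being an immediate consequence of \eqref{221115-2/0} and \eqref{230913-1}--\eqref{230913-2}.

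For \emph{(i)}, the inclusion $f+\mf{m}_0\subset\mf{g}$ is direct: $f\in\mf{g}$ by \eqref{220721-1}, $\mf{m}_0=\mf{h}\oplus\rmf{n}^+$ by Proposition \ref{220720-4}, $\mf{h}$ is the Cartan subalgebra \eqref{220720-7} of $\mf{g}$, and $\rmf{n}^+\subset\mf{g}$. The $\sigma$-invariance then follows since $\mf{m}_0\subset\bmf{g}_0$. For the stability under $\ad\rmf{n}^+$, combine \eqref{planrf} with $[\rmf{n}^+,\mf{h}]\subset\rmf{n}^+$ (clear because $\mf{h}=\langle\alpha^\vee_0,\ldots,\alpha^\vee_n,\scaling\rangle$, and $\scaling$ commutes with $\tmf{g}_0\supset\rmf{n}^+$ while the $\alpha^\vee_i$'s rescale $\rmf{n}^+$) and the obvious $[\rmf{n}^+,\rmf{n}^+]\subset\rmf{n}^+$.

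For \emph{(ii)}, note that $\bmf{h}\oplus\tmf{n}^+$ is a Lie subalgebra of $\bmf{g}$ (since $[\bmf{h},\tmf{n}^+]\subset\tmf{n}^+$ and $\tmf{n}^+$ is itself a subalgebra), and it is $\sigma$-stable. Therefore the decomposition \eqref{gradml} is a decomposition of Lie algebras, and $[\mf{m}_0,\mf{m}_\ell]\subset\mf{m}_\ell$ is immediate from $[\bmf{g}_0,\bmf{g}_\ell]\subset\bmf{g}_\ell$.

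For \emph{(iii)}, write $\bmf{g}^{-1}\oplus\bmf{h}\oplus\tmf{n}^+=\bmf{p}\cap(\bmf{g})^{\geq-1}$ using Lemma \ref{220720-2}(i). Since $\mf{m}_0\subset\bmf{b}^+\subset\bmf{p}$ and $\bmf{p}$ is a subalgebra, $[\mf{m}_0,\bmf{p}]\subset\bmf{p}$; similarly $\mf{m}_0\subset\bmf{g}^{\geq 0}$ and $(\bmf{g})^{\geq-1}$ is a module over $\bmf{g}^{\geq 0}$ with respect to the principal gradation, so $[\mf{m}_0,(\bmf{g})^{\geq-1}]\subset(\bmf{g})^{\geq-1}$. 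Intersecting these two inclusions gives the desired stability. Finally, \emph{(iv)} combines \eqref{planrf} with $[\rmf{n}^+,\bmf{h}]\subset\tmf{n}^+$ (each $x\in\rmf{n}^+\subset\tmf{n}^+$ is a weight vector for $\bmf{h}$) and $[\rmf{n}^+,\tmf{n}^+]\subset\tmf{n}^+$.

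The only mildly non-routine step is \eqref{planrf}, and even there the entire content beyond $\rmf{g}$-theory is the cancellation $[\rmf{n}^+,f_0]=0$; this is where the specific choice of $v_\theta$ as a highest-weight vector for the $\rmf{g}$-module $V_\theta$ is used, already recorded in \eqref{221115-2/0}.
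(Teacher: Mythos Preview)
Your proof is essentially correct and fills in the details that the paper omits entirely (the paper states the corollary as ``an immediate consequence'' of Proposition~\ref{220720-4} and gives no argument). There is, however, one genuine slip in your treatment of part~(iii): the inclusion $\bmf{b}^+\subset\bmf{p}$ is false. Recall that $\bmf{p}=\bmf{b}^-\oplus\tmf{n}^+=\bmf{n}^-\oplus\bmf{h}\oplus\tmf{n}^+$ by \eqref{220717-1}, whereas $\bmf{b}^+=\bmf{h}\oplus\bmf{n}^+$; since $\tmf{n}^+\subsetneq\bmf{n}^+$ (for instance $\bar{e}_0\in\bmf{n}^+\setminus\tmf{n}^+$), one has $\bmf{b}^+\not\subset\bmf{p}$. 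The fix is immediate: drop the intermediate step and observe directly that $\mf{m}_0=\mf{h}\oplus\rmf{n}^+\subset\bmf{h}\oplus\tmf{n}^+\subset\bmf{p}$, the last inclusion by the very definition of $\bmf{p}$. The rest of your argument for~(iii) then goes through unchanged.

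Two minor stylistic remarks. First, in justifying $[\rmf{n}^+,f_0]=0$ you invoke the loop-algebra identities \eqref{221115-2/0}--\eqref{230913-2} from Section~\ref{sec:liealgebra1}; in the present Section~\ref{atpmo} context one can argue more directly inside $\bmf{g}$: since $\tilde{I}\cap\bar{I}_0=\emptyset$, the Chevalley relation $[\bar{e}_j,\bar{f}_k]=\delta_{jk}\bar{\alpha}^\vee_j$ gives $[\bar{e}_j,f_0]=0$ for every $j\in\tilde{I}$, and then Jacobi propagates this to all of $\tmf{n}^+\supset\rmf{n}^+$. Second, in part~(iv) the phrase ``each $x\in\rmf{n}^+$ is a weight vector for $\bmf{h}$'' is not literally true (e.g.\ $e_i=\sum_\ell\bar{e}_{\sigma^\ell(i)}$ is a sum of distinct root vectors when $\langle i\rangle>1$); what you need, and what holds, is simply that $\tmf{n}^+$ is $\bmf{h}$-stable, being a sum of $\bmf{h}$-root spaces.
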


\section{Opers}\label{sec:opers}
In this section we consider a class of opers, which we call \emph{affine  twisted parabolic Miura opers}, we describe their properties, and we show that the FFH connections considered in \eqref{eq:ffopersourgauge} are representative of this class, thus proving a conjecture by Hernandez and Frenkel \cite{FH16}. Moreover, we deduce a normal form for FFH connections.

The notion of affine  twisted parabolic Miura opers, to be defined below, builds upon several other notions of classes of opers, which we briefly recall. The original notion of opers (with values in a finite dimensional simple Lie algebra) appeared for the first time Drinfeld and Sokolov \cite{DS85} and was later formalized by Beilinson and Drinfeld \cite{bedr02}. Affine opers  were introduced by Ben-Zvi and Frenkel \cite{frebenzvi01} (see also \cite{FF11}) as suitable analogues of opers with values in an (affine) Kac--Moody algebra.  Parabolic Miura opers were first considered in \cite{FF11} as generalization to the parabolic case of the notion of Miura opers \cite{frebenzvi01}.  Finally, twisted opers were defined -- in the finite dimensional case -- by Frenkel and Gross \cite{frgr09}.

\subsection{Twisted parabolic Miura $\mf{g}$-opers over $\bb{C}^\times$} 
We consider twisted parabolic Miura $\mf{g}$-opers for the class of affine algebras $\bmf{g}$ listed in Table \ref{table:affine} and choosing $\bmf{p}\subset\bmf{g}$ given in \eqref{220717-1} as parabolic subalgebra. The construction can be generalized in the obvious way to other Lie algebras and different choices of parabolic subalgebras.

Let $\mc{K}$ be the field of meromorphic functions on $\bb{C}^\times$, and for any vector space $V$ denote by  $V(\mc{K})$ the space of  meromorphic functions from $\bb{C}^\times$ to $V$. Recall the principal gradation \eqref{220722-2} of $\bmf{g}$, the associated filtration \eqref{220722-3}, and the element $f\in\bmf{g}^{-1}$ given by \eqref{220717-3} or equivalently by \eqref{220721-1}. Let moreover $\bmf{p}$ be the parabolic subalgebra of $\bmf{g}$ defined in \eqref{220717-1}. We introduce a set of operators taking values in $(\bmf{g})^{\geq-1}\cap\bmf{p}$ or equivalently, due to  \eqref{220722-1}, with values in $\bmf{g}^{-1}\oplus\bmf{h}\oplus\tmf{n}^+$.
\begin{definition}\label{220717-2}
We denote by $\op_{\bmf{g},\bmf{p}}(\mc{K})$  the set of operators
\beq\label{220714-1}
L(z)=\partial_z+\frac{1}{z}\left(f+b(z)\right),\qquad b\in (\bmf{h}\oplus \tmf{n}^+)(\mc{K}).
\eeq
\end{definition}
Following \cite{frgr09}, we define twisted opers by considering a class of operators in $\op_{\bmf{g},\bmf{p}}(\mc{K})$ which are fixed under a transformation which involves both the automorphism $\sigma$ and a rotation in $\bb{C}^\times$. We first consider the two actions separately:
\begin{lemma}\label{220721-5}
The set $\op_{\bmf{g},\bmf{p}}(\mc{K})$ is stable under the automorphism of $\bb{C}^\times$ given by $z\mapsto az$ ($a\in\bb{C}^\times)$. Indeed, this map induces the transformation 
\beq\label{220721-2}
\partial_z+\frac{1}{z}\left(f+b(z)\right)\longrightarrow \partial_z+\frac{1}{z}\left(f+b(az)\right).
\eeq
in $\op_{\bmf{g},\bmf{p}}(\mc{K})$.
\end{lemma}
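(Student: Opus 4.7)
The plan is to interpret the operator $L(z)\in \op_{\bmf{g},\bmf{p}}(\mc{K})$ as a meromorphic $\bmf{g}$-valued connection on the trivial bundle over $\bb{C}^\times$, and then compute its pullback under the biholomorphism $\phi_a:\bb{C}^\times\to\bb{C}^\times$, $z\mapsto az$, with $a\in\bb{C}^\times$. By the general transformation law for connections, the pullback $\phi_a^\ast L$ is obtained by substituting $z\mapsto az$ in the coefficients of $L$ and multiplying the connection $1$-form by the Jacobian $d(az)/dz=a$.

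Concretely, writing
\[
L(z)=\partial_z+\tfrac{1}{z}\bigl(f+b(z)\bigr),
\]
with $b\in(\bmf{h}\oplus\tmf{n}^+)(\mc K)$, the pullback is
\[
\phi_a^\ast L=\partial_z+\tfrac{a}{az}\bigl(f+b(az)\bigr)=\partial_z+\tfrac{1}{z}\bigl(f+b(az)\bigr),
\]
where we used the scale invariance of the logarithmic $1$-form, $d(az)/(az)=dz/z$. The key point is that the element $f$ in \eqref{220721-1} is a constant element of $\bmf{g}$ (independent of $z$), so it is unaffected by the substitution $z\mapsto az$, while the coefficient $b(az)$ is again a meromorphic function on $\bb{C}^\times$ with values in $\bmf{h}\oplus\tmf{n}^+$, hence an element of $(\bmf{h}\oplus\tmf{n}^+)(\mc K)$.

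Therefore $\phi_a^\ast L$ has exactly the form prescribed by Definition \ref{220717-2}, which shows both stability of $\op_{\bmf{g},\bmf{p}}(\mc{K})$ under $\phi_a$ and the explicit formula \eqref{220721-2}. There is no genuine obstacle in this argument: the content of the lemma is a one-line change of variables, the only subtlety being the conceptual point of viewing \eqref{220714-1} as a connection so that the transformation law \eqref{220721-2} is the natural pullback, rather than an ad hoc definition. This identification is precisely what will make the later definition of twisted opers (combining $\phi_a$ with the Dynkin automorphism $\sigma$) consistent.
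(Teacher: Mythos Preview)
Your proof is correct and follows essentially the same approach as the paper: both compute the pullback of the connection under the change of coordinates $z\mapsto az$, exploiting the scale invariance of the logarithmic form $dz/z$ (equivalently, $\varphi'(x)/\varphi(x)=1/x$ for $\varphi(x)=ax$). The paper states the general transformation law $(\varphi^\ast L)(x)=\partial_x+\frac{\varphi'(x)}{\varphi(x)}(f+b(\varphi(x)))$ and then specializes to $\varphi(x)=ax$, while you compute directly for $\phi_a$; the content is identical.
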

\begin{proof}
Under a local change of coordinates $z=\varphi(x)$, the operator \eqref{220714-1} becomes
$$(\varphi^\ast L)(x)=\partial_x+\frac{\varphi'(x)}{\varphi(x)}(f+b(\varphi(x))).$$
Choosing $\varphi(x)=ax$ ($a\in\bb{C}^\times$) we get $(\varphi^\ast L)(x)=\partial_x+\frac{1}{x}\left(f+b(ax)\right)\in \op_{\bmf{g},\bmf{p}}(\mc{K})$.
\end{proof}
We are interested in the special case $a=e^{2\pi it}$, with $t\in\bb{R}$.
\begin{definition}
Let $L(z)\in \op_{\bmf{g},\bmf{p}}(\mc{K})$ be given by \eqref{220714-1}. For every $t\in\bb{R}$ we define the rotated operator $L_t(z)\in \op_{\bmf{g},\bmf{p}}(\mc{K})$ as
\beq\label{221116-3}
L_t(z)=\partial_z+\frac{1}{z}\left(f+b(e^{2\pi it}z)\right).
\eeq
\end{definition}
The requirement that $L(z)\in \op_{\bmf{g},\bmf{p}}(\mc{K})$ is meromorphic at $z=0$ is thus expressed by the condition
\beq\label{L1L}
L_1(z)=L(z).
\eeq
We now consider the action of $\sigma$. 
\begin{definition}
Given an operator $L(z)=\partial_z+A(z)$ with $A\in\bmf{g}(\mc{K})$ we define the twisted operator as $L^\sigma(z)=\partial_z+\sigma(A)(z)$.
In particular, for $L(z)\in \op_{\bmf{g},\bmf{p}}(\mc{K})$ as in \eqref{220714-1}, and if $b=\sum_{\ell=0}^{r-1}b_\ell$ according to the decomposition \eqref{gradml}, with $b_\ell\in\mf{m}_\ell$, then
\beq\label{tpmgoper-general}
L^\sigma(z)=\partial_z+\frac{1}{z}\left(f+\sum_{\ell=0}^{r-1}\e^\ell b_\ell(z)\right),
\eeq
where $\e$ is given by \eqref{epsilon}.
\end{definition}
We extend the action of $\sigma$ from $\bmf{g}$ to $\bb{C}^\times\times \bmf{g}$ by considering the map 
$$(z,x)\mapsto(\e^{-1}z,\sigma(x)),\qquad z\in\bb{C}^\times,x\in\bmf{g},$$
which induces the transformation
$$L(z)\mapsto L^\sigma_{-\frac1r}(z)$$
on $\op_{\bmf{g},\bmf{p}}(\mc{K})$. We restrict to the class of operators \eqref{220714-1} which are pointwise fixed:
\beq\label{Lsigma1/r}
L^\sigma_{-\frac1r}(z)=L(z).
\eeq
Using Proposition \ref{220720-4}, we obtain that these invariant operators are of the form
\beq\label{220720-9}
L(z)=\partial_z+\frac{1}{z}\left(f+\sum_{\ell=0}^{r-1}z^\ell b_\ell(z^r)\right),
\eeq
where 
\begin{align*}
b_0&\in \mf{m}_0(\mc{K})=\mf{h}(\mc{K})\oplus\rmf{n}^+(\mc{K})\\
b_\ell&\in \mf{m}_\ell(\mc{K})= \bmf{h}'_\ell(\mc{K})\oplus \tmf{b}^+_\ell(\mc{K}),\qquad \ell\geq 1.
\end{align*}
\begin{definition}\label{221114-1}
We denote by $\op^{\sigma}_{\bmf{g},\bmf{p}}(\mc{K})$  the set of operators \eqref{220720-9}.
\end{definition}
From Corollary \ref{220721-3}, the affine subspace $f+\bmf{h}\oplus \tmf{n}^+$ is stable under the adjoint action of $\rmf{n}^+$. 
\begin{definition}\label{221115-1}
We consider the Gauge group $\mr{\mc{N}}^{\sigma}(\mc{K})=\{\exp(y(z^r))\,|\,y\in\rmf{n}^+(\mc{K})\}$.  Let $L(z)\in \op^{\sigma}_{\bmf{g},\bmf{p}}(\mc{K})$ be given by $L(z)=\partial_z+(f+b(z))/z$. For every $y\in\rmf{n}^+(\mc{K})$ the action of $\exp(\ad y)\in\mr{\mc{N}}^{\sigma}(\mc{K})$ on $L(z)$ is defined as
\beq\label{220707-2}
\exp(\ad y(z^r))L(z)=\partial_z+\frac{1}{z}\left(\sum_{k\geq 0}\frac{(\ad y)^k}{k!}\left(f+b(z)\right)-rz^{r}\sum_{k\geq 0}\frac{(\ad y)^k}{(k+1)!}\frac{dy}{dz}\right),
\eeq
\end{definition}
Notice that $\op^{\sigma}_{\bmf{g},\bmf{p}}(\mc{K})$ is stable under the action of $\mr{\mc{N}}^{\sigma}(\mc{K})$. We now define meromorphic twisted parabolic Miura $\mf{g}-$opers.
\begin{definition}
The space of meromorphic twisted $(\bmf{g},\bmf{p})-$opers (or twisted parabolic Miura $\mf{g}-opers$) on $\bb{C}^\times$ is defined as 
$$\Op^{\sigma}_{\bmf{g},\bmf{p}}(\mc{K})
=\op^{\sigma}_{\bmf{g},\bmf{p}}(\mc{K})/\mr{\mc{N}}^{\sigma}(\mc{K}).$$
We denote by $[L(z)]$ the oper corresponding to the operator $L(z)$.
\end{definition}
It is often useful \cite{bedr02,lacroix18} to enlarge the class of operators $\op^{\sigma}_{\bmf{g},\bmf{p}}(\mc{K})$ by allowing the action of a more general Gauge group. To this aim, provide the following
\begin{definition}
Let $\mc{H}^{\sigma}(\mc{K})$ the abelian group generated by $\varphi(z^r)^\mu$ with $\varphi\in\mc{K}\setminus\{0\}$ and $\mu\in P^\vee({\mf{g}})$, the coweight lattice of $\mf{g}$. Let $L(z)\in \op^{\sigma}_{\bmf{g},\bmf{p}}(\mc{K})$ be given by $L(z)=\partial_z+(f+b(z))/z$, and decompose $b$ as $b=b_0+\sum_{\alpha\in\bar{\Delta}}b_\alpha\in\bmf{g}$ with $b_0\in\bmf{h}(\mc{K})$ and $b_\alpha\in\bmf{g}_\alpha(\mc{K})$. We define
\beq\label{220715-1}
\varphi(z^r)^{\ad\mu} L(z)=\partial_z+\frac{1}{z}\left(\sum_{i\in I}\varphi(z^r)^{-\langle\mu,\alpha_i\rangle}f_i-\frac{rz^{r}\varphi'(z^r)}{\varphi(z^r)}\mu+\sum_{\alpha\in\bar{\Delta}}\varphi(z^r)^{\langle\mu,\alpha\rangle}b_\alpha(z)\right).
\eeq
\end{definition}
Note that the only element $\mc{H}^{\sigma}(\mc{K})$ leaving $\op^{\sigma}_{\bmf{g},\bmf{p}}(\mc{K})$ invariant is the identity.
\begin{definition}
We set $\widetilde{\op}^{\sigma}_{\bmf{g},\bmf{p}}(\mc{K})=\mc{H}^{\sigma}(\mc{K})\op^{\sigma}_{\bmf{g},\bmf{p}}(\mc{K})$. In other words, each element in $\widetilde{\op}^{\sigma}_{\bmf{g},\bmf{p}}(\mc{K})$ is of the form \eqref{220715-1} for some $\varphi(z^r)^\mu$ in $\mc{H}^{\sigma}(\mc{K})$ and $L(z)\in \op^{\sigma}_{\bmf{g},\bmf{p}}(\mc{K})$.
\end{definition}
Note that each element in $\widetilde{\op}^{\sigma}_{\bmf{g},\bmf{p}}(\mc{K})$ takes values in $(\bmf{g})^{\geq-1}\cap\bmf{p}$. The action of $\mc{H}^{\sigma}(\mc{K})$ on $\bar{\mf{g}}(\mc{K})$ induces an action on $\mr{\mc{N}}^{\sigma}(\mc{K})$, given by
$$\varphi^{\ad\mu}\exp(y)=\exp(\varphi^{\ad\mu} y),\qquad y\in\rmf{n}^+(\mc{K}),\varphi\in\mc{K}\setminus\{0\},\mu\in P^\vee(\mf{g}).$$
\begin{definition}
We denote by $\mc{B}^{\sigma}(\mc{K})=\mc{H}^{\sigma}(\mc{K})\rtimes\mr{\mc{N}}^{\sigma}(\mc{K})$ the Gauge group obtained as a semidirect product of $\mc{H}^{\sigma}(\mc{K})$ and $\mr{\mc{N}}^{\sigma}(\mc{K})$, with multiplication given by
$$(\exp(y_1),\phi_1^{\mu_1})(\exp(y_2),\phi_2^{\mu_2})=(\exp(y_1)\exp(\phi_1^{\ad\mu_1}y_2),\phi_1^{\mu_1}\phi_2^{\mu_2}).$$
\end{definition}
Note that $\mc{H}^\sigma(\mc{K})$ is normal in $\mc{B}^\sigma(\mc{K})$.  Moreover, for each $L(z)\in \widetilde{\op}^\sigma_{\bmf{g},\bmf{p}}(\mc{K})$ there is a unique element in $\mc{H}^\sigma(\mc{K})$ mapping $L(z)$ to $\op^\sigma_{\bmf{g},\bmf{p}}(\mc{K})$. It thus follows that there is a bijection between the sets  $\op^\sigma_{\bmf{g},\bmf{p}}(\mc{K})/\mr{\mc{N}}^\sigma(\mc{K})$ and $\widetilde{\op}^\sigma_{\bmf{g},\bmf{p}}(\mc{K})/\mc{B}^\sigma(\mc{K})$.
%

%

\begin{remark}
Recall the element $\rho^\vee\in P^\vee(\mf{g})$ given by \eqref{220721-4/1}. Applying the Gauge $z^{-\rho^\vee}\in\mc{H}^\sigma(\mc{K}) $ to an operator of type \eqref{220714-1} one gets an operator of the form $\partial_z+f+b(z)$, for some $b\in\mf{h}(\mc{K})\oplus\rmf{n}^+(\mc{K})$. This Gauge is often useful when dealing with opers \cite{bedr02}; we prefer however to use the Gauge \eqref{220714-1} since it is invariant under the map $z\mapsto az$, see Lemma \ref{220721-5}. The class of operators of the form \eqref{220714-1} looks similar to the notion of $(\leq 1)$-singular opers, introduced by Beilinson and Drinfeld \cite[Sec. 3.8.8]{bedr02}. Note, however, that in the present construction we allow the functions appearing in operators as well as the elements of the Gauge group to be meromorphic (over $\bb{C}^\times$).
\end{remark}

\subsection{A special class of twisted parabolic Miura $\mf{g}$-opers.}
We restrict to a special class of twisted parabolic Miura $\mf{g}$-opers. The FFH connections \eqref{eq:ffopersourgauge} turn out to be representatives of opers of this class.
\begin{definition}
We denote by $\op^{\sigma}_{\mf{g},v_\theta}(\mc{K})$ the set of meromorphic operators of the form
\beq\label{231418-1} 
L(z)=\partial_z+\frac{1}{z}\left(f+\xi(z^r)\scaling +b(z^r)+zc(z^r)v_\theta\right),
\eeq
with $\xi\in\mc{K}$ and $b\in\rmf{b}^+(\mc{K})$, and  $v_\theta\in\tmf{g}$ is the element introduced in Definition \ref{221108-2/0}. 
\end{definition}
Since $\mf{h}\oplus\rmf{n}^+=\bb{C}\scaling\oplus\bb{C}K\oplus\rmf{b}^+$ and $\sigma v_\theta=\e v_\theta$, it follows that the operators \eqref{231418-1} are a special class of operators \eqref{220720-9}, or equivalently that $\op^{\sigma}_{\mf{g},v_\theta}(\mc{K})\subset \op^{\sigma}_{\bmf{g},\bmf{p}}(\mc{K}).$ In particular, they satisfy the invariant condition \eqref{Lsigma1/r}.  Moreover, recalling the Gauge group $\mr{\mc{N}}^\sigma(\mc{K})$  (cf. Definition \ref{221115-1}), one can easily show that $g.(L(z)-zc(z^r)v_\theta)+zc(z^r)v_\theta\in \op^{\sigma}_{\mf{g},v_\theta}(\mc{K})$, for every $g\in\mr{\mc{N}}^\sigma(\mc{K})$. In addition, due to \eqref{221115-2/0}, then $g.v_\theta=v_\theta$, ($g\in\mr{\mc{N}}^\sigma(\mc{K})$), so that  $\op^{\sigma}_{\mf{g},v_\theta}(\mc{K})$ is stable under the adjoint action of $\mr{\mc{N}}^\sigma(\mc{K})$.
\begin{definition}
We define $\Op^{\sigma}_{\mf{g},v_\theta}(\mc{K})=\op^{\sigma}_{\mf{g},v_\theta}(\mc{K})/\mr{\mc{N}}^{\sigma}(\mc{K})$, and we denote by $[L]$ the equivalence class of $L\in \op^{\sigma}_{\mf{g},v_\theta}(\mc{K})$.
\end{definition}
\begin{definition}
Let $\mf{s}\subset\rmf{b}^+$ be a transversal subspace, as in \eqref{230913-4}. The operator \eqref{231418-1} is said to be in canonical form (with respect to $\mf{s}$) if $b\in\mf{s}(\mc{K})$.
\end{definition}
\begin{proposition}\label{prop: 231114}
Fixed a transversal subspace $\mf{s}\subset\rmf{b}^+$, each oper $[L(z)]\in \Op^{\sigma}_{\mf{g},v_\theta}(\mc{K})$ admits a unique canonical form (with respect to $\mf{s}$).
\end{proposition}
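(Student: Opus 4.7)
The plan is to run a grade-by-grade Kostant-type reduction on the $\mr{f}+b(z^r)$ piece of the operator, exploiting that the gauge group $\mr{\mc{N}}^\sigma(\mc{K})$ acts trivially on the other terms in \eqref{231418-1}. Indeed, for any $g=\exp(y(z^r))\in\mr{\mc{N}}^\sigma(\mc{K})$, one has $[\rmf{n}^+,\scaling]=0$, $[\rmf{n}^+,f_0]=0$ (since $f_0$ is the Chevalley generator for node $0$ and $\rmf{n}^+$ is generated by the $e_i$ with $i\in\mr{I}$), and $[\rmf{n}^+,v_\theta]=0$ by \eqref{221115-2/0}; hence the summands $\xi(z^r)\scaling$, $f_0$ (the part of $f=\mr{f}+f_0$ not lying in $\rmf{g}$) and $zc(z^r)v_\theta$ are fixed pointwise, and only $\mr{f}+b(z^r)$ with $b\in\rmf{b}^+(\mc{K})$ is transformed. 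The Proposition is therefore equivalent to showing that each orbit of $\mr{\mc{N}}^\sigma(\mc{K})$ on connections $\partial_z+\tfrac{1}{z}(\mr{f}+b(z^r))$ contains exactly one representative with $b(z^r)\in\mf{s}(\mc{K})$.

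For existence I would proceed by induction on the principal grade $i$. Using the decomposition $\rmf{g}=\bigoplus_{j}\rmf{g}^j$ from \eqref{221102-1/0}, write $b=\sum_{j\geq 0}b^{(j)}$ with $b^{(j)}\in\rmf{g}^j(\mc{K})$. The key algebraic input is that $\ad\mr{f}\colon\rmf{g}^{i+1}\to\rmf{g}^i$ is injective for every $i\geq 0$ (the centralizer of the principal nilpotent $\mr{f}$ sits in $\rmf{n}^-$, so it meets $\rmf{n}^+$ trivially), whence, combined with Kostant's decomposition \eqref{230913-4}, $\rmf{g}^i=[\mr{f},\rmf{g}^{i+1}]\oplus\mf{s}^i$ with $\mf{s}^i:=\mf{s}\cap\rmf{g}^i$. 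Assuming that the grades strictly below $i$ of $b$ already lie in $\mf{s}(\mc{K})$, solve uniquely for $y_{i+1}(z^r)\in\rmf{g}^{i+1}(\mc{K})$ and $s^{(i)}(z^r)\in\mf{s}^i(\mc{K})$ in
\[
[\mr{f},y_{i+1}(z^r)]= b^{(i)}(z^r)-s^{(i)}(z^r),
\]
and apply the gauge $\exp(\ad y_{i+1}(z^r))$. Using \eqref{220707-2}, the only contribution to the new $b$ at grade $i$ comes from the linear term $[y_{i+1},\mr{f}]\in\rmf{g}^i$: the iterated brackets $\tfrac{1}{k!}(\ad y_{i+1})^k\mr{f}$ for $k\geq 2$ have grade $\geq 2i+1$, the brackets $\tfrac{1}{k!}(\ad y_{i+1})^k b^{(j)}$ have grade $\geq i+1$, and the derivative contribution $-rz^r\sum_k\tfrac{(\ad y_{i+1})^k}{(k+1)!}y'_{i+1}(z^r)$ has grade $\geq i+1$. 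Hence the grade-$i$ component is normalized into $\mf{s}^i(\mc{K})$ while the grades strictly below $i$ are untouched. After $h^\vee-1$ steps (the top grade $h^\vee-1$ being automatic, as $\mf{s}^{h^\vee-1}=\rmf{g}^{h^\vee-1}$ since $\rmf{g}^{h^\vee}=0$) one reaches $b\in\mf{s}(\mc{K})$.

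For uniqueness, suppose $g=\exp(y(z^r))\in\mr{\mc{N}}^\sigma(\mc{K})$ fixes a canonical representative $L^{\mathrm{can}}$. Decompose $y=\sum_{i\geq 1}y_i$ with $y_i\in\rmf{g}^i(\mc{K})$; assuming for contradiction that $y\neq 0$, let $k=\min\{i:y_i\neq 0\}$. The same grade analysis shows that the change in the grade-$(k-1)$ component of $b$ under $\exp(\ad y)$ equals $-[\mr{f},y_k]$, all other contributions having grade $\geq k$. Since both $L^{\mathrm{can}}$ and $g\cdot L^{\mathrm{can}}=L^{\mathrm{can}}$ have their grade-$(k-1)$ component in $\mf{s}^{k-1}(\mc{K})$, this difference must lie in $\mf{s}^{k-1}\cap[\mr{f},\rmf{g}^k]=\{0\}$, forcing $[\mr{f},y_k]=0$; the injectivity of $\ad\mr{f}$ on $\rmf{g}^k$ then gives $y_k=0$, a contradiction.

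The step that will need the most careful bookkeeping is the derivative term in \eqref{220707-2}: after the chain-rule factor $rz^{r-1}$ is absorbed into the explicit $-rz^r$ (so that the full contribution $-rz^r y'_{i+1}(z^r)$ is a function of $z^r$ valued in $\rmf{g}^{i+1}(\mc{K})$), one must confirm that its principal grade is $\geq i+1$ and therefore never competes with the purely algebraic equation that determines $y_{i+1}$. Once this is established, the $\sigma$-equivariance is automatic, because at each step the recursion manufactures $y_{i+1}$ as a function of $z^r$ from data that is itself a function of $z^r$, and the argument reduces to the standard bookkeeping underlying the Kostant-slice isomorphism \eqref{eq:231114-2}.
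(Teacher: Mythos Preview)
Your proposal is correct and follows the same approach as the paper: reduce to the $\rmf{g}$-valued piece $\partial_z+\tfrac{1}{z}(\mr{f}+b(z^r))$ using the $\mr{\mc{N}}$-invariance of $\scaling$, $f_0$ and $v_\theta$, then invoke the Drinfeld--Sokolov canonical form. The only difference is that the paper cites \cite{DS85} for the second step, whereas you spell out the grade-by-grade Kostant reduction explicitly.
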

\begin{proof}
Since $v_\theta$ and $\scaling$ are $\mr{\mc{N}}$-invariant, it is sufficient to prove the proposition for an operator of the form $L(z;\la)=\partial_z+\left(\mr{f}+b(z^r)\right)/z$. For these operators the result is well-known \cite{DS85}.
\end{proof}
We now show how to represent the above operators (and opers) in more concrete terms by means of the loop realization \eqref{loopg} of the Kac-Moody algebra $\mf{g}$.    
Let $\tilde{\xi}(z)$ satisfy $\tilde{\xi}'(z)=\frac{\xi(z)}{rz}$, recall that $\scaling\in P^\vee(\mf{g})$, and define 
\beq\label{L(z,la)defalg}
L(z;\la)=e^{\tilde{\xi}(z^r)\ad\scaling}L(z),
\eeq
where we used the Gauge \eqref{220715-1}. Explicitly, we have:
\beq\label{L(z,la)}
L(z;\la)=\partial_z+\frac{1}{z}\left(\mr{f}+b(z^r)+\left(zc(z^r)+e^{-\tilde{\xi}(z^r)}\la\right)v_\theta\right),
\eeq
we call the above the \emph{loop realization} of the operator $L(z)$. The Gauge group $\mr{\mc{N}}^{\sigma}(\mc{K})$ acts on the loop realization \eqref{L(z,la)}, so that we can define the oper $[L(z;\la)]$ corresponding to \eqref{L(z,la)} as its equivalence class.  Moreover, since $[\scaling,\rmf{n}^+]=0$ we have
$$[L(z;\la)]=[e^{\tilde{\xi}(z^r)\ad\scaling}L(z)]=e^{\tilde{\xi}(z^r)\ad\scaling}[L(z)],$$
which provides a bijection between the equivalence classes, namely between the opers.
From Proposition \ref{prop: 231114} the following result follows.
\begin{corollary}
Let $\mf{s}\subset\rmf{b}^+$ be a transversal subspace, as in \eqref{230913-4}. For each $L(z;\la)$ as in \eqref{L(z,la)} the oper $[L(z,\la)]$ admits a unique canonical form 
$$L_{\mf{s}}(z;\la)=\partial_z+\frac{1}{z}\left(\mr{f}+s(z^r)+\left(zc(z^r)+e^{-\tilde{\xi}(z^r)}\la\right)v_\theta\right),$$
for some $s\in\mf{s}(\mc{K})$.
\end{corollary}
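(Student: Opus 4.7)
The plan is to reduce the corollary directly to Proposition~\ref{prop: 231114} via the loop-realization gauge, observing that this gauge commutes with the Miura gauge group.

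First I would note that $e^{\tilde{\xi}(z^r)\ad\scaling}\in\mc{H}^{\sigma}(\mc{K})$ commutes with the action of $\mr{\mc{N}}^{\sigma}(\mc{K})$ on $\op^{\sigma}_{\mf{g},v_\theta}(\mc{K})$. Indeed, since $\langle\scaling,\alpha_i\rangle=0$ for all $i\in\mr{I}$, one has $[\scaling,\rmf{n}^+]=0$, so for every $y\in\rmf{n}^+(\mc{K})$ the gauge $e^{\tilde{\xi}(z^r)\ad\scaling}$ commutes with $\exp(\ad y(z^r))$ (the derivative term $-rz^{r-1}\tilde{\xi}'(z^r)\scaling$ produced by conjugating $\partial_z$ also commutes with $y$). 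Consequently the loop-realization map $L(z)\mapsto L(z;\la)$ descends to a bijection $[L(z)]\mapsto [L(z;\la)]$, which furthermore sends any fixed representative of the first equivalence class to the corresponding representative of the second.

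Next, I would apply Proposition~\ref{prop: 231114} to the preimage $L(z)\in\op^{\sigma}_{\mf{g},v_\theta}(\mc{K})$ of $L(z;\la)$ under \eqref{L(z,la)defalg} to obtain a unique canonical representative
$$\tilde L(z)=\partial_z+\frac{1}{z}\bigl(f+\xi(z^r)\scaling+s(z^r)+zc(z^r)v_\theta\bigr),\qquad s\in\mf{s}(\mc{K}).$$
Computing its loop realization, and using $[\scaling,\mr{f}]=[\scaling,v_\theta]=[\scaling,\mf{s}]=0$ together with $[\scaling,\la v_\theta]=-\la v_\theta$ (since $f=\mr{f}+\la v_\theta$ in the loop presentation), gives precisely
$$e^{\tilde{\xi}(z^r)\ad\scaling}\tilde L(z)=\partial_z+\frac{1}{z}\bigl(\mr{f}+s(z^r)+(zc(z^r)+e^{-\tilde{\xi}(z^r)}\la)v_\theta\bigr)=L_{\mf{s}}(z;\la),$$
proving existence.

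For uniqueness, any other canonical representative $L'_{\mf{s}}(z;\la)$ of $[L(z;\la)]$ pulls back under $e^{-\tilde{\xi}(z^r)\ad\scaling}$ to a canonical representative of $[L(z)]$; by the uniqueness clause in Proposition~\ref{prop: 231114} this must equal $\tilde L(z)$, and applying $e^{\tilde{\xi}(z^r)\ad\scaling}$ once more forces $L'_{\mf{s}}(z;\la)=L_{\mf{s}}(z;\la)$. There is no real obstacle here: the entire content is that taking canonical forms commutes with the abelian gauge $e^{\tilde{\xi}(z^r)\ad\scaling}$, which is immediate from the vanishing brackets recorded above.
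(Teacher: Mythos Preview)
Your proof is correct and matches the paper's approach: the paper simply states that the corollary follows from Proposition~\ref{prop: 231114}, and you have filled in precisely the expected argument, namely that the abelian gauge $e^{\tilde{\xi}(z^r)\ad\scaling}$ commutes with $\mr{\mc{N}}^{\sigma}(\mc{K})$ (because $[\scaling,\rmf{n}^+]=0$) and hence intertwines canonical forms. One cosmetic point: strictly speaking $e^{\tilde{\xi}(z^r)\ad\scaling}$ need not lie in $\mc{H}^{\sigma}(\mc{K})$ since $e^{\tilde{\xi}(z^r)}$ is not necessarily meromorphic (the paper remarks on this just after the corollary), but the gauge formula \eqref{220715-1} with $\mu=\scaling$ still applies verbatim and your commutator computations go through unchanged.
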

Note that the term $e^{-\tilde{\xi}(z^r)}$ in \eqref{L(z,la)} does not need to be meromorphic. In particular, under the map $z\mapsto e^{2\pi i}z$ we have
\beq\label{230419-1}
e^{-\tilde{\xi}(z^r)}\mapsto e^{-2\pi i\xi(0)}e^{-\tilde{\xi}(z^r)}.
\eeq
In order to deal with this, for an operator $L(z;\la)$ of the form \eqref{L(z,la)} and $t\in\bb{R}$ we define the rotated operator as
the operator induced by the map
$$(z,\la)\mapsto (e^{2\pi it}z,e^{2\pi i\xi(0)t}\la),$$
that is, we set:
\beq\label{L_t(z,la)}
L_t(z;\la)=\partial_z+\frac{1}{z}\left(\mr{f} +b(e^{2\pi irt}z^r)+\left(e^{2\pi it}zc(e^{2\pi irt}z^r)+e^{-\tilde{\xi}(e^{2\pi irt}z^r)}e^{2\pi i\xi(0)t}\la\right)v_\theta\right).
\eeq
With this choice, the following property holds: the operator   \eqref{231418-1} is meromorphic at $z=0$ if and only if its loop realization \eqref{L(z,la)} satisfies
\beq\label{doreytateo}
L_1(z;\la)=L(z;\la).
\eeq
We now consider in more detail the case when $\xi(x^r)$ is a nonzero constant, say $\xi(x^r)=-k$.  This is the case of interest for the ODE/IM correspondence, see \eqref{looprealization}. Then, \eqref{231418-1} reads
$$
L(z)=\partial_z+\frac{1}{z}\left(f-k\scaling +b(z^r)+zc(z^r)v_\theta\right),
$$
and it is meromorphic over $\bb{C}$, but its loop realization
$$
L(z;\la)=\partial_z+\frac{1}{z}\left(\mr{f}+b(z^r)+\left(zc(z^r)+z^{k}\la\right)v_\theta\right),
$$
is meromorphic in $\bb{C}^\times$ only. In this case, the rotated operator \eqref{L_t(z,la)} takes the form
$$
L_t(z;\la)=\partial_z+\frac{1}{z}\left(\mr{f} +b(e^{2\pi irt}z^r)+\left(e^{2\pi it}zc(e^{2\pi irt}z^r)+z^{k}\la\right)v_\theta\right),
$$
and it obviously satisfies the identity \eqref{doreytateo}, which in the ODE/IM correspondence setting is known as Dorey-Tateo symmetry (cf. \eqref{eq:symanzik}).


\subsection{Feigin-Frenkel-Hernandez $\mf{g}$-opers}
We define FFH $\mf{g}$-opers as a class of twisted parabolic Miura $\mf{g}$-opers \eqref{231418-1} satisfying certain local assumptions, to be specified below. This is essentially the approach originally given in \cite{FH16}. We show  that each FFH $\mf{g}$-oper admits
an (essentially unique) representative as a FFH connection \eqref{eq:ffopersourgauge}. In this discussion we fix 
a transversal subspace $\mf{s}\subset\rmf{g}$. Recall that due to \eqref{eq:231114-2}, for every $\ell\in\rmf{h}$ there exists a unique pair $(N,\bar{\ell})\in\mr{\mc{N}}\times \mf{s}$ such that $N(\mr{f}+\ell)=\mr{f}+\bar{\ell}$. It follows that in the case $J= \emptyset$ the FFH connection \ref{eq:ffopersourgauge} has a canonical form
\beq\label{231009-1}
 \mc{L}_{G,\mf{s}}(z)= \partial_z+
 \frac{1}{z}\left(f+\bar{\ell} -k \, \scaling  + zv_\theta \right).
\eeq

 \begin{definition}
 Fixed $\bar{\ell} \in \mathfrak{s}$ and $k \in (0,1)$, a FFH oper is a  twisted parabolic  Miura
 $\mf{g}$-oper with the canonical form
\begin{equation}\label{eq:KdVopers}
\mc{L}_{\mf{s}}(z)=\mc{L}_{G,\mf{s}}(z;\la)+s(z^r)/z,
\end{equation}
where $s$ is a meromorphic funciton, with values in  $\mf{s}$, which satisfies four assumptions :
\begin{asu}\label{asu1} 
The (dominant term of the) asymptotic behaviour of solutions in a neighborhood of $0$ is independent on $s$:
$$s(z^r)=O(z),\qquad  z \to 0.$$
\end{asu}
\begin{asu}\label{asu2}
The (dominant term of the) asymptotic behaviour of solutions in a neighborhood of $\infty$ is independent on $s$:
$$s(z^r)=O(1),\qquad z \to \infty.$$ 
\end{asu}
\begin{asu}\label{asu3}
If $w \in \mathbb{C}^*$ is a singularity of $s$ then locally
$$s(z^r)=\sum_{i=1}^n \sum_{l\geq 0}^{d_i} \frac{zs^{d_i}_l}{(z-w)^{d_i+1-l}}, $$
where $s_l^{d_i}\in\mf{s}\cap\mf{g}^{d_i}$, $l=0,\dots,d_i$, and where $\bar{s}=\sum_i s_0^{d_i} \in \mathfrak{s}$ is the unique element in
 $\mathfrak{s}$ such that the  $\mr{f}/w-\mr{\rho}^\vee+\bar{s}$ and  $\mr{f}/w-\mr{\rho}^\vee-\theta^\vee \in \mathring{\mf{g}} $ are conjugated. In other words,  every additional singularity is a regular singularity and close to the singularity the oper is locally Gauge equivalent to
 $$\partial_z+ \frac{\mr{f}/w-\mr{\rho}^\vee -\theta^\vee}{z-w}+O(1).$$
\end{asu}
\begin{asu}\label{asu4}
If $w \in \mathbb{C}^*$ is a singularity of $s$, the monodromy at $w$ is trivial  for every $\la \in \bb{C}$.
\end{asu}
\end{definition}
The above definition was given in \cite{FF11} in the case $r=1$ and in \cite{FH16} in the general case.

\begin{theorem}\label{pro:quasinormal}
Any affine parabolic twisted Miura $\mf{g}$ oper with the canonical form \eqref{eq:KdVopers}, where the function
 $s$ satisfies the Assumptions 1,2, and 3 above admits the representation \eqref{eq:ffopersourgauge} for some $J \subset \bb{N}$, with $(w_j ,X(j),y(j)) \in  \mathbb{C}^*\times \mathring{\mf{n}}^+ \times \bb{C}$, $j\in J$. Moreover, the above representation is essentially unique:
Fixed a $\ell \in \mathring{\mf{h}}$ such that $f+\bar{\ell}$ and $f+\ell \in \mathring{\mf{g}} $ are conjugated, two such opers coincide if and only if
their~ representation (\ref{eq:ffopersourgauge}) coincide.
\end{theorem}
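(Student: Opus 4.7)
The plan is to construct an explicit element of $\mr{\mc{N}}^\sigma(\mc{K})$ bringing the canonical form $\mc{L}_{\mf{s}}$ of the oper (guaranteed by Proposition \ref{prop: 231114}) into the shape \eqref{eq:ffopersourgauge}, and then to deduce uniqueness from the rigidity of the canonical form combined with the triviality of the $\mr{\mc{N}}$-centraliser of $\mr{f}+\ell$.

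First I would perform a global constant Gauge. By the Kostant isomorphism \eqref{eq:231114-2} applied to $\mr{f}+\ell\in\mr{f}+\rmf{b}^+$, there exists a unique $N\in\mr{\mc{N}}$ with $\Ad_N(\mr{f}+\bar{\ell})=\mr{f}+\ell$. Viewed as a constant (hence $\sigma$-invariant) element of $\mr{\mc{N}}^\sigma(\mc{K})$, it fixes $-k\scaling$ and $zv_\theta$ by \eqref{230913-1}--\eqref{221115-2/0}, and turns the canonical form \eqref{eq:KdVopers} into
\begin{equation*}
\partial_z+\tfrac{1}{z}\bigl(f+\ell-k\scaling+zv_\theta+\tilde{s}(z^r)\bigr),\qquad \tilde{s}=\Ad_N\, s \in \rmf{b}^+(\mc{K}).
\end{equation*}
Viewed as a rational function of $u=z^r$, Assumption \ref{asu1} forces $\tilde{s}(0)=0$, Assumption \ref{asu2} makes $\tilde{s}$ bounded at $u=\infty$, and Assumption \ref{asu3} prescribes a specific Laurent expansion at each pole $u=w_j^r$. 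Partial fractions then yield $\tilde{s}(u)=c_\infty+\sum_{j\in J}T_j(u)$, with $T_j$ the principal part at $w_j^r$ and $c_\infty+\sum_j T_j(0)=0$.

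The second and key step is to normalise each $T_j$. The main content of Assumption \ref{asu3} is that, at $w_j$, the local leading datum $\mr{f}/w_j-\mr{\rho}^\vee+\bar{s}_j$ is $\mr{\mc{N}}$-conjugate to $\mr{f}/w_j-\mr{\rho}^\vee-\theta^\vee$. Using this, I would construct local Gauges $G_j(z)\in\mr{\mc{N}}^\sigma(\mc{K})$, analytic away from the $\sigma$-orbit $\{\e^l w_j\}_l$, bringing the $j$-th singular contribution to $\frac{rz^r}{z^r-w_j^r}(-\theta^\vee+X(j))/z$ for a unique $X(j)\in\rmf{n}^+$. Each $G_j$ is built inductively, descending on the principal $\mr{\rho}^\vee$-gradation of $\rmf{g}$: at level $d_i$ the coefficient $s^{d_i}_l\in\mf{s}\cap\rmf{g}^{d_i}$ from Assumption \ref{asu3} is cancelled by a Gauge generator in $\rmf{g}^{d_i}$, the recursion terminating at a simple pole with residue in $\rmf{h}\oplus\rmf{n}^+$. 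Composing these $G_j$'s with $N$, and absorbing the constant $c_\infty$ via the relation $c_\infty+\sum_j T_j(0)=0$, yields a global element of $\mr{\mc{N}}^\sigma(\mc{K})$ producing the representation \eqref{eq:ffopersourgauge}.

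For uniqueness, fix $\ell$ and suppose $\mc{L}^{(1)},\mc{L}^{(2)}$ of the form \eqref{eq:ffopersourgauge} define the same oper, so $g.\mc{L}^{(1)}=\mc{L}^{(2)}$ for some $g\in\mr{\mc{N}}^\sigma(\mc{K})$. Since both operators have residue $f+\ell$ at $z=0$, $g$ is regular at $0$ with $\Ad_{g(0)}(\mr{f}+\ell)=\mr{f}+\ell$; triviality of the $\mr{\mc{N}}$-centraliser (Kostant) gives $g(0)=1$. A Frobenius analysis based on Proposition \ref{prop:frobeniusolutions}, together with the explicit residue data $-\theta^\vee+X(j)$ at each $w_j$, then forces $g\equiv 1$ and hence $(w^{(1)}_j,X^{(1)}(j))=(w^{(2)}_j,X^{(2)}(j))$. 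The main obstacle will be the inductive local normalisation at each $w_j$: one must verify carefully that the Laurent data prescribed by Assumption \ref{asu3} is in precise bijection with the $\mr{\mc{N}}^\sigma$-Gauge freedom regular at $w_j$, an analysis close in spirit to the normal-form argument underlying Proposition \ref{prop:zeromonodromy}.
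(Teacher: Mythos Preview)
Your strategy and the paper's agree on the first move: strip off the $\mr{\mc{N}}$-invariant terms $-k\scaling$, $f_0$, $zv_\theta$ and reduce to a statement about $\rmf{g}$-opers. From there the approaches diverge. The paper does not construct the normalising Gauge by hand; it first \emph{forgets} the $\sigma$-invariance, invokes \cite[Proposition 4.7 and Theorem 6.1]{mara18} to obtain, for any $\rmf{g}$-oper satisfying Assumptions 1--3, a unique representative $\partial_z+(\mr{f}+\ell)/z+\sum_{k=1}^M (z-\tilde w_k)^{-1}(-\theta^\vee+\tilde X(k))$, and only then imposes $\sigma$-invariance to see that the $\tilde w_k$ come in $\e$-orbits of size $r$, collapsing to the form \eqref{eq:ffopersourgauge}. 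Your route, building the Gauge directly in the $\sigma$-invariant setting via a graded induction, is in principle viable and more self-contained, but the hard analysis you sketch in ``step two'' is precisely the content of \cite[Theorem 6.1]{mara18}.

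There are two genuine gaps. First, composing the local Gauges $G_j$ is not innocent: $G_j$ is regular at the other $w_i$ but acts nontrivially there, so applying $G_{j'}$ after $G_j$ perturbs the residue you have just normalised at $w_j$. The cure is to run the induction on the principal gradation \emph{globally} (one rational Gauge of each $\mr{\rho}^\vee$-degree, handling all poles at once), not pole by pole; this is how \cite{mara18} proceeds. Second, your uniqueness argument appeals to Proposition \ref{prop:frobeniusolutions}, which concerns \emph{solutions} and requires generic $\ell$ and trivial monodromy, neither of which is assumed here. The correct argument is algebraic: two representatives of the same oper share the canonical form $\mc{L}_{\mf{s}}$ (Proposition \ref{prop: 231114}), and the uniqueness clause in \cite[Theorem 6.1]{mara18} then pins down the data $(w_j,X(j))$ once $\ell$ is fixed. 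Your observation that $g(0)=1$ via the Kostant centraliser is the right start, but from there you should argue that $g$ is also regular at each $w_j$ and at $\infty$ (else the simple-pole structure would be destroyed), hence constant, hence trivial; no Frobenius analysis is needed.
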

\begin{proof}
Recall that $f=\mr{f}+f_0$.  Since the elements $\scaling$, $f_0$ and $v_\theta$ are invariant under the action of the Gauge group $\mr{\mc{N}}(\mc{K})$, the thesis is equivalent to the existence and uniqueness of a representation of the form
  \begin{align}\label{FFoper}
 \mathring{\mc{L}}= & \partial_z+ \frac{1}{z}\left(\mathring{f}+\ell  +
 \sum_{j \in J} \frac{r z^{r}}{z^r-w_j^r}\left(-\theta^\vee + X(j) \right)\right),
 \end{align}
 for a $\mathring{\mf{g}}$-oper
\beq\label{230511-1}
 \mathring{\mc{L}}=  \partial_z+ \frac{1}{z}\left(\mathring{f}+b(z)\right) ,
\eeq
with $b \in \rmf{b}^+(\mc{K})$ and $b(z)=b(\e z)$,  whose canonical form satisfies Assumptions 1,2,3. We prove it as follows. Due to  \cite[Proposition 4.7]{mara18}, the canonical form of an oper \eqref{230511-1} (not necessarily $\sigma-$invariant)   which satisfies  Assumptions 1,2, and 3,  is
 \begin{align}\label{eq:canproof}
  \mathring{\mc{L}}_{\mf{s}}= \partial_z+ \frac{\mathring{f}+\bar{\ell} }{z} +
   \sum_{j =1}^M \sum_{i=1}^n \sum_{l=0}^{d_i} \frac{s^{d_i}_l(j)}{(z-\tilde{w}_j)^{d_i+1-l}},
 \end{align}
for some $M \in \bb{N}$, $\tilde{w}_j \in \bb{C}^*$, and $s^{d_i}_l(j) \in \mf{s}\cap\rmf{g}^{d_i}$, where $\bar{\ell} \in \mathfrak{s}$ 
is the unique element such  that the  $\mr{f}+\bar{\ell}$ and $\mr{f}+\ell \in \mathring{\mf{g}} $ are conjugated, and for $j=1,\dots,M$ the element $\bar{s}(j)=\sum_i s^{d_i}(j) \in \mathfrak{s}$ is  the unique element  such that   $\mr{f}/\tilde{w}_j-\mathring{\rho}^\vee+\bar{s}(j)$ and $\mr{f}/\tilde{w}_j-\mathring{\rho}^\vee-\theta^\vee \in  \mathring{\mf{g}} $ are conjugated.
 In addition, according to  \cite[Theorem 6.1]{mara18}, if we fix $\ell$, an oper of the form (\ref{eq:canproof}) can uniquely be represented as
 \begin{align}\label{eq:proofnormpart}
 \mathring{\mc{L}}= & \partial_z+  \frac{\mathring{f}+\ell }{z} + 
 \sum_{k =1}^M \frac{1}{z-\tilde{w}_j} \left(-\theta^\vee + \tilde{X}(j) \right),
 \end{align}
 for some  $\tilde{X}(j) \in \mathring{\mf{g}}$  via a Gauge transform belonging to $\mathring{\mc{N}}(\mc{K})$.  
 Imposing the $\sigma-$invariance it follows that  the set of additional singularities must be invariant under rotations by $\e$ (namely that $\tilde{w}_j$ is an additional singularity if and only if $\e \tilde{w}_j$ is another additional singularity) and that \eqref{eq:proofnormpart} has to be of the form \eqref{FFoper}, with $|J|=M/3$ and for certain, uniquely determined, coefficients $X(j)$, $j\in J$.
\end{proof}

\def\cprime{$'$} \def\cydot{\leavevmode\raise.4ex\hbox{.}} \def\cprime{$'$}

\end{document}